\documentclass[a4paper]{article}
\usepackage{cite}
\usepackage{amsmath,amssymb,amsfonts,amsthm}
\usepackage{hyperref}
\usepackage{stmaryrd}
\usepackage{xspace}
\usepackage{proof}
\usepackage{mathtools}
\usepackage{textcomp}
\usepackage{xcolor}
\RequirePackage{xparse}
\usepackage{cleveref}
\usepackage{pdflscape}
\usepackage{authblk}
\usepackage[margin=1.2in]{geometry}

\newtheorem{rem}{Remark}

\newtheorem{definition}{Definition}
\newtheorem{example}{Example}
\newtheorem{lemma}{Lemma}
\newtheorem{proposition}{Proposition}
\newtheorem{cor}{Corollary}
\newtheorem{theorem}{Theorem}
\newcommand{\qedeqn}{\tag*{$\square$}}

\newcommand{\BI}{BI\xspace}
\newcommand{\CBI}{\CPSL}
\newcommand{\PSL}{PSL\xspace}
\newcommand{\CPSL}{CSL\xspace}
\newcommand{\defsym}{\mathrel{:=}}

\newcommand{\poly}{\mathrm{poly}}
\newcommand{\eX}{\mathbf{X}}
\newcommand{\eY}{\mathbf{Y}}
\newcommand{\eZ}{\mathbf{Z}}
\newcommand{\eT}{\mathbf{T}}
\newcommand{\eC}{\mathbf{C}}

\newcommand{\U}[1]{\mathbf{U}(#1)}
\newcommand{\indep}{\perp \!\!\! \perp}
\newcommand{\OMIT}[1]{}

\newcommand{\NN}{\mathbb{N}}
\newcommand{\BB}{\mathbb{B}}

\newcommand{\RRunit}{\mathbb{R}_{[0,1]}}
\newcommand{\dom}{\mathsf{dom}}

\newcommand{\bnf}{::=}
\newcommand{\midd}{\; \mbox{\Large{$\mid$}}\;}
\newcommand{\restr}[1]{_{#1}}

\newcommand{\distr}[1]{\mathbb{D}(#1)}
\newcommand{\distrone}{\mathcal{D}}
\newcommand{\distrtwo}{\mathcal{E}}

\newcommand{\polyone}{p}

\newcommand{\polytwo}{q}

\newcommand{\nat}{n}

\newcommand{\tyset}{\mathsf{T}}
\newcommand{\tyone}{\tau}
\newcommand{\tytwo}{\sigma}
\newcommand{\mem}{m}
\newcommand{\memone}{\mem}
\newcommand{\memtwo}{l}
\newcommand{\memthree}{w}
\newcommand{\memfour}{u}
\newcommand{\stng}[1]{\mathsf{Str}[#1]}
\newcommand{\bool}{\mathsf{Bool}}
\newcommand{\sem}[2]{\llbracket #1\rrbracket_{#2}}
\newcommand{\ksem}[2]{\llbracket #1\rrbracket^{\mathtt K}_{#2}}
\newcommand{\parsem}[2]{\llparenthesis #1\rrparenthesis_{#2}}


\newcommand{\binjudg}[2]{#1\vdash#2}
\newcommand{\ternjudg}[3]{#1\vdash#2:#3}

\newcommand{\rvarset}{\mathcal{V}}
\newcommand{\rvarsetone}{\mathsf{S}}

\newcommand{\renvone}{\Delta}

\newcommand{\renvtwo}{\Xi}
\newcommand{\renvthree}{\Theta}
\newcommand{\renvfour}{\Omega}
\newcommand{\join}[2]{#1\Join #2}

\newcommand{\rvarone}{r}
\newcommand{\rvartwo}{s}
\newcommand{\prgone}{\mathtt{P}}
\newcommand{\prgtwo}{\mathtt{R}}

\newcommand{\rand}{\mathit{rnd}}
\newcommand{\setzero}{\mathit{setzero}}
\newcommand{\head}{\mathit{head}}
\newcommand{\tail}{\mathit{tail}}

\newcommand{\exor}{\mathit{xor}}
\newcommand{\concat}{\mathit{concat}}


\newcommand{\rsone}{\mathcal{R}}
\newcommand{\rstwo}{\mathcal{Q}}
\newcommand{\rsthree}{\mathcal{T}}
\newcommand{\comp}{\otimes}
\newcommand{\stset}{\mathcal{S}}
\newcommand{\emst}[1]{\mathbf{0}_{#1}}
\newcommand{\stone}{s}
\newcommand{\sttwo}{r}
\newcommand{\stthree}{t}
\newcommand{\stfour}{u}
\newcommand{\stfive}{v}
\newcommand{\es}[1]{\mathsf{ES}({#1})}

\newcommand{\rvone}{\mathbf{R}}
\newcommand{\rvtwo}{\mathbf{Q}}

\newcommand{\ind}{\approx} 
\newcommand{\ext}{\sqsubseteq} 

\newcommand{\tyf}[2]{{(#1)^{#2}}}
\newcommand{\fset}{\Phi}
\newcommand{\exfset}{{\Phi_{\mathsf{ex}}}}
\newcommand{\apfset}{{\Phi_{\mathsf{ap}}}}
\newcommand{\fone}{\phi}
\newcommand{\ftwo}{\psi}
\newcommand{\fthree}{\xi}
\newcommand{\ffour}{\eta}
\newcommand{\apone}{A}
\newcommand{\apset}{\mathcal{AP}}
\newcommand{\ftrue}{\top}
\newcommand{\ffalse}{\bot}
\newcommand{\mw}{-\kern-.6em\raisebox{-.659ex}{*}\ }
\newcommand{\sep}{*}
\DeclareMathOperator*{\Sep}{\text{\Large $\ast$}}
\newcommand{\arr}{\rightarrow}

\newcommand{\sr}[2]{#1\models #2}
\newcommand{\prenv}[3]{#2\vdash^{#1} #3}


\newcounter{numberone}
\newenvironment{varenumerate}
{
	\begin{list}{\arabic{numberone}.}
		{
			\usecounter{numberone}
			\setlength{\itemsep}{0pt}
			\setlength{\topsep}{0pt}
			\setlength{\parsep}{0pt}
			\setlength{\partopsep}{0pt}
			\setlength{\leftmargin}{15pt}
			\setlength{\rightmargin}{0pt}
			\setlength{\itemindent}{0pt}
			\setlength{\labelsep}{5pt}
			\setlength{\labelwidth}{15pt}
	}}
	{
	\end{list}
}

\newcommand{\deone}{d}
\newcommand{\detwo}{c}
\newcommand{\reone}{e}
\newcommand{\retwo}{g}
\newcommand{\rethree}{h}
\newcommand{\funset}{\mathcal{F}}
\newcommand{\dfset}{\mathcal{DF}}
\newcommand{\rfset}{\mathcal{RF}}

\newcommand{\rfunone}{f}
\newcommand{\rfuntwo}{g}

\newcommand{\tof}[1]{\mathit{TypeOf}(#1)}
\newcommand{\raof}[1]{\mathit{AlgoOf}(#1)}
\newcommand{\daof}[1]{\mathit{DetAlgoOf}(#1)}
\newcommand{\juone}{\pi}
\newcommand{\jutwo}{\theta}

\newcommand{\deriv}{\;\triangleright\;}
\newcommand{\ud}[1]{\mathbf{U}(#1)}
	
\newcommand{\eq}[2]{\mathbf{EQ}(#1,#2)}
\newcommand{\espl}[2]{\mathbf{IS}(#1,#2)}
\newcommand{\indp}[2]{\mathbf{CI}(#1,#2)}

\newcommand{\df}[1]{\top}

\newcommand{\fv}[1]{\mathsf{FV}(#1)}
\newcommand{\mv}[1]{\mathsf{MV}(#1)}


\newcommand{\pskip}{\mathtt{skip}}
\newcommand{\ass}[2]{#1\leftarrow #2}

\newcommand{\seq}[2]{#1;#2}

\newcommand{\ifr}[3]{\mathtt{if}\;#1\;\mathtt{then}\;#2\;\mathtt{else}\;#3}

\newcommand{\hot}[3]{\left\{#1\right\}\;#2\;\left\{#3\right\}}
\newcommand{\vhot}[3]{\left\{#1\right\}\\#2\\\left\{#3\right\}}
\newcommand{\hotj}[3]{\vdash\hot{#1}{#2}{#3}}

\newcommand{\semic}{;}


\newcommand{\detsem}[2]{\sem{#1}{#2}^{\mathtt d}}
\newcommand{\bind}[3][]{\text{bind}_{#1}(#2, #3)}
\newcommand{\gbind}[3][]{{\text{bind}_{#1}\left(#2, #3\right)}}
\newcommand{\unit}[2][]{{\text{unit}_{#1}(#2)}}
\newcommand{\condee}[4][]{{#2_{{#1}|#3 = #4}}}
\newcommand{\conv}[1]{\mathrel{\oplus_{#1}}}

\newcommand{\Prob}[1][]{{\text{Pr}_{#1}}}

\newcommand{\supp}{{\mathsf{supp}}}
\newcommand{\kleisli}[1]{#1^\dag}
\newcommand{\tensprod}{\mathrel{\otimes}}

\newcommand{\distone}{{\mathcal{A}}}
\newcommand{\disttwo}{{\mathcal{B}}}

\newcommand{\negl}{{\mathit{negl}}}

\newcommand{\monone}{M}
\newcommand{\mulone}{\bullet}
\newcommand{\id}{e}
\newcommand{\poone}{\leq}
\newcommand{\elone}{x}
\newcommand{\eltwo}{y}
\newcommand{\elthree}{z}

\newcommand{\unif}[1]{\mathsf{unif}(#1)}
\newcommand{\defined}{\downarrow}

\newcommand{\RSKIP}{\mathsf{Skip}}
\newcommand{\RSEQ}{\mathsf{Seq}}
\newcommand{\RASS}{\mathsf{Assn}}
\newcommand{\DASS}{\mathsf{DAssn}}

\newcommand{\SRASS}{\mathsf{SRAssn}}
\newcommand{\SDASS}{\mathsf{SDAssn}}

\newcommand{\RCONDCM}{\mathsf{RCond}}

\newcommand{\RWEAK}{\mathsf{Weak}}
\newcommand{\RCONST}{\mathsf{Const}}
\newcommand{\RFRAME}{\mathsf{Frame}}
\newcommand{\RRESTR}{\mathsf{Restr}}

\newcommand{\OTP}{\mathtt{OTP}}
\newcommand{\POTP}{\mathtt{POTP}}
\newcommand{\XOR}{\mathtt{XOR}}

\newenvironment{proofcases}
{\begin{list}{\labelitemi}{
 \setlength{\itemsep}{0pt}
 \setlength{\topsep}{0pt}
 \setlength{\parsep}{0pt}
 \setlength{\partopsep}{0pt}
 \setlength{\leftmargin}{10pt}
 \setlength{\rightmargin}{0pt}
 \setlength{\itemindent}{0pt}
 \setlength{\labelsep}{5pt}
 \setlength{\labelwidth}{10pt}}}
{\end{list}}
\newcommand{\proofcase}[1][]{\item[-] (#1).}


\newcommand{\commentout}[1]{}
\newcommand{\revision}[1]{{#1}}


\bibliographystyle{IEEEtran}

\author[1,2]{Ugo Dal Lago \thanks{This author was partially supported by the MUR FARE project "Compositional and Effectful Program Distances", CAFFEINE}}
\author[2,3]{Davide Davoli \thanks{This author was supported by the French ``Agence Nationale de la
Recherche'' through the project UCA DS4H ANR-17-EURE-0004.}}
\author[4]{Bruce M. Kapron\thanks{This author was supported in part by NSERC RGPIN 2021-02481 and by a Visiting Fellowship  at the Institute of Advanced Studies of the Alma Mater Studiorum - University of Bologna.}}
\affil[1]{Univerist\`a di Bologna}
\affil[2]{Inria}
\affil[3]{Université Côte d'Azur}
\affil[4]{University of Victoria}

\begin{document}

\title{On Separation Logic, Computational Independence, and 
  Pseudorandomness (Extended Version)}

\maketitle

\begin{abstract}
Separation logic is a substructural logic which has proved to have numerous
and fruitful applications to the verification of programs working on dynamic data structures. Recently, 
Barthe, Hsu and Liao have proposed a new way of giving semantics to separation 
logic formulas in which separating conjunction is interpreted in 
terms of probabilistic independence. The latter is taken in its exact form, i.e., two 
events are independent if and only if the joint probability is the product of the probabilities
of the two events. There is indeed a literature on weaker notions of independence which
are \emph{computational} in nature, i.e. independence holds only against efficient adversaries and modulo a 
negligible probability of success. The aim of this work is to explore the 
nature of computational independence in a cryptographic scenario, in view of the aforementioned
advances in separation logic.
We show on the one hand that the semantics of separation logic can be adapted so as to
account for complexity bounded adversaries, and on the other hand that the obtained logical system
is useful for writing simple and compact proofs of standard cryptographic results in which the
adversary remains hidden. Remarkably, this allows for a fruitful interplay between independence 
and pseudorandomness, itself a crucial notion in cryptography.
\end{abstract}

\section{Introduction}
Cryptographic primitives and protocols can be defined and analyzed using two 
distinct models, i.e. the \emph{computational}
model~\cite{GoldwasserMicali84,GoldwasserMicali88,Yao} and the 
\emph{symbolic} model~\cite{NeedhamSchroeder,DolevYao}.
While in the latter adversaries have unlimited computational power, 
in the former they are assumed to be 
efficient (i.e. polytime) randomized algorithms. In the last forty years, 
research concerning the interactions between logic and cryptography has been intense, 
but while the symbolic model, given its more qualitative 
nature, has been the object of many systematic
investigations~\cite{BorrowsAbadiNeedham,Datta,Abadi,Millen,Blanchet}, the same 
cannot be said about the computational model, in relation to which the 
proposals for logical systems are more scattered and mostly recent (see 
Section~\ref{sec:related} for a more in-depth discussion on related work).

Proposals concerning logical systems that capture \emph{computational} 
cryptographic 
reasoning are bound to face, at least in principle, some 
challenges. The first is certainly the fact 
that adversaries have limited complexity, and can have a non-null, albeit 
negligible, probability of success. All that must be somehow reflected in 
the logical languages used, which are often not standard. Similarly, the 
underlying notion of equivalence between programs cannot be the standard, exact, 
one, but must be relaxed to so-called computational indistinguishability, the 
logical nature of which has been studied from different viewpoints~\cite{IK06,BDKL10}. Finally, 
one should mention pseudorandomness, the natural computational counterpart to
randomness, whose logical status has recently been scrutinized from the perspective
of descriptive complexity~\cite{DT23}.

There is yet another aspect of the computational model which is deeply rooted in 
probability theory, and which is known to be fruitful in cryptography. Since 
the work by Shannon, indeed, it is well known that the perfect security of a 
symmetric cipher can be formulated very elegantly as the \emph{independence} between 
the random variable modeling the message and the one modeling the ciphertext. On the one 
hand, it is certainly worth noting that since Yao's pioneering work on the 
computational model, strong links are known between the notion of 
computational security for symmetric encryption schemes and appropriate variations of the 
concept of independence \cite{Yao}, all this thus generalizing Shannon's definition to the setting of
complexity-bounded adversaries. Such a view has remained outside mainstream 
cryptography, with interesting but sporadic contributions~\cite{MRS88,Fay14,HK15}. On the other 
hand, it must be said that a logical account of the notion of \emph{statistical}
independence has indeed been proposed very recently by Barthe, Hsu and Liao~\cite{PSL},
based on separation logic. In this approach, dubbed \emph{probabilistic separation 
logic} (PSL) in the following, separating conjunction is interpreted in the sense of 
independence and not in the traditional sense of dynamic data structures. 
Probabilistic separation logic is by its nature information-theoretic, 
therefore not able to take into account computational notions of independence. 
Nonetheless, not only Shannon's proof, but also much more complex ones, become very 
simple and straightforward, being based on some basic principles regarding 
probabilistic independence.

It is therefore natural to ask whether and to what extent the approach of 
\cite{PSL}
 can be adapted to \emph{computational} notions of 
independence. This is exactly the aim of this paper.

\subsection{Contributions and Outline}
The first thing we do, after briefly motivating our work through a couple of 
examples, is to introduce in 
Section~\ref{sec:programs} a minimal imperative programming language where all 
programs can be evaluated in polytime \emph{by construction}. The language consists only
of assignments, sequential composition, and conditional constructs.
Notably, randomization is captured through function symbols. We take this approach first of all 
because it allows the treatment of some interesting examples, and secondly 
because the possibility of capturing programs with loops turns out to be by no 
means trivial especially from the point of view of managing the adversary's advantage.

We then proceed, in Section~\ref{sec:logic}, to a study of how Barthe et 
al.'s probabilistic separation logic can be adapted so as to capture 
computational, rather than information-theoretic, independence.  
The logic stays the same syntactically, while the underlying semantics has to 
be modified in a crucial way. More specifically, the partial Kripke resource 
monoid from~\cite{Pym04} has to be modified in its order-theoretic part, allowing 
computationally indistinguishable distributions to be compared, and thus 
implicitly accounting for an approximate notion of independence. These 
apparently innocuous modifications have very deep consequences for the 
interpretation of separating conjunction. We exclude from formulas both forms of implication,
whose presence would complicate considerably the underlying metatheory, and which are not
particularly useful in proofs. Remarkably, our novel semantic interpretation naturally induces atomic formulas capturing computationally independent pairs of expressions and pseudorandom expressions, this way dealing with fundamental concepts in cryptography in a simple and natural way.

The next section, i.e. Section~\ref{sec:inference}, instead, concerns the use of our 
separation logic as a program logic. More specifically, we study how to 
define Hoare triples for the language introduced in Section~\ref{sec:programs}, and
how to define sound inference rules for such triples. Here, what is surprising is that the rules are not too different from those presented in~\cite{PSL}. In other words, computational notions can be treated similarly to the corresponding information-theoretic notions. We apply the introduced formal system to a couple of simple examples of cryptographic proofs, namely the security of the encryption scheme induced by a pseudorandom generator, and the proof that suitably composing pseudorandom generators preserve pseudorandomness, at the same time allowing for an increase in the expansion factor.

Finally, Section~\ref{sec:compindep} is concerned with motivating computational independence, and with studying its role in cryptographic proofs. As mentioned above, such an approach was already proposed by Yao \cite{Yao} in some of the earliest work on encryption security, although his work used the language of computational entropy. An explicit notion of computational independence was introduced and studied in \cite{Fay14}.  We formulate a notion of computational independence which is in some ways more closely related to the computational semantics of the separating conjunction, and prove that it is equivalent to that of \cite{Fay14}. Finally, we solve an open problem posed in that paper, namely giving a characterization of semantic security for private-key encryption against \emph{uniform} adversaries~\cite{Goldreich}. Overall, the results of this section further support the naturalness of computational independence in a cryptographic setting as well as indicating its utility as an alternate approach to characterizing cryptographic security. This also argues for the naturalness and utility of cryptographic separation logic as a formal counterpart to this computational approach.

\renewcommand{\df}[1]{\mathbf{D}(#1)}
\section{A Cryptographic Viewpoint on (Probabilistic) Separation Logic}\label{sec:viewpoint}
In this section, we give, without going into technical details the 
introduction of which is deferred to the next section, a simple 
account of separation logic, the probabilistic interpretation of 
the latter proposed by Barthe et al., and the possibility of employing
it to derive proofs in the so-called computational model.

Consider the following simple probabilistic program, call it $\OTP$:
\begin{align*}
  &\ass k \rand();\\
  &\ass c \exor(m,k)
\end{align*}
This program assigns to the variable $k$, a binary 
string of length $n$, one of the $2^n$ possible values it can take, each with 
the same probability. It then uses this value, seen as a key, to construct a 
ciphertext $c$ starting from a message $m$ through the exclusive or operation, 
thus implementing the well-known Vernam's one-time pad, an encryption scheme that is established as 
secure in the perfect sense~\cite{KatzLindell}. This last property can be expressed in various 
ways, e.g. by prescribing that the random variables $\mathcal{C}$ and 
$\mathcal{M}$ which model the values of $c$ and $m$ after program execution, 
respectively, are \emph{independent}.

Would it be possible to (re)prove the security of such an encryption scheme using 
Hoare's logic? As already mentioned, Barthe et al. gave a positive answer to this question a few years 
ago, based on separation logic~\cite{PSL}. In their framework, one can derive the 
following triple
$$
\hot{\df m}\OTP{\df m \sep \ud c}
$$
This tells us that, whenever messages are distributed according to \emph{any}
distribution (this is captured by the precondition $\df m$), the ciphertexts
will be uniformly distributed, $\ud c$, and moreover (and most importantly!) 
their distribution will be statistically independent from that on 
messages. This latter condition is captured by the binary logical operator 
$\sep$, the so-called \emph{separating conjunction}. 

The separating conjunction is a connective of \emph{separation 
logic}~\cite{Reynolds00}, which is often interpreted in such a 
way as to enable reasoning about pointer-manipulating programs: the elements 
of the domain are \emph{heaps} representing the state of memory and 
the truth of a formula $\phi * \psi$ in a store $s$ indicates that 
 $s$ may be split into two stores, the first satisfying $\phi$ and the 
 second satisfying $\psi$. In Barthe et al.'s work, instead, there is no heap,
 each variable contains a piece of data, and formulas are interpreted 
over \emph{distributions} of states. In that setting, it is natural for 
separating conjunction to model \emph{independence}: a distribution $s$ satisfying a formula 
$\phi * \psi$ captures the fact that $s$ can be written as the \emph{tensor 
product} of two distributions, the first satisfying $\phi$ and the 
second satisfying $\psi$. In the program above, then, $\df m \sep \ud c$ 
correctly captures the desired property: the distributions of states over 
$k,m,c$ one obtains after executing $\OTP$ can be written as the tensor product 
of two distributions, the value of $c$ being uniformly 
distributed in the second distribution. As such, then, separation logic 
correctly captures the perfect security of the OTP.

The question that arises naturally, at this point, is whether this way of 
reasoning is also compatible with the computational model, whereas, as 
is well known, adversaries are complexity bounded 
and their probability of success can be strictly positive, although negligible. 
Is it possible, for example, to prove something similar to the aforementioned 
independence result for the following simple variation of Vernam's cipher?
\begin{align*}
  &\ass \rvarone  \rand();\\
  &\ass c  \exor(m,g(\rvarone))
\end{align*}
Here, we first pass a random seed $r$ to a pseudorandom generator $g$, stretching it 
into a stream of bits that, although not random, looks so to any PPT adversary. 
Then, we proceed as in Vernam's cipher. \revision{In this way, we are able to use a key of length
$n$ to encrypt messages of greater length.}

The first thing to observe is that it is simply \emph{not} true that for any \revision{distribution on messages, the 
distributions on messages and ciphertexts are \revision{statistically} independent.
In particular, far a distribution on messages having full support, since there are fewer keys than messages, we can find a ciphertext $c$ which has nonzero probability and a message $m$ such that the joint probability of $c$ and $m$ is zero. This means independence does not hold.
}
As observed by Yao~\cite{Yao}, and later studied by Fay~\cite{Fay14}, 
however, the aforementioned random variables are independent in a 
\emph{computational} sense, i.e. their joint distribution, even if impossible to be written as the tensor 
product of two distributions, looks so, once again, to the eyes of efficient
adversaries.

This work can be seen as the first attempt to go \emph{in the direction} traced by
Yao and Fay \emph{starting from} Barthe et al.'s PSL. In particular, the latter
will be adapted to the fact that programs and adversaries are 
complexity-bounded and that the notion of randomness
is computational, 
i.e. equivalent to \emph{pseudorandomness}. Similarly, for
separating conjunction and independence: everything holds in an approximate 
sense, the approximation being invisible to the eyes of efficient adversaries
having nonnegligible probability of success. 

While this adaptation may apparently seem straightforward, it brings with it a series 
of complications. First of all, the class of programs to which 
this reasoning can be applied must be drastically restricted, dealing with 
conditionals and sequential compositions, but without loops. The very notion of 
formula must be narrowed, getting rid of logical implication and restricting 
attention to conjunctive formulas. \revision{Furthermore, in order to make the splitting of the program unique, we will require formulas to be labeled with type environments. These changes are not motivated by semantic reasons: in fact, BI \emph{can} be given a complexity-conscious semantics in the form of a partial resource monoid, therefore following a well-trodden path. The point is that in BI and its standard semantics proving the soundness of some crucial rules, and in particular the so-called frame rule, turns out to be very difficult compared to doing so for PSL. It was therefore natural to move towards a narrower definitional framework.} That being said, simple examples like the one 
above can be shown to have the desired independence properties.

A question naturally arises at this point: is computational independence related in any way to mainstream cryptographic notions? Fortunately, the answer is positive: security
of symmetric-key ciphertext can be faithfully captured this way,
making our program logic useful.  
This represents the ideal way to complete the picture, and Section~\ref{sec:compindep} 
is devoted to proving this result.


\renewcommand{\df}[1]{\top}

\section{Programs, and their Semantics}\label{sec:programs}
\revision{
  In this section, we define programs and their interpretation
  as transformers of \emph{efficiently samplable} families
  of distributions~\cite{Goldreich,KatzLindell}. While doing this,
  we also introduce other standard concepts such as sized types and
  with the aim of describing their interplay and their notation.
}

We start by introducing basic notions from probability theory in 
\Cref{subsubsec:probanddist}. In \Cref{subsubsec:typesandstores},
we define types, environments and stores. Later, in \Cref{sec:expressions} and 
in \Cref{sec:subprograms} we define the language of programs together with 
its semantics. In \Cref{subsec:kozeneq}, we show that the semantics of our programs
--- when interpreted on ordinary distributions --- is equivalent to that of
Kozen~\cite{Kozen}.
In section \Cref{subsection:crypto}, we introduce some basics concepts of computational
cryprography including \emph{computational indistinguishability}. Finally,
\Cref{subsec:cryptoprog} is devoted to showing cryptographical properties of
our programs.

\subsection{Preliminaries}
\label{sec:preliminaries}


\subsubsection{Probabilities and Distributions}
\label{subsubsec:probanddist}

In the following, we restrict our analysis to
probability distributions with finite support.
More precisely, each such object is a function
$\distrone: X \to \RRunit$ that satisfies the following properties:
\begin{itemize}
\item Its support 
  \(
    \supp(\distrone) \defsym \{x \in X \mid \distrone(x)> 0\}
  \)
  is finite.
\item $\sum_{x\in\supp(\distrone)} \distrone(x)=1$.
\end{itemize}
\noindent
We write $\distr X$ for the set of all distributions over $X$.

The tensor product of two distributions
$\distrone \in \distr X, \distrtwo \in \distr Y$ is another distribution, this
time over ($X \times Y$) that samples independently from $\distrone$ and 
$\distrtwo$, i.e.,
$(\distrone \tensprod \distrtwo)\defsym (x, y) \mapsto
\distrone(x) \cdot \distrtwo(y)$.

Probability distributions with finite support form a monad \cite{Giry82}, where 
the unit function $\unit[X] \cdot: X\to \distr X$ associates to
its argument $x\in X$ the Dirac distribution centered on $x$ and the 
multiplication $\bind[X] \cdot \cdot: \distr {X} \times  (X \to \distr Z) \to 
\distr Z$ sequences $\distrone \in \distr X$ with a continuation
$f: X \to \distr Z$.
\commentout{Formally:\\[1.5ex]
\resizebox{0.5\textwidth}{!}{$
\begin{aligned}
  \unit[X] {x}(y) &\defsym \begin{cases}
                          1 &\text{if } y = x\\
                          0 & \text{otherwise}
                        \end{cases} & 
  \bind[X] {\distrone}{f} (z)&\defsym \sum_{x \in X} \distrone(x) \cdot f(x)(z)
\end{aligned}
$}\\[1.5ex]%
}
This property induces the existence of the Kleisli extension
of each function $f: X \to \distr Y$: this is a function
$\kleisli f: \distr X \to \distr Y$ such that
$\kleisli f(\distrone) = \bind[X] \distrone f$.
Given a distribution $\distrone \in \distr X$ and a function $f: X \to Y$, we write
$f(\distrone)$ to denote $y \mapsto \sum_{x \in f^{-1}(y)}\distrone(x)$.

\subsubsection{Types}
\label{subsubsec:typesandstores}

In this section, we define types and environments, together with
their interpretation as \emph{families} of distributions.
Programs and stores are indeed typed to enforce size bounds to variables, depending 
on the value of the so-called \emph{security parameter}.
For instance, if the type associated to the variable $\rvarone$ by
the environment $\renvone$ is $\stng{2n}$,
the size of the binary string associated to
$\renvone$ by a store of type $\rvarone$ must be
\emph{exactly} $2n$, i.e.
two times the length of the security parameter. 
By allowing only types whose size is polynomial in
the size of the security parameter, we ensure that
the evaluation of \emph{well-typed} programs on \emph{well-typed} stores
is polytime with respect to the size of the security parameter.

\emph{Univariate positive polynomials} are indicated by metavariables
like $\polyone$ and $\polytwo$; we write $\polyone(n)$ for the natural number 
obtained by evaluating the polynomial $\polyone$ on the argument $\nat\in\NN$.
\emph{Types} 
are defined from the following grammar:
$$
\tyone\bnf\stng{\polyone}\midd\bool.
$$
where $p$ is a univariate polynomial. Let $\tyset$ be the set of all types. 
Given $\tyone$ and $\nat \in \NN$, 
$\detsem{\stng{\polyone}}{\nat}=\BB^{\polyone(\nat)}$,
$\detsem{\bool}{\nat}=\BB$,
and $\BB=\{0,1\}$ is the set of binary digits.
%

The set of program variables is $\rvarset$, and it is
ranged over by metavariables like $\rvarone,\rvartwo$.
An \emph{environment} $\renvone: \rvarset \rightharpoonup \tyset$
is a partial function with finite domain indicated as $\dom(\renvone)$.
We write $\varepsilon$ for the empty environment, and we
also adopt standard abbreviations for environments in the form
of lists of assignments of types to variables, e.g.
$\rvarone_1:\tyone_1, \ldots,\rvarone_k:\tyone_k$.
%
We introduce a partial order relation between
environments: $\renvtwo \ext \renvone$ holds
whenever $\dom(\renvtwo)\subseteq \dom(\renvone)$
and for every  $\rvarone \in \dom(\renvtwo)$
we have $\renvtwo(\rvarone)=\renvone(\rvarone)$.
The (disjoint) union of two environments $\renvone, \renvtwo$ is
indicated as $\join \renvone \renvtwo$ and is
defined \emph{only} if
$\dom(\renvone)\cap\dom(\renvtwo)=\emptyset$;
when this happens, it is defined in the standard way.
\commentout{$$
(\join \renvone \renvtwo)(\rvarone) \defsym
\begin{cases}
  \renvone (\rvarone) & \text{if }\rvarone \in \dom(\renvone)\\
  \renvtwo (\rvarone) & \text{if }\rvarone \in \dom(\renvtwo).
\end{cases}
$$
}

It is now time to turn to the interpretation of types, which will be built 
around the notion of an efficiently samplable distribution family.
\revision{This is a standard notion in complexity-based cryptography (see Subsection~\ref{subsection:crypto}) and is captured by the following definition}:
\begin{definition}
\label{def:samp}
  A family of distributions $\{\distrone_\nat\}_{\nat\in \NN}$
  is \emph{efficiently samplable} when
  there is \emph{one} probabilistic polytime algorithm
  that outputs samples distributed according to $\distrone_\nat$ on input
  $1^\nat$. In this case, we write $\es {\{\distrone_\nat\}_{\nat \in \NN}}$.
  $\hfill\qed$
\end{definition}

The semantics of a type $\tyone$,
which we call $\sem{\tyone}{}$ is the class 
of (efficiently samplable) families
$\{\distrone_\nat\}_{\nat\in \NN}$ such that
$\distrone_\nat\in\distr{\detsem{\tyone}{\nat}}$ for 
every $\nat \in \NN$. Formally:
$$
\sem{\tyone}{}\defsym\{\{\distrone_\nat\}_{\nat\in\NN}\mid
                \es {\{\distrone_\nat\}_{\nat\in \NN}} \land \forall\nat \in \NN. \distrone_\nat\in\sem{\tyone}{\nat}\}.
$$
In the following, we will often abbreviate $\{\distrone_\nat\}_{\nat \in \NN}$
as just $\distrone$.

We can extend the semantics above from types to environments.
An environment  $\renvone$ is interpreted as a set of \emph{distribution ensembles}
in the following way:
\begin{align*}
\sem{\renvone}{}&\defsym\{\{\rsone_\nat\}_{\nat}\mid
                  \es \rsone \land \forall\nat \in \NN. \rsone_\nat\in\sem{\renvone}{\nat}\}
\end{align*}
where $\sem \renvone \nat = \distr {\detsem \renvone \nat}$, and
$\detsem \renvone \nat$ is the set of
well-typed maps from variables to values, i.e.,
$$
\detsem \renvone \nat \defsym \{ \mem \in \left(\BB^*\right)^{\dom(\renvone)} \mid \forall \rvarone \in {\dom(\renvone)}. \mem(\rvarone) \in \detsem {\renvone(\rvarone)}\nat\}.
$$

\begin{example}
  If $\renvone = \rvarone: \stng \nat, \rvartwo: \stng {2\nat}$,
  then the semantics of $\renvone$ is the set of all distribution ensembles
  $\{\rsone_\nat\}_{\nat\in \NN}$ such that, for each $\nat$,
  $\rsone_\nat$ is a distribution in $\sem \renvone \nat$.
  This means that every member of this family  $\rsone_\nat$ is a distribution of  
  functions $\memone:\{\rvarone, \rvartwo\}\to{\BB^*}$ such that
  $\memone(\rvarone)$ is a binary string of length $\nat$ and 
  $\memone(\rvartwo)$ is a binary string of length $2\nat$.
  We also require $\{\rsone_\nat\}_{\nat\in \NN}$ to be
  \emph{effectively samplable}, so there must be a polytime
  probabilistic algorithm that, on input $1^n$, produces
  samples according to $\rsone_\nat$. \hfill\qed
\end{example}

In our logic, to state that some expression of type $\tyone$
is pseudorandom, we compare it against the family of uniform distributions
of strings of that type, and we impose that the advantage of any
polytime distinguisher for the two distributions is a negligible function (\Cref{def:nf}).
\revision{
  In turn, the (effectively samplable) family of 
uniform distributions strings of a given type $\tyone$ is defined as follows:
  \begin{align*}
    \unif{\bool}_\nat (b) &\defsym \frac 1 2 & &\text{with } b \in \BB\\
    \unif{\stng {p(\nat)}}_\nat(x) &\defsym {\frac 1 {2^{p(\nat)}}} & &\text{with } x \in \BB^{p(n)}.
  \end{align*}
}

\subsubsection{Stores}
Intuitively, stores can be defined just as elements of 
$\sem{\renvone}{}$. But, since we want to reason about 
stores with \emph{different} domains, we also make the
domain $\renvone$ explicit in the notation by
defining \emph{stores} as pairs in the form $\tyf \rsone \renvone$
where $\rsone\in\sem{\renvone}{}$.
We write $\emst{}$ for the store 
$\tyf{\{\{\varepsilon^1\}\}_{\nat \in \NN}}{\varepsilon}$
where $\{\varepsilon^1\}$ is the distribution assigning probability $1$ to 
$\varepsilon$, namely the only element of $\sem{\varepsilon}{\nat}$. The 
set 
$\stset$ stands for the set of all stores.
In the following, we use metavariables like 
$\stone,\sttwo,\stthree$ for stores.

The partial order $\ext$
can be lifted to distribution
ensembles as follows:
we say that a probabilistic store
$\stone =\tyf \rsone \renvone$ extends a
store $\sttwo= \tyf\rstwo\renvtwo$,
written $\sttwo \ext \stone$, if and only if $\renvtwo \ext \renvone$ and
$\rstwo$ can be obtained by computing the marginal distribution of
$\rsone$ on the domain of $\renvtwo$, itself often indicated
as $\rsone_{\renvone\to \renvtwo}$.
%
In turn, the definition of projection on distribution
ensembles is lifted to stores by setting
$\left(\tyf \rsone \renvone\right)_{\renvone\to \renvtwo}=\tyf {\rsone_{\renvone\to\renvtwo}} \renvtwo$.
\begin{definition}[Projection]
  Given a store $\tyf \rsone \renvone$ and $\renvtwo$ such that $\renvtwo\ext \renvone$, we define the projection $\left(\rsone\right)_{\renvone \to \renvtwo}\in \sem \renvtwo{}$, as follows:
  \[
    \left(\rsone\right)_{\renvone \to \renvtwo} \defsym \left\{\bind{\rsone_\nat}{\memone \mapsto \unit{\memone\restr{\dom(\renvtwo)}}}\right\}_{\nat\in \NN}
  \]
  where for every set of random variables $\rvarsetone$,
  $\memone\restr{\rvarsetone}$ is the function obtained
  by restricting the domain of
  $\memone$ to $\rvarsetone$.
\end{definition}

The relation $\ext$
is widely employed
in Section \ref{sec:logic},
because it allows our logic to describe
\emph{local} properties, i.e. properties that are valid in
\emph{portions of stores}.

%
We can define the tensor product of two
stores
as follows:

\begin{definition}
  The pointwise tensor product of
  two distribution ensembles $\tyf \rsone \renvone$
  and $\tyf \rstwo \renvtwo$ such that $\dom(\renvtwo)\cap
  \dom (\renvthree)=\emptyset$
  is defined as a distribution ensemble ${\tyf \rsone 
  \renvone}\tensprod{\tyf \rstwo \renvtwo}$ in
  $\sem{\join{\renvone}{\renvtwo}}{}$, as follows:
  \[
    ({\tyf \rsone \renvone}\tensprod{\tyf \rstwo \renvtwo})_\nat(\memone) 
    \defsym 
    \rsone_\nat(\memone\restr{\dom(\renvone)}) \cdot 
    \rstwo_\nat(\memone\restr{\dom(\renvtwo)}).
  \]
  We write $\stone \comp \sttwo \defined$ when the tensor product of two stores 
  is defined.
  \hfill\qed
\end{definition}

\subsection{Expressions}
\label{sec:expressions}
\begin{figure*}[t]
  \centering
    \begin{align*}
    \parsem {\ternjudg{\renvone}{\rvarone}{\tyone}} {\nat}(m) &\defsym {m(r)}\\
    \parsem {\ternjudg{\renvone}{\rfunone(\deone_1, \ldots, \deone_k)}{\tyone}} {\nat}(m) &\defsym \daof{\rfunone}(1^\nat)(\parsem{\ternjudg\renvone{\deone_1}{\tyone_1}}{\nat}(m), \ldots, \parsem{\ternjudg\renvone{\deone_k}{\tyone_k}}{\nat}(m))\\
    \detsem {\ternjudg{\renvone}{\deone}{\tyone}} {\nat}(m) &\defsym \unit{\parsem {\ternjudg \renvone \deone \tyone}{\nat}(m)}\\
    \detsem{\ternjudg{\renvone}{\rfunone(\reone_1,\ldots,\reone_k)} \tyone}{\nat}(m) & \defsym \kleisli{\raof \rfunone{}(1^\nat)} ({\detsem{\ternjudg{\renvone}{\reone_1}{\tyone_1}}\nat(m)\tensprod \ldots\tensprod \detsem{\ternjudg{\renvone}{\reone_k}{\tyone_k}}\nat(m)})\\[0.5ex]
    \sem{\binjudg{\renvone}{\pskip}}{\nat} &\defsym \distrone \mapsto \distrone\\
    \sem {\binjudg{\renvone}{\ass{\rvarone}{\reone}}} {\nat} &\defsym \distrone \mapsto
                                                           \bind {\distrone}{m \mapsto \bind {\detsem{\ternjudg{\renvone}{\reone}{\tyone}} \nat(m)} {t \mapsto \unit{m[\rvarone\mapsto t]}}}\\
    \sem{\binjudg{\renvone}{\seq {\prgone}{\prgtwo}}}{\nat}&\defsym \distrone \mapsto \sem{{\binjudg{\renvone}{\prgtwo}}}{\nat}\left(\left(\sem{\binjudg{\renvone}{\prgone}}{\nat}\right)(\distrone)\right)\\
    \sem{\binjudg{\renvone}{\ifr{\rvarone}{\prgone}{\prgtwo}}}{\nat} & \defsym \distrone \mapsto
                                                                 \gbind{\distrone}{m \mapsto\begin{cases}
                                                                                                    \sem {\binjudg \renvone\prgone}{\nat}(\unit {m}) & \text{if }m(\rvarone)=1\\  
                                                                                                    \sem {\binjudg \renvone\prgtwo}{\nat}(\unit {m}) & \text{if }m(\rvarone)=0  
                                                                                                    \end{cases}}
  \end{align*}
  \caption{Semantics of Expressions and Programs}
  \label{fig:exprgsem}
\end{figure*}

Before introducing programs,
we describe the
\emph{expressions} that occur in them. The set $\funset$ of
all \emph{function symbols} is assumed to be given, and
partitioned into two subsets $\dfset$ and $\rfset$, of 
deterministic and randomized function symbols, respectively.
The function $\tof \cdot$ provides the type of every function symbol
in $\funset$, i.e, 
$$
\tof{\rfunone}=\tyone_1,\ldots,\tyone_k\rightarrow\tytwo.
$$
\commentout{where $\tyone_1,\ldots,\tyone_k,$ and $\tytwo$ are types.
}
For every deterministic function symbol $\rfunone\in \dfset$,
such that
$\tof{\rfunone}=\tyone_1,\ldots,\tyone_k\rightarrow\tytwo$,
we assume the existence of a deterministic algorithm $\daof{\rfunone}$
that, on input $(1^\nat, s)$ where $s \in \detsem {\tyone_1} \nat
\times \ldots\times \detsem {\tyone_k} \nat$,
returns a value $t \in \detsem \tytwo \nat$. \revision{In order
not to violate the time constraint}, we assume
that $\daof{\rfunone}$ is polytime in the size of its first argument
for every $\rfunone \in \dfset$.
Similarly, for every $f \in \rfset$, we 
assume the existence 
of an efficient \emph{randomized} algorithm $\raof{\rfunone}$ 
that, on input $(1^\nat,s)$ where 
$s\in\detsem{\tyone_1}{\nat}\times\ldots\times\detsem{\tyone_k}{\nat}$, returns a 
distribution in $\sem{\sigma}{\nat}$.

The following primitive operations, with self-explanatory behavior, will be useful below:
\begin{gather*}
  \begin{aligned}
    \rand&:\varepsilon\arr\stng{n} & \tail_{\polyone+1} &:\stng{\polyone(n)+1}\arr\stng{\polyone(n)}\\
      \head_\polyone&:\stng{\polyone(n)+1}\arr\bool & \exor &:\stng{n}\times\stng{n}\arr\stng{n}\\
  \end{aligned}\\
  \concat_{\polyone, \polytwo}:\stng{\polyone (n)}\times\stng{\polytwo (n)} \arr\stng{\polyone (n)+\polytwo (n)} \\
  \setzero_{\polyone}:\varepsilon\arr\stng{\polyone(n)}
\end{gather*}
The only randomized primitive operation among these is $\rand$, which simply returns 
$n$ independently and uniformly distributed random bits. Noticed that these functions
may be indexed by some polynomials that describe the size of their arguments.
We omit these annotations, when the sizes of the arguments are evident from the context.

\emph{Deterministic and random expressions} are defined through the following grammars:
\begin{align*}
\deone&\bnf\rvarone\midd\rfuntwo(\deone_1,\ldots,\deone_k) &
\reone&\bnf\deone\midd\rfunone(\reone_1,\ldots,\reone_k)
\end{align*}
where $\rvarone\in\rvarset, \rfuntwo \in \dfset, \rfunone \in \rfset$.
A simple type system with 
judgments in the form 
$\ternjudg{\renvone}{\reone}{\tyone}$ can be given as follows:
\begin{gather*}
\infer{\ternjudg{\renvone}{\rvarone}{\tyone}}
{\renvone(\rvarone)=\tyone}\quad
\infer{\ternjudg{\renvone}{\rfunone(\reone_1,\ldots,\reone_k)}{\tytwo}}
{\deduce{\tof{\rfunone}=\tyone_1,\ldots,\tyone_k\rightarrow\tytwo}{
\left\{\ternjudg{\renvone}{\reone_i}{\tyone_i}\right\}_{1\leq
i\leq k}}	
}
\end{gather*}
\noindent
We write $\fv{\reone}$ for the set of free variables of $\reone$.
%
One can define the semantics of typed expressions in a standard way, as 
formalized in the following definition.
\begin{definition}[Semantics of Expressions]
  The semantics of a \emph{deterministic expression} $\deone$
  is a family of functions $\{\parsem {\ternjudg \renvone \deone \tyone} \nat\}_{\nat\in \NN}$
  such that for every $\nat \in \NN$, we have
  $\parsem {\ternjudg \renvone \deone \tyone} \nat: \detsem \renvone \nat \to \detsem \tyone \nat$.
  The \emph{pre-semantics} of a general expression $\ternjudg{\renvone}{\reone}{\tyone}$, is a family $\left\{\detsem {\ternjudg{\renvone}{\reone}{\tyone}}{\nat}\right\}_{\nat \in \NN}$ where for every $\nat \in \NN$,  $\detsem {\ternjudg{\renvone}{\reone}{\tyone}}{\nat}: {\detsem \renvone \nat}\to  {\sem {\tyone}\nat}$, see Figure \ref{fig:exprgsem}.
  The semantics of a randomized expression $\reone$, indicated as 
  $\sem{\ternjudg \renvone  \reone \tyone}{}$, is the family of functions 
  $\sem{\ternjudg{\renvone}{\reone}{\tyone}}\nat: \sem \renvone\nat\to  
  {\sem\tyone\nat}$, indexed by $\nat \in \NN$, where every $\nat\in \NN$, 
  $\sem{\ternjudg{\renvone}{\reone}{\tyone}}\nat$ is the Kleisli extension of
  $\detsem {\ternjudg{\renvone}{\reone}{\tyone}}{\nat}$.
  \hfill\qed
\end{definition}

\subsection{Programs}
\label{sec:subprograms}

We work within a probabilistic imperative language without iteration.
(We discuss in Section~\ref{sec:limitations} below why dealing with loops is nontrivial.)

\emph{Programs} are defined by the following grammar:
\begin{align*}
  \prgone, \prgtwo\bnf\;\pskip\midd
               \ass{\rvarone}{\reone}\midd
               \seq{\prgone}{\prgone}\midd
               \ifr{\rvarone}{\prgone}{\prgone}
\end{align*}
A type system with judgments in the form  $\binjudg{\renvone}{\prgone}$
for the just defined programs can be 
given in a standard way. 
Programs can be endowed with a denotational semantics.
\begin{definition}
  \label{def:prgsem}
  Whenever $\binjudg \renvone \prgone$, the semantics of $\prgone$ is
  the family $\left\{ \sem {\binjudg \renvone \prgone} \nat \right\}_{\nat \in \NN}$,
  where for every $\nat \in \NN$,
  $\sem {\binjudg \renvone \prgone} \nat:\sem \renvone\nat \to \sem\renvone\nat$ 
  as in Figure \ref{fig:exprgsem}. 
  This function induces an endofunction on $\sem \renvone{}$,
  which we indicate as $\sem{\binjudg{\renvone}{\prgone}}{}$ and which is defined as 
  $\left(\sem{\binjudg{\renvone}{\prgone}}{}(\rsone)\right)_\nat 
  \defsym\sem{\binjudg{\renvone}{\prgone}}{\nat}(\rsone_\nat)$.
  Very often, $\sem{\binjudg{\renvone}{\prgone}}{\nat}$ (resp. $\sem{\binjudg{\renvone}{\prgone}}{}$)
  will be abbreviated just as $\sem{\prgone}{\nat}$ (resp. $\sem{\prgone}{}$). 
  \hfill\qed
\end{definition}

The interpretations of $\pskip$ and of the composition $\prgone \seq \prgtwo$
are trivial. \revision{The interpretation of assignments is similar to that of
  sampling statements in \cite{PSL}; it composes the initial distribution
  with the continuation that associates every $\mem$ in the support of
  the initial distribution 
with the store that is obtained by evaluating $\reone$ on $\mem$
and updating the value of $\rvarone$ in $\mem$ accordingly.}
Finally, the interpretation of the conditional constructs
composes the distribution with the function that,
given an element of its support, executes $\prgone$ on
it if this program satisfies the guard,
and executes $\prgtwo$ otherwise.
Overall, the definition of our semantics is close to that of~{Kozen}~\cite{Kozen},
but that the semantics of the conditional construct
is not given in terms of conditional distributions.
This is due to the need of having an endofunction on
\emph{efficiently samplable} distribution ensembles
as the semantics of our language.
\revision{
  Specifically, we cannot use conditioned distributions
  because, given a polytime sampler for the initial distribution
  ensemble $\rsone$, we may not be able to identify a polytime
  sampler for the distribution ensemble obtained by conditioning $\rsone$  on
  the value of the guard. For instance, although a construction
  based on rejection sampling would produce the desired output distribution,
  it would not be \emph{polytime} in general. 
}
In \Cref{subsec:kozeneq} below, we show that the two definitions are
equivalent when evaluated \emph{on distributions}.

Notice that, by construction, for every program $\prgone$,
$\sem \prgone {}$ can be seen as a family of circuits,
and for every $\nat\in \NN$,
the circuit $\sem{{\prgone}}{\nat}$
has polynomial size in $\nat$.

\subsection{Notions from Computational Cryptography}
\label{subsection:crypto}
One of our main goals is to provide a formal framework for reasoning about definitions and constructions in complexity-based cryptography, in this work we follow a standard approach (e.g., as presented in \cite{KatzLindell} or \cite{Goldreich}). A fundamental characteristic of this setting is that all parties, both honest and adversarial, are modeled as efficient (randomized) agents, in particular as \emph{probabilistic polynomial time} (PPT) algorithms. This framework underlies the notion of samplablability given in Definition~\ref{def:samp}, as such distributions are exactly those sampled by some PPT algorithm.

While it is well-known that under reasonable restrictions (e.g., on the length of keys) it is impossible to provide absolute guarantees against adversarial success, definitions of security in the computational setting require \emph{negligible} adversarial advantage, where negligibilty is measured as a function of a \emph{security parameter} $n \in \NN$. This value measures the size of the secret information used by honest parties (e.g. the length of a secret key). In keeping with the asymptotic complexity bounds on adversaries, this leads to the following:
\begin{definition}
  \label{def:nf}
A function $\nu:\NN \to \NN$ is negligible if for any $k \in \NN$, for sufficiently large $n$, $\nu(n)\le 1/n^k$.
\hfill\qed
\end{definition}

A fundamental concept in cryptography is a binary relation on distribution 
ensembles called computational indistinguishability, by which we mean that no 
probabilistic polynomial time algorithm
can distinguish between the two ensembles with non-negligible
probability of success. Here, we can 
introduce computational
indistinguishability as a relation on stores:
\begin{definition}
  \label{def:ensind}
  We say that two distribution ensembles $\{\rsone_n\}_{\nat \in \NN},
  \{\rstwo_\nat\}_{\nat \in \NN}\in \sem {\renvone}{}$
  (resp. in $\sem \tyone{}$) are indistinguishable,
  and we write $\{\rsone_\nat\}_\nat\ind
  \{\rstwo_\nat\}_{\nat}$, if and only if
  for every probabilistic polynomial time algorithm $\distone$,
  there is a negligible function $\negl: \NN \to \NN$ such that
  for every $\nat \in \NN$:
  \[
    \left|\Prob[x\leftarrow \rsone_\nat]\left[\distone(1^\nat, x)=1\right]-\Prob[x\leftarrow \rstwo_\nat]\left[\distone(1^\nat, x)=1\right]\right|\le \negl(\nat).
    \qedeqn
  \]
\end{definition}

With this definition at hand, we are able to define a central notion in complexity-based cryptography: \emph{pseudorandomness}.
Let $\{\mathcal{U}_n\}_{\nat \in \NN}$ denote the distribution ensemble where $\mathcal{U}_n$ is the uniform distribution on $\BB^n$. 

\begin{definition}
\label{def:pseudo}
A distribution ensemble $\{\rsone_n\}_{\nat \in \NN}$ is pseudorandom if
$\{\rsone_\nat\}_\nat\ind \{\mathcal{U}_\nat\}_\nat$.
A (deterministic) poly-time function $f:\BB^*\to\BB^*$ is a pseudorandom generator if there is a function
$\ell:\NN\to\NN$ with $\ell(n)>n$
such that for $x\in \BB^n$, $f(x)\in \BB^{\ell(n)}$ and 
$\{f(\mathcal{U}_\nat)\}_\nat \ind  \{\mathcal{U}_{\ell(n)}\}_\nat$.
\hfill\qed
\end{definition}

Thus, a distribution ensemble is pseudorandom if no PPT algorithm can distinguish between samples from the distribution and samples from a truly random distribution.

A more detailed discussion of complexity-based cryptography is given in Section~\ref{sec:compindep}.

\subsection{Semantic Properties of Programs}
\label{subsec:cryptoprog}

This section is devoted to showing some remarkable properties
of our programs' semantics. We start by showing that
it is equivalent to the one proposed by Kozen~\cite{Kozen}.

In \Cref{subsec:propofexpr,subsec:propofprog}, we show
some other general properties of the expression of expressions
and programs, notably these include \Cref{lemma:projexpsem,lemma:semextproj},
stating that the outcome of the evaluation of expressions
and programs does not depend on those variables that do
not appear in the program, and \Cref{lemma:prgind,lemma:exprind}, where we show that
the distribution ensemble that is obtained by evaluating the
same program (resp. expression) on two computationally
indistinguishable ensembles are themselves computationally indistinguishable.

\subsubsection{Equivalence with Kozen's Semantics}
\label{subsec:kozeneq}

In this subsection, we are interested in showing that our semantics
is equivalent to that by Kozen~\cite{Kozen}.
A subtle difference within these two semantics is that,
while that by Kozen is defined on \emph{ordinary} distributions, ours is
defined on efficiently samplable distributions. This means that,
with our semantics,
if we take an efficiently samplable distribution,
and we compute the semantics of a program on it, we obtain
another efficiently samplable distribution.
In contrast to our semantics, Kozen's one is defined
on \emph{ordinary} distributions, so
it does not provide any guarantee on the
existence of such a sampler.
However, the equivalence between these two semantics
\emph{seen on regular distributions} has as a consequence that
the loopless fragment of Kozen's language (and semantics) transforms
efficiently samplable distributions in other
efficiently samplable distributions.
This result is not trivial because, on conditional statements,
Kozen's semantic of the $\ifr \rvarone \prgone \prgtwo$
computes the conditioned distributions of $\stone$ on the
$\rvarone=1$ and $\rvarone =0$ obtaining two states $\stone_1$ and $\stone_0$,
Finally, it outputs the convex combination of $\sem \prgone{} ({\stone_1}){}$
and $\sem \prgone{} ({\stone_0}){}$ according to $\sem \rvarone{} ({\stone}){}$.
In particular, as far as we know, it is not known whether
efficiently samplable distributions are closed under conditioning.

The first step we do in order to establish this claim,
is to define the conditioning operation:
given a distribution ensemble $\rsone \in \sem \renvone{}$,
we write $\condee {{\rsone_\nat}} \rvarone b$
for the distribution that is obtained by conditioning $\rsone_\nat$
on the event $\mem(\rvarone)=\bool$. Formally, this distribution is
defined as follows:
\[
   \condee {{\rsone_\nat}} \rvarone b(\mem) \defsym
  \begin{cases}
    \frac {{\rsone_\nat}(\mem)}{{\rsone_\nat}(S)} & \text{if } \mem(\rvarone) = b \text{ and } S=\{\mem\in \supp(\rsone_\nat) \mid \mem(\rvarone) = b\}\neq \emptyset\\
    0 &\text{otherwise.}
  \end{cases}
\]
Finally, we define the convex combination of two distributions
$\distrone, \distrtwo \in \distr X$ for $\nat\in \NN$ as follows:
$$
\left(\distrone\conv d \distrtwo\right)(\vec z) \defsym d(1)\cdot \distrone(x) + d(0)\cdot \distrtwo(x).
$$
\begin{definition}
  The Kozen-style definition of the semantics of loopless program ($\ksem{\cdot}{\cdot}$) is defined identically to that of\Cref{fig:exprgsem}, except for $\ifr \rvarone \prgone \prgtwo$, which is defined as follows:
    \[
      \ksem{\ifr \rvarone \prgone \prgtwo}{\nat} \defsym
        \begin{cases}
          \ksem \prgone\nat(\rsone_\nat) & \text{if } \sem \rvarone\nat(\rsone_\nat)(1) = 1\\ 
          \ksem \prgtwo\nat(\rsone_\nat) & \text{if } \sem \rvarone\nat(\rsone_\nat)(1) = 0 \\
          \ksem \prgone\nat(\condee {{\rsone_\nat}} {\rvarone}{1})\conv{\sem\rvarone\nat(\rsone_\nat)}\ksem \prgtwo\nat(\condee {{\rsone_\nat}} {\rvarone}{0}) & \text{otherwise.}
        \end{cases}        
      \]
    \end{definition}
In order to show the equivalence we are aiming to,
we also need to introduce some results on measures:
given two measures $\mu, \nu$, we write $\mu + \nu$ as
a shorthand for $ x \mapsto \mu(x)+\nu(x)$, and for
every $k \in [0,1]$, we write $k \cdot \mu$ for
$x \mapsto k \cdot \mu (\nat, x)$.
By considering the extension of our program's semantics
to measures, we obtain the following result:
\begin{rem}
  \label{remark:sumprodsemcommute}
  Let $\prgone$ be a program, and $\nat$ a natural number. It holds that:
  \begin{itemize}
  \item $\sem\prgone{}(\mu+\nu)= \sem \prgone{}(\mu)+\sem \prgone{}(\nu)$
  \item $\forall 0\le k\le 1.\sem\prgone{}(k\cdot\mu)= k \cdot\sem \prgone{}(\mu)$
  \end{itemize}
\end{rem}
\begin{proof}
  By induction on $\prgone$.
\end{proof}

Leveraging this Remark, it is possible to show the
main result of this section, the following lemma:

\begin{lemma}
  \label{lemma:semcons}
  The semantics of programs, seen as in Definition \ref{def:prgsem}, if interpreted on \emph{ordinary} distributions,
  is equivalent to that by Kozen,~\cite{Kozen}.
\end{lemma}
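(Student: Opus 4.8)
The plan is to prove the equivalence by structural induction on the program $\prgone$, showing that $\sem{\binjudg{\renvone}{\prgone}}{\nat}(\distrone) = \ksem{\binjudg{\renvone}{\prgone}}{\nat}(\distrone)$ for every ordinary distribution $\distrone \in \distr{\detsem{\renvone}{\nat}}$ and every $\nat \in \NN$. The cases $\pskip$, assignment, and sequential composition are immediate, since the two semantics are defined identically on these constructs (and for sequencing one applies the induction hypothesis to each component, using that $\sem{\prgone}{\nat}$ and $\ksem{\prgone}{\nat}$ agree on all inputs, in particular on the intermediate distribution). So the entire content of the proof is the conditional case $\ifr{\rvarone}{\prgone}{\prgtwo}$.

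For the conditional, I would split on the value of $\sem{\rvarone}{\nat}(\distrone)(1)$, i.e. the probability that the guard evaluates to $1$ under $\distrone$. If this probability is $1$ (resp. $0$), then $\distrone$ is supported entirely on stores with $\rvarone = 1$ (resp. $\rvarone = 0$): in our semantics the \text{bind} collapses to $\sem{\prgone}{\nat}(\distrone)$ (resp. $\sem{\prgtwo}{\nat}(\distrone)$) by Remark~\ref{remark:sumprodsemcommute} and linearity, and Kozen's semantics is defined to give exactly the same thing, so the induction hypothesis closes these subcases. The interesting subcase is when $d \defsym \sem{\rvarone}{\nat}(\distrone)$ has $d(1) \in (0,1)$. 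Here the key algebraic identity is that $\distrone$ decomposes as a convex combination of its two conditionings: $\distrone = d(1)\cdot \condee{\distrone}{\rvarone}{1} \oplus_{d} d(0)\cdot\condee{\distrone}{\rvarone}{0}$, or more precisely $\distrone = d(1)\cdot\condee{\distrone}{\rvarone}{1} + d(0)\cdot\condee{\distrone}{\rvarone}{0}$ as measures. Applying Remark~\ref{remark:sumprodsemcommute} to push $\sem{\ifr{\rvarone}{\prgone}{\prgtwo}}{\nat}$ through this sum and scalar multiplication, and then observing that on $\condee{\distrone}{\rvarone}{1}$ the \text{bind} in our semantics selects the $\prgone$-branch everywhere (and symmetrically for the $0$-branch), I obtain $\sem{\ifr{\rvarone}{\prgone}{\prgtwo}}{\nat}(\distrone) = d(1)\cdot\sem{\prgone}{\nat}(\condee{\distrone}{\rvarone}{1}) + d(0)\cdot\sem{\prgtwo}{\nat}(\condee{\distrone}{\rvarone}{0})$. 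By the induction hypothesis this equals $d(1)\cdot\ksem{\prgone}{\nat}(\condee{\distrone}{\rvarone}{1}) + d(0)\cdot\ksem{\prgtwo}{\nat}(\condee{\distrone}{\rvarone}{0})$, which is precisely $\ksem{\prgone}{\nat}(\condee{\distrone}{\rvarone}{1}) \conv{d} \ksem{\prgtwo}{\nat}(\condee{\distrone}{\rvarone}{0})$, the third clause of the Kozen-style definition.

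I expect the main obstacle to be bookkeeping rather than conceptual: carefully justifying that our \text{bind}-based definition of the conditional genuinely equals $\sum_{b\in\BB} \sem{\rvarone}{\nat}(\distrone)(b) \cdot \sem{(\text{branch } b)}{\nat}(\condee{\distrone}{\rvarone}{b})$ as a measure, handling the degenerate cases where one conditioning is the all-zero distribution (so that the scalar in front is $0$ and Remark~\ref{remark:sumprodsemcommute} still applies with $k = 0$), and confirming that the extension of our semantics to arbitrary finite measures used in Remark~\ref{remark:sumprodsemcommute} is compatible with the restriction to probability distributions that Kozen works with. None of these steps is deep, but they need to be spelled out in the right order so that the linearity lemma can be invoked legitimately before the induction hypothesis is applied to the (sub-probability) conditionings.
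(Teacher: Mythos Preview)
Your proposal is correct and follows essentially the same approach as the paper: induction on $\prgone$, with the only nontrivial case being the conditional, handled via the linearity of $\sem{\cdot}{\nat}$ (Remark~\ref{remark:sumprodsemcommute}) together with the identity $\bind{\mu}{m\mapsto\sem{\prgone}{\nat}(\unit m)}=\sem{\prgone}{\nat}(\mu)$ (Lemma~\ref{lemma:unitsem}). The only stylistic difference is that you first decompose $\distrone$ as $d(1)\cdot\condee{\distrone}{\rvarone}{1}+d(0)\cdot\condee{\distrone}{\rvarone}{0}$ and then push $\sem{\ifr{\rvarone}{\prgone}{\prgtwo}}{\nat}$ through, whereas the paper expands both sides explicitly and matches terms by further decomposing each conditioned piece into a sum of scaled Diracs; your route is a bit cleaner but the ingredients are identical.
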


\begin{proof}
  We go by induction on $\prgone$. Whenever $\prgone$ is not a conditional,
  the conclusion is trivial because the two semantics are identical by
  definition. For this reason, we take in exam only the case where the
  program is $\ifr \rvarone \prgone \prgtwo$. We first assume that the
  guard is identically 1 or 0. In those cases ewe are required to verify that:
  \small
  \[
    \gbind{\rsone_\nat}{\memone \mapsto\begin{cases}
                                       \sem \prgone{\nat}(\unit {\memone}) & \text{if }\memone(\rvarone)=1\\  
                                       \sem \prgtwo{\nat}(\unit {\memone}) & \text{if }\memone(\rvarone)=0  
                                     \end{cases}}
  \]
  \normalsize
  Is equal to $\sem \prgone\nat(\rsone_\nat)$ or respectively to $\sem \prgtwo\nat(\rsone_\nat)$
  we assume without lack of generality that the distribution of the guard is identically 1.
  In this case we are asked to show that:
  \[
    \bind{\rsone_\nat}{\memone \mapsto \sem \prgone{\nat}(\unit {\memone})}=\sem \prgone\nat(\rsone_\nat).
  \]
  A formal proof of this result is in \Cref{lemma:unitsem} below; notice that the proof of that lemma does not rely on this lemma.

  Now, we assume that the guard is not identically 1 or 0.
  In this case, the claim is:  $\forall \renvone.\forall \nat \in \NN.\forall\rsone_\nat\in \sem\renvone\nat.\sem {\ifr \rvarone \prgone \prgtwo}\nat(\rsone_\nat)= \sem \prgone\nat(\condee {{\rsone_\nat}} {\rvarone}{1})\conv{\sem\rvarone\nat(\rsone_\nat)}\sem \prgtwo\nat(\condee {{\rsone_\nat}} {\rvarone}{0})$. 
  We start by fixing $\nat \in \NN$ and $\memone \in \supp (\rsone_\nat)$,
  and we observe that, by expanding the definition of the semantics, we obtain that
  $\sem{\ifr \rvarone \prgone \prgtwo}{}(\rsone)(\nat, \memone)$ is equal to:
  \begin{align*}
    \smashoperator[r]{\sum_{\memtwo :\memtwo\in \detsem \renvone \nat \land \memtwo(\rvarone)=1}}\;\;{\rsone(\nat, \memtwo)} \cdot \sem \prgone\nat(\unit {\memtwo})(\memone) +\smashoperator[r]{\sum_{\memtwo :\memtwo\in \detsem \renvone \nat \land \memtwo(\rvarone)=0}}\; \; {\rsone(\nat, \memtwo)}\cdot \sem \prgtwo\nat(\unit {\memtwo})(\memone)
  \end{align*}
  On the other hand, we can expand the definition of
  \[
    \left(\sem \prgone\nat(\condee {{\rsone_\nat}} {\rvarone}{1})\conv{\sem\rvarone\nat(\rsone_\nat)}\sem \prgtwo\nat(\condee {{\rsone_\nat}} {\rvarone}{0})\right)(\memone)
  \]
  as follows:
  \begin{align*}
     \smashoperator[r]{\sum_{\memtwo :\memtwo\in \detsem \renvone \nat \land \memtwo(\rvarone)=1}}\;\; \rsone_\nat(\memtwo)\cdot\sem \prgone{\nat}(\condee {{\rsone_\nat}} {\rvarone}{1})(\memone) +
    \smashoperator[r]{\sum_{\memtwo :\memtwo\in \detsem \renvone \nat \land \memtwo(\rvarone)=0}}\;\;\rsone_\nat(\memtwo)\cdot\sem \prgtwo{\nat}(\condee {{\rsone_\nat}} {\rvarone}{0})(\memone)
  \end{align*}
  Now, we can simply show that for every program $\prgone$ and Boolean $b$:
  \begin{align*}
    \smashoperator[r]{\sum_{\memtwo :\memtwo\in \detsem \renvone \nat \land \memtwo(\rvarone)=b}}\;\;\rsone_\nat( \memtwo)\cdot\sem \prgone\nat(\unit {\memtwo})(\memone) = 
    \smashoperator[r]{\sum_{\memtwo :\memtwo\in \detsem \renvone \nat \land \memtwo(\rvarone)=b}}\;\;\rsone_\nat( \memtwo)\cdot\sem \prgone{\nat}(\condee \stone {\rvarone}{b}(\nat))(\memone)
    \tag{$*$}
  \end{align*}
  We start by simplifying the expression on the right. First, we observe that, for Remark \ref{remark:sumprodsemcommute}, $\condee {{\rsone_{\nat}}} {\rvarone}{b}(\memone)$
  is equal to:
  \footnotesize
  \[
    \frac{1}{\sum_{\memthree :\memthree\in \detsem \renvone \nat \land \memthree(\rvarone)=b}\rsone_\nat (\memthree)}\cdot    \begin{cases}
                                                                                                                                  {\rsone_\nat(\memone)} & \text{if } \memone(\rvarone)=b\\
                                                                                                                                  0 & \text{otherwise}
                                                                                                                                \end{cases}
                                                                                                                              \]
  \normalsize
  Let $k_b \in [0,1]$ and $\rsthree_b$ be defined as follows:
  \[
    k_b \defsym \frac{1}{\sum_{\memthree :\memthree\in \detsem \renvone \nat \land \memthree(\rvarone)=b}\rsone_\nat(\memthree)}
    \quad\quad\quad
    \rsthree_b(\mem)\defsym\begin{cases}
                 {\rsone_\nat(\memone)} & \text{if } \memone(\rvarone)=b\\
                 0 & \text{otherwise.}
               \end{cases}
             \]
  With these definitions, the expression we were reducing can be rewritten as
  \[
    \ldots =\sum_{\memtwo :\memtwo\in \detsem \renvone \nat \land \memtwo(\rvarone)=b}\rsone_\nat(\memtwo)\cdot\sem \prgone\nat(k_b\cdot\rsthree_b)(\memone)\\
  \]
  from the observation above and Remark \ref{remark:sumprodsemcommute}, we obtain
  \small
  \begin{align*}
    \ldots =\sum_{\memtwo :\memtwo\in \detsem \renvone \nat \land \memtwo(\rvarone)=b}\rsone_\nat(\memtwo)\cdot k_b\cdot\sem \prgone\nat(\rsthree_b)(\memone)
      =\sem \prgone\nat(\rsthree_b)(\memone),
  \end{align*}
  \normalsize
  from the definition of $k_b$.
  We call $\rstwo^{\memone}$ the distribution ensemble defined as follows:
  \[
    \rstwo^{\memone}_\nat(\memthree) \defsym \begin{cases}
                                            \rsone_\nat({\memone}) & \text{if }\memone = \memthree\\
                                            0 & \text{otherwise}
                                          \end{cases}
  \]
  Observe that:
  $$
  \rsthree_b(\memthree)=\smashoperator[r]{\sum_{\memtwo :\memtwo\in \detsem \renvone \nat \land \memtwo(\rvarone)=b}}\;\;\rstwo^{\memtwo}_\nat(\memthree),
  $$
  so we can continue the reduction of the term observing that:
  \[
    \sem \prgone\nat(\rsthree_b)=\sem \prgone \nat\left(\sum_{\memtwo :\memtwo\in \detsem \renvone \nat \land \memtwo(\rvarone)=b}\rstwo^{\memtwo}_{\nat}\right)
  \]
  With an application of Remark \ref{remark:sumprodsemcommute} we can state the following equivalence:
  \begin{align*}
    \sem \prgone\nat(\rsthree_b)= \smashoperator[r]{\sum_{\memtwo :\memtwo\in \detsem \renvone \nat \land \memtwo(\rvarone)=b}}\;\;\sem \prgone \nat\left(\rstwo^{\memtwo}_\nat\right),
  \end{align*}
  finally, we observe that $\rstwo^{\memtwo}_\nat(\mem)=\rsone_\nat(\mem)\cdot \unit{\memtwo}(\memone)$ so, we have:
  \begin{align*}
    \sem \prgone\nat(\rsthree_b)= \smashoperator[r]{\sum_{\memtwo :\memtwo\in \detsem \renvone \nat \land \memtwo(\rvarone)=b}}\;\;\sem \prgone \nat\left(\rstwo^{\memtwo}_\nat\right)
    = \smashoperator[r]{\sum_{\memtwo :\memtwo\in \detsem \renvone \nat; \land \memtwo(\rvarone)=b}}\;\;\sem \prgone \nat\left(\mem \mapsto \rsone_\nat(\mem)\cdot \unit{\memtwo}(\memone)\right).
  \end{align*}
  So, in order to show our claim, it suffices to observe that for every $\memtwo \in \detsem\renvone\nat$ such that $\memtwo(\rvarone)=b$, we have:
  \[
    \sem \prgone \nat\left(\mem \mapsto\rsone_\nat(\mem)\cdot \unit{\memtwo}(\memone)\right) =
    \rsone_\nat(\memtwo) \cdot \sem \prgone \nat\left(\unit{\memtwo}\right).
  \]
  But this is a consequence of $\mem \mapsto\rsone_\nat(\mem)\cdot \unit{\memtwo}(\memone) =
  \mem \mapsto\rsone_\nat(\memtwo)\cdot \unit{\memtwo}(\memone) $ and of \Cref{remark:sumprodsemcommute}.
\end{proof}

\subsubsection{Properties of Expressions}
\label{subsec:propofexpr}

We start by showing that if $\stone \ind \sttwo$, then 
$\sem \reone {}(\stone)\ind\sem \reone {}(\sttwo)$ (\Cref{lemma:exprind})
and that, if $\stone \ext \sttwo$ then
$\sem \reone {}(\stone)=\sem \reone {}(\sttwo)$ (\Cref{lemma:projexpsem}).
These properties in turn, are useful later
to show that the semantic of specific \CBI's formulas
is closed with respect to $\ind\ext$ (\Cref{lemma:indcbi}).
Leveraging this result, we show that our model also enjoys
a form Restriction Property (\Cref{cor:restriction}), and
that some standard axiom schema for atomic formulas hold, \Cref{lemma:apax}.

\begin{lemma}[Indistinguishability of Expressions' Semantic]
  \label{lemma:exprind}
  For every expression $\reone$, and every pair of stores ${\tyf \rsone \renvone} \ind {\tyf \rstwo \renvtwo}$ such that $\ternjudg{\renvone}{\reone}{\tyone}, \ternjudg{\renvtwo}{\reone}{\tyone}$ it holds that $\sem\reone{}(\rsone)\ind \sem\reone{}(\rstwo)$.
\end{lemma}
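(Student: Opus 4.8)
The plan is to proceed by induction on the structure of the expression $\reone$, reducing the claim for $\sem{\reone}{}$ to the corresponding claim for the pre-semantics $\detsem{\ternjudg{\renvone}{\reone}{\tyone}}{\nat}$ and ultimately to the basic fact that computational indistinguishability is preserved by the application of a single polytime (deterministic or randomized) algorithm. The first observation I would make is a reduction step: since $\sem{\reone}{\nat}$ is by definition the Kleisli extension of $\detsem{\ternjudg{\renvone}{\reone}{\tyone}}{\nat}$, and $\detsem{\ternjudg{\renvone}{\reone}{\tyone}}{\nat}$ is itself a polytime-computable (randomized) function of a memory $\memone$, it suffices to observe that bind/Kleisli-extension of a polytime family of distributions, composed with a polytime sampler for $\rsone_\nat$ (resp.\ $\rstwo_\nat$), is again a polytime sampler; hence $\sem{\reone}{}(\rsone)$ and $\sem{\reone}{}(\rstwo)$ are both efficiently samplable, and the question of distinguishing them reduces, via a hybrid-free direct reduction, to distinguishing $\rsone$ from $\rstwo$.

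The core of the argument is the reduction itself. Fix a PPT distinguisher $\distone$ for $\sem{\reone}{}(\rsone)$ versus $\sem{\reone}{}(\rstwo)$; I want to build a PPT distinguisher $\disttwo$ for $\rsone$ versus $\rstwo$ with the same (or larger) advantage. On input $(1^\nat, \memone)$ with $\memone$ drawn from $\rsone_\nat$ or $\rstwo_\nat$, the distinguisher $\disttwo$ first runs the sampler associated with $\detsem{\ternjudg{\renvone}{\reone}{\tyone}}{\nat}$ on $\memone$ — which is polytime in $1^\nat$ because every deterministic function symbol has a polytime $\daof{\cdot}$ and every randomized function symbol has an efficient $\raof{\cdot}$, and $\reone$ is a fixed finite expression tree — obtaining a sample $t$ distributed exactly according to $\detsem{\ternjudg{\renvone}{\reone}{\tyone}}{\nat}(\memone)$. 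Then $\disttwo$ outputs $\distone(1^\nat, t)$. By construction, when $\memone \sim \rsone_\nat$ the value $t$ is distributed as $\bind{\rsone_\nat}{\detsem{\ternjudg{\renvone}{\reone}{\tyone}}{\nat}} = (\sem{\reone}{}(\rsone))_\nat$, and symmetrically for $\rstwo$; hence the distinguishing advantage of $\disttwo$ on $(\rsone,\rstwo)$ equals that of $\distone$ on $(\sem{\reone}{}(\rsone), \sem{\reone}{}(\rstwo))$. Since $\rsone \ind \rstwo$, the former is negligible, so the latter is too, which is the claim. One subtlety to handle carefully: the hypothesis gives $\tyf\rsone\renvone \ind \tyf\rstwo\renvtwo$ with possibly different environments, but well-typedness of $\reone$ in both forces $\fv{\reone}$ to receive the same types in $\renvone$ and $\renvtwo$, so the sampler for $\detsem{\reone}{\nat}$ — which by \Cref{lemma:projexpsem}-style reasoning only reads the variables in $\fv{\reone}$ — is well defined on samples from either ensemble, and I should note that $\disttwo$ may simply ignore the components of $\memone$ outside $\fv{\reone}$.

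The step I expect to be the main obstacle is the bookkeeping around \emph{efficient samplability and polynomial time}: one must check that the composed algorithm $\disttwo$ really runs in time polynomial in $\nat$, which requires that intermediate values produced while evaluating the expression tree stay of size polynomial in $\nat$. This is guaranteed by the sized type system — every type $\tyone$ has $\detsem{\tyone}{\nat}$ of size polynomial in $\nat$, and well-typedness of $\reone$ ensures each subexpression's output lands in such a type — but making this rigorous is the part that needs care rather than cleverness. Everything else is a routine structural induction (or, given the reduction above, not even an induction: a single application of the "one polytime algorithm" closure principle suffices, since $\detsem{\reone}{\nat}$ is already a polytime-samplable family uniformly in $\memone$); the remark after \Cref{def:prgsem} that $\sem{\prgone}{\nat}$ is a polynomial-size circuit family is the analogous fact for programs, and the same reasoning specialized to expressions closes the argument.
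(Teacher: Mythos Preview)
Your proposal is correct and follows essentially the same approach as the paper: given a distinguisher $\distone$ for $\sem{\reone}{}(\rsone)$ versus $\sem{\reone}{}(\rstwo)$, build a distinguisher $\disttwo$ that on input $(1^\nat,\memone)$ samples from $\detsem{\reone}{\nat}(\memone)$ and feeds the result to $\distone$, then observe that $\disttwo$ has the same advantage and is polytime because expression evaluation is polytime. The paper's proof is just a terse contrapositive version of exactly this reduction; your additional remarks about sized types guaranteeing polynomial intermediate sizes and about the role of $\fv{\reone}$ when $\renvone,\renvtwo$ might differ are valid elaborations that the paper leaves implicit.
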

\begin{proof}
  Assume the claim not to hold, this means that there is an adversary $\distone$
  which is capable to distinguish $\sem\reone{}(\rsone)$ and $\sem\reone{}(\rsone)$
  with non-negligible advantage.
  This adversary can be used to define an adversary $\disttwo$
  with the same advantage on $\rsone$ and $\rstwo$.
  On input $(1^\nat,\memone)$, it samples from $\detsem\reone\nat(\memone)$ and then
  executes $\distone$ on this sample.
  The advantage of $\disttwo$ on $\renvone, \renvtwo$ is the same of $\distone$
  on $\sem\reone\nat(\rsone)$ and $\sem\reone\nat(\rstwo)$.
  Notice that since $\reone$ is polytime computable, $\disttwo$ is also
  polytime.
\end{proof}

Then, we show that evaluation expression $\reone$ such that
$\ternjudg \renvone \reone \tyone$ in a state $\tyf \rsone \renvone$
yields the same result of evaluating it in a state
$(\tyf \rsone \renvone)_{\renvone \to \renvtwo}$ if $\ternjudg \retwo \reone \tyone$.

\begin{lemma}
  \label{lemma:parsemproj}
  For every deterministic expression $\deone$ each pair of environments $\renvone\ext \renvtwo$ such that $\ternjudg{\renvone}{\deone}{\tyone}$ and $\ternjudg{\renvtwo}{\deone}{\tyone}$, every $\nat \in \NN$ and every tuple $\memone \in \supp(\sem \renvtwo \nat)$, it holds that $\parsem {\ternjudg{\renvtwo}{\deone}{\tyone}}{\nat}(\memone)=\parsem {\ternjudg{\renvone}{\deone}{\tyone}}{\nat}(\memone\restr{\dom(\renvone)})$.
\end{lemma}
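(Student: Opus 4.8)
The plan is to argue by structural induction on the deterministic expression $\deone$, following the grammar $\deone \bnf \rvarone \midd \rfuntwo(\deone_1,\ldots,\deone_k)$ with $\rfuntwo \in \dfset$. Throughout I read the hypothesis $\memone \in \supp(\sem{\renvtwo}{\nat})$ as $\memone \in \detsem{\renvtwo}{\nat}$, i.e.\ $\memone$ is a well-typed memory for $\renvtwo$; since $\renvone \ext \renvtwo$ gives $\dom(\renvone) \subseteq \dom(\renvtwo)$ together with agreement of the two environments on $\dom(\renvone)$, the restriction $\memone\restr{\dom(\renvone)}$ lies in $\detsem{\renvone}{\nat}$, so both sides of the claimed equation are well-defined elements of $\detsem{\tyone}{\nat}$.

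For the base case $\deone = \rvarone$, the only applicable typing rule forces, from $\ternjudg{\renvone}{\rvarone}{\tyone}$, that $\rvarone \in \dom(\renvone)$ and $\renvone(\rvarone) = \tyone$. Unfolding the first clause of Figure~\ref{fig:exprgsem} gives $\parsem{\ternjudg{\renvtwo}{\rvarone}{\tyone}}{\nat}(\memone) = \memone(\rvarone)$ and $\parsem{\ternjudg{\renvone}{\rvarone}{\tyone}}{\nat}(\memone\restr{\dom(\renvone)}) = (\memone\restr{\dom(\renvone)})(\rvarone)$, and since $\rvarone \in \dom(\renvone)$ the latter equals $\memone(\rvarone)$; the two sides coincide.

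For the inductive case $\deone = \rfuntwo(\deone_1,\ldots,\deone_k)$, inverting the syntax-directed typing rule on $\ternjudg{\renvtwo}{\rfuntwo(\deone_1,\ldots,\deone_k)}{\tyone}$ yields $\tof{\rfuntwo} = \tyone_1,\ldots,\tyone_k \rightarrow \tyone$ and $\ternjudg{\renvtwo}{\deone_i}{\tyone_i}$ for each $i$; inverting it likewise on $\ternjudg{\renvone}{\rfuntwo(\deone_1,\ldots,\deone_k)}{\tyone}$ — which by hypothesis has the \emph{same} result type $\tyone$ — and using that $\tof{\rfuntwo}$ is a fixed function of the symbol $\rfuntwo$, the argument types are the same $\tyone_1,\ldots,\tyone_k$, so $\ternjudg{\renvone}{\deone_i}{\tyone_i}$ holds for each $i$. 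The induction hypothesis then applies to each $\deone_i$ with the very same $\renvone \ext \renvtwo$, $\nat$ and $\memone$, giving $\parsem{\ternjudg{\renvtwo}{\deone_i}{\tyone_i}}{\nat}(\memone) = \parsem{\ternjudg{\renvone}{\deone_i}{\tyone_i}}{\nat}(\memone\restr{\dom(\renvone)})$. Unfolding the second clause of Figure~\ref{fig:exprgsem} on both sides, each side is $\daof{\rfuntwo}(1^\nat)$ applied to the tuple of these evaluations, and we have just shown those tuples to be componentwise equal; hence the two sides agree, closing the induction.

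There is no real obstacle here: the whole argument is the unfolding of Figure~\ref{fig:exprgsem} plus two small bookkeeping points, namely (i) that restriction to $\dom(\renvone)$ does not discard a variable of $\deone$ (needed in the base case, and guaranteed because $\deone$ is typable over $\renvone$) and does not change the type assigned to any subexpression (needed in the step, and immediate since $\tof{\rfuntwo}$ is independent of the environment), and (ii) that $\memone\restr{\dom(\renvone)}$ is a legal input for the $\renvone$-semantics, which follows from $\renvone \ext \renvtwo$.
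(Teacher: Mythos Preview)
Your proof is correct and follows exactly the same structural induction as the paper's own argument. You have simply spelled out the inductive step in more detail than the paper does (the paper merely says ``the claim is a consequence of the IHs''), and you also make explicit the well-definedness bookkeeping that the paper leaves implicit.
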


\begin{proof}
  The proof is by induction on the syntax of $\deone$
  \begin{itemize}
  \item If $\deone$ is $\rvarone$, we observe that
    for every $\nat \in \NN$,
    \begin{align*}
      \parsem {\ternjudg{\renvone}{\reone}{\tyone}}{\nat}(\memone\restr{\dom(\renvone)}) &= (\memone\restr{\dom(\renvone)})(\rvarone) = \memone(\rvarone)
    \end{align*}
    because $\rvarone$ in $\dom(\renvone)$ from the assumption $\ternjudg{\renvone}{\reone}{\tyone}$.
  \item If $\deone$ is a complex expression, the claim is a consequence of the IHs.
  \end{itemize}
\end{proof}

\begin{lemma}
  \label{lemma:detsemproj}
  For every expression $\reone$ each pair of environments $\renvone\ext \renvtwo$ such that $\ternjudg{\renvone}{\reone}{\tyone}$ and $\ternjudg{\renvtwo}{\reone}{\tyone}$, and every tuple $\memone \in \supp(\sem \renvtwo \nat)$, it holds that $\detsem {\ternjudg{\renvtwo}{\reone}{\tyone}}{}(\memone)=\detsem {\ternjudg{\renvone}{\reone}{\tyone}}{}(\memone\restr{\dom(\renvone)})$.
\end{lemma}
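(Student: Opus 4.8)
The plan is to prove the equality by structural induction on the general expression $\reone$, for an arbitrary fixed $\nat \in \NN$; this is enough because $\detsem{\cdot}{}$ abbreviates the family $\{\detsem{\cdot}{\nat}\}_{\nat \in \NN}$, so the statement amounts to $\detsem{\ternjudg{\renvtwo}{\reone}{\tyone}}{\nat}(\memone) = \detsem{\ternjudg{\renvone}{\reone}{\tyone}}{\nat}(\memone\restr{\dom(\renvone)})$ for every such $\nat$. The base case will be discharged using \Cref{lemma:parsemproj}, and the inductive case using the induction hypotheses together with the fact that the tensor product of distributions and the Kleisli extension are functions of their arguments. One preliminary remark: from $\memone \in \supp(\sem{\renvtwo}{\nat})$ we get that $\memone$ is a well-typed tuple for $\renvtwo$, i.e.\ $\memone \in \detsem{\renvtwo}{\nat}$, and since $\renvone \ext \renvtwo$ means that $\renvone$ and $\renvtwo$ agree on $\dom(\renvone)$, the restriction $\memone\restr{\dom(\renvone)}$ is a well-typed tuple for $\renvone$; in particular the right-hand side of the claimed equation is well defined, and the side conditions of \Cref{lemma:parsemproj} and of the induction hypothesis — which only require $\memone \in \supp(\sem{\renvtwo}{\nat})$ — are met.

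In the base case $\reone$ is a deterministic expression $\deone$, and the corresponding clause of \Cref{fig:exprgsem} gives $\detsem{\ternjudg{\renvtwo}{\deone}{\tyone}}{\nat}(\memone) = \unit{\parsem{\ternjudg{\renvtwo}{\deone}{\tyone}}{\nat}(\memone)}$ and $\detsem{\ternjudg{\renvone}{\deone}{\tyone}}{\nat}(\memone\restr{\dom(\renvone)}) = \unit{\parsem{\ternjudg{\renvone}{\deone}{\tyone}}{\nat}(\memone\restr{\dom(\renvone)})}$. By \Cref{lemma:parsemproj} the points $\parsem{\ternjudg{\renvtwo}{\deone}{\tyone}}{\nat}(\memone)$ and $\parsem{\ternjudg{\renvone}{\deone}{\tyone}}{\nat}(\memone\restr{\dom(\renvone)})$ coincide, hence so do the two Dirac distributions built from them.

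In the inductive case $\reone = \rfunone(\reone_1, \ldots, \reone_k)$ with $\rfunone \in \rfset$ and $\tof{\rfunone} = \tyone_1, \ldots, \tyone_k \arr \tyone$. Inverting the typing rule for $\ternjudg{\renvtwo}{\reone}{\tyone}$ and for $\ternjudg{\renvone}{\reone}{\tyone}$ yields $\ternjudg{\renvtwo}{\reone_i}{\tyone_i}$ and $\ternjudg{\renvone}{\reone_i}{\tyone_i}$ for every $i$, so the induction hypothesis applies to each $\reone_i$ and gives $\detsem{\ternjudg{\renvtwo}{\reone_i}{\tyone_i}}{\nat}(\memone) = \detsem{\ternjudg{\renvone}{\reone_i}{\tyone_i}}{\nat}(\memone\restr{\dom(\renvone)})$. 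Since the tensor product is computed pointwise from its factors, the products $\detsem{\ternjudg{\renvtwo}{\reone_1}{\tyone_1}}{\nat}(\memone) \tensprod \cdots \tensprod \detsem{\ternjudg{\renvtwo}{\reone_k}{\tyone_k}}{\nat}(\memone)$ and $\detsem{\ternjudg{\renvone}{\reone_1}{\tyone_1}}{\nat}(\memone\restr{\dom(\renvone)}) \tensprod \cdots \tensprod \detsem{\ternjudg{\renvone}{\reone_k}{\tyone_k}}{\nat}(\memone\restr{\dom(\renvone)})$ are equal, and applying the Kleisli-extended sampler $\kleisli{\raof{\rfunone}(1^\nat)}$ — which depends only on $\rfunone$ and $\nat$ and not on the environment — to both sides yields, by the corresponding clause of \Cref{fig:exprgsem}, exactly the two sides of the desired equation.

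The argument is routine and no single step is a genuine obstacle; the only points that need care are doing the typing inversion cleanly in the inductive step and verifying, as in the preliminary remark, that restricting $\memone$ to $\dom(\renvone)$ produces a well-typed tuple for $\renvone$, so that \Cref{lemma:parsemproj} and the induction hypothesis are applicable and the right-hand side is meaningful.
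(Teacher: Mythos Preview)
Your proof is correct and follows the same approach as the paper: structural induction on $\reone$, discharging the deterministic base case via \Cref{lemma:parsemproj} and the definition $\detsem{\deone}{\nat}(m)=\unit{\parsem{\deone}{\nat}(m)}$, and the randomized case via the IHs. Your write-up is in fact more detailed than the paper's, which in the inductive step simply states ``the claim is a consequence of the IHs'' without spelling out the tensor product and Kleisli extension reasoning.
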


\begin{proof}
  The proof is by induction on the syntax of $\reone$
  \begin{itemize}
  \item If $\reone$ is a deterministic expression, we observe that
    for every $\nat \in \NN$,
    \begin{align*}
      \detsem {\ternjudg{\renvone}{\deone}{\tyone}}{\nat}(\memone\restr{\dom(\renvone)}) &= \unit{\parsem {\ternjudg \renvone \deone \tyone} \nat (\memone\restr{\dom(\renvone)})} \\
                                                                                   &= \unit{\parsem {\ternjudg \renvtwo \deone \tyone} \nat (\memone)}.
  \end{align*}
  \item If $\reone$ is a complex expression, the claim is a consequence of the IHs.
  \end{itemize}
\end{proof}

\begin{lemma}[Expressions' Semantic on projection]
  \label{lemma:projexpsem}
  For every expression $\reone$ such that $\ternjudg{\renvone}{\reone}{\tyone}$ and $\ternjudg{\renvtwo}{\reone}{\tyone}$, and every store $\rsone \in \sem \renvone{}$, $\sem \reone {}(\rsone_{\renvone \to \renvtwo})=\sem \reone {}(\rsone)$.
\end{lemma}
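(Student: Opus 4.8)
The plan is to derive this from \Cref{lemma:detsemproj} by a purely calculational argument using the monad structure of $\distr{\cdot}$, so that no fresh induction on $\reone$ is needed. Fix $\nat \in \NN$ and recall that $\sem{\reone}{\nat}$ is, by definition, the Kleisli extension of the pre-semantics, so that $\sem{\reone}{}(\rsone)_\nat = \bind{\rsone_\nat}{\detsem{\ternjudg{\renvone}{\reone}{\tyone}}{\nat}}$, while unfolding the definition of projection on distribution ensembles gives $(\rsone_{\renvone \to \renvtwo})_\nat = \bind{\rsone_\nat}{\memone \mapsto \unit{\memone\restr{\dom(\renvtwo)}}}$, and hence $\sem{\reone}{}(\rsone_{\renvone \to \renvtwo})_\nat = \bind{\bigl(\bind{\rsone_\nat}{\memone \mapsto \unit{\memone\restr{\dom(\renvtwo)}}}\bigr)}{\detsem{\ternjudg{\renvtwo}{\reone}{\tyone}}{\nat}}$. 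Note this already exploits the hypothesis $\ternjudg{\renvtwo}{\reone}{\tyone}$ to make the right-hand expression well-typed, together with $\renvtwo \ext \renvone$, which is implicit in the projection $\rsone_{\renvone \to \renvtwo}$ being defined in the first place.

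First I would apply associativity of $\bind{\cdot}{\cdot}$ to the right-hand side to obtain $\bind{\rsone_\nat}{\memone \mapsto \bind{\unit{\memone\restr{\dom(\renvtwo)}}}{\detsem{\ternjudg{\renvtwo}{\reone}{\tyone}}{\nat}}}$, and then the left-unit law of the monad to collapse the inner $\bind{\cdot}{\cdot}$, yielding $\bind{\rsone_\nat}{\memone \mapsto \detsem{\ternjudg{\renvtwo}{\reone}{\tyone}}{\nat}(\memone\restr{\dom(\renvtwo)})}$. Next I would observe that every $\memone \in \supp(\rsone_\nat)$ is a well-typed memory over $\dom(\renvone)$, so \Cref{lemma:detsemproj} --- instantiated with its larger environment taken to be $\renvone$ and its smaller one $\renvtwo$ --- gives $\detsem{\ternjudg{\renvtwo}{\reone}{\tyone}}{\nat}(\memone\restr{\dom(\renvtwo)}) = \detsem{\ternjudg{\renvone}{\reone}{\tyone}}{\nat}(\memone)$ for each such $\memone$. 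Since $\bind{\rsone_\nat}{f}$ depends on $f$ only through its restriction to $\supp(\rsone_\nat)$, the two continuations may be exchanged, giving $\sem{\reone}{}(\rsone_{\renvone \to \renvtwo})_\nat = \bind{\rsone_\nat}{\detsem{\ternjudg{\renvone}{\reone}{\tyone}}{\nat}} = \sem{\reone}{}(\rsone)_\nat$; since $\nat$ was arbitrary, the two distribution ensembles agree componentwise and hence coincide.

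I do not anticipate a genuine obstacle here. The one place demanding attention is keeping the two orientations of $\ext$ straight when citing \Cref{lemma:detsemproj}: there the memory being restricted lives over the \emph{larger} environment, which in our instance is $\renvone$, whereas $\renvtwo$ is the environment reached by projection. It is also worth noting that equality of distribution ensembles here is simply pointwise equality of distributions, so effective samplability --- though both ensembles enjoy it --- plays no role in the statement. A direct structural induction on $\reone$ mirroring the proof of \Cref{lemma:detsemproj} would also close the goal, but it is longer and largely redundant once that lemma is available.
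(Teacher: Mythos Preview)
Your proposal is correct and follows essentially the same route as the paper: expand the projection and the Kleisli extension, use associativity and the left-unit law of the distribution monad to reduce to $\bind{\rsone_\nat}{\memone \mapsto \detsem{\reone}{\nat}(\memone\restr{\dom(\renvtwo)})}$, and then invoke \Cref{lemma:detsemproj} to replace the continuation by $\detsem{\ternjudg{\renvone}{\reone}{\tyone}}{\nat}$. Your remark about the orientation of $\ext$ when instantiating \Cref{lemma:detsemproj} is exactly the one subtlety to watch, and you have it right.
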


\begin{proof}
  We start by expanding the definition of $\sem \reone \nat(\rsone_{\renvone \to \renvtwo})$ for some $t \in \sem \tyone \nat$:
  \begin{align*}
    \ldots &= \bind{\bind{\rsone_\nat}{\memone \mapsto \unit {\memone\restr{\dom(\renvtwo)}}}}{\vec y \mapsto\detsem \reone \nat (\vec y)}\\
           &= \bind{\rsone_\nat} {\memone \mapsto \bind {\unit {\memone\restr{\dom(\renvtwo)}}}{\vec y \mapsto\detsem \reone \nat (\vec y)}}\\
           &= \bind{\rsone_\nat} {\memone \mapsto {\detsem \reone \nat ({\memone\restr{\dom(\renvtwo)}})}}
  \end{align*}
  It holds that $\detsem \reone \nat({\memone\restr{\dom(\renvtwo)}}) = \detsem \reone \nat (\memone)$ for Lemma \ref{lemma:detsemproj}, so we can conclude the proof.
\end{proof}

We conclude this section by showing a technical result that is employed in the proof of soundness of the rules 

\begin{rem}
  \label{rem:detsemmemupdate}
  For every $\nat \in \NN$, deterministic expression $\deone$, every $\renvone$ such that $\ternjudg \renvone \deone \tyone$, and every $\rvarone \in \dom (\renvone)$ such that $\rvarone \notin \fv \deone$, every $\memone \in \detsem \renvone \nat$, and every $t \in \detsem \tyone \nat$, we have that $\parsem \deone \nat (\memone) = \parsem \deone \nat (\memone[\rvarone \mapsto t])$. 
\end{rem}
\begin{proof}
  We start by fixing $\nat \in \NN$. The proof is by induction on $\deone$. If it is a variable, we are sure that it must be different from $\rvarone$, so the conclusion is trivial. In the inductive case, the conclusion is a direct consequence of the IHs on the sub-expressions.
\end{proof}


\subsubsection{Semantic Properties of Programs}
\label{subsec:propofprog}

In this section, we show some remarkable properties of our programs' semantics. We start with \Cref{lemma:semextproj}, where we show that the semantic of a program depends only on those variables that are used to type that program.
In \Cref{lemma:assorth,lemma:mvproj}, we show the soundness of the function $\mv\cdot$: in particular, we show that the program variables that do not belong to $\mv\cdot$ are not affected by the evaluation of $\prgone$.
Another result with the same spirit is \Cref{lemma:sempartcomm}, where we show that if a program $\prgone$ is evaluated on the tensor product of two distribution ensembles $\rsone$ and $\rstwo$, and the domain on one of them contains only variables that $\prgone$ does not use, then in the resulting distribution ensemble, these variables are still independent from the others.

\Cref{lemma:prgind} states that by evaluating the same program on two computationally indistinguishable stores, we obtain two computationally indistinguishable outputs. Crucially, it is possible to show this result because our programs are polytime.  

In \Cref{lemma:exprcompsem}, we show that after an assignment of an expression to a variable $\rvarone$ , such that $\rvarone \notin \fv\reone$ the values obtained by evaluating $\rvarone$ and $\reone$ are the same. \Cref{lemma:unitsem,lemma:suppbackcomp,lemma:detexprcompsem} are devoted to show a similar claim, but for deterministic expressions. These results and \Cref{lemma:sepassntech} are employed in the proof of soundness of the rules for assignments in \Cref{sec:inference}.

\begin{lemma}
  \label{lemma:semextproj}
  For every program $\prgone$, and every store ${\tyf \rsone \renvone}$ such that $\renvtwo \ext \renvone$,  $\binjudg{\renvtwo}{\prgone}$, it holds that $\sem\prgone{}(\rsone_{\renvone\to\renvtwo})= \sem\prgone{}(\rsone)_{\renvone\to\renvtwo}$.  
\end{lemma}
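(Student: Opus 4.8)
The plan is to argue by structural induction on $\prgone$. Every equality below can be checked component-wise in the security parameter, since both the program semantics and the projection on distribution ensembles are defined point-wise in $\nat$; so one may fix $\nat$ and reason about ordinary distributions. Two preliminary remarks are in order. First, from $\binjudg{\renvtwo}{\prgone}$ and $\renvtwo \ext \renvone$ one obtains $\binjudg{\renvone}{\prgone}$ by a routine weakening of the typing judgement, so that both sides of the claimed equation are well defined. Second, by the definition of projection, $\rsone_{\renvone\to\renvtwo}$ is just the post-composition of $\rsone$ with the fixed Kleisli arrow $\memone \mapsto \unit{\memone\restr{\dom(\renvtwo)}}$; hence in each case the work amounts to re-associating \texttt{bind} operations according to the monad laws and then verifying a point-wise identity on memories.

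The base case $\prgone = \pskip$ is immediate, both semantics being the identity. For $\prgone = \ass{\rvarone}{\reone}$, I would expand the two sides using monad associativity and the unit law; the two expressions coincide because $\detsem{\reone}{\nat}$ takes the same value on $\memone$ and on $\memone\restr{\dom(\renvtwo)}$ --- this is \Cref{lemma:detsemproj}, applicable since $\ternjudg{\renvtwo}{\reone}{\tyone}$ forces $\fv{\reone}\subseteq\dom(\renvtwo)$ --- and because $\rvarone \in \dom(\renvtwo)$, which is forced by $\binjudg{\renvtwo}{\ass{\rvarone}{\reone}}$, so that updating $\rvarone$ commutes with restricting a memory to $\dom(\renvtwo)$: $(\memone\restr{\dom(\renvtwo)})[\rvarone\mapsto t] = (\memone[\rvarone\mapsto t])\restr{\dom(\renvtwo)}$.

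In the inductive step, the sequential composition case follows at once from the fact that the semantics of a sequential composition is the composition of the semantics of its two (both $\renvtwo$-typable) components, by applying the induction hypothesis twice. The conditional case is the one that needs care: if $\prgone$ is a conditional with guard $\rvarone$, then $\rvarone \in \dom(\renvtwo)$, so restricting a memory to $\dom(\renvtwo)$ leaves the value of the guard unchanged; I would push the outer projection inside the \texttt{bind} operation defining the conditional semantics (again by associativity) and, on each branch, apply the induction hypothesis to the Dirac distribution $\unit{\memone}$, using that it projects to $\unit{\memone\restr{\dom(\renvtwo)}}$. I expect this last case to be the only real obstacle --- and essentially a matter of bookkeeping: one must keep track of the fact that the branching variable lies in the smaller environment (so that the case split is preserved by the projection) and that the induction hypothesis is invoked on the singleton distributions produced by unfolding the conditional. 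Everything else is routine monadic manipulation together with \Cref{lemma:detsemproj}.
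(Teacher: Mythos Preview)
Your proposal is correct and follows essentially the same approach as the paper's proof: structural induction on $\prgone$, with the assignment case handled via \Cref{lemma:detsemproj} together with $\rvarone\in\dom(\renvtwo)$, and the conditional case handled by pushing the projection inside the \texttt{bind} and invoking the induction hypothesis on the Dirac distributions $\unit{\memone}$. Your explicit remark about weakening the typing judgement to obtain $\binjudg{\renvone}{\prgone}$ is a useful clarification that the paper leaves implicit.
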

\begin{proof}
  The proof goes by induction on the program.
  \begin{itemize}
  \item If $\prgone = \pskip$, the conclusion is trivial.
  \item If $\prgone = \ass \rvarone \reone$, observe that:
    \begin{align*}
      (\sem {\prgone}\nat(\rsone))_{\renvone\to\renvtwo}&= \bind{\sem {\ass \rvarone \reone}\nat(\rsone)}{\memone\mapsto \memone\restr{\dom(\renvtwo)}}.
    \end{align*}
    By expanding the definition of the semantics and applying some monadic laws, it is possible to deduce that it is equal to:\\[1.5ex]
      $$\bind {\rsone_\nat} {\memone \mapsto\bind {\detsem {\binjudg \renvone \reone} {\nat} (\memone)} {t \mapsto \unit{{\memone[\rvarone\mapsto t]}\restr{\dom(\renvtwo)}}}},
      $$
    on the other hand, with similar expansions, we obtain that $(\sem {\prgone}\nat(\rsone_{\renvone\to\renvtwo}))$ is equal to:
      $$
      \bind {\rsone_\nat} {\memone \mapsto\bind {\detsem {\binjudg \renvtwo \reone} {\nat} (\memone\restr{\dom(\renvtwo)})} {t \mapsto \unit{{\memone\restr{\dom(\renvtwo)}[\rvarone\mapsto t]}}}},
      $$
    and with an application of \Cref{lemma:detsemproj}, we observe that, this is equal to:
    $$
    \bind {\rsone_\nat} {\memone \mapsto\bind {\detsem {\binjudg \renvone \reone} {\nat} (\memone)} {t \mapsto \unit{{\memone\restr{\dom(\renvtwo)}[\rvarone\mapsto t]}}}}.
    $$
    By observing $\binjudg \renvtwo \prgone$, we obtain that $\rvarone \in \dom(\renvtwo)$, which is enough to state the equivalence. 
  \item The case of composition is a trivial consequence of the IH.
  \item If $\prgone$ is a conditional we are required to show that:\\[1.5ex]
    $$
      \sem{\ifr \rvarone \prgone \prgtwo}{}(\rsone)_{\renvone \to \renvtwo} = \sem{\ifr \rvarone \prgone \prgtwo}{}(\rsone_{\renvone \to \renvtwo})
    $$
    The term on the right can be reduced as follows:
    $$
    \bind{\bind{\rsone_\nat} {\memone \mapsto h(\memone)}}{\memtwo \mapsto \unit{\memtwo\restr{\dom(\renvtwo)}}}
    $$
    where
    $$
    h(\memone) =
    \begin{cases}\sem \prgone \nat(\unit {\memone}) & \text{if } \memone(\rvarone)=1\\
      \sem \prgtwo \nat(\unit {\memone}) &\text{if } \memone(\rvarone)=0
    \end{cases}
    $$
    which is equal to:
    \begin{align*}
      \bind{\rsone_\nat} {\memone \mapsto \bind{h (\memone)}{\memtwo \mapsto \unit{\memtwo\restr{\dom(\renvtwo)}}}}
    \end{align*}
    and, by definition of projection, to:
    \begin{align*}
      \bind{\rsone_\nat} {\memone \mapsto h(\memone)_{\renvone \to \renvtwo}}.
    \end{align*}
    As we did of assignments, we can conclude the
    proof by assessing that the term above is equal to:
    \begin{align*}
      \bind{\rsone_\nat} {\memone \mapsto h(\memone\restr{\dom(\renvtwo)})}
      \tag{$*$}
    \end{align*}
    Observe that:
    $$
    h(\memone\restr{\dom(\renvtwo)}) =
    \begin{cases}\sem \prgone \nat(\unit {\memone}_{\renvone\to \renvtwo}) & \text{if } \memone(\rvarone)=1\\
      \sem \prgtwo \nat( {\unit {\memone}_{\renvone\to \renvtwo}}) &\text{if } \memone(\rvarone)=0
    \end{cases}
    $$
    We continue by proceed applying the IHs, to conclude that ($*$) is equal to:
    \begin{align*}
      \bind{\rsone_\nat} {\memone \mapsto h(\memone)_{\renvone \to \renvtwo}}.
    \end{align*}
  \end{itemize}
\end{proof}

\begin{lemma}
  \label{lemma:assorth}
  \[
    \forall \renvone.\forall \renvtwo \ext \renvone. \binjudg{\renvone}{\ass \rvarone \reone}\Rightarrow \rvarone \notin \dom(\renvtwo)\Rightarrow
    \forall \rsone \in \sem \renvone{}. {{\tyf \rsone \renvone}}_{\renvone \to \renvtwo} = ({\sem {\ass \rvarone \reone}{} {\tyf \rsone \renvone}})_{\renvone \to \renvtwo}.
  \]
\end{lemma}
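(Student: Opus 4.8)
The plan is to fix an arbitrary security parameter $\nat \in \NN$ and reduce both distribution ensembles, evaluated at index $\nat$, to a common expression, using the monad laws for $\distr{\cdot}$ together with the hypothesis $\rvarone \notin \dom(\renvtwo)$; the argument is entirely in the spirit of the proof of \Cref{lemma:semextproj}. Expanding the definition of the semantics of an assignment from \Cref{fig:exprgsem} and the definition of projection, the right-hand side $(\sem{\ass\rvarone\reone}{}(\tyf\rsone\renvone))_{\renvone\to\renvtwo}$ at index $\nat$ is
\[
  \bind{\left(\bind{\rsone_\nat}{\memone \mapsto \bind{\detsem{\reone}{\nat}(\memone)}{t \mapsto \unit{\memone[\rvarone\mapsto t]}}}\right)}{\memtwo \mapsto \unit{\memtwo\restr{\dom(\renvtwo)}}}.
\]
Applying associativity of $\bind{\cdot}{\cdot}$ twice to push the outer bind inside, and then the left-unit law to collapse $\bind{\unit{\memone[\rvarone\mapsto t]}}{\memtwo \mapsto \unit{\memtwo\restr{\dom(\renvtwo)}}}$ to $\unit{(\memone[\rvarone\mapsto t])\restr{\dom(\renvtwo)}}$, this rewrites to
\[
  \bind{\rsone_\nat}{\memone \mapsto \bind{\detsem{\reone}{\nat}(\memone)}{t \mapsto \unit{(\memone[\rvarone\mapsto t])\restr{\dom(\renvtwo)}}}}.
\]

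Next I would invoke the hypotheses. From $\binjudg{\renvone}{\ass\rvarone\reone}$ we get $\rvarone \in \dom(\renvone)$, so $\memone[\rvarone \mapsto t]$ is a well-typed element of $\detsem{\renvone}{\nat}$; and from $\renvtwo \ext \renvone$ together with $\rvarone \notin \dom(\renvtwo)$ we get $(\memone[\rvarone\mapsto t])\restr{\dom(\renvtwo)} = \memone\restr{\dom(\renvtwo)}$ for every $t$. Substituting this in, the inner continuation $t \mapsto \unit{\memone\restr{\dom(\renvtwo)}}$ no longer depends on $t$; since $\detsem{\reone}{\nat}(\memone) \in \sem{\tyone}{\nat}$ is a genuine probability distribution, binding it against a constant continuation returns that constant, so the displayed term equals $\bind{\rsone_\nat}{\memone \mapsto \unit{\memone\restr{\dom(\renvtwo)}}}$. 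This is exactly $\left((\tyf\rsone\renvone)_{\renvone\to\renvtwo}\right)_\nat$, the left-hand side at index $\nat$. As $\nat$ was arbitrary and the underlying environments agree on both sides ($\renvtwo$, the assignment not changing the environment), the two stores coincide, which is the claim.

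The argument is pure bookkeeping and I do not foresee a genuine obstacle; the only point requiring a moment's care is the penultimate step, namely that $\bind{\distrone}{t \mapsto c} = c$ whenever $\distrone$ has total mass $1$ --- which is precisely where normalisation of $\detsem{\reone}{\nat}(\memone)$ is used --- and if this elementary monad identity is not already available it can be recorded in a one-line remark alongside \Cref{remark:sumprodsemcommute}.
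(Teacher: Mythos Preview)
Your proof is correct and follows essentially the same approach as the paper's: expand the semantics and the projection, push the restriction bind inward via the monad laws, use $\rvarone \notin \dom(\renvtwo)$ to drop the update, and collapse the now-constant inner continuation. The paper's write-up is slightly terser (it does not name the specific monad laws and leaves the constant-continuation step as ``easy to verify''), but the structure and key steps are identical.
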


\begin{proof}
  Observe that:
  \small
  \begin{align*}
    (\sem {\ass \rvarone \reone}\nat(\rsone))_{\renvone\to\renvtwo}&= \bind{\sem {\ass \rvarone \reone}{}(\rsone)}{\memone\mapsto \unit{\memone\restr{\dom(\renvtwo)}}}.
  \end{align*}
  \normalsize
  By expanding the definition of the semantics and applying standard monads' properties, it is possible to deduce that it is equal to:
  \small
  \[
    \bind {\rsone(\nat)} {\memone \mapsto\bind {\detsem \reone {} (\nat, \memone)} {t \mapsto \unit{{{\memone[\rvarone\mapsto t]}\restr{\dom(\renvtwo)}}}}},
  \]
  \normalsize
  which, in turn, is equal to:
  $$
  \bind {\rsone(\nat)}{\memone \mapsto\bind {\detsem \reone {} (\nat, \memone)} {t \mapsto \unit{{\memone}\restr{\dom(\renvtwo)}}}}
  $$
  for the assumption on $\rvarone$.
  It is easy to verify that, since ${t \mapsto \unit{{\memone}\restr{\dom(\renvtwo)}}}$ is a constant function, the expression above is equal to:
  $$
  \bind {\rsone(\nat)}{\memone \mapsto \unit{\memone\restr{\dom(\renvtwo)}}}.
  $$
\end{proof}

\begin{lemma}[Modified Variables and Projection]
  \label{lemma:mvproj}
  It holds that:
  \begin{multline*}
    \forall \prgone, \renvone. \forall \renvtwo \ext \renvone. \binjudg{\renvone}{\prgone}\Rightarrow \dom (\renvtwo)\cap\mv\prgone=\emptyset\Rightarrow
    \forall \rsone \in \sem \renvone. {{\tyf \rsone \renvone}}_{\renvone \to \renvtwo} = {\sem \prgone{} ({\tyf \rsone \renvone})}_{\renvone \to \renvtwo}.
  \end{multline*}
\end{lemma}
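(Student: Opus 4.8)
The plan is to prove \Cref{lemma:mvproj} by induction on the structure of $\prgone$, along the lines of the proof of \Cref{lemma:semextproj} but using the hypothesis $\dom(\renvtwo)\cap\mv\prgone=\emptyset$ in place of typability of $\prgone$ against the smaller environment. It is convenient to fix $\nat\in\NN$ and prove the pointwise form of the statement, namely that $\bind{\sem\prgone\nat(\rsone_\nat)}{\memone\mapsto\unit{\memone\restr{\dom(\renvtwo)}}}=\bind{\rsone_\nat}{\memone\mapsto\unit{\memone\restr{\dom(\renvtwo)}}}$ for every $\rsone_\nat\in\sem\renvone\nat$; by \Cref{def:prgsem} and the definition of projection this is equivalent to the claimed equality of stores. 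The case $\prgone=\pskip$ is immediate, since $\sem\pskip\nat$ is the identity. For $\prgone=\ass\rvarone\reone$ we have $\mv{\ass\rvarone\reone}=\{\rvarone\}$, so the hypothesis gives $\rvarone\notin\dom(\renvtwo)$ and the claim is exactly \Cref{lemma:assorth} --- note that, unlike in \Cref{lemma:semextproj}, we do not need $\ass\rvarone\reone$ to type-check against $\renvtwo$. For $\prgone=\seq{\prgone_1}{\prgone_2}$, inverting the typing derivation yields $\binjudg\renvone{\prgone_1}$ and $\binjudg\renvone{\prgone_2}$, and $\mv{\seq{\prgone_1}{\prgone_2}}=\mv{\prgone_1}\cup\mv{\prgone_2}$ makes $\dom(\renvtwo)$ disjoint from both; applying the induction hypothesis first to $\prgone_1$ at $\rsone_\nat$ and then to $\prgone_2$ at $\sem{\prgone_1}\nat(\rsone_\nat)$, together with $\sem{\seq{\prgone_1}{\prgone_2}}\nat=\sem{\prgone_2}\nat\circ\sem{\prgone_1}\nat$, chains the two equalities.

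The only case requiring work is $\prgone=\ifr\rvarone{\prgone_1}{\prgone_2}$, where $\mv{\ifr\rvarone{\prgone_1}{\prgone_2}}=\mv{\prgone_1}\cup\mv{\prgone_2}$ and again $\binjudg\renvone{\prgone_1}$, $\binjudg\renvone{\prgone_2}$. Expanding the semantics of the conditional from \Cref{fig:exprgsem}, the left-hand side is the $\renvtwo$-projection of $\bind{\rsone_\nat}{g}$, where $g$ sends $\memone$ to $\sem{\prgone_1}\nat(\unit\memone)$ if $\memone(\rvarone)=1$ and to $\sem{\prgone_2}\nat(\unit\memone)$ otherwise; by associativity of $\bind\cdot\cdot$ this equals $\bind{\rsone_\nat}{\memone\mapsto\bind{g(\memone)}{\memtwo\mapsto\unit{\memtwo\restr{\dom(\renvtwo)}}}}$. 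Since $\unit\memone\in\sem\renvone\nat$, the induction hypothesis applied to $\prgone_1$ and to $\prgone_2$ shows that $\bind{g(\memone)}{\memtwo\mapsto\unit{\memtwo\restr{\dom(\renvtwo)}}}=\bind{\unit\memone}{\memtwo\mapsto\unit{\memtwo\restr{\dom(\renvtwo)}}}=\unit{\memone\restr{\dom(\renvtwo)}}$ in both branches; the two branches having collapsed to the same thing, the whole expression is $\bind{\rsone_\nat}{\memone\mapsto\unit{\memone\restr{\dom(\renvtwo)}}}$, which is the right-hand side.

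I expect the only real difficulty to be bookkeeping in the conditional case --- arranging the monad-law rewrites so that the projection commutes with the case split, and keeping track of which typing and disjointness hypotheses are needed to invoke the induction hypothesis on $\prgone_1$ and $\prgone_2$. Everything else reduces to the identity $\bind{\unit\memone}{\memtwo\mapsto\unit{\memtwo\restr{\dom(\renvtwo)}}}=\unit{\memone\restr{\dom(\renvtwo)}}$ and standard properties of $\bind\cdot\cdot$ and $\unit\cdot$; once the pointwise statement holds for all $\nat$, lifting it to the equality of stores asserted by \Cref{lemma:mvproj} is immediate.
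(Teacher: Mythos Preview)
Your proof is correct and follows essentially the same approach as the paper: structural induction on $\prgone$, the assignment case delegated to \Cref{lemma:assorth}, and the conditional case handled by pushing the projection inside the $\mathsf{bind}$ via associativity and then invoking the induction hypotheses on the Dirac distributions $\unit\memone$. Your explicit reduction to a pointwise statement over $\sem\renvone\nat$ is a nice touch, as it makes the instantiation of the induction hypothesis at $\unit\memone$ unambiguous.
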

\begin{proof}
  This proof is by induction on $\prgone$.
  \begin{itemize}
  \item If $\prgone$ is $\pskip$, then the premise coincides with the conclusion.
  \item If $\prgone$ is $\seq \prgone \prgtwo$, then the claim comes from the application of the IH on $\prgone$ and of that on $\prgtwo$. 
  \item If the program is $\ifr \reone \prgone \prgtwo$, we start by expanding $      \sem{\ifr \rvarone \prgone \prgtwo}{\nat}(\rsone)_{\renvone \to \renvtwo}$ as follows:
        $$
        \bind{\bind{\rsone_\nat} {\memone \mapsto h(\memone)}}{\memtwo \mapsto \unit{\memtwo\restr{\dom(\renvtwo)}}}
        $$
        where
        $$
        h(\memone) =\begin{cases}\sem \prgone \nat(\unit {\memone}) & \text{if } \memone(\rvarone)=1\\
                                                                                                                               \sem \prgtwo \nat(\unit {\memone}) &\text{if } \memone(\rvarone)=0
                   \end{cases}
                   $$
        which is equal to:
        \begin{align*}
          \bind{\rsone_\nat} {\memone \mapsto \bind{h (\memone)}{\memtwo \mapsto \unit{\memtwo\restr{\dom(\renvtwo)}}}}\\
        \end{align*}
        Now, it is possible to observe that $\bind{h (\memone)}{\memtwo \mapsto \unit{\memtwo\restr{\dom(\renvtwo)}}}$ is equal to:
        $$
        \begin{cases}\bind{\sem \prgone \nat(\unit {\memone})}{\memtwo \mapsto \unit{\memtwo\restr{\dom(\renvtwo)}}} & \text{if } \memone(\rvarone)=1\\
                                                         \bind{\sem \prgtwo \nat(\unit {\memone})}{\memtwo \mapsto \unit{\memtwo\restr{\dom(\renvtwo)}}} &\text{if } \memone(\rvarone)=0,
                                                       \end{cases}
                                                       $$
                                                       \normalsize
                                                       which in turn is equal to 
        $$
\begin{cases} {\sem \prgone \nat(\unit {\memone})}_{\renvone \to \renvtwo} & \text{if } \memone(\rvarone)=1\\
                                                         {\sem \prgtwo \nat(\unit {\memone})}_{\renvone \to \renvtwo} &\text{if } \memone(\rvarone)=0.
                                                       \end{cases}                                                       $$
      The conclusion follows from the IHs.
    \item $\prgone$ is $ \ass \rvarone \reone$, the claim is a consequence of Lemma \ref{lemma:assorth}.
    \end{itemize}
\end{proof}

\begin{lemma}
  \label{lemma:sempartcomm}
  For every program $\prgone$, and every pair of stores $\stone = {\tyf \rsone \renvone}$, $\sttwo = {\tyf \rstwo \renvtwo}$ such that $\binjudg\renvone \prgone$ and $\stone \comp \sttwo \defined$, it holds that $\sem \prgone {}(\stone \comp \sttwo) = \sem \prgone {}(\stone) \comp \sttwo$.
\end{lemma}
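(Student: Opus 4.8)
The plan is to establish the equation by structural induction on $\prgone$, componentwise in the security parameter. Fixing $\nat \in \NN$, it suffices to prove that $\sem{\prgone}{\nat}(\distrone \tensprod \distrtwo) = \sem{\prgone}{\nat}(\distrone) \tensprod \distrtwo$ holds for every $\distrone \in \sem{\renvone}{\nat}$ and every $\distrtwo \in \sem{\renvtwo}{\nat}$ with $\dom(\renvone) \cap \dom(\renvtwo) = \emptyset$; the family-level statement then follows by applying this at each $\nat$, using \Cref{def:prgsem} and the fact that $\binjudg{\renvone}{\prgone}$ entails $\binjudg{\join{\renvone}{\renvtwo}}{\prgone}$ by weakening, so that all the terms involved are well-typed (and $\sem{\prgone}{}(\stone)$ stays a store over $\renvone$, which keeps the intermediate tensor products defined). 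Throughout I identify a memory over $\join{\renvone}{\renvtwo}$ with the pair $(\memone_1, \memone_2)$ of its restrictions to $\dom(\renvone)$ and $\dom(\renvtwo)$, so that $(\distrone \tensprod \distrtwo)(\memone_1, \memone_2) = \distrone(\memone_1) \cdot \distrtwo(\memone_2)$ and $\unit{(\memone_1, \memone_2)} = \unit{\memone_1} \tensprod \unit{\memone_2}$.

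The base case $\prgone = \pskip$ is immediate. For $\prgone = \ass{\rvarone}{\reone}$, I would expand the bind defining the semantics and use associativity of bind to rewrite $\sem{\ass{\rvarone}{\reone}}{\nat}(\distrone \tensprod \distrtwo)$ as $\bind{\distrtwo}{\memone_2 \mapsto \bind{\distrone}{\memone_1 \mapsto k(\memone_1,\memone_2)}}$, where $k(\memone_1,\memone_2) = \bind{\detsem{\reone}{\nat}(\memone_1,\memone_2)}{t \mapsto \unit{(\memone_1,\memone_2)[\rvarone \mapsto t]}}$. From $\binjudg{\renvone}{\ass{\rvarone}{\reone}}$ we get $\rvarone \in \dom(\renvone)$ and $\ternjudg{\renvone}{\reone}{\renvone(\rvarone)}$, so by \Cref{lemma:detsemproj} the value $\detsem{\reone}{\nat}(\memone_1,\memone_2)$ equals $\detsem{\reone}{\nat}(\memone_1)$, and the update touches only the first component, i.e.\ $(\memone_1,\memone_2)[\rvarone \mapsto t] = (\memone_1[\rvarone \mapsto t], \memone_2)$. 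Since $\unit{(\memone_1[\rvarone \mapsto t], \memone_2)} = \unit{\memone_1[\rvarone \mapsto t]} \tensprod \unit{\memone_2}$, the constant factor $\tensprod\,\unit{\memone_2}$ can be pulled out of both binds, leaving $\sem{\ass{\rvarone}{\reone}}{\nat}(\distrone) \tensprod \distrtwo$. For a sequential composition $\prgone = \seq{\prgone_1}{\prgone_2}$ one simply chains the induction hypothesis for $\prgone_1$ and then for $\prgone_2$ (the latter applied to the store $\sem{\prgone_1}{\nat}(\distrone) \tensprod \distrtwo$).

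The crux is the conditional $\prgone = \ifr{\rvarone}{\prgone_1}{\prgone_2}$. Here I would first reduce an arbitrary input to Dirac inputs: writing $\distrone \tensprod \distrtwo = \sum_{\memone_1,\memone_2} \distrone(\memone_1)\,\distrtwo(\memone_2)\,\unit{(\memone_1,\memone_2)}$ and invoking the linearity of $\sem{\cdot}{\nat}$ (\Cref{remark:sumprodsemcommute}), it suffices to prove $\sem{\prgone}{\nat}(\unit{(\memone_1,\memone_2)}) = \sem{\prgone}{\nat}(\unit{\memone_1}) \tensprod \unit{\memone_2}$ for each $\memone_1, \memone_2$, and then to reassemble the sum, pulling $\sem{\cdot}{\nat}$ and the linear operation $\mathcal{G} \mapsto \mathcal{G} \tensprod \distrtwo$ back out by linearity. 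On a Dirac input the conditional deterministically selects one branch according to $(\memone_1,\memone_2)(\rvarone)$, and this equals $\memone_1(\rvarone)$ because $\binjudg{\renvone}{\prgone}$ forces $\rvarone \in \dom(\renvone)$; as the same value $\memone_1(\rvarone)$ governs both $\sem{\prgone}{\nat}(\unit{(\memone_1,\memone_2)})$ and $\sem{\prgone}{\nat}(\unit{\memone_1})$, both reduce to the branch $\prgone_i$, and the induction hypothesis applied to the structurally smaller $\prgone_i$ gives $\sem{\prgone_i}{\nat}(\unit{(\memone_1,\memone_2)}) = \sem{\prgone_i}{\nat}(\unit{\memone_1}) \tensprod \unit{\memone_2}$, closing the case.

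I expect the main obstacle to be the conditional case, and specifically the bookkeeping around it: one must check that the guard inspects only a variable of $\dom(\renvone)$ (so the branch taken is insensitive to the $\distrtwo$-component), that the manipulations moving $\sem{\cdot}{\nat}$ and $\mathcal{G} \mapsto \mathcal{G}\tensprod\distrtwo$ through finite convex sums are exactly those licensed by \Cref{remark:sumprodsemcommute}, and that the induction hypothesis is invoked only on the strictly smaller branch programs — the Dirac-input step superficially resembles an instance of the very statement being proved, but it appeals to the inductive hypothesis for $\prgone_1, \prgone_2$, not for $\prgone$, so there is no circularity. The assignment case is comparatively routine, hinging on the single observation, via \Cref{lemma:detsemproj}, that evaluating $\reone$ ignores the $\dom(\renvtwo)$-coordinates, together with $\rvarone \in \dom(\renvone)$.
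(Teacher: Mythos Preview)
Your proposal is correct and follows essentially the same approach as the paper: structural induction on $\prgone$, with the assignment case resting on \Cref{lemma:detsemproj} and $\rvarone \in \dom(\renvone)$, and the conditional case reducing to Dirac inputs where the guard depends only on the $\renvone$-component, then applying the induction hypothesis to the branches. The only cosmetic difference is that you phrase the argument in terms of monad laws and the linearity from \Cref{remark:sumprodsemcommute}, whereas the paper unfolds the binds into explicit sums and manipulates those directly; the two presentations are interchangeable.
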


\begin{proof}
  We will show that
  \[
  \sem {\binjudg{\join \renvone \renvtwo} \prgone}{\nat}((\rsone\tensprod \rstwo)_\nat) = \left(\sem {\binjudg \renvone \prgone}{}(\rsone) \tensprod \rstwo\right)_\nat.
  \tag{C}
  \]
  The proof goes by induction on $\prgone$.
  \begin{itemize}
  \item The case where $\prgone = \pskip$ is trivial.
  \item When $\prgone = \ass \rvarone \reone$, we observe that:
      \begin{align*}
      \sem {\ass \rvarone \reone}\nat((\rsone \tensprod \rstwo)_\nat) &=\bind{(\rsone \tensprod \rstwo)_\nat} {\memone \mapsto \bind {\detsem {\ternjudg{\join\renvone\renvtwo}\reone \tyone} {\nat}(\memone)}{t\mapsto \unit {\memone[\rvarone \mapsto t]}}}\\
      &=\bind{(\rsone \tensprod \rstwo)_\nat} {\memone \mapsto \bind {\detsem {\ternjudg{\renvone}\reone\tyone} {\nat}(\memone\restr{\dom(\renvone)})}{t\mapsto \unit {\memone[\rvarone \mapsto t]}}}
      \end{align*}
    Observe that $\unit {\memone[\rvarone \mapsto t]}$ is equal to $\memtwo \mapsto\unit {\memone\restr{\dom(\renvone)}[\rvarone \mapsto t]}(\memtwo\restr{\dom(\renvone)})  \unit {\memone\restr{\dom(\renvtwo)}}(\memtwo\restr{\dom(\renvtwo)})$, so we can expand the expressions and obtain:
    \small
    \begin{multline*}
      \memtwo\mapsto\;\;\smashoperator[lr]{\sum_{\memone \in \supp(\sem {\join \renvone \renvtwo}\nat)}}\;\;\rsone_\nat(\memone\restr{\dom(\renvone)})\cdot\rstwo_\nat(\memone\restr{\dom(\renvtwo)})\cdot\\ \sum_{t \in \sem \tyone \nat}\detsem \reone \nat(\memone\restr{\dom(\renvone)})(t)\cdot \unit {\memone\restr{\dom(\renvone)}[\rvarone \mapsto t]}(\memtwo\restr{\dom(\renvone)}) \cdot \\\unit {\memone\restr{\dom(\renvtwo)}}(\memtwo\restr{\dom(\renvtwo)})   
    \end{multline*}
    \normalsize
    By using the distributive property of multiplication, it is possible to observe that this is equal to:
    \begin{align*}
      \memtwo\mapsto\;\;\sem {\binjudg{\renvone}{\ass \rvarone \reone}}\nat(\rsone_\nat)({\memtwo\restr{\dom(\renvone)}})\cdot\smashoperator[lr]{\sum_{\memone \in \supp(\sem {\renvtwo}\nat)}}\;\;\rstwo_\nat(\memone)\cdot\unit {\memone}(\memtwo\restr{\dom(\renvtwo)}),
    \end{align*}
    which, in turn, is equal to:
    \begin{multline*}
      \memtwo\mapsto\sem {\binjudg{\renvone}{\ass \rvarone \reone}}\nat(\rsone_\nat) ({\memtwo\restr{\dom(\renvone)}})\cdot\bind{\rstwo_\nat}{\unit\cdot}(\memtwo\restr{\dom(\renvtwo)})=\\
      \memtwo\mapsto\sem {\binjudg{\renvone}{\ass \rvarone \reone}}\nat(\rsone_\nat)({\memtwo\restr{\dom(\renvone)}})\cdot{\rstwo_\nat}(\memtwo\restr{\dom(\renvtwo)}).
    \end{multline*}
    This, in turn, is equal to:
    \[
      \left(\sem {\binjudg{\renvone}{\ass \rvarone \reone}}{}(\rsone)\tensprod{\rstwo}\right)_\nat.
    \]
  \item When the program is $\seq\prgone\prgtwo$, the claim is a direct consequence of the IHs.
  \item When the program is $\ifr \reone \prgone \prgtwo$, $\sem{\ifr \reone \prgone \prgtwo}\nat((\rsone\tensprod \rstwo)_\nat)$ is equivalent to:
        $$
        \memtwo \mapsto\smashoperator[r]{\sum_{\memone \in \supp{\sem{\join\renvone \renvtwo}\nat}}}\;\;
        \rsone_\nat(\memone\restr{\dom(\renvone)})\cdot\rstwo_\nat(\memone\restr{\dom(\renvtwo)})\cdot
        h(\memone)(\memtwo)
        $$
        where
        $$
        h(\memone) =\begin{cases}\sem \prgone \nat(\unit {\memone}) & \text{if } \memone(\rvarone)=1\\
                                                                                                                               \sem \prgtwo \nat(\unit {\memone}) &\text{if } \memone(\rvarone)=0
                   \end{cases}
        $$
        Observe that $h(\memone)=h'(\memone\restr{\dom(\renvone)},\memone\restr{\dom(\renvtwo)})$, where
        $$
        h'(\memtwo_1, \memtwo_2) =\
        \begin{cases}\sem \prgone \nat(\unit {\memtwo_1}\tensprod\unit {\memtwo_2}) & \text{if } \memtwo_1(\rvarone)=1\\
          \sem \prgtwo \nat(\unit {\memtwo_1}\tensprod\unit {\memtwo_2}) & \text{if } \memtwo_1(\rvarone)=0\\
        \end{cases}
        $$
        By using the IH, we conclude that this is equal to:
        $$
        h'(\memtwo_1, \memtwo_2) =\
        \begin{cases}\sem \prgone \nat(\unit {\memtwo_1})\tensprod\unit {\memtwo_2} & \text{if } \memtwo_1(\rvarone)=1\\
          \sem \prgtwo \nat(\unit {\memtwo_1})\tensprod \unit {\memtwo_2} & \text{if } \memtwo_1(\rvarone)=0\\
        \end{cases}
        $$
        So we can rewrite the expression we wrote at the beginning as follows:
        \begin{equation*}
          \memtwo \mapsto\smashoperator[lr]{\sum_{\memone \in \supp{\sem{\renvone}\nat}}}\;\;
          \rsone_\nat(\memone)\cdot
          h(\memone)(\memtwo\restr{\dom(\renvone)})\cdot\\
          \smashoperator[r]{\sum_{\memone \in \supp{\sem{\renvtwo}\nat}}}\;\;
          \rstwo_\nat(\memone)\cdot\unit {\memone}(\memtwo\restr{\dom(\renvtwo)}),
        \end{equation*}
        which is:
        \begin{equation*}
          \memtwo \mapsto\smashoperator[lr]{\sum_{\memone \in \supp{\sem{\renvone}\nat}}}\;\;
          \rsone_\nat(\memone)\cdot
          h(\memone)(\memtwo\restr{\dom(\renvone)})\cdot \rstwo_\nat(\memone)(\memtwo\restr{\dom(\renvtwo)}).
        \end{equation*}
        This last term is equal, by definition, to:
        $$
        \left(\sem{\ifr \reone \prgone \prgtwo}{}(\rsone)\tensprod \rstwo\right)_\nat.
        $$
          
      \end{itemize}
      The conclusion comes by observing that
      $\sem {\binjudg{\join\renvone\renvtwo}\prgone} {}(\stone \comp \sttwo)=
      \tyf{\sem {\binjudg{\join\renvone\renvtwo}\prgone} {}(\rsone\tensprod\rstwo)}{\join\renvone\renvtwo}$ and that $\sem \prgone {}(\stone) \comp \sttwo=\tyf{\sem {\binjudg{\renvone}\prgone} {}(\rsone)\tensprod \rstwo}{\join\renvone\renvtwo}$. So, in particular:
      \[
        \sem {\binjudg{\join\renvone\renvtwo}\prgone} {}(\rsone\tensprod\rstwo)=
        \sem {\binjudg{\renvone}\prgone} {}(\rsone)\tensprod \rstwo
      \]
      is a consequence of (C).
\end{proof}

\begin{lemma}[Indistinguishability of Programs' Semantic]
  \label{lemma:prgind}
  For every program $\prgone$, and every pair of stores ${\tyf \rsone \renvone} \ind {\tyf \rstwo \renvtwo}$such that $\binjudg{\renvone}{\prgone}, \binjudg{\renvtwo}{\prgone}$, it holds that $\sem\prgone{}(\rsone)\ind \sem\prgone{}(\rstwo)$.
\end{lemma}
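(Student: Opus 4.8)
The plan is to argue by contradiction through a black-box reduction, mirroring the proof of \Cref{lemma:exprind}. Suppose there were a PPT distinguisher $\distone$ separating $\sem\prgone{}(\rsone)$ from $\sem\prgone{}(\rstwo)$ with non-negligible advantage. From it I would build a PPT adversary $\disttwo$ against the indistinguishability of $\tyf\rsone\renvone$ and $\tyf\rstwo\renvtwo$ as follows: on input $(1^\nat,\memone)$, the adversary $\disttwo$ evaluates the program semantics $\sem\prgone\nat$ on the Dirac distribution $\unit\memone$, uses its own coin tosses to draw a sample $\memtwo$ from the resulting distribution over stores, and returns $\distone(1^\nat,\memtwo)$.

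Two things then need checking. First, $\disttwo$ runs in probabilistic polynomial time: by the remark following \Cref{def:prgsem}, for each $\nat$ the map $\sem\prgone\nat$ is computed by a circuit whose size is polynomial in $\nat$, so a sample from $\sem\prgone\nat(\unit\memone)$ can be produced in polynomial time, and $\distone$ is PPT by hypothesis. Second, the distribution of $\disttwo$'s output when its input is drawn from $\rsone_\nat$ must coincide with that of $\distone$'s output when its input is drawn from $\sem\prgone\nat(\rsone_\nat)$ — and likewise with $\rstwo$ in place of $\rsone$. This is exactly the monadic identity $\bind{\rsone_\nat}{\memone\mapsto\sem\prgone\nat(\unit\memone)}=\sem\prgone\nat(\rsone_\nat)$, which is established in \Cref{lemma:unitsem} (and, as noted there, does not depend on the present lemma). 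Hence the advantage of $\disttwo$ on $(\rsone,\rstwo)$ equals that of $\distone$ on $(\sem\prgone{}(\rsone),\sem\prgone{}(\rstwo))$, which is non-negligible, contradicting $\tyf\rsone\renvone\ind\tyf\rstwo\renvtwo$.

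The only genuinely delicate point is the polynomial-time simulation of $\prgone$ inside $\disttwo$: the whole argument rests on the fact — built into the language by construction — that well-typed programs act on well-typed stores in time polynomial in the security parameter, which is precisely what the circuit-family observation after \Cref{def:prgsem} records. Everything else is the routine reduction pattern, with \Cref{lemma:unitsem} handling the bookkeeping that equates ``sample a point, then run $\prgone$ on that point'' with ``run $\prgone$ on the whole distribution''.
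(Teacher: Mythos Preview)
Your proposal is correct and follows essentially the same reduction as the paper: build a distinguisher $\disttwo$ that, on a sample $\memone$, simulates $\prgone$ on $\unit\memone$ and feeds the result to $\distone$. The paper establishes the key identity $\sem\prgone\nat(\rsone_\nat)=\bind{\rsone_\nat}{t_\prgone}$ for a polytime continuation $t_\prgone$ by an inline induction on $\prgone$, whereas you factor this into the circuit remark after \Cref{def:prgsem} together with \Cref{lemma:unitsem}; these are the same argument packaged differently, and your observation that \Cref{lemma:unitsem} is independent of the present lemma correctly rules out circularity.
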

\begin{proof}
  Observe that the semantics of programs we gave can be defined in terms of the Kleisli extension of a distribution ensemble with poly-time samplable continuations. In particular, this can be shown by induction on the semantics of programs observing that:
  \begin{itemize}
  \item the claim is trivial for $\pskip$
  \item it is a consequence of the poly-time samplability property of the expressions' semantics in the case of assignments, and on the observation that updating a sample can be done in polynomial time
  \item it is a direct consequence of the IH in the case of the composition,
  \item in the case of the conditional statement, it is a consequence of the IHs, and of the observation that testing a variable on that sample can be done in polynomial time 
  \end{itemize}
  This means that for every program $\prgone$, there is a polynomial transformation on samples $t_\prgone$ such that:
  \[
    \sem {\binjudg \renvone \prgone}\nat (\rsone_\nat)={\bind{\rsone_\nat} {t_\prgone}}
  \]
  Assume the claim not to hold, this gives an adversary $\distone$ which is capable to distinguish $\sem\prgone{}(\rsone)$ and $\sem\prgone{}(\rstwo)$ with non-negligible advantage. This adversary can be used to define an adversary $\disttwo$ with the same advantage that given on input $\rsone$, it first computes the continuation $t_\prgone$ on the sample $\sem\prgone{}(\rsone)$ and then feeds this value into $\distone$. The advantage of $\disttwo$ is the same of $\distone$.
\end{proof}

\begin{lemma}
  \label{lemma:exprcompsem}
  If $\binjudg \renvone \ass \rvarone \reone$, $\rsone \in \sem \renvone{}$ and $\rvarone \notin \fv \reone$, it holds that: 
  \begin{align*}
    \sem \rvarone {}\left(\sem {\ass \rvarone \reone} {}(\rsone)\right) = \sem {\reone} {}(\rsone) = \sem \reone {}\left(\sem {\ass \rvarone \reone} {}(\rsone)\right).
  \end{align*}
\end{lemma}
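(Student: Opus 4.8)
The plan is to treat the two equalities separately. For the first, $\sem \rvarone {}\bigl(\sem {\ass \rvarone \reone} {}(\rsone)\bigr) = \sem {\reone} {}(\rsone)$, I would argue by a direct computation at each index $\nat$, unfolding the clauses of Figure~\ref{fig:exprgsem}. Since $\detsem{\ternjudg \renvone \rvarone \tyone}\nat(\memone) = \unit{\memone(\rvarone)}$, the map $\sem \rvarone\nat$ is the Kleisli extension $\distrone \mapsto \bind{\distrone}{\memone \mapsto \unit{\memone(\rvarone)}}$. Substituting the definition of $\sem{\ass \rvarone \reone}\nat(\rsone_\nat)$ and applying monad associativity twice to merge the binds, one reaches
\[
  \bind{\rsone_\nat}{\memone \mapsto \bind{\detsem{\reone}\nat(\memone)}{t \mapsto \bind{\unit{\memone[\rvarone \mapsto t]}}{\memtwo \mapsto \unit{\memtwo(\rvarone)}}}}.
\]
Now the innermost $\bind$ collapses by the left-unit law to $\unit{(\memone[\rvarone \mapsto t])(\rvarone)} = \unit{t}$; a right-unit law then collapses $\bind{\detsem{\reone}\nat(\memone)}{t \mapsto \unit t}$ to $\detsem{\reone}\nat(\memone)$; and what remains is $\bind{\rsone_\nat}{\memone \mapsto \detsem{\reone}\nat(\memone)}$, which is exactly $\sem \reone \nat(\rsone_\nat)$ by the definition of the semantics of expressions as a Kleisli extension.

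For the second equality, $\sem {\reone} {}(\rsone) = \sem \reone {}\bigl(\sem {\ass \rvarone \reone} {}(\rsone)\bigr)$, rather than recomputing I would reduce to two results already proved. Let $\renvtwo$ be the environment obtained by deleting $\rvarone$ from $\dom(\renvone)$; then $\renvtwo \ext \renvone$ and $\rvarone \notin \dom(\renvtwo)$, and since $\rvarone \notin \fv\reone$ while $\fv\reone \subseteq \dom(\renvone)$ we still have $\ternjudg \renvtwo \reone \tyone$. Applying \Cref{lemma:projexpsem} to the two stores $\rsone$ and $\sem{\ass \rvarone \reone}{}(\rsone)$ (both elements of $\sem \renvone{}$) yields $\sem \reone{}(\rsone) = \sem \reone{}(\rsone_{\renvone \to \renvtwo})$ and $\sem \reone{}(\sem{\ass \rvarone \reone}{}(\rsone)) = \sem \reone{}\bigl((\sem{\ass \rvarone \reone}{}(\rsone))_{\renvone \to \renvtwo}\bigr)$. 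It then suffices to check $\rsone_{\renvone \to \renvtwo} = (\sem{\ass \rvarone \reone}{}(\rsone))_{\renvone \to \renvtwo}$, i.e.\ that assigning to $\rvarone$ leaves the marginal on the remaining variables unchanged, which is precisely \Cref{lemma:assorth}, whose hypotheses $\renvtwo \ext \renvone$, $\binjudg \renvone {\ass \rvarone \reone}$, and $\rvarone \notin \dom(\renvtwo)$ are all in force.

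I do not expect a genuine obstacle: the whole argument is bookkeeping. The two points needing a little care are keeping the monadic rewrites of the first equality in an order for which the unit laws apply, and verifying the side conditions that license \Cref{lemma:projexpsem} and \Cref{lemma:assorth}, notably that removing $\rvarone$ from $\renvone$ still yields a legal typing environment for $\reone$, which is exactly where $\rvarone \notin \fv\reone$ enters. As an alternative for the second equality, one could instead carry out the same style of direct computation as for the first, arriving at $\bind{\rsone_\nat}{\memone \mapsto \bind{\detsem{\reone}\nat(\memone)}{t \mapsto \detsem{\reone}\nat(\memone[\rvarone \mapsto t])}}$ and then invoking the pre-semantics analogue of \Cref{rem:detsemmemupdate} (invariance of $\detsem{\reone}\nat$ under updates at variables not in $\fv\reone$, itself a one-line induction on $\reone$ whose base case is \Cref{rem:detsemmemupdate}); but routing through \Cref{lemma:projexpsem} and \Cref{lemma:assorth} is shorter.
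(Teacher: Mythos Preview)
Your proof is correct. For the first equality you follow exactly the paper's argument: unfold the monadic definitions, use associativity and the unit laws, and collapse to $\sem \reone \nat(\rsone_\nat)$.

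For the second equality you diverge from the paper. The paper carries out the direct computation you sketch as an alternative: it expands $\sem \reone \nat\bigl(\sem {\ass \rvarone \reone} {}(\rsone)\bigr)$ to $\bind{\rsone_\nat}{\memtwo \mapsto \bind{\detsem{\reone}\nat(\memtwo)}{t \mapsto \detsem{\reone}\nat(\memtwo[\rvarone \mapsto t])}}$ and then eliminates the update using \Cref{lemma:detsemproj} (with $\renvtwo$ the environment obtained by deleting $\rvarone$), after which the inner bind collapses since its body is constant in $t$. Your route via \Cref{lemma:projexpsem} and \Cref{lemma:assorth} is a genuine and somewhat cleaner alternative: it avoids re-opening the monadic computation and instead reuses two semantic invariants already established, at the cost of checking the side conditions (which you do). Both arguments ultimately hinge on the same fact---that $\reone$ is insensitive to the value stored at $\rvarone$---but your version factors this through previously proved lemmas rather than reproving it inline.
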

\begin{proof}
    We start by fixing $\nat\in \NN$ and by observing that:
  \begin{align*}
    \sem \rvarone {\nat}\left(\sem {\ass \rvarone \reone} {}(\rsone_\nat)\right) &= \bind{\sem {\ass r \reone}{}(\rsone_\nat)}{\memone \mapsto \unit{\memone(\rvarone)}}
  \end{align*}
  the expression $\sem {\ass r \reone}{}(\rsone_\nat)$ can be expanded as follows:\\[0.5em]
  \[
    \bind {\rsone_\nat}{\memtwo \mapsto \bind {\detsem{\ternjudg{\renvone}{\reone}{\tyone}} \nat(\memtwo)} {t \mapsto \unit{\memtwo[\rvarone\mapsto t]}}}
  \]
  using the associative property of the distribution monad and the property of $\unit\cdot$, the whole expression can be rewritten as follows: 
  $$
  \bind{\rsone_\nat} {\memtwo \mapsto\bind {\detsem{\ternjudg{\renvone}{\reone}{\tyone}} \nat(\memtwo)} {t \mapsto \bind {\unit{\memtwo[\rvarone\mapsto t]}} {\memone \mapsto \unit{\memone(\rvarone)}}}}.
  $$
  The bind-unit pattern can be simplified as follows:
  $$
  \bind{\rsone_\nat} {\memtwo \mapsto\bind {\detsem{\ternjudg{\renvone}{\reone}{\tyone}} \nat(\memtwo)}{t \mapsto {\unit{\memtwo[\rvarone\mapsto t](\rvarone)}}}}.
  $$
  If we observe that ${\memtwo[\rvarone\mapsto t]}(\rvarone)=t$, it becomes
  \[
    \bind{\rsone_\nat} {\memtwo \mapsto\bind {\detsem{\ternjudg{\renvone}{\reone}{\tyone}} \nat(\memtwo)}{t \mapsto {\unit{t}}}},
  \]
  which, in turn equals 
  \[
    \bind{\rsone_\nat} {\memtwo \mapsto {\detsem{\ternjudg{\renvone}{\reone}{\tyone}} \nat(\memtwo)}}=\sem \reone\nat(\rsone_\nat).
  \]
  With a similar chain of equivalences, it is possible to rewrite $\sem \reone \nat\left(\sem {\ass \rvarone \reone} {}(\rsone)\right)$ as follows::
  $$
  \bind{\rsone_\nat} {\memtwo \mapsto\bind {\detsem{\ternjudg{\renvone}{\reone}{\tyone}} \nat(\memtwo)}{t \mapsto \detsem \reone \nat(\memtwo[\rvarone\mapsto t])}}.
  $$
  With an application of Lemma \ref{lemma:detsemproj}, we observe that this is equal to:   
  $$
  \bind{\rsone_\nat} {\memtwo \mapsto\bind {\detsem{\ternjudg{\renvone}{\reone}{\tyone}} \nat(\memtwo)}{t \mapsto \detsem \reone \nat(\memtwo)}}.
  $$
  The assumptions of Lemma \ref{lemma:detsemproj} can be matched by taking $\renvtwo$ as the environment obtained removing $\rvarone$ from $\renvone$. The inner bind can be expanded as:
  \[
    \sum_{x \in \sem \tyone\nat} \detsem{\ternjudg{\renvone}{\reone}{\tyone}}\nat (t) \cdot \detsem \reone \nat(\memtwo)= 1 \cdot \detsem \reone \nat(\memtwo).
  \]
  For this reason the expression above can be rewritten as
  $
  \bind{\rsone_\nat} {\memtwo \mapsto {\detsem{\ternjudg{\renvone}{\reone}{\tyone}} \nat(\memtwo)}}=\sem\reone \nat(\rsone_\nat).
  $
\end{proof}

\begin{lemma}
  \label{lemma:unitsem}
  For every $\nat$, every $\prgone$ and every $\rsone_\nat$, we have
  $\sem \prgone \nat(\rsone_\nat)= \bind {\rsone_\nat}{\memone \mapsto\sem \prgone\nat(\unit \memone)}$.
\end{lemma}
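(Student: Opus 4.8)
The plan is to argue by induction on the structure of $\prgone$, exploiting the fact that every clause defining $\sem{\cdot}{\nat}$ in \Cref{fig:exprgsem} is already phrased as a bind of the input distribution with a suitable continuation, so that the three monad laws (left identity, right identity, associativity) carry most of the load. For the base case $\prgone = \pskip$: since $\sem{\pskip}{\nat}$ is the identity and $\sem{\pskip}{\nat}(\unit\memone) = \unit\memone$, the right-identity law gives $\bind{\rsone_\nat}{\memone \mapsto \unit\memone} = \rsone_\nat = \sem{\pskip}{\nat}(\rsone_\nat)$, as required.

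For $\prgone = \ass\rvarone\reone$ I would unfold the defining clause and use left identity to get $\sem{\ass\rvarone\reone}{\nat}(\unit\memone) = \bind{\detsem\reone\nat(\memone)}{t \mapsto \unit{\memone[\rvarone \mapsto t]}}$; then $\bind{\rsone_\nat}{\memone \mapsto \sem{\ass\rvarone\reone}{\nat}(\unit\memone)}$ is literally the right-hand side of the defining clause for $\sem{\ass\rvarone\reone}{\nat}(\rsone_\nat)$. The conditional case $\prgone = \ifr\rvarone\prgone\prgtwo$ is handled the same way: left identity shows that $\sem{\ifr\rvarone\prgone\prgtwo}{\nat}(\unit\memone)$ equals $\sem\prgone\nat(\unit\memone)$ when $\memone(\rvarone)=1$ and $\sem\prgtwo\nat(\unit\memone)$ when $\memone(\rvarone)=0$, which is exactly the continuation appearing in the defining clause for $\sem{\ifr\rvarone\prgone\prgtwo}{\nat}(\rsone_\nat)$, so the claim drops out without even invoking the induction hypothesis.

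The only case that genuinely uses the induction hypothesis is $\prgone = \seq\prgone\prgtwo$. Here I would compute $\sem{\seq\prgone\prgtwo}{\nat}(\rsone_\nat) = \sem\prgtwo\nat(\sem\prgone\nat(\rsone_\nat))$, rewrite the inner term via the IH for $\prgone$ as $\bind{\rsone_\nat}{\memone \mapsto \sem\prgone\nat(\unit\memone)}$, then apply the IH for $\prgtwo$ to this distribution to obtain $\bind{\bind{\rsone_\nat}{\memone \mapsto \sem\prgone\nat(\unit\memone)}}{\memtwo \mapsto \sem\prgtwo\nat(\unit\memtwo)}$; associativity of bind turns this into $\bind{\rsone_\nat}{\memone \mapsto \bind{\sem\prgone\nat(\unit\memone)}{\memtwo \mapsto \sem\prgtwo\nat(\unit\memtwo)}}$, and a final application of the IH for $\prgtwo$, this time to the Dirac-derived distribution $\sem\prgone\nat(\unit\memone)$, identifies the inner continuation with $\sem\prgtwo\nat(\sem\prgone\nat(\unit\memone)) = \sem{\seq\prgone\prgtwo}{\nat}(\unit\memone)$, giving the desired $\bind{\rsone_\nat}{\memone \mapsto \sem{\seq\prgone\prgtwo}{\nat}(\unit\memone)}$. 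The one point that needs care — and the only real obstacle worth flagging — is to state the induction hypothesis at full generality, i.e.\ universally quantified over \emph{all} input distributions $\rsone_\nat$ rather than over the particular one in the statement, so that in this last case it can legitimately be invoked on three different distributions (on $\rsone_\nat$, on the intermediate bind, and on each Dirac-derived distribution $\sem\prgone\nat(\unit\memone)$); with that in place everything else is routine monad-law bookkeeping.
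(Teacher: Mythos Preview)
Your proof is correct and follows essentially the same route as the paper's: both proceed by structural induction on $\prgone$, dispatch $\pskip$, $\ass\rvarone\reone$, and $\ifr\rvarone\prgone\prgtwo$ directly from the monad laws and the defining clauses, and handle $\seq\prgone\prgtwo$ by applying the induction hypothesis for $\prgone$ once and for $\prgtwo$ twice (on the intermediate bind and then on each $\sem\prgone\nat(\unit\memone)$), with associativity of bind in between. Your explicit remark that the induction hypothesis must be quantified over all input distributions is a point the paper leaves implicit but uses in exactly the same way.
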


\begin{proof}
  By induction on $\prgone$.
  \begin{proofcases}
    \proofcase[$\pskip$] In this case, the claim is $\rsone_\nat=\bind {\rsone_\nat}{\memone \mapsto\unit \memone}$, which is trivial.
    \proofcase[$\ass \rvarone \reone$] In this case, observe that:
    \begin{align*}
      \bind {\rsone_\nat}{\memone \mapsto\sem {\ass \rvarone \reone}\nat(\unit \memone)}
    \end{align*}
    is equal to:
    \begin{align*}
    \bind {\rsone_\nat}{\memone \mapsto      \bind {\unit \memone}{\memtwo \mapsto \bind{\detsem \reone \nat (\memtwo)}{t \mapsto \memtwo[\rvarone \mapsto t] }}},
    \end{align*}
    that, in turn, equals:
    \begin{align*}
      \bind {\rsone_\nat}{\memone \mapsto\bind{\detsem \reone \nat (\memone)}{t \mapsto \memone[\rvarone \mapsto t]}} = \sem {\ass \rvarone \reone}\nat(\rsone_\nat)
    \end{align*}
    \proofcase[$\seq \prgone\prgtwo$] The claim is
    \[
      \sem {\seq \prgone\prgtwo} \nat(\rsone_\nat)= \bind {\rsone_\nat}{\memone \mapsto\sem {\seq \prgone\prgtwo}\nat(\unit \memone)}.
    \]
    For the IH on $\prgone$, we can rewrite $\sem{\seq \prgone\prgtwo}\nat(\rsone_\nat)$ as follows:
    \[
      \sem {\prgtwo} \nat(\bind {\rsone_\nat}{\memtwo \mapsto \sem {\prgone} \nat(\unit \memtwo)})
    \]
    For the IH on $\prgtwo$, this claim can be rewritten again as:
    \[
      \bind{\bind {\rsone_\nat}{\memtwo \mapsto \sem {\prgone} \nat(\unit \memtwo)}}{\memthree \mapsto \sem {\prgtwo} \nat(\unit\memthree)}
    \]
    That can be simplified as follows:
    \[
      \bind {\rsone_\nat} {\memtwo \mapsto \bind{\sem {\prgone} \nat(\unit \memtwo)} {\memthree \mapsto \sem {\prgtwo} \nat(\unit\memthree)}}
    \]
    With the application of the IH on $\prgtwo$, we conclude.
    \proofcase[$\ifr \rvarone \prgone\prgtwo$] The claim is a direct consequence of the
      definition of the semantics of the construct.
\end{proofcases}
\end{proof}
\begin{lemma}
  \label{lemma:suppbackcomp}
  For every $\nat\in \NN$, and every $\memone\in \supp(\sem \prgone \nat (\rsone_\nat))$,
  there is $\overline \memone \in \supp (\rsone_\nat)$ such that
  $\memone \in \supp(\sem \prgone\nat (\unit{\overline \memone}))$.
\end{lemma}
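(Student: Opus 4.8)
The plan is to obtain this as an immediate corollary of \Cref{lemma:unitsem}, which writes $\sem{\prgone}{\nat}(\rsone_\nat)$ as the weighted average of the Dirac images $\sem{\prgone}{\nat}(\unit{\memtwo})$, with weights given by $\rsone_\nat$. Once $\sem{\prgone}{\nat}(\rsone_\nat)$ is presented in this form, an element of its support must already lie in the support of one of the $\sem{\prgone}{\nat}(\unit{\memtwo})$'s, simply because a sum of nonnegative reals can be positive only if one of its summands is.

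Concretely, I would fix $\nat \in \NN$ and $\memone \in \supp(\sem{\prgone}{\nat}(\rsone_\nat))$. By \Cref{lemma:unitsem},
\[
  \sem{\prgone}{\nat}(\rsone_\nat) = \bind{\rsone_\nat}{\memtwo \mapsto \sem{\prgone}{\nat}(\unit{\memtwo})},
\]
so, unfolding the definition of $\bind{\cdot}{\cdot}$ and evaluating both sides at $\memone$,
\[
  0 < \sem{\prgone}{\nat}(\rsone_\nat)(\memone) = \sum_{\memtwo \in \supp(\rsone_\nat)} \rsone_\nat(\memtwo) \cdot \sem{\prgone}{\nat}(\unit{\memtwo})(\memone).
\]
Every summand is nonnegative and the total is strictly positive, hence at least one summand is strictly positive: there is $\overline{\memone} \in \supp(\rsone_\nat)$ with $\rsone_\nat(\overline{\memone}) > 0$ and $\sem{\prgone}{\nat}(\unit{\overline{\memone}})(\memone) > 0$. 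The second inequality is exactly the statement that $\memone \in \supp(\sem{\prgone}{\nat}(\unit{\overline{\memone}}))$, so $\overline{\memone}$ is the required witness.

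There is essentially no obstacle once \Cref{lemma:unitsem} is available: that lemma is where the structural induction on $\prgone$ actually takes place, and everything else here is the one-line pigeonhole argument above. A more \emph{self-contained} alternative would be a direct induction on $\prgone$: the $\pskip$ case is trivial with $\overline{\memone} = \memone$; for $\ass{\rvarone}{\reone}$ and $\ifr{\rvarone}{\prgone}{\prgtwo}$ the witness can be read off directly from the description of the support of $\sem{\prgone}{\nat}(\rsone_\nat)$; and for $\seq{\prgone}{\prgtwo}$ one chains the induction hypothesis for $\prgtwo$ (applied to the distribution $\sem{\prgone}{\nat}(\rsone_\nat)$, producing an intermediate $\memtwo$) with that for $\prgone$ (producing $\overline{\memone}$), then checks $\sem{\seq{\prgone}{\prgtwo}}{\nat}(\unit{\overline{\memone}})(\memone) \ge \sem{\prgone}{\nat}(\unit{\overline{\memone}})(\memtwo) \cdot \sem{\prgtwo}{\nat}(\unit{\memtwo})(\memone) > 0$. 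Going through \Cref{lemma:unitsem} is shorter and avoids duplicating this case analysis, so that is the route I would take.
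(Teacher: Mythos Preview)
Your proof via \Cref{lemma:unitsem} is correct. The paper instead takes the direct-induction route you sketch as an alternative: it argues each constructor separately, handling $\pskip$ trivially, $\ass{\rvarone}{\reone}$ and $\ifr{\rvarone}{\prgone}{\prgtwo}$ by contradiction (if no witness existed, the defining sum would vanish), and $\seq{\prgone}{\prgtwo}$ by chaining the two induction hypotheses and then invoking \Cref{lemma:unitsem} to bound $\sem{\seq{\prgone}{\prgtwo}}{\nat}(\unit{\overline{\memone}})(\memone)$ from below. Since the paper's proof already appeals to \Cref{lemma:unitsem} in the sequencing case, your approach simply front-loads that appeal and collapses the whole argument into the one-line pigeonhole observation; this is strictly shorter and avoids the per-constructor case analysis without introducing any circularity (\Cref{lemma:unitsem} is proved independently by its own induction).
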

\begin{proof}
  The proof goes by induction on $\prgone$.
  \begin{proofcases}
    \proofcase[$\pskip$] Trivial.
    \proofcase[$\ass \rvarone \reone$] Assume that the claim does not hold.
    This means that there is $\memone \in \supp (\sem {\ass \rvarone \reone}\nat(\rsone_\nat))$ such that for every $\memtwo \in \supp(\rsone_\nat)$, we have:
    \[
      \bind {\detsem \reone \nat(\memtwo)}{t \mapsto \unit{\memtwo[\rvarone\mapsto t]}}(\memone)=0.
    \]
    From this result we can deduce
    $
    \sem {\ass \rvarone \reone} \nat (\rsone_\nat)(\memone) = 0
    $,
    but this contradicts the premise.
    \proofcase[$\ifr \rvarone \prgone \prgtwo$] Assume that the claim does not hold.
      This means there is $\memone \in \supp (\sem {\ifr \rvarone \prgone \prgtwo}\nat(\rsone_\nat))$ such that
       for every $\memtwo \in \supp(\rsone_\nat)$, we have:
    \begin{align*}
      \memtwo(\rvarone)=0 &\Rightarrow \sem \prgone \nat(\unit \memtwo)(\memone)=0\\
      \memtwo(\rvarone)=1 &\Rightarrow \sem \prgtwo \nat(\unit \memtwo)(\memone)=0.
    \end{align*}
    This shows that
    $
    \sem{\ifr \rvarone \prgone \prgtwo}\nat(\rsone_\nat)(\memone) = 0
    $,
    but this contradicts the premise.
    \proofcase[$\seq \prgone \prgtwo$] Assume $\memone\in \supp(\sem \prgone \nat (\rsone_\nat))$,
    we must show that there is $\overline m$ is
    such that:
    \[
      \memone \in \supp(\sem {\prgtwo}\nat (\sem \prgone \nat(\unit{\overline \memone})))
    \]
    From the IH on $\prgtwo$, we deduce that there is $\overline \memtwo \in \supp(\sem \prgone \nat(\rsone_\nat))$ such that 
    \[
      \memone \in \supp(\sem {\prgtwo}\nat (\unit{\overline \memtwo})).
    \]
    With another application of this hypothesis, we deduce that there is $\overline \memthree \in \supp (\rsone_\nat)$ such that:
    \[
      \overline \memtwo \in \supp(\sem {\prgone}\nat (\unit{\overline \memthree})).
    \]
    With an application of \Cref{lemma:unitsem}, we observe that
    $
    \sem {\prgtwo}\nat (\sem \prgone \nat(\unit{\overline \memthree}))(\memone)
    $
    is equal to:
    \[
      \bind {\sem \prgone \nat(\unit{\overline \memthree})}{\memtwo \mapsto \sem {\prgtwo}\nat(\unit \memtwo)}(\memone),
    \]
    which in turn is equal to
    \[
      \sum_{\memtwo \in \supp(\sem \prgone \nat(\unit{\overline \memthree}))} \sem \prgone \nat(\unit{\overline \memthree})(\memtwo)\cdot \sem {\prgtwo}\nat(\unit \memtwo)(\memone)
    \]
    which is greater than:
    \[
      \sem \prgone \nat(\unit{\overline \memthree})(\overline \memtwo)\cdot \sem {\prgtwo}\nat(\unit {\overline \memtwo})(\memone)
    \]
    which is greater than $0$ for the assumptions above.
  \end{proofcases}
\end{proof}

\begin{lemma}
  \label{lemma:detexprcompsem}
  If $\binjudg \renvone \ass \rvarone \deone$, $\rsone \in \sem \renvone{}$ and $\rvarone \notin \fv \deone$, it holds that: 
  \begin{align*}
    \forall \nat \in \NN. \forall \memtwo \in \supp(\sem {\ass \rvarone \deone} \nat(\rsone_\nat)). \parsem {\deone} {\nat}(\memtwo) = \parsem \rvarone {\nat}(\memtwo).
  \end{align*}
\end{lemma}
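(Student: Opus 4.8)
The plan is to unfold the denotation of the assignment $\ass \rvarone \deone$, exploit the determinism of $\deone$ to observe that every element of the support of $\sem{\ass \rvarone \deone}{\nat}(\rsone_\nat)$ has the shape $\memone[\rvarone \mapsto \parsem{\deone}{\nat}(\memone)]$ for some $\memone \in \supp(\rsone_\nat)$, and then conclude with \Cref{rem:detsemmemupdate}, which tells us that the value of $\deone$ is unaffected by updating a variable outside $\fv{\deone}$.

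Concretely, I would fix $\nat \in \NN$ and $\memtwo \in \supp(\sem{\ass \rvarone \deone}{\nat}(\rsone_\nat))$. Unfolding the semantics of assignment from Figure~\ref{fig:exprgsem} gives
\[
  \sem{\ass \rvarone \deone}{\nat}(\rsone_\nat) = \bind{\rsone_\nat}{\memone \mapsto \bind{\detsem{\deone}{\nat}(\memone)}{t \mapsto \unit{\memone[\rvarone \mapsto t]}}},
\]
and since $\deone$ is a \emph{deterministic} expression we have $\detsem{\deone}{\nat}(\memone) = \unit{\parsem{\deone}{\nat}(\memone)}$, so by the left unit law of the distribution monad the inner $\bind$ collapses and $\sem{\ass \rvarone \deone}{\nat}(\rsone_\nat) = \bind{\rsone_\nat}{\memone \mapsto \unit{\memone[\rvarone \mapsto \parsem{\deone}{\nat}(\memone)]}}$. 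Hence there is $\memone \in \supp(\rsone_\nat)$ with $\memtwo = \memone[\rvarone \mapsto \parsem{\deone}{\nat}(\memone)]$ (alternatively, \Cref{lemma:suppbackcomp} directly supplies such an $\memone$, and the same computation on $\unit{\memone}$ identifies $\memtwo$). Now, by definition of the semantics of a variable, $\parsem{\rvarone}{\nat}(\memtwo) = \memtwo(\rvarone) = \parsem{\deone}{\nat}(\memone)$, while $\parsem{\deone}{\nat}(\memtwo) = \parsem{\deone}{\nat}(\memone[\rvarone \mapsto \parsem{\deone}{\nat}(\memone)]) = \parsem{\deone}{\nat}(\memone)$ by \Cref{rem:detsemmemupdate}, applied with $t \defsym \parsem{\deone}{\nat}(\memone)$; its hypotheses are met because $\memone \in \detsem{\renvone}{\nat}$ (as $\supp(\rsone_\nat) \subseteq \detsem{\renvone}{\nat}$), because $\rvarone \in \dom(\renvone)$ follows from $\binjudg{\renvone}{\ass \rvarone \deone}$, and because $\rvarone \notin \fv{\deone}$ is assumed. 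Chaining the two equalities yields $\parsem{\deone}{\nat}(\memtwo) = \parsem{\rvarone}{\nat}(\memtwo)$, as required.

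There is essentially no obstacle here: the argument is a short monadic computation. The only points that need care are the bookkeeping in unfolding the assignment on a Dirac distribution --- in particular, recognising that the two $\bind$s collapse into a single $\unit$ \emph{precisely} because $\deone$ carries no randomness --- and verifying that the side conditions of \Cref{rem:detsemmemupdate} hold, both of which are immediate from the hypotheses of the lemma.
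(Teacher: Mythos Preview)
Your proof is correct and follows essentially the same approach as the paper's: both identify any $\memtwo$ in the support as $\memone[\rvarone \mapsto \parsem{\deone}{\nat}(\memone)]$ for some $\memone \in \supp(\rsone_\nat)$, then invoke \Cref{rem:detsemmemupdate} to conclude. The only cosmetic difference is that the paper phrases it by contradiction and uses \Cref{lemma:suppbackcomp} to pull back to a Dirac before unfolding, whereas you unfold the bind over $\rsone_\nat$ directly; the resulting computations are identical.
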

\begin{proof}
Assume this to be false for some $\nat$ and some $\memtwo \in \sem {\ass \rvarone \deone} \nat(\rsone_\nat)$. From \Cref{lemma:suppbackcomp}, we deduce that there is that there is $\memone\in \supp({\rsone})$ such that $\memtwo \in \supp (\sem {\ass \rvarone \deone}\nat(\unit \memone))$.
  Observe that:
  \begin{align*}
    \sem {\ass \rvarone \deone}\nat(\unit \memone) &= \bind{\unit \memone} {\memone \mapsto \bind {\detsem \deone \nat (\memone)} {t \mapsto \unit{\memone[\rvarone \mapsto t]}}}\\
           &=  \bind {\detsem \deone \nat (\memone)} {t \mapsto \unit{\memone[\rvarone \mapsto t]}}\\
           &=  \bind {\unit {\parsem \deone \nat (\memone)}} {t \mapsto \unit{\memone[\rvarone \mapsto t]}}\\
           &=  \unit{\memone[\rvarone \mapsto \parsem \deone \nat (\memone)]}.
  \end{align*}
  From this observation, we deduce that $\memtwo = \memone[\rvarone \mapsto \parsem \deone \nat (\memone)]$. In particular, we have that $\parsem \rvarone \nat(\memtwo) = \memtwo(\rvarone) = \parsem \deone \nat (\memone)$, and from \Cref{rem:detsemmemupdate}, we also conclude that $\parsem \deone \nat(\memtwo) = \parsem \deone \nat(\memone[\rvarone \mapsto \parsem \deone \nat (\memone)]) = \parsem \deone \nat(\memone)$, but this is absurd because we were assuming that $\parsem \rvarone {\nat}(\memtwo)\neq \parsem \deone{\nat}(\memtwo)$.
\end{proof}

\begin{lemma}
  \label{lemma:sepassntech}
  Let $\renvone, \renvtwo$ and $\renvthree$
  be environments such that:
  \begin{itemize}
  \item $\join \renvone \renvtwo \defined$.
  \item $\ternjudg \renvone \reone \tyone$.
  \item $\binjudg{\join \renvone \renvtwo} {\ass \rvarone \reone}$.
  \end{itemize}
  and let $\rsone, \rstwo \in \sem {\join \renvone \renvtwo}{}$,
  such that
  $\rsone_{\join \renvone \renvtwo \to \renvone}= \rstwo_{\join \renvone \renvtwo\to \renvone}$
  it holds that
  $\sem {\ass \rvarone \reone}{}(\rsone)_{\join \renvone \renvtwo \to \renvone \cup \{\rvarone:\tyone \}}=\sem {\ass \rvarone \reone}{}(\rstwo)_{\join \renvone \renvtwo \to \renvone \cup \{\rvarone:\tyone \}}$.
\end{lemma}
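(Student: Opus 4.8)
The plan is to reduce the statement to \Cref{lemma:semextproj} plus a short direct computation showing that the outcome of an assignment to $\rvarone$, once restricted back to $\renvone\cup\{\rvarone:\tyone\}$, retains only the $\dom(\renvone)$-part of the input store together with the freshly produced value of $\rvarone$.

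Set $\renvfour\defsym\renvone\cup\{\rvarone:\tyone\}$; this is a well-defined environment because $\binjudg{\join\renvone\renvtwo}{\ass\rvarone\reone}$ forces $(\join\renvone\renvtwo)(\rvarone)=\tyone$, hence $\renvone(\rvarone)=\tyone$ whenever $\rvarone\in\dom(\renvone)$. I would first record two structural facts: (i)~$\renvfour\ext\join\renvone\renvtwo$, since $\rvarone\in\dom(\join\renvone\renvtwo)$ and $\dom(\renvone)\subseteq\dom(\join\renvone\renvtwo)$, the types agreeing by construction; and (ii)~$\binjudg{\renvfour}{\ass\rvarone\reone}$, where $\renvfour(\rvarone)=\tyone$ is immediate and $\ternjudg{\renvfour}{\reone}{\tyone}$ follows from the hypothesis $\ternjudg\renvone\reone\tyone$ by weakening along $\renvone\ext\renvfour$. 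Applying \Cref{lemma:semextproj} to the program $\ass\rvarone\reone$, the store $\rsone\in\sem{\join\renvone\renvtwo}{}$ and the subenvironment $\renvfour$ (and symmetrically to $\rstwo$) yields
\[
  \sem{\ass\rvarone\reone}{}(\rsone)_{\join\renvone\renvtwo\to\renvfour}=\sem{\ass\rvarone\reone}{}\bigl(\rsone_{\join\renvone\renvtwo\to\renvfour}\bigr),\qquad
  \sem{\ass\rvarone\reone}{}(\rstwo)_{\join\renvone\renvtwo\to\renvfour}=\sem{\ass\rvarone\reone}{}\bigl(\rstwo_{\join\renvone\renvtwo\to\renvfour}\bigr),
\]
so it suffices to prove $\sem{\ass\rvarone\reone}{}\bigl(\rsone_{\join\renvone\renvtwo\to\renvfour}\bigr)=\sem{\ass\rvarone\reone}{}\bigl(\rstwo_{\join\renvone\renvtwo\to\renvfour}\bigr)$. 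Writing $\rsthree\defsym\rsone_{\join\renvone\renvtwo\to\renvfour}$ and $\rsthree'\defsym\rstwo_{\join\renvone\renvtwo\to\renvfour}$, and using that projection is associative (immediate from its monadic definition, since $\renvone\ext\renvfour\ext\join\renvone\renvtwo$), we obtain $\rsthree_{\renvfour\to\renvone}=\rsone_{\join\renvone\renvtwo\to\renvone}=\rstwo_{\join\renvone\renvtwo\to\renvone}=\rsthree'_{\renvfour\to\renvone}$ from the hypothesis relating $\rsone$ and $\rstwo$.

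It thus remains to show that whenever $\rsthree,\rsthree'\in\sem\renvfour{}$ agree on their $\renvone$-marginal, $\sem{\ass\rvarone\reone}{}(\rsthree)=\sem{\ass\rvarone\reone}{}(\rsthree')$. I would split on whether $\rvarone\in\dom(\renvone)$. If so, $\renvfour=\renvone$, the projection $\renvfour\to\renvone$ is the identity, and the premise literally says $\rsthree=\rsthree'$, so there is nothing to do. If $\rvarone\notin\dom(\renvone)$, then $\dom(\renvfour)\setminus\{\rvarone\}=\dom(\renvone)$, and for fixed $\nat$ and a target $\memtwo\in\detsem\renvfour\nat$, expanding the semantics of assignment and noting that $\unit{\memone[\rvarone\mapsto t]}(\memtwo)=1$ exactly when $t=\memtwo(\rvarone)$ and $\memone$ agrees with $\memtwo$ off $\rvarone$, one gets
\[
  \sem{\ass\rvarone\reone}{\nat}(\rsthree_\nat)(\memtwo)=\sum_{\substack{\memone\in\detsem\renvfour\nat\\ \memone\restr{\dom(\renvone)}=\memtwo\restr{\dom(\renvone)}}}\rsthree_\nat(\memone)\cdot\detsem{\ternjudg\renvfour\reone\tyone}\nat(\memone)(\memtwo(\rvarone)).
\]
By \Cref{lemma:detsemproj}, $\detsem{\ternjudg\renvfour\reone\tyone}\nat(\memone)=\detsem{\ternjudg\renvone\reone\tyone}\nat(\memone\restr{\dom(\renvone)})$, which over the summation range is the constant $\detsem{\ternjudg\renvone\reone\tyone}\nat(\memtwo\restr{\dom(\renvone)})(\memtwo(\rvarone))$; factoring it out leaves $\sum_{\memone\restr{\dom(\renvone)}=\memtwo\restr{\dom(\renvone)}}\rsthree_\nat(\memone)=(\rsthree_{\renvfour\to\renvone})_\nat(\memtwo\restr{\dom(\renvone)})$. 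Hence $\sem{\ass\rvarone\reone}{\nat}(\rsthree_\nat)(\memtwo)$ depends on $\rsthree$ only through $\rsthree_{\renvfour\to\renvone}$, and the identical computation for $\rsthree'$ closes the argument.

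Everything here is routine; the points that deserve care are the bookkeeping in the application of \Cref{lemma:semextproj} (matching its target environment to $\renvfour=\renvone\cup\{\rvarone:\tyone\}$ and verifying $\binjudg{\renvfour}{\ass\rvarone\reone}$), and, in the case $\rvarone\notin\dom(\renvone)$, the appeal to \Cref{lemma:detsemproj} — which is exactly where the hypothesis $\ternjudg\renvone\reone\tyone$ enters, guaranteeing that the value written into $\rvarone$ does not depend on variables outside $\dom(\renvone)$. The degenerate case $\rvarone\in\dom(\renvone)$ genuinely has to be isolated: there the value of $\reone$ may depend on the old contents of $\rvarone$, so the marginalisation identity above need not factor, but in that case the claim collapses directly to the hypothesis.
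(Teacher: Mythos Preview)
Your proof is correct. The approach differs from the paper's in its decomposition: you first invoke \Cref{lemma:semextproj} to pull the projection inside the semantics, reducing to the smaller environment $\renvfour=\renvone\cup\{\rvarone:\tyone\}$, and then do a concrete sum calculation there, which forces the case split on whether $\rvarone\in\dom(\renvone)$. The paper instead stays in $\join\renvone\renvtwo$ and does a purely monadic rewriting: it expands the projected assignment semantics, uses the identity $\memone[\rvarone\mapsto t]\restr{\dom(\renvone)\cup\{\rvarone\}}=\memone\restr{\dom(\renvone)}[\rvarone\mapsto t]$ together with \Cref{lemma:detsemproj} to make the continuation depend on $\memone$ only through $\memone\restr{\dom(\renvone)}$, and then factors through the $\renvone$-marginal directly. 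That rewriting is uniform in whether $\rvarone\in\dom(\renvone)$, so no case split is needed. Both proofs hinge on \Cref{lemma:detsemproj}; your route is a bit more modular (it reuses \Cref{lemma:semextproj} rather than replaying its content), while the paper's route is shorter and avoids the case analysis. Your remark that the case $\rvarone\in\dom(\renvone)$ must be isolated is accurate for your particular factoring argument, but it is an artefact of the approach rather than of the statement itself.
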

\begin{proof}
  We start by expanding $\sem {\ass \rvarone \reone}{\nat}(\rsone)_{\join \renvone \renvtwo \to \renvone \cup \{\rvarone:\tyone \}}$ as $\bind{\bind {\rsone_\nat} {\mem \mapsto \bind{\detsem \reone \nat(\mem)} {t \mapsto \unit{\mem[\rvarone \mapsto t]}}}}{\memtwo \mapsto \unit{\memtwo\restr{\dom(\renvone) \cup \{\rvarone \}}}}$. This rewrites as follows:
  \[
    \bind {\rsone_\nat} {\mem \mapsto \bind{\detsem \reone \nat(\mem)} {t \mapsto \unit{\mem[\rvarone \mapsto t]\restr{\dom(\renvone) \cup \{\rvarone \}}}}},
  \]
  which is equal to:
  \[
    \bind {\rsone_\nat} {\mem \mapsto \bind{\detsem \reone \nat(\mem)} {t \mapsto \unit{\mem\restr{\dom(\renvone)}[\rvarone \mapsto t]}}},
  \]
  By \Cref{lemma:detsemproj}, we deduce that, in turn, it equals:
  \[
    \bind {\rsone_\nat} {\mem \mapsto \bind{\detsem \reone \nat(\mem\restr{\dom(\renvone)})} {t \mapsto \unit{\mem\restr{\dom(\renvone)}[\rvarone \mapsto t]}}},
  \]
  We can also observe that this is equal to:
  \[
    \bind {{(\rsone_{\join \renvone \renvtwo \to \renvone})}_\nat} {\memtwo \mapsto \bind{\detsem \reone \nat(\memtwo)} {t \mapsto \unit{\memtwo[\rvarone \mapsto t]}}}
  \]
  by expanding the definition of projection and applying standard properties of monads. From our premise, we deduce that it is equal to:
  \[
    \bind {{(\rstwo_{\join \renvone \renvtwo \to \renvone})}_\nat} {\memtwo \mapsto \bind{\detsem \reone \nat(\memtwo)} {t \mapsto \unit{\memtwo[\rvarone \mapsto t]}}}.
  \]
  By applying the same steps but in the reverse order, we are able to show the claim that we are aiming at. 
\end{proof}




\section{Cryptographic Probabilistic Separation Logic}\label{sec:logic}
This section is devoted to introducing a variation on
Barthe et al.'s probabilistic separation logic called
\emph{cryptographic probabilistic separation logic} (CPSL in the following).
Compared to PSL, but also to the logic of bunched 
implications (\BI)~\cite{Pym99,OHearnPym99}, CPSL is simpler 
in that the only connectives are multiplicative
and additive conjunctions, i.e. $\sep$ and $\land$.
In addition, instead of working with plain, untyped formulas, we 
work with \emph{well-typed} formulas. Thanks to this choice, the semantics of
separating conjunction is more precise than in PSL,
and as a 
byproduct gives rise to a lighter formulation of program logic rules
in Section~\ref{sec:inference}.


We first introduce the
syntax of \CPSL and its formulas,
then we proceed by defining a semantics for it
based on \emph{efficiently samplable} randomized stores.
Finally, we show that \CPSL enjoys some basic but useful
properties.

\subsection{The Syntax and Semantics of \CPSL}
\label{subsec:bi}
\newcommand{\srenv}[3]{#2\models^{#1}#3}

\begin{figure*}[t]
  \begin{center}
    \begin{align*}
      \sem{\tyf{\ud{\reone}}\renvone}{}\defsym\{{\tyf \rsone \renvone} \in \sem \renvone{}\mid&\;\sem{\ternjudg \renvone\reone\tyone}{}(\rsone)\ind\unif{\tyone}\}\\
      \sem{\tyf{\indp{\reone}{\retwo}}\renvone}{}\defsym\{{\tyf \rsone \renvone}\in \sem \renvone{}\mid&\;\sem{\reone}{}(\rsone)\ind\sem{\retwo}{}(\rsone)\}\\
      \sem{\tyf{\eq{\reone}{\retwo}}\renvone}{}\defsym\{{\tyf \rsone \renvone}\in \sem \renvone{}\mid&\;\sem{\reone}{}(\rsone)=\sem{\retwo}{}(\rsone)\}\\
      \sem{\tyf{\espl{\deone}{\detwo}}\renvone}{}\defsym\{{\tyf \rsone \renvone}\in \sem \renvone{}\mid&\;\forall \nat \in \NN.\forall \mem \in \supp (\rsone_\nat). \parsem{\deone}{\nat}(\mem)=\parsem{\detwo}{\nat}(\mem)\}
    \end{align*}
    
    \begin{tabular}{ll}
      $\srenv\renvone{\stone}{\tyf \apone \renvone}$&iff $\stone\in\sem{\tyf \apone \renvone}{}$ \\
      $\srenv\renvone{\stone}{\tyf \ftrue \renvone}$&always\\
      $\srenv\renvone{\stone}{\tyf \ffalse \renvone}$&never \\
      {$\srenv\renvone{\stone}{\tyf {\tyf {\fone} \renvtwo\land\tyf {\ftwo} \renvthree} \renvone}$} &{iff there are $\sttwo\ext \stone, \stthree\ext \stone$ such that $\srenv\renvtwo{\sttwo}{\tyf {\fone} \renvtwo}$ and $\srenv\renvthree{\stthree}{\tyf {\ftwo} \renvthree}$}\\
      $\srenv \renvone{\stone}{\tyf{\tyf \fone \renvtwo \sep \tyf \ftwo\renvthree}{\renvone}}$&iff there are $\sttwo, \stthree$ such that $\sttwo\comp \stthree\defined$, $\sttwo \comp \stthree \ind\ext \stone$, $\srenv \renvtwo{\sttwo}{\tyf {\fone} \renvtwo}$ and 
                                     $\srenv \renvthree{\stthree}{\tyf {\ftwo} \renvthree}$\\
    \end{tabular}
  \end{center}  
  \caption{Semantics of \CBI}
  \label{fig:logicsem}
\end{figure*}

\subsubsection{Syntax}
We write $\apset$ for the set of atomic formulas as defined in the following grammar:
\begin{align*}
  \apone\bnf\ud{\reone}\midd\indp{\reone}{\reone}\midd\eq{\reone}{\reone} \midd
  \espl\deone \deone
\end{align*}
The formula $\ud{\reone}$, present in \PSL, is interpreted differently here, 
and indicates that the expression $\reone$ evaluates to a distribution that is 
computationally indistinguishable from the uniform one. The three binary atomic 
formulas $\indp{\reone}{\retwo}$, $\eq{\reone}{\retwo}$ and $\espl{\deone} 
{\detwo}$ all speak of indistinguishability between the two expressions 
involved but at different levels, and their semantics will be explained in detail 
later. We also define a type system for atomic formulas in an environment.
Judgments for this type system are in the form $\binjudg \renvone \apone$, and 
are proved valid by the following rules:
\begin{gather*}
  \infer{\binjudg \renvone \ud \reone}{\ternjudg \renvone \reone 
  \tyone}\qquad\qquad
  \infer{\binjudg \renvone \indp \reone \retwo}{
    \ternjudg \renvone \reone \tyone &
    \ternjudg \renvone \retwo \tyone}\\[0.5ex]
    \infer{\binjudg \renvone \eq \reone \retwo}{
    \ternjudg \renvone \reone \tyone &
    \ternjudg \renvone \retwo \tyone}\qquad
    \infer{\binjudg \renvone \espl \deone \detwo}{
    \ternjudg \renvone \deone \tyone &
    \ternjudg \renvone \detwo \tyone}
\end{gather*}

On top of the set of atomic formulas, one can build the set of generic 
formulas, as follows:

\begin{definition}
  For a fixed environment $\renvone$, the set $\tyf \fset \renvone$
  of \emph{formulas for $\renvone$} is the smallest set of expressions closed under the following 
  rules:
  \begin{gather*}
    \infer{\tyf \apone \renvone \in \tyf \fset \renvone}{\binjudg \renvone \apone}\quad
    \infer{\tyf \ftrue \renvone \in \tyf \fset \renvone}{}\quad
    \infer{\tyf \ffalse \renvone \in \tyf \fset \renvone}{}\\[0.5ex]
    \infer{\tyf {\tyf \fone\renvtwo \sep \tyf \ftwo\renvthree } {\renvone} \in \tyf \fset \renvone}
    {\tyf {\fone} \renvtwo \in \tyf \fset \renvtwo &
      \tyf {\ftwo} \renvthree\in \tyf \fset \renvthree &
      \join \renvtwo\renvthree \ext \renvone
    }\\[0.5ex]
    \infer{\tyf {\tyf {\fone} \renvtwo\land\tyf {\ftwo} \renvthree} \renvone \in \tyf \fset \renvone}{\tyf {\fone} \renvtwo \in \tyf \fset \renvtwo &
      \tyf {\ftwo} \renvthree\in \tyf \fset \renvthree &
      \renvtwo\ext \renvone &
      \renvthree\ext \renvone}
  \end{gather*}
  The set of \CPSL formulas $\fset$ is the union of all the sets
  $\tyf \fset \renvone$ over all possible environments $\renvone$.
  \hfill\qed
\end{definition}

As already observed, the binary connectives we consider are just 
conjunctions. On the other hand, formulas are typed and, crucially, the way in 
which a separating conjunction splits the underlying environment is fixed by the 
formula itself.

\subsubsection{Semantics}
The semantics of \CPSL formulas is a family of relations
$\srenv\renvone{}{}$ such that, for every environment $\renvone$, we have that
$\srenv\renvone{}{}\subseteq{\sem\renvone{}}\times\tyf\fset\renvone$.
The full semantics is in Figure \ref{fig:logicsem} \revision{where, in particular,
  with $\ind\ext$ we denote the composition of $\ind$ and $\ext$}.

The role of the three different forms of binary atomic formulas should be now clearer. 
Atomic propositions like $\tyf {\indp{\reone}{\retwo}}\renvone$ capture 
computational indistinguishability: 
${\tyf \rsone\renvone} \models^\renvone \tyf {\indp{\reone}{\retwo}}\renvone$ holds whenever
$\sem {\ternjudg\renvone\reone\tyone}{}({\tyf \rsone\renvone})$ is computationally
indistinguishable from
$\sem {\ternjudg\renvone\retwo\tyone}{}({\tyf \rsone\renvone})$.
The atomic formula $\tyf {\eq{\reone}{\retwo}}\renvone$ captures
a stronger form equivalence: it holds in a state $\rsone$ if and only if
$\sem {\ternjudg \renvone \reone \tyone}{}({\rsone})$ and
$\sem {\ternjudg \renvone \retwo \tyone}{}({\rsone})$ are equal ---
instead of being simply indistinguishable. 
Finally, $\tyf {\espl{\deone}{\detwo}}\renvone$ states
something even stronger:
$\srenv \renvone \rsone {\tyf{\espl{\deone}{\detwo}}\renvone}$
holds if and only if for every $\nat \in \NN$, every sample $\memone$
in the support of the distribution $\rsone_\nat \in \sem \renvone {}$
is such that $\parsem {\ternjudg \renvone \deone \tyone} \nat(\memone)=
\parsem {\ternjudg \renvone \detwo \tyone} \nat(\memone)$.
\revision{
The semantics of $\espl\deone \detwo$ differs from that of
$\eq\reone \retwo$ for two main reasons: first, the atomic proposition
$\espl\deone\detwo$ applies only to \emph{deterministic} expressions,
while $\eq\reone \retwo$ applies also to \emph{probabilistic} expressions.
Second, we have $\tyf \rsone \renvone \in \sem{\tyf{\espl\deone\detwo}\renvone}{}$
when the \emph{values} obtained by computing $\deone$ and $\detwo$ on every sample of
$\supp({\rsone_\nat})$ are identical independently of the security parameter $\nat$;
instead, $\tyf \rsone \renvone \in \tyf{\eq\reone\retwo}\renvone$ requires the \emph{families of distributions}
obtained by evaluating $\reone$ and $\retwo$ on the ensemble $\tyf \rsone \renvone$
are the same.

\begin{example}
  The distribution ensemble $\rsone \in \sem {\rvarone, \rvartwo:\bool}{}$
  that is defined as follows
  \[
    \rsone_\nat \defsym \left\{\{\rvarone \mapsto 0, \rvartwo \mapsto 1\}^{\frac 1 2},
      \{\rvarone \mapsto 1, \rvartwo \mapsto 0\}^{\frac 1 2}\right\}
  \]
  satisfies $\eq\rvarone \rvartwo$ because the marginal distributions of
  $\rvarone$ and $\rvartwo$ are identical, but it does not
  satisfy $\espl \rvarone \rvartwo$ because there is
  a value of the security parameter $\nat$
  and a sample $m\in \supp(\rsone_\nat)$
  where $m(\rvarone)\neq m (\rvartwo)$ (actually this happens for every sample, independently of $\nat$).\hfill\qed
\end{example}
}

The interpretation of the additive conjunction is the same as in 
\PSL except that it asks that the sub-formulas hold in specific parts of 
the stores. Formally, we have that $\srenv \renvone {\tyf \rsone\renvone} 
{\tyf{\tyf\fone\renvtwo
    \land \tyf\fone\renvthree}{\renvone}}$ whenever
there are $\rstwo,\rsthree \ext \rsone$ such that: 
$\srenv \renvtwo {\tyf \rstwo \renvtwo} {\tyf \fone \renvtwo}$
and $\srenv \renvthree {\tyf \rsthree \renvthree} {\tyf \ftwo \renvthree}$.

The crux of \CPSL is the way separating conjunction is interpreted, which
now refers to computational independence: we ask that $\srenv \renvone
{\tyf \rsone\renvone} {\tyf{\tyf\fone\renvtwo \sep \tyf\fone\renvthree}{\renvone}}$
holds whenever it is possible to find two ensembles $\rstwo \in \sem \renvtwo{}$
and $\rsthree\in \sem \renvthree{}$
such that \(
  \srenv \renvtwo {\tyf \rstwo \renvtwo} {\tyf \fone \renvtwo},
  \srenv \renvthree {\tyf \rsthree \renvthree} {\tyf \ftwo \renvthree}
    \), and
the projection of $\tyf \rsone\renvone$ on the
environment $\join \renvtwo \renvthree$ is indistinguishable
from the tensor product of ${\tyf \rstwo \renvtwo}$ and
$\tyf \rsthree \renvthree$.
As argued in \Cref{sec:compindep} 
below, the last condition is a characterization of the standard notion of
computational independence given by Fay~\cite{Fay14}.
\revision{
  So, our interpretation of the separating conjunction --- which is
  based on that of \BI\ --- exactly captures computational independence.
}


\paragraph*{Notational Conventions}
In the following, we often omit $\renvone$ from $\models^\renvone$
if it is evident from the context, from $\tyf \apone\renvone$
when $\apone$ mentions exactly the variables of $\dom(\renvone)$,
And from
$\tyf{\tyf \fone \renvtwo \odot \tyf \ftwo \renvthree}\renvone$
for $\odot \in \{\land,\sep\}$ when
$\renvone$ is exactly $\join \renvtwo \renvone$.
In order to simplify the presentation of the formulas,
we write $\renvone \restr S$ for the environment obtained by restricting 
the domain of $\renvone$ to $S$. We also omit $\renvone$ when it is
clear from the context. In particular, when
$\stone=\tyf\rsone \renvone$ and $\srenv \renvone{\stone}{\tyf \fone \renvone}$
holds, we write simply $\sr \stone \fone$.
Finally, we write
$
  \tyf \fone \renvone \models^\renvone \tyf \ftwo \renvone
$
as a shorthand for:
\[
  \forall \stone \in \sem\renvone{}. \left(\srenv \renvone \stone \tyf \fone \renvone\right) \Rightarrow \srenv \renvone \stone \tyf \ftwo \renvone.
\]

\subsubsection{Algebraic Properties of Efficiently Samplable Distribution Ensembles}

We model \CPSL on efficiently samplable distribution ensembles;
this set, in turn enjoys some remarkable algebraic properties with respect to
the tensor product $\tensprod$ and to the relations $\ind$
and $\ext$. In the following, we review some of them
in order to prepare the ground for showing, in the \Cref{sec:cpslvsbi},
that the structure $(\stset,\tensprod,\emst{},\ind\ext)$ is a \emph{partial 
Kripke resource monoid} (\Cref{def:pkm}). This result, in turn,
allows us to compare the semantics of our logic
with the standard interpretation the conjunctive fragment
of BI on \emph{partial Kripke resource monoids}
in \Cref{prop:cpsltobi,prop:bitocpsl}. 

\begin{definition}
  \label{def:pkm}
  A partial Kripke resource monoid $(\monone,\mulone,\id,\poone)$ 
  consists of a set $\monone$ together with a partial binary operation 
  $\mulone$, an element $\id\in\monone$ and a pre-order $\poone$ 
  satisfying the following:
  \begin{varenumerate}
  \item
    For every $\elone\in\monone$, both $\elone\mulone\id$ and 
    $\id\mulone\elone$ are defined, and equal to $\elone$.
  \item
    The operation $\mulone$ is associative, in the following sense:
    $(\elone\mulone\eltwo)\mulone\elthree$ and
    $\elone\mulone(\eltwo\mulone\elthree)$ are either both undefined or 
    both defined, and in the latter case, they are equal.
  \item
    The pre-order $\poone$ is compatible with $\mulone$, i.e., 
    if $\elone\poone\eltwo$ and 
    $\elone\mulone\elthree,\eltwo\mulone\elthree$ are both defined, 
    then $\elone\mulone\elthree\poone\eltwo\mulone\elthree$ (and 
    similarly for $\elthree\mulone\elone,\elthree\mulone\eltwo$).
    \hfill\qed
  \end{varenumerate}
\end{definition}

Showing that the relation $\ind\ext$, is a pre-order and is compatible with
respect to $\comp$, as required by \Cref{def:pkm}, requires appreciable effort.

\paragraph{Properties of projection.} We start with the projection operation on stores. We start by observing that the projection $\cdot_{\renvone\to\renvone}$ is the identity.

\begin{rem}[Identity]
  \label{rem:identity}
  For every $\rsone \in \sem \renvone{}$,  $\rsone_{\renvone\to\renvone}= \rsone$.
\end{rem}
\begin{proof}
  \[
    (\rsone_{\renvone\to\renvone})_\nat = \bind {\rsone_\nat}{ \memone \mapsto \unit{\memone\restr {\dom (\renvone)}}} = \bind {\rsone_\nat}{ \memone \mapsto \unit{\memone}} = \rsone.
  \]
\end{proof}

We also observe that projections are closed under composition, when this operation is defined.

\begin{lemma}[Closure Under Composition of Projections]
  \label{lemma:exttrans1}
  Whenever $\renvone_1\ext \renvone_2\ext \renvone_3$, for every $\rsone_3\in \sem{\renvone_3}{}$, it holds that:
  $$((\rsone_3)_{\renvone_3 \to \renvone_2})_{\renvone_2 \to \renvone_1}=(\rsone_3)_{\renvone_3 \to\renvone_1}$$
\end{lemma}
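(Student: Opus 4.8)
The plan is to unfold the definition of projection and reduce everything to a monadic calculation. Recall that for $\renvone' \ext \renvone$ we have $(\rsone)_{\renvone \to \renvone'} = \left\{ \bind{\rsone_\nat}{\memone \mapsto \unit{\memone\restr{\dom(\renvone')}}} \right\}_{\nat \in \NN}$. So the claim amounts, after fixing $\nat \in \NN$, to showing
\[
  \bind{\left(\bind{(\rsone_3)_\nat}{\memone \mapsto \unit{\memone\restr{\dom(\renvone_2)}}}\right)}{\memtwo \mapsto \unit{\memtwo\restr{\dom(\renvone_1)}}}
  \;=\;
  \bind{(\rsone_3)_\nat}{\memone \mapsto \unit{\memone\restr{\dom(\renvone_1)}}}.
\]

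First I would apply the associativity law of the distribution monad to the left-hand side, rewriting it as
\[
  \bind{(\rsone_3)_\nat}{\memone \mapsto \bind{\unit{\memone\restr{\dom(\renvone_2)}}}{\memtwo \mapsto \unit{\memtwo\restr{\dom(\renvone_1)}}}}.
\]
Then the inner term simplifies by the left-unit (bind-unit) law to $\unit{\left(\memone\restr{\dom(\renvone_2)}\right)\restr{\dom(\renvone_1)}}$. At this point the only remaining fact needed is the set-theoretic identity $\left(\memone\restr{\dom(\renvone_2)}\right)\restr{\dom(\renvone_1)} = \memone\restr{\dom(\renvone_1)}$, which holds because $\renvone_1 \ext \renvone_2$ gives $\dom(\renvone_1) \subseteq \dom(\renvone_2)$, and restricting a function first to a larger set and then to a smaller one is the same as restricting directly to the smaller set. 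Substituting this back yields exactly the right-hand side, and since $\nat$ was arbitrary the two distribution ensembles coincide.

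There is really no serious obstacle here: the lemma is a pure bookkeeping fact about the interaction of marginalization with composition of sub-environment inclusions. The only mild point of care is making sure the hypotheses $\renvone_1 \ext \renvone_2 \ext \renvone_3$ are used in the right places — transitivity of $\ext$ to guarantee $\renvone_1 \ext \renvone_3$ so that the right-hand projection $(\rsone_3)_{\renvone_3 \to \renvone_1}$ is even well-defined, and the domain inclusion $\dom(\renvone_1) \subseteq \dom(\renvone_2)$ for the restriction identity. One could alternatively phrase the whole argument directly on probabilities, writing $((\rsone_3)_{\renvone_3\to\renvone_2})_{\renvone_2\to\renvone_1}(\memone) = \sum_{\memtwo : \memtwo\restr{\dom(\renvone_1)} = \memone} \sum_{\memthree : \memthree\restr{\dom(\renvone_2)} = \memtwo} (\rsone_3)_\nat(\memthree)$ and observing that the double sum ranges exactly over $\{\memthree : \memthree\restr{\dom(\renvone_1)} = \memone\}$, but the monadic argument is cleaner and matches the style already used in the proof of \Cref{lemma:projexpsem}.
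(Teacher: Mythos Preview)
Your proof is correct and follows essentially the same route as the paper: unfold the projection as a $\mathrm{bind}$, use the monad laws (associativity and left-unit) to collapse the nested bind, and finish with the restriction identity $(\memone\restr{\dom(\renvone_2)})\restr{\dom(\renvone_1)} = \memone\restr{\dom(\renvone_1)}$ coming from $\dom(\renvone_1)\subseteq\dom(\renvone_2)$. The paper's argument is slightly terser but structurally identical.
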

\begin{proof}
  $((\rsone_3)_{\renvone_3 \to \renvone_2})_{\renvone_2 \to \renvone_1}$
  equals
  \[
    \nat\mapsto \bind{\bind{{\rsone_3}_\nat}{\memone \mapsto \unit{\memone\restr{\dom(\renvone_2)}}}}{\memone \mapsto \unit{\memone\restr{\dom(\renvone_1)}}}
  \]
  which, in turn, equals:
  \[
    \nat\mapsto \bind{{\rsone_3}_\nat} {\memone \mapsto \unit{{\memone\restr{\dom(\renvone_2)}}\restr{\dom(\renvone_1)}}}.
  \]
  Observe that ${\memone\restr{\dom(\renvone_2)}}\restr{\dom(\renvone_1)}=\memone\restr{\dom(\renvone_1)}$. This concludes the proof.
\end{proof}

\paragraph{Properties of $\ext$} The extension relation $\ext$ is defined on top of the notion of projection; for this reason, \Cref{rem:identity} and \Cref{lemma:exttrans1} can be used to show that $\ext$ is a pre-order. 

\begin{lemma}[Pre-order]
  \label{lemma:extpo}
  The relation $\ext$ is a pre-order relation on $\stset$.
\end{lemma}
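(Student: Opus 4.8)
The plan is to verify directly the two defining properties of a pre-order — reflexivity and transitivity — for the relation $\ext$ on the set of stores $\stset$. Recall that $\tyf \rstwo \renvtwo \ext \tyf \rsone \renvone$ holds exactly when $\renvtwo \ext \renvone$ at the level of environments and $\rstwo = \rsone_{\renvone \to \renvtwo}$; since $\ext$ on environments is visibly reflexive and transitive (nested domains together with pointwise agreement of types), the whole argument reduces to establishing the corresponding facts about the projection operation on distribution ensembles, both of which were prepared above.

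For reflexivity, given a store $\tyf \rsone \renvone$ we have $\renvone \ext \renvone$ trivially, and $\rsone = \rsone_{\renvone \to \renvone}$ is exactly the content of \Cref{rem:identity}. Hence $\tyf \rsone \renvone \ext \tyf \rsone \renvone$.

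For transitivity, suppose $\tyf \rsthree \renvthree \ext \tyf \rstwo \renvtwo \ext \tyf \rsone \renvone$. Then $\renvthree \ext \renvtwo \ext \renvone$, so $\renvthree \ext \renvone$, and by definition $\rstwo = \rsone_{\renvone \to \renvtwo}$ and $\rsthree = \rstwo_{\renvtwo \to \renvthree}$. Substituting the first equation into the second, $\rsthree = (\rsone_{\renvone \to \renvtwo})_{\renvtwo \to \renvthree}$, which by \Cref{lemma:exttrans1} (closure of projections under composition, instantiated with the chain $\renvthree \ext \renvtwo \ext \renvone$) equals $\rsone_{\renvone \to \renvthree}$. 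Therefore $\tyf \rsthree \renvthree \ext \tyf \rsone \renvone$, as required.

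Since the argument is a direct instantiation of \Cref{rem:identity} and \Cref{lemma:exttrans1}, I do not expect any real obstacle here; the only thing that must be checked by hand is that $\ext$ on environments is itself a pre-order, and that is immediate from its definition in terms of domain inclusion and pointwise agreement of types.
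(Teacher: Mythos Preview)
Your proof is correct and follows essentially the same approach as the paper: reflexivity via \Cref{rem:identity} and transitivity via \Cref{lemma:exttrans1}, together with the trivial pre-order structure of $\ext$ on environments. There is nothing to add.
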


\begin{proof}~
  \begin{itemize}
  \item For reflexivity, we need to show that
    \[
      {\tyf \rsone \renvone}\ext{\tyf \rsone \renvone}.
    \]
    In particular, this requires to verify that $\renvone\ext \renvone$ and that
    $\rsone_{\renvone\to\renvone}=\rsone$, that is \Cref{rem:identity}.
    \item For transitivity, we need to show that:
    \begin{gather*}
      \tyf{\rsone_1}{\renvone_1}\ext\tyf{\rsone_2}{\renvone_2}\ext\tyf{\rsone_3}{\renvone_3} \Rightarrow\\
      \tyf{\rsone_1}{\renvone_1}\ext\tyf{\rsone_3}{\renvone_3}.
    \end{gather*}
    The conclusion on the environments is a consequence of the transitivity of $\subseteq$.
    Then we need to show that
    $(\rsone_3)_{\renvone_3 \to \renvone_1} = \rsone_1$. We also know that
    $(\rsone_3)_{\renvone_3 \to \renvone_2} = \rsone_2$ and that
    $(\rsone_2)_{\renvone_2 \to \renvone_1} = \rsone_1$; finally,
    due to $\renvone_1\ext\renvone_2$, we also know that $\left((\rsone_3)_{\renvone_3 \to \renvone_2}\right)_{\renvone_2 \to \renvone_1}=(\rsone_3)_{\renvone_3 \to \renvone_1}$. Therefore:
    \[
      \rsone_1=(\rsone_2)_{\renvone_2 \to \renvone_1}=((\rsone_3)_{\renvone_3 \to \renvone_2})_{\renvone_2 \to \renvone_1}=(\rsone_3)_{\renvone_1}.
    \]
    The last equivalence is a consequence of Lemma \ref{lemma:exttrans1}.
  \end{itemize}
\end{proof}

\paragraph{Properties of $\ind$.} We start by showing that $\ind$ is an equivalence relation (Lemma \ref{lemma:indeq}). This observation is crucial for showing that the composition of $\ind$ and $\ext$ is a pre-order. Then, we show that $\ind$ is compatible with the tensor product (\Cref{lemma:indcompcomp}), and that $\ind$ and $\ext$ enjoy a form of distributivity (\Cref{lemma:distributivity}). These properties are respectively used to show that the composition of $\ind$ and $\ext$ is compatible with the tensor product, and that it is transitive, and these are both crucial properties for showing that $(\stset,\tensprod,\emst{},\ind\ext)$ is a \emph{partial 
  Kripke resource monoid}.

\begin{lemma}
  \label{lemma:indeq}
  The relation $\ind$ is an equivalence on $\stset$.
\end{lemma}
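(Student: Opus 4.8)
The plan is to check the three defining properties of an equivalence relation directly from Definition~\ref{def:ensind}, keeping in mind that $\ind$ only ever relates two stores carrying the \emph{same} environment, so each property is verified within a fixed $\sem\renvone{}$. It is convenient to fix, in each case, an arbitrary PPT distinguisher $\distone$ and to write $p_\nat$, $q_\nat$, $r_\nat$ for its acceptance probabilities $\Prob[x\leftarrow\rsone_\nat]\left[\distone(1^\nat,x)=1\right]$ on the ensembles $\rsone$, $\rstwo$, $\rsthree$ in play.

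Reflexivity is immediate: for any $\tyf\rsone\renvone$ and any $\distone$, the quantity $\left|p_\nat - p_\nat\right|$ is identically $0$, hence bounded by the constant-zero function, which is negligible by Definition~\ref{def:nf}; so $\tyf\rsone\renvone \ind \tyf\rsone\renvone$. Symmetry is equally short: the bound in Definition~\ref{def:ensind} is phrased through the absolute value $\left|p_\nat - q_\nat\right|$, which is symmetric in the two ensembles, so the very same negligible function witnesses indistinguishability in the reverse direction.

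For transitivity, suppose $\tyf\rsone\renvone \ind \tyf\rstwo\renvone$ and $\tyf\rstwo\renvone \ind \tyf\rsthree\renvone$ (the equality of all three environments being forced by these relations). Fixing an arbitrary PPT distinguisher $\distone$, the two hypotheses supply negligible functions $\negl_1$ and $\negl_2$ with $\left|p_\nat - q_\nat\right| \le \negl_1(\nat)$ and $\left|q_\nat - r_\nat\right| \le \negl_2(\nat)$ for every $\nat$. The triangle inequality then yields $\left|p_\nat - r_\nat\right| \le \negl_1(\nat) + \negl_2(\nat)$, and it remains only to observe that $\negl_1 + \negl_2$ is negligible; this is the single nonroutine ingredient, and it follows from Definition~\ref{def:nf} by absorbing the factor $2$ into the polynomial, e.g.\ bounding $2/\nat^{k+1}$ by $1/\nat^{k}$ for $\nat \ge 2$. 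Hence $\tyf\rsone\renvone \ind \tyf\rsthree\renvone$.

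I do not expect any genuine obstacle here: the argument is the textbook proof that computational indistinguishability is an equivalence relation, transported verbatim to the setting of typed stores, the only auxiliary facts being negligibility of the zero function and closure of negligible functions under finite sums.
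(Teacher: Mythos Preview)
Your proposal is correct and follows essentially the same approach as the paper: reflexivity is trivial, symmetry comes from the symmetry of the absolute value, and transitivity is the triangle inequality together with closure of negligible functions under addition. Your write-up is in fact slightly more explicit than the paper's, which simply asserts that the sum of two negligible functions is negligible without spelling out the bound.
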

\begin{proof}~
  \begin{itemize}
  \item Reflexivity is trivial.
  \item For symmetry we need to show that:
    \begin{gather*}
      {\tyf \rsone \renvone}\ind{\tyf \rstwo \renvone}\Rightarrow
      {\tyf \rstwo \renvone}\ind{\tyf \rsone \renvone},
    \end{gather*}
    which is a consequence of the fact that
    for every $\nat \in \NN$ and
    every polytime distinguisher
    $\distone:\detsem{\renvone}{\nat} \to \detsem\BB\nat$:
    \begin{gather*}
      \left|\Prob[\memone \leftarrow \rsone_\nat]\left[\distone(1^\nat, \memone)=1\right]-\Prob[\memone \leftarrow \rstwo_\nat]\left[\distone(1^\nat, \memone)=1\right]\right|=\\
      \left|\Prob[\memone \leftarrow \rstwo_\nat]\left[\distone(1^\nat, \memone)=1\right]-\Prob[\memone \leftarrow \rsone_\nat]\left[\distone(1^\nat, \memone)=1\right]\right|.
    \end{gather*}
  \item For transitivity we need to show that:
    \begin{gather*}
      \tyf{\rsone_1}\renvone\ind\tyf{\rsone_2}\renvone\land
      \tyf{\rsone_2}\renvone\ind\tyf{\rsone_3}\renvone\Rightarrow\\
      \tyf{\rsone_1} \renvone\ind \tyf{\rsone_3}\renvone.
    \end{gather*}
    It holds that:
    \begin{align*}
      &\left|\Prob[\memone \leftarrow {\rsone_1}_\nat]\left[\distone(1^\nat, \memone)=1\right]-\Prob[\memone \leftarrow {\rsone_3}_\nat]\left[\distone(\nat, \memone)=1\right]\right|=\\
      &\begin{multlined}\big|\Prob[\memone \leftarrow {\rsone_1}_\nat]\left[\distone(1^\nat, \memone)=1\right]-\Prob[\memone \leftarrow {\rsone_3}_\nat]\left[\distone(\nat, \memone)=1\right]+\\
        \Prob[\memone \leftarrow {\rsone_2}_\nat]\left[\distone(1^\nat, \memone)=1\right]-\Prob[\memone \leftarrow {\rsone_2}_\nat]\left[\distone(\nat, \memone)=1\right]\big|\le\end{multlined}\\
      &\begin{multlined}\left|\Prob[\memone \leftarrow {\rsone_1}_\nat]\left[\distone(1^\nat, \memone)=1\right]-\Prob[\memone \leftarrow {\rsone_2}_\nat]\left[\distone(\nat, \memone)=1\right]\right|+\\
        \left|\Prob[\memone \leftarrow {\rsone_3}_\nat]\left[\distone(1^\nat, \memone)=1\right]-\Prob[\memone \leftarrow {\rsone_2}_\nat]\left[\distone(\nat, \memone)=1\right]\right|\le\end{multlined}\\
      &\negl_1(\nat)+\negl_2(\nat)
    \end{align*}
    We know that $\negl_1$ and $\negl_2$ are two negligible functions because of the
    indistinguishability of $\tyf{\rsone_1}\renvone$ and $\tyf{\rsone_2}\renvone$
    and that of $\tyf{\rsone_2}\renvone$ and $\tyf{\rsone_3}\renvone$.
    The sum of two negligible functions is a negligible function. 
  \end{itemize}  
\end{proof}

\newcommand{\samplone}{\mathit R}

\begin{lemma}[Compatibility of $\ind$ and $\comp$]
  \label{lemma:indcompcomp}
  The relation $\ind$ and the function $\comp$ are compatible, i.e.
  \[
    \forall \stone,\sttwo, \stthree. \stone\ind\sttwo \Rightarrow (\stone\comp \stthree)\defined\Rightarrow(\sttwo\comp \stthree)\defined\Rightarrow   (\stone\comp \stthree)\ind(\sttwo\comp \stthree)
  \]
  and
  \[
    \forall \stone,\sttwo, \stthree. \stone\ind\sttwo \Rightarrow (\stthree\comp \stone)\defined\Rightarrow(\stthree\comp \sttwo)\defined\Rightarrow   (\stthree\comp \stone)\ind(\stthree\comp \sttwo)
  \]
\end{lemma}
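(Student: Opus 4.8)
The plan is a direct reduction, of the kind already used for \Cref{lemma:exprind} and \Cref{lemma:prgind}. For the first implication, note first that $\stone\ind\sttwo$ presupposes (by \Cref{def:ensind}) that the two stores share an environment, say $\stone=\tyf{\rsone}{\renvone}$ and $\sttwo=\tyf{\rstwo}{\renvone}$; and that $(\stone\comp\stthree)\defined$ together with $(\sttwo\comp\stthree)\defined$ means $\stthree=\tyf{\rsthree}{\renvtwo}$ with $\dom(\renvone)\cap\dom(\renvtwo)=\emptyset$, so both products live in $\sem{\join\renvone\renvtwo}{}$. Fix any PPT distinguisher $\distone$ for $\stone\comp\stthree$ versus $\sttwo\comp\stthree$; I will build a PPT distinguisher $\disttwo$ for $\stone$ versus $\sttwo$ whose advantage function is \emph{identical} to that of $\distone$. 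Since $\stone\ind\sttwo$ supplies a negligible upper bound on $\disttwo$'s advantage, the same bound will apply to $\distone$, which is exactly what the statement requires.

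The distinguisher $\disttwo$ works as follows: on input $(1^\nat,\memone)$ with $\memone\in\detsem{\renvone}{\nat}$, it runs the polytime sampler guaranteed by $\es{\rsthree}$ on input $1^\nat$ to obtain a value $\memtwo\in\detsem{\renvtwo}{\nat}$ distributed according to $\rsthree_\nat$; it then forms the (unique, since the domains are disjoint) map $\mem\in\detsem{\join\renvone\renvtwo}{\nat}$ with $\mem\restr{\dom(\renvone)}=\memone$ and $\mem\restr{\dom(\renvtwo)}=\memtwo$, and outputs $\distone(1^\nat,\mem)$. Because $\rsthree$ is efficiently samplable, assembling two memory maps with disjoint domains is polytime, and $\distone$ is PPT, the algorithm $\disttwo$ is PPT.

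The only computation needed is the observation that, for every $\nat$, if $\memone\leftarrow\rsone_\nat$ and $\memtwo\leftarrow\rsthree_\nat$ are sampled independently, the law of the assembled map $\mem$ is, by the very definition of the tensor product of stores, exactly $(\rsone\tensprod\rsthree)_\nat=(\stone\comp\stthree)_\nat$; the same holds with $\rstwo$ in place of $\rsone$. Hence
\[
  \Prob[\memone\leftarrow\rsone_\nat]\left[\disttwo(1^\nat,\memone)=1\right]
    =\Prob[\mem\leftarrow(\stone\comp\stthree)_\nat]\left[\distone(1^\nat,\mem)=1\right],
\]
and likewise for $\sttwo$, so the two advantage functions coincide. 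The second implication, with $\stthree$ on the left, follows at once because the tensor product of stores is symmetric whenever defined ($\stthree\comp\stone$ and $\stone\comp\stthree$ denote the same store over $\join\renvone\renvtwo=\join\renvtwo\renvone$), so it reduces to the case already treated. The argument is routine; the only points requiring care are that $\ind$ forces a common environment, that membership of $\stthree$ in $\sem{\renvtwo}{}$ carries with it a polytime sampler (so that $\disttwo$ stays polytime), and the identification of the law of the assembled sample with the tensor-product store.
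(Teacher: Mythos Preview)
Your proof is correct and follows essentially the same approach as the paper's: both construct a distinguisher $\disttwo$ for $\stone$ versus $\sttwo$ from a hypothetical distinguisher $\distone$ for the products by using the polytime sampler for $\rsthree$ to draw a fresh sample, combining it with the input, and passing the result to $\distone$, so that the advantages coincide. The only cosmetic differences are that the paper phrases the argument by contraposition and handles the left-compatibility case by saying ``analogous'' rather than invoking symmetry of $\comp$ explicitly.
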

\begin{proof}
  We only show the right compatibility. The proof of left compatibility is analogous. Assume the statement not to hold, i.e. there are three stores $\stone={\tyf {\rsone_1} {\renvone_1}},\sttwo = {\tyf {\rsone_2} {\renvone_2}}, \stthree = {\tyf {\rsone_3} {\renvone_3}}$ such that $\stone\ind\sttwo$, $(\stone\comp \stthree)$ is defined, $(\sttwo\comp \stthree)$ is defined and finally that there is a distinguisher $\distone$ which distinguishes $\stone\comp \stthree$ and $\stone \comp \sttwo$ with non-negligible advantage.
  We know that $\rsone_3$ is efficiently samplable for every $\nat\in \NN$, thus, we can call $\samplone$ the probabilistic polytime algorithm sampling $(\rsone_3)_\nat$. In particular, the distinguisher $\disttwo$ can be obtained by sampling from the poly-time algorithm for $\rsone_3$, namely $\samplone$ to obtain a sample $\memtwo$. Then, $\disttwo$ combines its input sample $\memone$ and $\memtwo$ in a unique sample and then invokes $\distone$ on this sample. By polynomiality of $\distone$ and of $\samplone$, $\disttwo$ is polynomial. Moreover, the advantage of $\disttwo$ in distinguishing $\rsone_1$ and $\rsone_2$ is the same of $\distone$ on $(\rsone_1\tensprod \rsone_3)$ and $(\rsone_2\tensprod \rsone_3)$.
\end{proof}

\begin{lemma}[Distributivity of $\ind$ and $\ext$]\label{lemma:distributivity}
  If $\stone\ext\sttwo\ind\stthree$, then there exists $\stfour$ such that
  $\stone\ind\stfour\ext\stthree$.
\end{lemma}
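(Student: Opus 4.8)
The plan is to take as witness the projection of $\stthree$ onto the environment of $\stone$, and then check the two required relations directly. Write $\stone=\tyf{\rsone_1}{\renvone_1}$, $\sttwo=\tyf{\rsone_2}{\renvone_2}$ and $\stthree=\tyf{\rsone_3}{\renvone_3}$. Unfolding the hypotheses, $\stone\ext\sttwo$ gives $\renvone_1\ext\renvone_2$ together with $\rsone_1=(\rsone_2)_{\renvone_2\to\renvone_1}$, and $\sttwo\ind\stthree$ forces $\renvone_2=\renvone_3$ (indistinguishability is only defined between stores over the same environment) together with $\rsone_2\ind\rsone_3$. In particular $\renvone_1\ext\renvone_3$, so the projection is available, and I would set
\[
  \stfour\defsym(\stthree)_{\renvone_3\to\renvone_1}=\tyf{(\rsone_3)_{\renvone_3\to\renvone_1}}{\renvone_1}.
\]
Since a projection of an efficiently samplable ensemble is again efficiently samplable (sample from $\rsone_3$, then restrict the sample to $\dom(\renvone_1)$), $\stfour$ is a legitimate store in $\sem{\renvone_1}{}$, and $\stfour\ext\stthree$ holds immediately from the definition of $\ext$ on stores.

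It then remains to prove $\stone\ind\stfour$. Both stores are over $\renvone_1$, so, using $\rsone_1=(\rsone_2)_{\renvone_2\to\renvone_1}$, this reduces to $(\rsone_2)_{\renvone_2\to\renvone_1}\ind(\rsone_3)_{\renvone_3\to\renvone_1}$, i.e. to the fact that projection preserves computational indistinguishability. This is the standard post-processing argument, exactly in the style of \Cref{lemma:exprind} and \Cref{lemma:prgind}: the projection $(\cdot)_{\renvone_2\to\renvone_1}$ is the pushforward of an ensemble along the deterministic, trivially polytime map $\memone\mapsto\memone\restr{\dom(\renvone_1)}$, so from any PPT distinguisher $\distone$ for the two projected ensembles one builds $\disttwo(1^\nat,\memone)\defsym\distone(1^\nat,\memone\restr{\dom(\renvone_1)})$, which is PPT and has exactly the same advantage on $\rsone_2$ versus $\rsone_3$; since $\rsone_2\ind\rsone_3$ this advantage is negligible, hence so is that of $\distone$, a contradiction.

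I do not expect a genuine obstacle here: the content is entirely the choice of witness plus the observation that $\ind$ is stable under polytime postprocessing. The only points needing a little care are bookkeeping — that $\renvone_1\ext\renvone_3$ so that $\stfour$ is well formed, that $\stfour$ stays efficiently samplable so it is indeed a store in $\sem{\renvone_1}{}$, and that the two occurrences of $\ind$ (on $\sttwo,\stthree$ over $\renvone_2$, and on $\stone,\stfour$ over $\renvone_1$) are both read off the same definition.
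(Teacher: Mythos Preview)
Your proposal is correct and is essentially the paper's own proof: both choose $\stfour$ to be the projection of $\stthree$ onto $\renvone_1$, check $\stfour\ext\stthree$ directly, and establish $\stone\ind\stfour$ via the same post-processing distinguisher $\disttwo(1^\nat,\memone)=\distone(1^\nat,\memone\restr{\dom(\renvone_1)})$. Your explicit remark that the projection preserves efficient samplability is a small addition the paper leaves implicit.
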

\begin{proof}
  The claim we are aiming to show is:
  for every $\stone,\sttwo,\stthree\in\stset$,
  $\stone\ext\sttwo\ind\stthree$, then there is $\stfour\in\stset$ such that
  $\stone\ind\stfour\ext\stthree$. This claim can be equivalently restated as follows:
  \begin{gather*}
    {\tyf {\rsone_1} {\renvone_1}}\ext{\tyf {\rsone_2} {\renvone_2}}\ind {\tyf {\rsone_3} {\renvone_3}} \Rightarrow\\
    \exists \renvone_4,\rsone_4.
    {\tyf {\rsone_1} {\renvone_1}}\ind{\tyf {\rsone_4} {\renvone_4}}\ext {\tyf {\rsone_3} {\renvone_3}}
  \end{gather*}
  Choosing $\renvone_4=\renvone_1$ and $\rsone_4=(\rsone_3)_{\renvone_3\to\renvone_1}$,
  we are able to show the claim.
  \begin{itemize}
  \item To show $\renvone_4\ext \renvone_3$, we can show that $\renvone_1\ext \renvone_3$. From our assumption, we know that $\renvone_1\ext \renvone_2 = \renvone_3$, so $\renvone_1\ext \renvone_3$.
  \item Similarly, to show $\rsone_4=(\rsone_3)_{\renvone_3\to\renvone_4}$ it suffices to observe that $\renvone_4=\renvone_1$.
  \end{itemize}
  This concludes the proof that ${\tyf {\rsone_4} {\renvone_4}}\ext {\tyf {\rsone_3} {\renvone_3}}$. It remains to show that:
  \begin{equation}
    {\tyf {\rsone_1} {\renvone_1}}\ind\tyf{(\rsone_3)_{\renvone_3\to\renvone_1}}{\renvone_1}
    \tag{$*$}
  \end{equation}
  Assume that $(\rsone_s)$ and $(\rsone_3)_{\renvone_3\to\renvone_1}$ are distinguishable by $\distone$, from Definition \ref{def:ensind}, this is equivalent to assuming that:
  \[
    \left|\Prob[\memone \leftarrow{\rsone_1}_\nat]\left[\distone(1^\nat, \memone)=1\right]-\Prob[\memone \leftarrow{(\rsone_3)_{\renvone_3\to\renvone_1}}_\nat]\left[\distone(1^\nat, \memone)=1\right]\right|
  \]
  is not a negligible function. Let the distinguisher $\disttwo$ which take its input $1^\nat$ and a sample $\memtwo\in \detsem {\renvone_3}\nat$ and executes $\distone$ on $(\memtwo\restr{\dom(\renvone_1)}, 1^\nat)$. $\disttwo$ is still a probabilistic polynomial time computable function. Due to  $\renvone_1\ext \renvone_3$, the probability of returning 1 of $\disttwo$ on samples taken from  $\rsone_3$ is exactly that of $\distone$ on samples taken from $(\rsone_3)_{\renvone_3\to\renvone_1}$, and similarly for $\rsone_2$, the output distribution is the same of $\distone$ on $(\rsone_2)_{\renvone_2\to\renvone_1}=\rsone_1$. 
  For this reason:
  \[
    \left|\Prob[\memone \leftarrow {(\rsone_3)}_\nat]\left[\disttwo(1^\nat, \memone)=1\right]-\Prob[\memone \leftarrow {(\rsone_2)}_\nat]\left[\disttwo(1^\nat, (\rsone_2)(\nat))=1\right]\right|
  \]
  is not a negligible function, but this is absurd since we were assuming that:
  \[
    {\tyf {\rsone_2} {\renvone_2}}\ind {\tyf {\rsone_3} {\renvone_3}},
  \]
  so $\distone$ cannot exist, and thus $(*)$ holds.
\end{proof}

\paragraph{Compatibility of $\ext$ and $\ind$.} The last property that we need to show is the compatibility of $\ext$ and $\ind$. With \Cref{lemma:indcompcomp}, this property is useful to show that even the relation $\ind\ext$ is compatible with $\tensprod$, and enabling us to show that  $(\stset,\tensprod,\emst{},\ind\ext)$ is a \emph{partial Kripke resource monoid}. The compatibility of $\ext$ and $\ind$ is shown in \Cref{lemma:extcompcomp} below, and it relies on showing that the projection is distributive with the tensor product (\Cref{lemma:prodrestrdistr}).

\begin{lemma}[Distributivity of Product over Restriction]
  \label{lemma:prodrestrdistr}
  Whenever $\rsone \in \sem \renvone{}$, $\rstwo \in \sem \renvtwo{}$, $\renvthree\ext \renvone, \renvfour\ext\renvtwo$, and $\join\renvtwo\renvone\defined$, it holds that:
    \[
      (\tyf \rstwo \renvtwo \tensprod \tyf \rsone \renvone)_{\join {\renvtwo}{\renvone}\to(\join {\renvthree}{\renvfour})} = ({\tyf\rstwo \renvtwo}_{{\renvtwo}\to {\renvthree}} \tensprod{\tyf \rsone\renvone}_{{\renvone}\to{\renvfour}}).
    \]  
  \end{lemma}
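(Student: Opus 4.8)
The plan is to prove the identity by evaluating both sides pointwise: for each security parameter $\nat\in\NN$ and each well-typed memory $\mem\in\detsem{\join\renvthree\renvfour}\nat$ we show that the two distributions assign the same mass to $\mem$. First I would record the purely combinatorial facts about restriction that will be used. Since $\renvthree\ext\renvone$, $\renvfour\ext\renvtwo$ and $\dom(\renvone)\cap\dom(\renvtwo)=\emptyset$, the union $\join\renvthree\renvfour$ is defined, $\dom(\join\renvthree\renvfour)=\dom(\renvthree)\cup\dom(\renvfour)$ with the two parts disjoint, and for any $\memone\in\detsem{\join\renvtwo\renvone}\nat$ one has $\memone\restr{\dom(\join\renvthree\renvfour)}=({\memone\restr{\dom(\renvtwo)}}\restr{\dom(\renvthree)})\cup({\memone\restr{\dom(\renvone)}}\restr{\dom(\renvfour)})$. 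Consequently $\memone\mapsto(\memone\restr{\dom(\renvtwo)},\memone\restr{\dom(\renvone)})$ is a bijection from $\detsem{\join\renvtwo\renvone}\nat$ onto $\detsem\renvtwo\nat\times\detsem\renvone\nat$, under which the constraint $\memone\restr{\dom(\join\renvthree\renvfour)}=\mem$ becomes the conjunction ${\memone\restr{\dom(\renvtwo)}}\restr{\dom(\renvthree)}=\mem\restr{\dom(\renvthree)}$ and ${\memone\restr{\dom(\renvone)}}\restr{\dom(\renvfour)}=\mem\restr{\dom(\renvfour)}$. (An analogous commutation statement for iterated projections is already \Cref{lemma:exttrans1}.)

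Next I would unfold the left-hand side using the definition of projection and of the pointwise tensor product of stores. For $\mem\in\detsem{\join\renvthree\renvfour}\nat$,
\[
\bigl((\tyf\rstwo\renvtwo\tensprod\tyf\rsone\renvone)_{\join\renvtwo\renvone\to\join\renvthree\renvfour}\bigr)_\nat(\mem)=\sum_{\substack{\memone\in\detsem{\join\renvtwo\renvone}\nat\\ \memone\restr{\dom(\join\renvthree\renvfour)}=\mem}}\rstwo_\nat(\memone\restr{\dom(\renvtwo)})\cdot\rsone_\nat(\memone\restr{\dom(\renvone)}).
\]
Applying the bijection above, the summation range becomes pairs $(\memtwo,\memthree)\in\detsem\renvtwo\nat\times\detsem\renvone\nat$ subject to the two independent constraints, and since the summand factors through this decomposition the sum distributes over the product, yielding
\[
\Bigl(\sum_{\substack{\memtwo\in\detsem\renvtwo\nat\\ {\memtwo}\restr{\dom(\renvthree)}=\mem\restr{\dom(\renvthree)}}}\rstwo_\nat(\memtwo)\Bigr)\cdot\Bigl(\sum_{\substack{\memthree\in\detsem\renvone\nat\\ {\memthree}\restr{\dom(\renvfour)}=\mem\restr{\dom(\renvfour)}}}\rsone_\nat(\memthree)\Bigr).
\]
By the definition of projection the first factor is $(\rstwo_{\renvtwo\to\renvthree})_\nat(\mem\restr{\dom(\renvthree)})$ and the second is $(\rsone_{\renvone\to\renvfour})_\nat(\mem\restr{\dom(\renvfour)})$, and by the definition of the pointwise tensor product their product equals $\bigl({\tyf\rstwo\renvtwo}_{\renvtwo\to\renvthree}\tensprod{\tyf\rsone\renvone}_{\renvone\to\renvfour}\bigr)_\nat(\mem)$, i.e.\ the right-hand side evaluated at $\mem$. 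As $\nat$ and $\mem$ were arbitrary, the two distribution ensembles coincide; both are efficiently samplable (a projection, resp.\ a tensor product, of efficiently samplable ensembles is again efficiently samplable), so they denote the same store in $\sem{\join\renvthree\renvfour}{}$.

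The only step requiring genuine care — and the one I expect to be the main bookkeeping obstacle — is the factorization of the summation range, namely the claim that restriction to $\dom(\join\renvthree\renvfour)$ decomposes into independent restrictions along $\renvtwo$ and along $\renvone$. This is exactly where the disjointness hypothesis $\dom(\renvone)\cap\dom(\renvtwo)=\emptyset$ (hence also $\dom(\renvthree)\cap\dom(\renvfour)=\emptyset$) and the extension hypotheses $\renvthree\ext\renvone$, $\renvfour\ext\renvtwo$ enter; once it is in place, the rest is a direct unfolding of the definitions of projection and of the pointwise tensor product together with an elementary rearrangement of a finite double sum.
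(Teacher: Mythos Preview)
Your proof is correct and follows essentially the same route as the paper's: unfold the projection of the tensor product as a sum over $\detsem{\join\renvtwo\renvone}\nat$, reindex that sum by the bijection $\memone\mapsto(\memone\restr{\dom(\renvtwo)},\memone\restr{\dom(\renvone)})$, factor into a product of two independent sums, and recognise each factor as a projection. The paper phrases the key decomposition as the identity $\unit{(\memtwo\restr{\dom(\renvthree)}\cup\memthree\restr{\dom(\renvfour)})}(\memfour)=\unit{\memtwo\restr{\dom(\renvthree)}}(\memfour\restr{\dom(\renvthree)})\cdot\unit{\memthree\restr{\dom(\renvfour)}}(\memfour\restr{\dom(\renvfour)})$, whereas you phrase it as the restriction constraint splitting into a conjunction; these are the same observation, and your presentation of the bijection and the factorisation step is if anything a bit more explicit than the paper's.
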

\begin{proof}
  Showing the identity of the final environments is trivial, so we will focus only on the distribution ensembles. We start by expanding the definition of $(\tyf \rstwo \renvtwo \tensprod \tyf \rsone \renvone)_{\join {\renvtwo}{\renvone}\to(\join {\renvthree}{\renvfour})}$ for a given $\nat$:
  \begin{align*}
      &= \bind{(\rsone \tensprod \rstwo)_\nat} {\memone \mapsto \unit{\memone\restr{\dom(\join {\renvthree}{\renvfour})}}}\\
      &= \memfour\mapsto\;\;\smashoperator[lr]{\sum_{\memone \in \detsem{\join {\renvtwo}{\renvone}}\nat}}\;\;{\rstwo_\nat(m\restr{\dom(\renvtwo)})\cdot\rsthree_\nat(m\restr{\dom(\renvone)})} \cdot{\unit{\memone\restr{\dom(\join {\renvthree}{\renvfour})}}}(\memfour)\\
      &= \memfour\mapsto\;\;\smashoperator[lr] {\sum_{
              \memtwo \in \detsem{{\renvtwo}}\nat,
              \memthree \in \detsem{{\renvone}}\nat
}}\;\;{\rstwo_\nat(\memtwo)\cdot\rsthree_\nat(\memthree)} \cdot{\unit{(\memtwo\cup\memthree)\restr{\dom(\join {\renvthree}{\renvfour})}}}(\memfour)\\
      &= \memfour\mapsto\;\;\smashoperator[lr] {\sum_{
              \memtwo \in \detsem{{\renvtwo}}\nat,
              \memthree \in \detsem{{\renvone}}\nat
}}\;\;{\rstwo_\nat(\memtwo)\cdot\rsthree_\nat(\memthree)} \cdot{\unit{(\memtwo\restr{\dom(\renvthree)}\cup\memthree\restr{\dom(\renvfour)})}}(\memfour)
  \end{align*}
  Observe that ${\unit{(\memtwo\restr{\dom(\renvthree)}\cup\memthree\restr{\dom(\renvfour)})}}(\memfour)$ is equal to:
  \[
    {\unit{\memtwo\restr{\dom(\renvthree)}}(\memfour\restr{\dom(\renvthree)})\cdot\unit{\memthree\restr{\dom(\renvfour)}}}(\memfour\restr{\dom(\renvfour)})
  \]
  From this observation, the conclusion comes easily, by associativity property and closing back the definition of projection.
\end{proof}

\begin{lemma}[Compatibility of $\ext$ and $\comp$]
  \label{lemma:extcompcomp}
  The relation $\ext$ and the function $\comp$ are compatible, i.e.
  \[
    \forall \stone,\sttwo, \stthree. \stone\ext\sttwo \Rightarrow (\stone\comp \stthree)\defined\Rightarrow(\sttwo\comp \stthree)\defined\Rightarrow (\stone\comp \stthree)\ext(\sttwo\comp \stthree)
  \]
  and
  \[
    \forall \stone,\sttwo, \stthree. \stone\ext\sttwo \Rightarrow (\stthree\comp \stone)\defined\Rightarrow(\stthree\comp \sttwo)\defined\Rightarrow (\stthree\comp \stone)\ext(\stthree\comp \sttwo)
  \]
\end{lemma}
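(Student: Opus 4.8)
The plan is to reduce both sides to identities between efficiently samplable distribution ensembles over explicit environments, and then to invoke the distributivity of projection over tensor product, \Cref{lemma:prodrestrdistr}. Write $\stone = \tyf{\rsone_1}{\renvone_1}$, $\sttwo = \tyf{\rsone_2}{\renvone_2}$, $\stthree = \tyf{\rsone_3}{\renvone_3}$. Unfolding the definition of $\ext$ on stores, the hypothesis $\stone \ext \sttwo$ says $\renvone_1 \ext \renvone_2$ and $\rsone_1 = (\rsone_2)_{\renvone_2 \to \renvone_1}$. Since $\stone \comp \stthree$ and $\sttwo \comp \stthree$ are both defined we have $\dom(\renvone_1) \cap \dom(\renvone_3) = \emptyset = \dom(\renvone_2) \cap \dom(\renvone_3)$, so $\join{\renvone_1}{\renvone_3}$ and $\join{\renvone_2}{\renvone_3}$ are defined, and by definition of $\comp$ we have $\stone \comp \stthree = \tyf{\rsone_1 \tensprod \rsone_3}{\join{\renvone_1}{\renvone_3}}$ and $\sttwo \comp \stthree = \tyf{\rsone_2 \tensprod \rsone_3}{\join{\renvone_2}{\renvone_3}}$.

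For the right-compatibility statement it then suffices to check $\tyf{\rsone_1 \tensprod \rsone_3}{\join{\renvone_1}{\renvone_3}} \ext \tyf{\rsone_2 \tensprod \rsone_3}{\join{\renvone_2}{\renvone_3}}$. The environment part, $\join{\renvone_1}{\renvone_3} \ext \join{\renvone_2}{\renvone_3}$, is immediate from $\renvone_1 \ext \renvone_2$: domain inclusion is preserved by disjoint union and the type assignments still agree on the common part. For the ensemble part I would show $(\rsone_2 \tensprod \rsone_3)_{\join{\renvone_2}{\renvone_3} \to \join{\renvone_1}{\renvone_3}} = \rsone_1 \tensprod \rsone_3$ by instantiating \Cref{lemma:prodrestrdistr} with first factor $\rsone_2$ over $\renvone_2$, restricted down to $\renvone_1 \ext \renvone_2$, and second factor $\rsone_3$ over $\renvone_3$, restricted (trivially) to $\renvone_3 \ext \renvone_3$. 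That lemma then gives $(\rsone_2 \tensprod \rsone_3)_{\join{\renvone_2}{\renvone_3} \to \join{\renvone_1}{\renvone_3}} = (\rsone_2)_{\renvone_2 \to \renvone_1} \tensprod (\rsone_3)_{\renvone_3 \to \renvone_3}$, and this equals $\rsone_1 \tensprod \rsone_3$ using $\rsone_1 = (\rsone_2)_{\renvone_2 \to \renvone_1}$ together with $(\rsone_3)_{\renvone_3 \to \renvone_3} = \rsone_3$ from \Cref{rem:identity}.

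The left-compatibility statement is entirely symmetric: it reduces to the environment inclusion $\join{\renvone_3}{\renvone_1} \ext \join{\renvone_3}{\renvone_2}$ plus the identity $(\rsone_3 \tensprod \rsone_2)_{\join{\renvone_3}{\renvone_2} \to \join{\renvone_3}{\renvone_1}} = \rsone_3 \tensprod \rsone_1$, which is again \Cref{lemma:prodrestrdistr} with the two factors in the opposite order (now the first factor $\rsone_3$ is restricted trivially and the second factor $\rsone_2$ is restricted down to $\renvone_1$), combined with \Cref{rem:identity}. I do not expect a genuine obstacle here: the only point that needs care is reading off the right instantiation of \Cref{lemma:prodrestrdistr} — in particular tracking which factor is restricted non-trivially and observing that the untouched factor's projection is the identity — after which everything is routine bookkeeping, the store-level equalities being obtained by pairing the ensemble-level identities above with the corresponding (trivial) equalities of environments.
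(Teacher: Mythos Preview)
Your proposal is correct and follows essentially the same route as the paper: unpack the stores, split the goal into the environment inclusion (trivial) and the ensemble identity, and derive the latter from \Cref{lemma:prodrestrdistr} together with the identity projection. The only cosmetic difference is that the paper cites \Cref{lemma:extpo} for $(\rsone_3)_{\renvone_3\to\renvone_3}=\rsone_3$ whereas you (more directly) cite \Cref{rem:identity}; these amount to the same fact.
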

\begin{proof}
  We only show the right compatibility, the proof of left compatibility is analogous. Assume that there are three stores $\stone={\tyf {\rsone_1} {\renvone_1}},\sttwo = {\tyf {\rsone_2} {\renvone_2}}, \stthree = {\tyf {\rsone_3} {\renvone_3}}$ such that $\stone\ext\sttwo$, $(\stone\comp \stthree)$ is defined, and $(\sttwo\comp \stthree)$ is defined. We want to show that:
  \begin{equation}
    \join {\renvone_1}{\renvone_3} \ext\join {\renvone_2}{\renvone_3},
    \tag{Claim 1}
  \end{equation}
  and that:
  \begin{equation}
     \left(\rsone_2\tensprod \rsone_3\right)_{\join {\renvone_2}{\renvone_3}\to\join {\renvone_1}{\renvone_3}}=\rsone_1\tensprod \rsone_3.
    \tag{Claim 2}
  \end{equation}
  Claim 1 is trivial. Claim 2 is a consequence of Lemma \ref{lemma:prodrestrdistr}, which allows us to show:
  \[
    \left(\rsone_2\tensprod \rsone_3\right)_{\join {\renvone_2}{\renvone_3}\to\join {\renvone_1}{\renvone_3}}={(\rsone_1)}_{\renvone_2\to\renvone_1}\tensprod (\rsone_3)_{\renvone_3\to\renvone_3}.
  \]
  By assumption we know that $\left(\rsone_2\right)_{{\renvone_2}\to{\renvone_1}}=\rsone_1$, from Lemma \ref{lemma:extpo} we conclude that
  \[
    (\rsone_3)_{\renvone_3\to\renvone_3}=\rsone_3.
  \]
\end{proof}

\subsection{Some Relevant Classes of Formulas}
\label{sec:cof}

Some properties of \CPSL we 
prove
in Section \ref{sec:propofcbi} below only hold for specific classes of formulas 
within $\fset$, which need to be appropriately defined.
We start by the notion of \emph{exact formula}:
intuitively, these formulas describe
\emph{statistical} properties of distributions
rather than \emph{computational} ones.

We write $\exfset$ for the set of exact formulas, i.e., the smallest
subset of $\fset$ that for every $\renvone$ contains
the formulas $\tyf \top\renvone$, 
$\tyf \bot \renvone$, $\tyf{\eq \reone \retwo} \renvone$,
$\tyf{\espl \deone \detwo} \renvone$ and
which is closed under additive conjunction. This means that
if $\tyf \fone \renvone$ and $\tyf \ftwo \renvtwo$
are exact,
$\renvone \ext \renvthree$ and $ \renvtwo \ext \renvthree$,
then $\tyf{\tyf \fone \renvone \land \tyf \ftwo \renvtwo}\renvthree$
is also exact.


Crucially, the semantics of exact formulas is not
closed with respect to $\ind$, as shown by the following example:

\begin{example}
  The stores $\rsone = \left \{ \left\{ (\rvarone \mapsto 0^\nat)^{1} \right\} \right\}_{\nat\in \NN}$
  and $\rstwo = \left \{ \left\{  (\rvarone \mapsto 0^\nat)^{1-2^{-\nat}}, (\rvarone \mapsto 1^{\nat})^{2^{-\nat}}\right\}\right\}_{\nat\in \NN}$
  are computationally indistinguishable, but they do not satisfy the same formulas.
  \revision{
  In particular, we have $\sr \rsone {\eq {\setzero_\nat}{\rvarone}}$, \emph{but not}
  $\sr \rstwo {\eq {\setzero_\nat}{\rvarone}}$. More precisely, the distribution
  corresponding to the expression ${\setzero_\nat}$ is the Dirac distribution
  centered on $0^\nat$, which is exactly the same distribution obtained by evaluating
  $\rvarone$ in $\rsone$, but not in $\rstwo$.
  }\hfill\qed
\end{example}

\revision{
  Also notice that, in the previous example,
  if we took ${\indp {\setzero_\nat}{\rvarone}}$ instead of ${\eq {\setzero_\nat}{\rvarone}}$,
  we would have $\sr \rsone {\indp {\setzero_\nat}{\rvarone}}$, \emph{and}
  $\sr \rstwo {\indp {\setzero_\nat}{\rvarone}}$. Specifically, this would hold because
  the advantage of any distinguisher for the family of distributions
  $\{\sem {\setzero_\nat} \nat(\rsone_\nat)\}_{\nat\in \NN}$
  and $\{\sem {\rvarone} \nat(\rsone)\}_{\nat \in \NN}$ is 0, and
  that of every adversary for the same two expressions evaluated in $\rstwo$
  is bounded by $2^{-\nat}$, which is a negligible function.}
Unlike \emph{exact formulas}, formulas like $\indp \cdot \cdot$
do not describe exact statistical properties
of distribution ensembles, but they express computational
properties. For this reason, we call these formulas
\emph{approximate}. 
The set of \emph{approximate formulas} is $\apfset$, and it is defined as the largest subset of $\fset$ where the atomic formulas 
$\eq \reone \retwo$ and
$\espl\deone\detwo$ do not occur as subformulas.
Due to their computational nature, the semantics of
\emph{approximate formulas} is closed with respect to $\ind$.
This is shown in \Cref{lemma:indcbi} below.

\subsection{Properties of \CPSL}
\label{sec:propofcbi}

In this section, we discuss the meta-theoretical
properties of \CPSL. In particular its semantics
enjoys some remarkable closure properties. The most important
is a form of monotonicity with respect to $\ext$. Intuitively,
this property states that when two states $\stone, \sttwo$
are such that $\stone \ext \sttwo$,
$\stone$ and $\sttwo$ satisfy \emph{related} formulas.
It is not true that they satisfy
the \emph{same} set of formulas, simply because
our semantics
relates states and formulas \emph{on the same environment},
so states that are defined on different domains
cannot be in relation with the same formulas.
However, we can say that whenever two states $\stone, \sttwo$
are such that $\stone \ext \sttwo$ and $\stone$
satisfies $\tyf \fone \renvone$, then $\sttwo$
satisfies a formula $\tyf \ftwo \renvtwo$
which is structurally identical to
$\tyf \fone \renvone$, but which carries different environments.
This relation between formulas is induced by the
store extension relation $\ext$ as follows:
\[
  \infer{\tyf\apone\renvone \ext \tyf \apone \renvtwo}{\renvone \ext \renvtwo}{}\quad
  \infer{\tyf{\tyf {\fone} {\renvone}\sep\tyf {\ftwo} {\renvtwo}}{\renvthree_1} \ext \tyf{\tyf {\fone} {\renvone}\sep\tyf {\ftwo} {\renvtwo}}{\renvthree_2}}
  {
    {\renvthree_1} \ext{\renvthree_2} 
  }
\]
\[
  \infer{\tyf{\tyf {\fone_1} {\renvone_1}\land\tyf {\ftwo_1} {\renvtwo_1}}{\renvthree_1} \ext \tyf{\tyf {\fone_2}{\renvone_2}\land\tyf {\ftwo_2} {\renvtwo_2}}{\renvthree_2}}
  {\tyf{\fone_1} {\renvone_1} \ext \tyf{\fone_2} {\renvone_2} &
    \tyf{\ftwo_1} {\renvtwo_1} \ext \tyf{\ftwo_2} {\renvtwo_2} &
    \renvthree_1\ext\renvthree_2
  }
\]
By leveraging this relation, we can state
the property we were discussing above as follows:

\begin{lemma}
  \label{lemma:prekm}
  For every pair of environments $\renvone, \renvtwo$, pair of formulas $\tyf \fone \renvone \ext \tyf\ftwo \renvtwo{}$ and pair of states $\stone\in \sem \renvone {}, \sttwo \in \sem \renvtwo{}$ such that $\stone \ext \sttwo$, it holds that $\srenv \renvone \stone {\tyf \fone \renvone} \Leftrightarrow \srenv \renvtwo \sttwo {\tyf \ftwo \renvtwo}$.
    \hfill\qed
\end{lemma}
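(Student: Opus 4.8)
The plan is to prove the biconditional by induction on the derivation of $\tyf \fone \renvone \ext \tyf \ftwo \renvtwo$ (equivalently, by structural induction on $\fone$), exploiting throughout that $\stone \ext \sttwo$ pins $\stone$ down as the marginal $\sttwo_{\renvtwo \to \renvone}$ of $\sttwo$, and that the existential witnesses in the clauses for $\land$ and $\sep$ are themselves pinned down as marginals (for $\land$) or can be reused verbatim on both sides (for $\sep$).

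The cases $\tyf \ftrue \renvone$ and $\tyf \ffalse \renvone$ are immediate. For an atomic formula $\apone$, observe that $\apone$, hence every expression $\reone$ occurring in it, is typable both in $\renvone$ and in $\renvtwo$ since $\renvone \ext \renvtwo$; then \Cref{lemma:projexpsem} gives $\sem \reone {}(\stone) = \sem \reone {}(\sttwo_{\renvtwo \to \renvone}) = \sem \reone {}(\sttwo)$, so the semantic clauses of $\ud \reone$, $\indp \reone \retwo$ and $\eq \reone \retwo$, which mention only these evaluated ensembles and the fixed uniform ensemble, hold at $\stone$ iff they hold at $\sttwo$. For $\espl \deone \detwo$ one additionally uses that $\supp(\rsone_\nat) = \{\,\mem\restr{\dom(\renvone)} \mid \mem \in \supp(\rstwo_\nat)\,\}$ (writing $\stone = \tyf \rsone \renvone$, $\sttwo = \tyf \rstwo \renvtwo$) together with $\parsem \deone \nat(\mem) = \parsem \deone \nat(\mem\restr{\dom(\renvone)})$ from \Cref{lemma:parsemproj}, and likewise for $\detwo$, so that the pointwise-equality condition transfers in both directions.

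For additive conjunction the clause reads: $\stone$ satisfies $\tyf{\tyf \fone_1 {\renvone_1} \land \tyf \ftwo_1 {\renvtwo_1}}{\renvthree_1}$ iff $\stone_{\renvthree_1 \to \renvone_1} \models \tyf \fone_1 {\renvone_1}$ and $\stone_{\renvthree_1 \to \renvtwo_1} \models \tyf \ftwo_1 {\renvtwo_1}$, and symmetrically for $\sttwo$ with $\tyf\fone_2{\renvone_2}$, $\tyf\ftwo_2{\renvtwo_2}$. From $\stone = \sttwo_{\renvthree_2 \to \renvthree_1}$ and the composability of projections (\Cref{lemma:exttrans1}), using the chains $\renvone_1 \ext \renvthree_1 \ext \renvthree_2$ and $\renvone_1 \ext \renvone_2 \ext \renvthree_2$ (the outer extensions coming from well-formedness of the two formulas, the inner one from $\tyf\fone_1{\renvone_1} \ext \tyf\fone_2{\renvone_2}$), I get $\stone_{\renvthree_1 \to \renvone_1} = (\sttwo_{\renvthree_2 \to \renvone_2})_{\renvone_2 \to \renvone_1}$, so that $\stone_{\renvthree_1 \to \renvone_1} \ext \sttwo_{\renvthree_2 \to \renvone_2}$; the induction hypothesis on $\tyf\fone_1{\renvone_1} \ext \tyf\fone_2{\renvone_2}$ then equates the two left conjuncts, and similarly for the right ones. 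For separating conjunction the derivation rule leaves the two subformulas and their environments fixed and only enlarges the ambient environment, so I must compare $\stone \models \tyf{\tyf \fthree {\renvthree} \sep \tyf \ffour {\renvfour}}{\renvone}$ with $\sttwo \models \tyf{\tyf \fthree {\renvthree} \sep \tyf \ffour {\renvfour}}{\renvtwo}$ where $\join\renvthree\renvfour \ext \renvone \ext \renvtwo$. A pair of witness stores $\stfour \in \sem \renvthree {}$, $\stfive \in \sem \renvfour {}$ with $\stfour \comp \stfive \defined$, $\stfour \models \tyf \fthree {\renvthree}$, $\stfive \models \tyf \ffour {\renvfour}$ serves for $\stone$ iff it serves for $\sttwo$: the only ambient-dependent requirement is $\stfour \comp \stfive \ind\ext (\cdot)$, which unfolds as $\stfour \comp \stfive \ind \stthree \ext (\cdot)$ for some store $\stthree$ on $\join\renvthree\renvfour$. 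For the direction $\stone \Rightarrow \sttwo$, transitivity of $\ext$ (\Cref{lemma:extpo}) applied to $\stthree \ext \stone \ext \sttwo$ yields $\stthree \ext \sttwo$; conversely, since $\dom(\stthree) = \dom(\join\renvthree\renvfour) \subseteq \dom(\renvone) \subseteq \dom(\renvtwo)$, composability of projections (\Cref{lemma:exttrans1}) identifies $\stthree$, the marginal of $\sttwo$ onto $\dom(\stthree)$, with the marginal of $\stone$ onto $\dom(\stthree)$, hence $\stthree \ext \stone$. As the subformula requirements on $\stfour, \stfive$ are literally unchanged, no appeal to the induction hypothesis is needed here.

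The only genuine friction is this last ($\sep$) case, namely transporting the existential witnesses and the $\ind\ext$ side condition across the change of ambient environment; it reduces entirely to the transitivity of $\ext$ (\Cref{lemma:extpo}) and the fact that marginalisation composes (\Cref{lemma:exttrans1}). Everything else is routine structural recursion once one records that $\stone$ and the $\land$-witnesses are uniquely determined as marginals of $\sttwo$.
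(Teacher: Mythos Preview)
Your proposal is correct and follows essentially the same approach as the paper: induction on the derivation of $\tyf \fone \renvone \ext \tyf \ftwo \renvtwo$, handling atomic propositions via \Cref{lemma:projexpsem} (plus \Cref{lemma:parsemproj} and the support characterisation for $\espl{\cdot}{\cdot}$), the $\land$ case via \Cref{lemma:exttrans1} to align projections before invoking the induction hypothesis, and the $\sep$ case by reusing the witness stores and transporting the $\ind\ext$ condition via transitivity of $\ext$ (forward) and \Cref{lemma:exttrans1} (backward). Your observation that the $\land$-witnesses are uniquely determined as marginals, and that no induction hypothesis is needed in the $\sep$ case because the subformulas are literally unchanged, matches the paper's argument precisely.
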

\noindent
The following observations turn out to be useful in the proof of \Cref{lemma:prekm}:

\begin{rem}
  \label{rem:extftoexts}
  $\tyf \fone \renvone\ext \tyf \ftwo\renvtwo \Rightarrow \renvone \ext \renvtwo$.
\end{rem}

\begin{rem}
  \label{rem:suppext}
  If two distribution ensembles $\rsone\in \sem \renvone{},
  \rstwo \in \sem \renvtwo{}$ are such that $\rsone \ext \rstwo$,
  then for every $\nat \in \NN$,
  $\supp (\rsone_\nat) = \{ \memone \in \detsem \renvone \nat \mid \exists \memtwo \in \supp (\rstwo). \memtwo\restr{\dom(\renvone)}= \memone\}$.
\end{rem}
\begin{proof}
  From $\rsone \ext \rstwo$, we deduce that
  for every $\nat \in \NN$, $\rsone_\nat = \bind {\rstwo_\nat}{\memone \mapsto \memone\restr{\dom(\renvone)}}$,
  so the claim becomes
  \begin{equation*}
    \supp (\bind {\rstwo_\nat}{\memone \mapsto \unit{\memone\restr{\dom(\renvone)}}}) =\\
    \{ \memone \in \detsem \renvone \nat \mid \exists \memtwo \in \supp (\rstwo). \memone\restr{\dom(\renvone)}= m\},
  \end{equation*}
  by definition of $\supp(\cdot)$, this is equivalent to:
  \begin{equation*}
    \left\{\memone \in \detsem \renvone \nat \mid \;\;\smashoperator[lr]{\sum_{\memtwo \in \supp(\rstwo)}} \;\;\rstwo_\nat(\memtwo)\cdot \unit{\memtwo\restr{\dom(\renvone)}}(\memone)> 0\right\} =\\
    \{ \memone \in \detsem \renvone \nat \mid \exists \memtwo \in \supp (\rstwo). \memtwo\restr{\dom(\renvone)}= \memone\}.
  \end{equation*}
  In turn, this rewrites as follows:
      \begin{equation*}
    \left\{\memone \in \detsem \renvone \nat \mid \;\;\;\;\smashoperator[lr] {\sum_{\memtwo \in \supp(\rstwo)\land \memtwo\restr{\dom(\renvone)}=\memone}} \;\;\;\;\rstwo_\nat(\memtwo)> 0\right\} =\\
    \{ \memone \in \detsem \renvone \nat \mid \exists \memtwo \in \supp (\rstwo). \memtwo\restr{\dom(\renvone)}= \memone\}.
  \end{equation*}
  The equality of these expressions is trivial. 
\end{proof}

\begin{proof}[Proof of \Cref{lemma:prekm}]
  By induction on the proof that $\tyf \fone \renvone \ext \tyf\ftwo \renvtwo{}$.
  \begin{itemize}
  \item If these are atomic formulas, we go by cases on the atomic proposition. We start by assuming that the proposition is $\ud\reone$, and we start from the left-to-right implication. We assume $\srenv \renvone \stone {\tyf{\ud\reone} \renvone}$, which means that $\sem {\ternjudg \renvone\reone\tyone}{}(\stone) {} \ind \unif{\tyone}$. From $\stone \ext \sttwo$ and \Cref{lemma:projexpsem}, we deduce that $\sem {\ternjudg \renvtwo\reone\tyone}{}(\sttwo) {} =\sem {\ternjudg \renvone\reone\tyone}{}(\stone) {}\ind \unif{\tyone}$, which concludes the proof. For the opposite direction, if we assume $\sem {\ternjudg \renvtwo\reone\tyone}{}(\sttwo) {} \ind \unif{\tyone}$, we can apply \Cref{lemma:projexpsem} and conclude that $\sem {\ternjudg \renvone\reone\tyone}{}(\stone) {}=\sem {\ternjudg \renvtwo\reone\tyone}{}(\sttwo) {}\ind \unif{\tyone}$.
    If the proposition is $\eq \reone \retwo$, from $\stone \ext \sttwo$ and \Cref{lemma:projexpsem}, we deduce that $\sem {\ternjudg \renvtwo\reone\tyone}{}(\sttwo) {} =\sem {\ternjudg \renvone\reone\tyone}{}(\stone) {}$ and that $\sem {\ternjudg \renvtwo\retwo\tyone}{}(\sttwo) {} =\sem {\ternjudg \renvone\retwo\tyone}{}(\stone) {}$, so in both the directions the conclusion is a consequence of the transitivity of identity.
    The case where the proposition is $\indp \reone \retwo$ is analogous, but applying the transitivity of $\ind$ (\Cref{lemma:indeq}).
    Finally, we assume that the proposition is $\espl \deone \detwo$. For the left-to-right implication, we assume that $\rsone \in \sem \renvone{}$ and $\rstwo \in \sem \renvtwo{}$ with $\sr \rsone \tyf {\espl \deone \detwo}\renvone$ and $\rsone \ext \rstwo$. The goal is to show $\sr \rstwo \tyf {\espl \deone \detwo}\renvtwo$. Assume this not to hold. This means that there are $\overline \nat$ and  $\overline \mem\in \supp(\rstwo_{\overline \nat})$ such that $\parsem\deone{\overline \nat}(\overline \mem)\neq\parsem\detwo{\overline \nat}(\overline \mem)$. Then, we apply \Cref{rem:suppext} to observe that $\overline \mem\restr{\dom(\renvone)} \in \supp (\rsone)$. From \Cref{lemma:parsemproj}, we deduce $\parsem\deone{\overline \nat}(\overline \mem\restr{\dom(\renvone)})\neq\parsem\detwo{\overline \nat}(\overline \mem\restr{\dom(\renvone)})$, which contradicts the premise $\sr \rsone \tyf {\espl \deone \detwo}\renvone$. For the opposite direction, we assume that $\sr \rstwo \tyf {\espl \deone \detwo}\renvtwo$. To this aim, we fix $\nat$, and we let $\overline \mem$ be an element of $\supp(\rsone_\nat)$. The goal is to show that we have $\parsem\deone{\nat}(\overline \mem)=\parsem\detwo{\nat}(\overline \mem)$. From \Cref{rem:suppext}, we deduce that there is $\mem\in \supp(\rstwo_\nat)$ such that $\overline \mem = \mem\restr{\dom(\renvone)}$. From the assumption, we deduce that $\parsem\deone{\nat}(\mem)=\parsem\detwo{\nat}(\mem)$, so with \Cref{lemma:parsemproj}, we conclude that $\parsem\deone{\nat}(\mem\restr{\dom(\renvone)})=\parsem\detwo{\nat}(\mem\restr{\dom(\renvone)})$, which is equivalent to the claim because we assumed that $\overline \mem = \mem\restr{\dom(\renvone)}$.
    \item If these are $\tyf \ftrue\renvone $ or $\tyf \ffalse\renvone$, the conclusion is trivial.
        { 
        \item If the two formulas are additive conjunctions, then we start from the left to right implication. Assume $\srenv\renvone \stone {\tyf{\tyf {\fone_1} {\renvone_1}\land\tyf {\ftwo_1} {\renvtwo_1}}{\renvone}}$, so we deduce that $\srenv{\renvone_1} {\stone_1} {\tyf {\fone_1} {\renvone_1}}$ and $\srenv{\renvtwo_1} {\stone_2} {\tyf {\ftwo_1} {\renvtwo_1}}$ with  $\stone_1=\stone_{\renvone \to \renvone_1}=(\sttwo_{\renvtwo\to\renvone})_{\renvone \to \renvone_1}=\sttwo_{\renvtwo \to \renvone_1}$, and $\stone_2=\stone_{\renvone \to \renvtwo_1}=(\sttwo_{\renvtwo\to\renvone})_{\renvone \to \renvtwo_1}=\sttwo_{\renvtwo \to \renvtwo_1}$ by definition of $\ext$ and \Cref{lemma:exttrans1}. From $\tyf {\fone_1} {\renvone_1}\ext \tyf {\fone_2} {\renvone_2}$ and $\tyf {\ftwo_1} {\renvtwo_1}\ext \tyf {\ftwo_2} {\renvtwo_2}$, we can conclude that $\renvtwo_1\ext \renvtwo_2$ and that $\renvone_1 \ext\renvone_2$ using \Cref{rem:extftoexts}. For this reason, we can apply the IHs to the rule's premises, to conclude $\srenv{\renvone_2} {\sttwo_{\renvtwo\to\renvone_2}} {\tyf {\fone_2} {\renvone_2}}$ and $\srenv{\renvtwo_2} {\sttwo_{\renvtwo\to\renvtwo_2}} {\tyf {\ftwo_2} {\renvtwo_2}}$, which concludes the proof.
          For the opposite direction, we assume that $\srenv\renvtwo {\sttwo}{\tyf{\tyf {\fone_2} {\renvone_2}\land\tyf {\ftwo_2} {\renvtwo_2}}{\renvtwo}}$, By definition of the semantic relation and by definition of $\ext$, we deduce that: $\srenv{\renvone_2} {\sttwo_{\renvtwo\to\renvone_2}} {\tyf {\fone_2} {\renvone_2}}$ and $\srenv{\renvtwo_2} {\sttwo_{\renvtwo\to\renvtwo_2}} {\tyf {\ftwo_2} {\renvtwo_2}}$. From ${\tyf{\tyf {\fone_1} {\renvone_1}\land\tyf {\ftwo_1} {\renvtwo_1}}{\renvone}}\ext{\tyf{\tyf {\fone_2} {\renvone_2}\land\tyf {\ftwo_2} {\renvtwo_2}}{\renvtwo}}$, we deduce that $\tyf {\fone_1} {\renvone_1}\ext \tyf {\fone_2} {\renvone_2}$ and $\tyf {\ftwo_1} {\renvtwo_1}\ext \tyf {\ftwo_2} {\renvtwo_2}$, so we can conclude that $\renvtwo_1\ext \renvtwo_2$ and that $\renvone_1 \ext\renvone_2$, \Cref{rem:extftoexts}. These observations allow us to apply the IHs on the pair of states $(\sttwo_{\renvtwo\to\renvone_2}, \sttwo_{\renvtwo\to\renvone_1})$, and $(\sttwo_{\renvtwo\to\renvtwo_2}, \sttwo_{\renvtwo\to\renvtwo_1})$, to show that $\srenv{\renvone_1} {\sttwo_{\renvtwo\to\renvone_1}} {\tyf {\fone_1} {\renvone_1}}$ and $\srenv{\renvtwo_1} {\sttwo_{\renvtwo\to\renvtwo_1}} {\tyf {\ftwo_1} {\renvtwo_1}}$. This also shows that $\srenv{\renvone_1} {\stone_{\renvone\to \renvone_1}} {\tyf {\fone_1} {\renvone_1}}$ and $\srenv{\renvtwo_1} {\stone_{\renvone\to \renvtwo_1}} {\tyf {\ftwo_1} {\renvtwo_1}}$, because $\sttwo_{\renvtwo\to\renvone_1}=(\sttwo_{\renvtwo\to\renvone})_{\renvone\to\renvone_1}=\stone_{\renvone\to\renvone_1}$ and $\sttwo_{\renvtwo\to\renvtwo_1}=(\sttwo_{\renvtwo\to\renvone})_{\renvone\to\renvtwo_1}=\stone_{\renvone\to\renvtwo_1}$, for \Cref{lemma:exttrans1}. This concludes the proof.
}
  \item If the two formulas are two separating conjunctions, then we start from the left to right implication. Assume $\srenv\renvone \stone {\tyf{\tyf {\fone_1} {\renvthree}\sep\tyf {\ftwo_1} \renvfour}\renvone}$, so we deduce that $\srenv\renvthree {\stone_1} {\tyf {\fone_1} {\renvthree}}$ and $\srenv\renvfour {\stone_2} {\tyf {\ftwo_1} {\renvfour}}$ with $\stone_1\comp\stone_2 \ind\ext \stone\ext\sttwo$. So, we conclude $\srenv\renvtwo \sttwo {\tyf{\tyf {\fone} {\renvthree}\sep \tyf {\ftwo} {\renvfour}} \renvtwo}$. For the opposite direction, we assume that there are $\srenv\renvthree {\sttwo_1} {\tyf {\fone_1} {\renvthree}}$ and $\srenv\renvfour {\sttwo_2} {\tyf {\ftwo_1} {\renvfour}}$ with
    $\sttwo_1\comp\sttwo_2 \ind\ext \sttwo$.
    From the last observation and the definitions of $\ind$ and $\ext$, we can deduce that
    $\sttwo_{\renvtwo\to \join \renvthree\renvfour} \ind \sttwo_1\comp\sttwo_2$. We also know that ${\tyf{\tyf {\fone_1} {\renvthree}\sep\tyf {\ftwo_1} \renvfour}\renvone}$ is a formula, which means that $\join \renvthree \renvfour \ext \renvone$. So, by applying Lemma \ref{lemma:exttrans1} we can conclude that $\sttwo_1\comp\sttwo_2\ind\sttwo_{\renvtwo\to \join \renvthree\renvfour}=(\sttwo_{\renvtwo\to\renvone})_{\renvone\to \join \renvthree\renvfour}=(\stone)_{\renvone\to \join \renvthree\renvfour}$ which concludes the proof.
  \end{itemize}
\end{proof}

The approximate fragment of \CPSL\ also enjoys good properties with respect to 
$\ind$.
In particular, if two states $\stone$ and $\sttwo$ are cryptographically indistinguishable,
and one of these satisfies a formula $\tyf \fone \renvone \in \apfset$,
then also the other satisfies $\tyf \fone \renvone$. This is stated in \Cref{lemma:indcbi}.

\begin{lemma}
  \label{lemma:indcbi}
  For every environment $\renvone$, and every \emph{approximate formula} $\tyf \fone \renvone\in \apfset$ and pair of states $\stone, \sttwo\in \sem \renvone {}$ such that $\stone \ind \sttwo$, it holds that $\srenv \renvone \stone {\tyf \fone \renvone} \Rightarrow \srenv \renvone \sttwo {\tyf \fone \renvone}$.
    \hfill\qed
\end{lemma}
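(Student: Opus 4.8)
The plan is to proceed by induction on the structure of the approximate formula $\tyf\fone\renvone$, which by definition of $\apfset$ is built from the constants $\ftrue,\ffalse$ and the atomic formulas $\ud\reone$ and $\indp\reone\retwo$ using only the connectives $\land$ and $\sep$. Beyond the results already available, the one auxiliary fact I need is that \emph{projection preserves $\ind$}: if $\stone\ind\sttwo$ are stores over a common environment $\renvone$ and $\renvtwo\ext\renvone$, then $\stone_{\renvone\to\renvtwo}\ind\sttwo_{\renvone\to\renvtwo}$. This can be shown directly: any distinguisher for the two projections, precomposed with the polytime restriction map $\memone\mapsto\memone\restr{\dom(\renvtwo)}$ on samples, is a distinguisher for $\stone$ and $\sttwo$ with exactly the same advantage. (Alternatively, one can apply \Cref{lemma:distributivity} to $\stone_{\renvone\to\renvtwo}\ext\stone\ind\sttwo$ and observe that the store it produces lives over $\renvtwo$ and extends $\sttwo$, hence equals $\sttwo_{\renvone\to\renvtwo}$.)

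For the base cases: $\tyf\ftrue\renvone$ and $\tyf\ffalse\renvone$ are immediate from the definition. For $\tyf{\ud\reone}\renvone$, the hypothesis $\srenv\renvone\stone{\tyf{\ud\reone}\renvone}$ means $\sem\reone{}(\stone)\ind\unif\tyone$; since $\stone\ind\sttwo$ and $\ternjudg\renvone\reone\tyone$, \Cref{lemma:exprind} gives $\sem\reone{}(\stone)\ind\sem\reone{}(\sttwo)$, so by symmetry and transitivity of $\ind$ (\Cref{lemma:indeq}) we get $\sem\reone{}(\sttwo)\ind\unif\tyone$, i.e. $\srenv\renvone\sttwo{\tyf{\ud\reone}\renvone}$. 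The case $\tyf{\indp\reone\retwo}\renvone$ is the same idea: apply \Cref{lemma:exprind} to both $\reone$ and $\retwo$ and chain the two resulting equivalences with $\sem\reone{}(\stone)\ind\sem\retwo{}(\stone)$ via \Cref{lemma:indeq}.

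For the inductive step there are two cases. For the additive conjunction $\tyf{\tyf\fone\renvtwo\land\tyf\ftwo\renvthree}\renvone$: since the environments of the subformulas are pinned down by the formula, $\srenv\renvone\stone{\ldots}$ is equivalent to $\srenv\renvtwo{\stone_{\renvone\to\renvtwo}}{\tyf\fone\renvtwo}$ together with $\srenv\renvthree{\stone_{\renvone\to\renvthree}}{\tyf\ftwo\renvthree}$. By the auxiliary fact $\stone_{\renvone\to\renvtwo}\ind\sttwo_{\renvone\to\renvtwo}$ and $\stone_{\renvone\to\renvthree}\ind\sttwo_{\renvone\to\renvthree}$, and since $\tyf\fone\renvtwo$ and $\tyf\ftwo\renvthree$ are themselves approximate, the induction hypothesis yields $\srenv\renvtwo{\sttwo_{\renvone\to\renvtwo}}{\tyf\fone\renvtwo}$ and $\srenv\renvthree{\sttwo_{\renvone\to\renvthree}}{\tyf\ftwo\renvthree}$; as these are projections of $\sttwo$, we conclude $\srenv\renvone\sttwo{\ldots}$. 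For the separating conjunction $\tyf{\tyf\fone\renvtwo\sep\tyf\ftwo\renvthree}\renvone$ with $\join\renvtwo\renvthree\ext\renvone$: from $\srenv\renvone\stone{\ldots}$ we obtain witnesses $\stfour,\stfive$ with $\stfour\comp\stfive\defined$, $\stfour\comp\stfive\ind\ext\stone$, $\srenv\renvtwo\stfour{\tyf\fone\renvtwo}$ and $\srenv\renvthree\stfive{\tyf\ftwo\renvthree}$. We reuse the \emph{same} $\stfour,\stfive$: unfolding $\ind\ext$ gives $\stfour\comp\stfive\ind\stone_{\renvone\to\join\renvtwo\renvthree}$; the auxiliary fact gives $\stone_{\renvone\to\join\renvtwo\renvthree}\ind\sttwo_{\renvone\to\join\renvtwo\renvthree}$; and transitivity of $\ind$ then gives $\stfour\comp\stfive\ind\sttwo_{\renvone\to\join\renvtwo\renvthree}\ext\sttwo$, i.e. $\stfour\comp\stfive\ind\ext\sttwo$. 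The side conditions $\srenv\renvtwo\stfour{\tyf\fone\renvtwo}$ and $\srenv\renvthree\stfive{\tyf\ftwo\renvthree}$ are unchanged, so $\srenv\renvone\sttwo{\ldots}$ holds.

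I expect the only delicate point to be the auxiliary ``projection preserves $\ind$'' fact and, relatedly, keeping straight when the induction hypothesis is genuinely used: it is needed in the $\land$ case, where the witnesses for the subformulas are forced to be projections and hence change when we pass from $\stone$ to $\sttwo$, but it is \emph{not} needed in the $\sep$ case, where the existentially quantified witnesses can be transported verbatim. Everything else reduces to the definitions in \Cref{fig:logicsem} together with \Cref{lemma:exprind} and \Cref{lemma:indeq}.
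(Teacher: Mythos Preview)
Your proof is correct and follows essentially the same route as the paper's: induction on the formula, base cases via \Cref{lemma:exprind} and \Cref{lemma:indeq}, the $\land$ case via ``projection preserves $\ind$'' plus the induction hypothesis, and the $\sep$ case by reusing the same witnesses. The only cosmetic difference is in the $\sep$ case: the paper observes $\stfour\comp\stfive \ind\ext \stone \ind \sttwo$ and invokes \Cref{lemma:distributivity} once to collapse this to $\stfour\comp\stfive \ind\ext \sttwo$, whereas you unfold $\ind\ext$ explicitly and appeal to your auxiliary projection fact---these amount to the same argument, and you even flag the distributivity alternative yourself.
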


\begin{proof}[Proof of \Cref{lemma:indcbi}]
  By induction on the proof that $\tyf \fone \renvone$ is a formula.
  \begin{itemize}
  \item If $\tyf \fone \renvone$ is an atomic formula, we go by cases on the atomic proposition. Now assume that the formula is $\tyf {\ud\reone}\renvone$. We assume $\srenv \renvone \stone {\tyf{\ud\reone} \renvone}$, which means that $\sem {\ternjudg \renvone\reone\tyone}{}(\stone) {} \ind \unif{\tyone}$. From $\stone \ind \sttwo$ and \Cref{lemma:exprind}, we deduce that $\sem {\ternjudg \renvone\reone\tyone}{}(\sttwo) {} \ind\sem {\ternjudg \renvone\reone\tyone}{}(\stone) {}\ind \unif{\tyone}$, which concludes the proof.
    \item If $\tyf \fone \renvone$ is $\tyf \ftrue\renvone $ or $\tyf \ffalse\renvone$, the conclusion is trivial.
        { 
    \item  If the formula is an additive conjunction, we assume $\srenv\renvone \stone {\tyf{\tyf {\fone_1} {\renvthree}\land\tyf {\ftwo_1} \renvfour}\renvone}$, so we deduce that $\srenv\renvthree {\stone_1} {\tyf {\fone_1} {\renvthree}}$ and $\srenv\renvfour {\stone_2} {\tyf {\ftwo_1} {\renvfour}}$ with $\stone_1\ext \stone$ and $\stone_2\ext \stone$. From this result, we deduce that $\stone_{\renvone\to\renvthree}= \stone_1$ and $\stone_{\renvone\to\renvfour}= \stone_2$ By applying \Cref{lemma:exprind}, we conclude $\sttwo_{\renvone\to\renvthree}\ind \stone_1$ and $\sttwo_{\renvone\to\renvfour}\ind \stone_2$. By applying the IHs to $\stone_1$ and $\sttwo_{\renvone\to\renvthree}$, and to $\stone_2$ and $\sttwo_{\renvone\to\renvfour}$, we are able to show that $\srenv\renvthree {\sttwo_{\renvone\to\renvthree}} {\tyf {\fone_1} {\renvthree}}$ and $\srenv\renvfour {\sttwo_{\renvone\to\renvthree}} {\tyf {\ftwo_1} {\renvfour}}$, which concludes the proof.
    }

  \item If the formula is a separating conjunction, we assume $\srenv\renvone \stone {\tyf{\tyf {\fone_1} {\renvthree}\sep\tyf {\ftwo_1} \renvfour}\renvone}$, so we deduce that $\srenv\renvthree {\stone_1} {\tyf {\fone_1} {\renvthree}}$ and $\srenv\renvfour {\stone_2} {\tyf {\ftwo_1} {\renvfour}}$ with $\stone_1\comp\stone_2 \ind\ext \stone\ind\sttwo$. By applying \Cref{lemma:distributivity}, we obtain $\stone_1\comp\stone_2 \ind\ext\sttwo$, which concludes the proof.
  \end{itemize}
\end{proof}

These two results, taken together, can be combined
in a third result which is analogous to the restriction
property of~\cite{PSL}:

\begin{cor}
  \label{cor:restriction}
  For every pair of environments $\renvone, \renvtwo$, every pair of approximate formulas $\tyf \fone \renvone$ and $\tyf\ftwo \renvtwo$ such that $\tyf \fone \renvone \ext \tyf\ftwo \renvtwo{}$ and every pair of states $\stone\in \sem \renvone {}, \sttwo \in \sem \renvtwo{}$ such that $\stone \ind\ext \sttwo$, it holds that $\srenv \renvone \stone {\tyf \fone \renvone} \Leftrightarrow \srenv \renvtwo \sttwo {\tyf \ftwo \renvtwo}$.
\end{cor}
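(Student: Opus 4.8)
The plan is to obtain \Cref{cor:restriction} as a direct composition of \Cref{lemma:prekm} and \Cref{lemma:indcbi}, exploiting the fact that $\ind\ext$ is, by definition, the relational composition of $\ind$ followed by $\ext$. So from $\stone \ind\ext \sttwo$ I would first extract a witness: there is a store $\stthree$ with $\stone \ind \stthree$ and $\stthree \ext \sttwo$. Since $\ind$ only relates distribution ensembles living over the \emph{same} environment (\Cref{def:ensind}), this $\stthree$ must lie in $\sem\renvone{}$. This small bit of bookkeeping is the only point that needs attention; once it is in place, the result follows by a short chain of equivalences.

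Concretely, the argument runs as follows. Because $\tyf\fone\renvone\in\apfset$ is approximate and $\stone \ind \stthree$, \Cref{lemma:indcbi} gives $\srenv\renvone\stone{\tyf\fone\renvone} \Rightarrow \srenv\renvone\stthree{\tyf\fone\renvone}$; applying the same lemma to $\stthree \ind \stone$, which holds since $\ind$ is symmetric by \Cref{lemma:indeq}, yields the converse implication, hence $\srenv\renvone\stone{\tyf\fone\renvone} \Leftrightarrow \srenv\renvone\stthree{\tyf\fone\renvone}$. Next, since $\tyf\fone\renvone \ext \tyf\ftwo\renvtwo$, $\stthree\in\sem\renvone{}$, $\sttwo\in\sem\renvtwo{}$ and $\stthree \ext \sttwo$, \Cref{lemma:prekm} gives $\srenv\renvone\stthree{\tyf\fone\renvone} \Leftrightarrow \srenv\renvtwo\sttwo{\tyf\ftwo\renvtwo}$. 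Chaining these two biconditionals through $\stthree$ closes the proof.

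The argument is essentially routine, so I do not expect a genuine obstacle; the one place to be careful is the order in which $\ind$ and $\ext$ are composed in $\ind\ext$. The witness $\stthree$ must sit \emph{between} $\stone$ and $\sttwo$ with the indistinguishability on the $\stone$-side, so that \Cref{lemma:indcbi} (which requires both stores over a common environment) is invoked on the pair $(\stone,\stthree)$ and \Cref{lemma:prekm} (which relates stores over possibly different environments related by $\ext$) is invoked on the pair $(\stthree,\sttwo)$. It is also worth recording that approximateness of $\tyf\fone\renvone$ is genuinely used: \Cref{lemma:indcbi} is false for exact formulas, as already observed in \Cref{sec:cof}; and since the $\ext$-relation on formulas preserves their syntactic shape, $\tyf\ftwo\renvtwo$ is approximate whenever $\tyf\fone\renvone$ is, so no further hypothesis is needed.
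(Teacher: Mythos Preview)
Your proposal is correct and matches the paper's intended approach: the corollary is stated immediately after \Cref{lemma:prekm} and \Cref{lemma:indcbi} with the remark that ``these two results, taken together, can be combined'' into it, and no further proof is given. Your decomposition via an intermediate witness $\stthree$ with $\stone\ind\stthree\ext\sttwo$, applying \Cref{lemma:indcbi} on the $\ind$-step and \Cref{lemma:prekm} on the $\ext$-step, is exactly the routine argument the paper has in mind.
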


We also introduce a sound although incomplete
set of axioms about atomic formulas. Despite their simplicity,
they are expressive enough to reason about cryptographic primitives.

\renewcommand{\tyf}[2]{#1}
\begin{lemma}
  \label{lemma:apax}
  The following axiom schemas are valid:
  \begin{align*}
    &\sr{} \tyf{\indp \reone \reone}\renvone
      \tag{S0}\\
   \tyf{\indp\reone \retwo}\renvone &\models \tyf{ \indp \retwo \reone}\renvone
      \tag{S1}\\
    \tyf{\indp\reone \retwo \land \indp \retwo \rethree}\renvone  &\models \tyf{\indp \reone \rethree}\renvone
      \tag{S2}\\
    &\models \tyf{\eq \reone \reone}\renvone
      \tag{T0}\\
   \tyf{\eq\reone \retwo}\renvone &\models \tyf{\eq \retwo \reone}\renvone
      \tag{T1}\\
    \tyf{\tyf{\eq\reone \retwo} \renvone \land\tyf{ \eq \retwo \rethree}\renvone} \renvone &\models \tyf{\eq \reone \rethree}\renvone
      \tag{T2}\\
    \tyf{\eq\reone \retwo}\renvone &\models\tyf {\indp \reone \retwo}\renvone
      \tag{W1}\\
    \tyf{\espl\deone \detwo}\renvone  &\models \tyf{\eq \deone \detwo}\renvone
      \tag{W2}\\
    \tyf{\tyf{\indp\reone \retwo}\renvone \land\tyf{ \ud \reone}\renvone}\renvone &\models \tyf{\ud \retwo}\renvone
      \tag{U1}
  \end{align*}
  \hfill\qed
\end{lemma}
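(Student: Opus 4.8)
The plan is, for each schema, to unfold the relevant clause of \Cref{fig:logicsem} and reduce the statement to an elementary fact about either the relation $\ind$ or about equality of distribution families. Two preliminary observations make this routine. First, every formula occurring in \Cref{lemma:apax} is built from atomic formulas and additive conjunction over a \emph{single} environment $\renvone$; since $\renvone\ext\renvone$ and, by \Cref{rem:identity}, the projection $\cdot_{\renvone\to\renvone}$ is the identity, the witnesses $\sttwo,\stthree$ in the semantic clause for $\land$ are forced to be projections of $\stone$ and hence equal $\stone$ itself, so that clause collapses to the familiar one: $\sr\stone{\fone\land\ftwo}$ iff $\sr\stone\fone$ and $\sr\stone\ftwo$. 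Second, by \Cref{lemma:indeq} the relation $\ind$ is reflexive, symmetric and transitive on stores, and genuine equality of distribution families is trivially an equivalence relation as well; these facts supply almost everything we need.

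With these in hand, the schemas for $\indp\cdot\cdot$ follow directly: (S0) is reflexivity of $\ind$ applied to $\sem\reone{}(\rsone)$; (S1) is symmetry of $\ind$; and (S2), read through the collapsed clause for $\land$, is transitivity of $\ind$ along $\sem\reone{}(\rsone)$, $\sem\retwo{}(\rsone)$, $\sem\rethree{}(\rsone)$. The schemas (T0), (T1), (T2) for $\eq\cdot\cdot$ are the same arguments with equality in place of $\ind$. For (W1), if $\sem\reone{}(\rsone)=\sem\retwo{}(\rsone)$ then in particular $\sem\reone{}(\rsone)\ind\sem\retwo{}(\rsone)$ by reflexivity of $\ind$, which is exactly $\sr\rsone{\indp\reone\retwo}$. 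For (U1), the premises unfold to $\sem\reone{}(\rsone)\ind\sem\retwo{}(\rsone)$ and $\sem\reone{}(\rsone)\ind\unif\tyone$, so by symmetry and transitivity $\sem\retwo{}(\rsone)\ind\unif\tyone$, i.e.\ $\sr\rsone{\ud\retwo}$.

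The only schema that calls for a small computation is (W2). Here I would expand $\sem\deone{\nat}(\rsone_\nat)=\bind{\rsone_\nat}{\mem\mapsto\detsem\deone\nat(\mem)}$ and use $\detsem\deone\nat(\mem)=\unit{\parsem\deone\nat(\mem)}$, so that for every value $t$ we have $\sem\deone\nat(\rsone_\nat)(t)=\sum_{\mem\in\supp(\rsone_\nat)}\rsone_\nat(\mem)\cdot\unit{\parsem\deone\nat(\mem)}(t)$, and likewise for $\detwo$. The hypothesis $\sr\rsone{\espl\deone\detwo}$ states precisely that $\parsem\deone\nat(\mem)=\parsem\detwo\nat(\mem)$ for every $\nat$ and every $\mem\in\supp(\rsone_\nat)$, so the two sums agree summand by summand; hence $\sem\deone{}(\rsone)=\sem\detwo{}(\rsone)$, which is $\sr\rsone{\eq\deone\detwo}$. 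I expect no genuine obstacle: the only points requiring care are justifying the collapsed reading of the additive conjunction over a fixed environment (so that (S2), (T2) and (U1) really do become transitivity facts) and carrying out the monadic rewriting in (W2) cleanly. The substance of the lemma is simply that the atomic fragment of \CBI\ inherits the algebraic structure of $\ind$ and of equality of distribution families.
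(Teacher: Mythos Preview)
Your proposal is correct and follows essentially the same approach as the paper: reduce each schema to reflexivity, symmetry, or transitivity of $\ind$ (via \Cref{lemma:indeq}) or of equality, and handle (W2) by expanding the monadic semantics and comparing sums term by term. Your explicit justification that the $\land$ clause collapses to the classical reading when both conjuncts carry the outer environment $\renvone$ (using the definition of $\ext$ together with \Cref{rem:identity}) is a point the paper leaves implicit, but otherwise the arguments coincide.
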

\renewcommand{\tyf}[2]{{(#1)^{#2}}}

\begin{proof}[Proof of Lemma \ref{lemma:apax}]
  \begin{proofcases}
  \proofcase[S0] Follows from reflexivity of $\ind$.
  \proofcase[S1] Follows from symmetry of $\ind$.
  \proofcase[S2] Follows from transitivity of $\ind$.
  \proofcase[T0-2] The axioms on $\eq\cdot\cdot$ are trivial consequences of the fact that identity is an equation relation. Their proof would be similar to the correspondent axiom for $\indp\cdot\cdot$.
  \proofcase[W1] Assume $\sem {\ternjudg \renvone \reone\tyone}{} (\rsone)=\sem {\ternjudg \renvone \retwo\tyone} {}(\rsone)$, then:
    \begin{equation*}
      \Pr_{t \leftarrow \sem \reone\nat(\rsone)}[\distone(1^n, t)=1]=
      \Pr_{t \leftarrow \sem \retwo\nat(\rsone)}[\distone(1^n, t)=1],
    \end{equation*}
    so the advantage of any adversary is exactly $0$.
    \proofcase[W2] Assume $\forall \mem \in \supp(\rsone). \parsem {\ternjudg \renvone \reone\tyone}{} (\mem)=\parsem {\ternjudg \renvone \retwo\tyone} {}(\mem)$, then the goal is to show that $\sem {\ternjudg \renvone \deone\tyone}{} (\rsone)=\sem {\ternjudg \renvone \detwo\tyone} {}(\rsone)$.
    We fix $\nat\in \NN$. The proof proceeds by expanding $\bind{\rsone_\nat}{\mem\mapsto\detsem\deone{}(\mem)}(t)$ as follows:
    \begin{align*}
      \smashoperator[r]{\sum_{\mem \in\supp(\rsone_\nat)}}\;\;\rsone_\nat(\mem)\cdot \detsem \deone\nat(\mem)(t) & = \smashoperator[r]{\sum_{\mem \in\supp(\rsone_\nat)}}\;\;\rsone_\nat(\mem)\cdot \unit {\parsem \deone\nat(\mem)}(t)\\
      & = \smashoperator[r]{\sum_{\mem \in\supp(\rsone_\nat)}}\;\;\rsone_\nat(\mem)\cdot \unit {\parsem \detwo\nat(\mem)}(t)=
      \smashoperator[lr]{\sum_{\mem \in\supp(\rsone_\nat)}}\;\;\rsone_\nat(\mem)\cdot \detsem \detwo\nat(\mem)(t).
    \end{align*}
  \proofcase[U1] Let $\tyf \rsone \renvone$ be any store such that $\sr {\tyf \rsone \renvone} {{\indp x y}\land \ud y}$. The claim is $\sr {\tyf \rsone \renvone} \ud x$. It holds that $\sem {\ternjudg\renvone\reone\tyone}{} (\rsone)\ind \sem {\ternjudg\renvone\retwo\tyone} {} (\rsone){}\ind\unif\tyone$. The claim follows from transitivity of $\ind$.
  \end{proofcases}  
\end{proof}

\begin{figure*}[t]
  \centering
  \revision{  \[
      \infer[\text{AP}]{\prenv \renvone {\tyf \apone \renvone} {\tyf \apone \renvone}}{}\quad
      \infer[\top_i]{\prenv \renvone {} {\tyf\top\renvone}}{}\quad
      \infer[\bot_e]{\prenv \renvone {\tyf\bot\renvone} {\tyf\fone\renvone}}{}
    \]
    \vspace{-2mm}
    \[
      \infer[\land_i]{\prenv \renvone {\tyf \fthree \renvone} {\tyf{\tyf \fone \renvtwo \land \tyf \ftwo \renvthree}}\renvone}
      {
        \prenv \renvone {\tyf \fthree \renvone} {\tyf \fone \renvone} &
        \prenv \renvone {\tyf \fthree \renvone} {\tyf \ftwo \renvone} &
        \tyf \fone \renvtwo \ext \tyf \fone \renvone &
        \tyf \fthree \renvthree \ext \tyf \fthree \renvone
      }
      \quad
      \infer[\land_e]
      {
        \prenv \renvone {\tyf \ftwo \renvone} {\tyf {\fone_i} \renvone}
      }
      {
        \prenv \renvone {\tyf \ftwo \renvone} \tyf {{\tyf {\fone_1} \renvtwo}\land{\tyf {\fone_2} \renvthree}} \renvone
      }
    \]
    \vspace{-2mm}
    \[
      \infer[\sep_i]{\prenv \renvone {\tyf{{\tyf \fone \renvtwo} \sep {\tyf \ftwo \renvthree}}\renvone}
        {\tyf{{\tyf \fthree \renvtwo} \sep {\tyf \ffour \renvthree}}\renvone}}
      {
        \prenv \renvtwo {\tyf \fone \renvtwo} {\tyf \fthree \renvtwo} &
        \prenv \renvthree {\tyf \ftwo \renvthree} {\tyf \ffour \renvthree}
      }
      \quad
      \infer[\sep_{c}]{\prenv \renvone {\tyf{{\tyf \fone \renvtwo} \sep {\tyf \ftwo \renvthree}} \renvone}
        {\tyf{  {\tyf \ftwo \renvthree} \sep {\tyf \fone \renvtwo}} \renvone}}
      {
      }
    \]
    \vspace{-2mm}
    \[
      \infer[\sep_{a_1}]{\prenv \renvone {\tyf{({\tyf \fone \renvtwo} \sep {\tyf \ftwo \renvthree}) \sep \tyf \fthree \renvfour} \renvone}
        {\tyf {\tyf \fone \renvtwo \sep (\tyf \ftwo \renvthree \sep \tyf \fthree \renvfour)} \renvone}
      }
      {
      }
    \]
    \[      \infer[\sep_{a_2}]{\prenv \renvone 
        {\tyf {\tyf \fone \renvtwo \sep (\tyf \ftwo \renvthree \sep \tyf \fthree \renvfour)} \renvone}
        {\tyf{({\tyf \fone \renvtwo} \sep {\tyf \ftwo \renvthree}) \sep \tyf \fthree \renvfour} \renvone}
      }
      {
      }
    \]
  }
  \caption{A Hilbert-style proof system for \CPSL.}
  \label{fig:proofcpsl}
\end{figure*}

The two conjunctions of \CPSL are commutative
and enjoy a form of associativity, in particular, the
separating conjunction is associative when the environments
associated to a formula are minimal.

\begin{lemma}
  \label{rem:conjasscomm}
  For every $\odot \in \{\land ,\sep\}$,
  if $\tyf {\tyf {\tyf \fone \renvone \odot \tyf \ftwo \renvtwo} {\join \renvone \renvtwo}\odot \tyf \fthree \renvthree}{\join {\join \renvone \renvtwo}\renvthree}$ is a formula, then 
  $\tyf { \tyf \ftwo \renvtwo \odot \tyf \fone \renvone} {\join \renvone \renvtwo}$
  and  $\tyf {{\tyf \fone \renvone} \odot \tyf  {{\tyf \ftwo \renvtwo} \odot {\tyf \fthree \renvthree}}{\join \renvtwo \renvthree}}{\join {\join \renvone \renvtwo}\renvthree}$  are also
  formulas, and we have:
  \begin{align*}
    {\tyf {\tyf \fone \renvone \odot \tyf \ftwo \renvtwo} {\join \renvone \renvtwo}}
    &\models {\tyf { \tyf \ftwo \renvtwo \odot \tyf \fone \renvone} {\join \renvone \renvtwo}}&&\text{(Commutativity)}\\
    \tyf {\tyf {\tyf \fone \renvone \odot \tyf \ftwo \renvtwo} {\join \renvone \renvtwo}\odot \tyf \fthree \renvthree}{\join {\join \renvone \renvtwo}\renvthree} &\models
    ({\tyf \fone \renvone} \odot ({\tyf \ftwo \renvtwo} \odot {\tyf \fthree \renvthree})^{\join \renvtwo \renvthree})^{\join {\join \renvone \renvtwo}\renvthree} &&\text{(Associativity)}
  \end{align*}
\end{lemma}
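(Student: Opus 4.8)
The plan is to establish the three assertions — well-formedness of the two re-bracketed formulas, commutativity, and associativity — one after another, each time splitting into the case $\odot=\land$ and the case $\odot=\sep$. For well-formedness I would invert the formation rules on the hypothesis that $\tyf{\tyf{\tyf\fone\renvone\odot\tyf\ftwo\renvtwo}{\join\renvone\renvtwo}\odot\tyf\fthree\renvthree}{\tjoin\renvone\renvtwo\renvthree}$ is a formula. This extracts that $\tyf\fone\renvone$, $\tyf\ftwo\renvtwo$, $\tyf\fthree\renvthree$ are themselves formulas, that $\join\renvone\renvtwo$ and $\tjoin\renvone\renvtwo\renvthree$ are defined (so $\dom(\renvone)$, $\dom(\renvtwo)$, $\dom(\renvthree)$ are pairwise disjoint, hence $\join\renvtwo\renvthree$ is defined as well), together with the $\ext$-inclusions among the environments required by the formation rules. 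Since disjoint union of environments is commutative and associative ($\join\renvone\renvtwo=\join\renvtwo\renvone$ and $\tjoin\renvone\renvtwo\renvthree=\renvone\Join(\join\renvtwo\renvthree)$) and $\ext$ is reflexive (\Cref{lemma:extpo}), re-deriving the formation judgements for $\tyf{\tyf\ftwo\renvtwo\odot\tyf\fone\renvone}{\join\renvone\renvtwo}$ and for $\tyf{\tyf\fone\renvone\odot\tyf{\tyf\ftwo\renvtwo\odot\tyf\fthree\renvthree}{\join\renvtwo\renvthree}}{\tjoin\renvone\renvtwo\renvthree}$ is routine.

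Two preliminary observations make the semantic parts clean. First, the partial operation $\comp$ on stores is commutative and associative whenever the relevant products are defined; both reduce to commutativity and associativity of multiplication of reals together with the identities on $\Join$ noted above. Second, if $\stone$ and $\sttwo$ are stores over the same environment, then $\stone\ind\ext\sttwo$ collapses to $\stone\ind\sttwo$: any $\stthree$ with $\stone\ind\stthree\ext\sttwo$ has the same environment as $\stone$, hence as $\sttwo$, so $\stthree=\sttwo$ by \Cref{rem:identity}; conversely take $\stthree=\sttwo$ using reflexivity of $\ext$. With these in hand, commutativity is immediate. For $\land$, the semantic clause "there are $\stone_1\ext\stone$ and $\stone_2\ext\stone$ with $\stone_1\models\tyf\fone\renvone$ and $\stone_2\models\tyf\ftwo\renvtwo$" is literally symmetric. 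For $\sep$, a decomposition $\stone_1,\stone_2$ of $\stone$ with $\stone_1\comp\stone_2\defined$ and $\stone_1\comp\stone_2\ind\stone$ (using the collapse, since $\stone_1\comp\stone_2$ and $\stone$ both live over $\join\renvone\renvtwo$) becomes the decomposition $\stone_2,\stone_1$ because $\stone_2\comp\stone_1=\stone_1\comp\stone_2$.

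For associativity the strategy is to unfold the left-hand side into witnesses and reassemble them into a witnessing decomposition of the right-hand side; only the direction $\models$ is needed. In the $\land$ case the left side gives $\stone'\ext\stone$ satisfying $\tyf{\tyf\fone\renvone\land\tyf\ftwo\renvtwo}{\join\renvone\renvtwo}$ and $\stone''\ext\stone$ satisfying $\tyf\fthree\renvthree$, and $\stone'$ in turn splits as $a\ext\stone'$ with $a\models\tyf\fone\renvone$ and $b\ext\stone'$ with $b\models\tyf\ftwo\renvtwo$. By transitivity of $\ext$ and composition of projections (\Cref{lemma:extpo,lemma:exttrans1}), $a$, $b$, $\stone''$ are all projections of $\stone$, and one checks directly that $a$ witnesses the left conjunct on the right-hand side, while the projection of $\stone$ onto $\join\renvtwo\renvthree$ witnesses the conjunct $\tyf{\tyf\ftwo\renvtwo\land\tyf\fthree\renvthree}{\join\renvtwo\renvthree}$, its projections onto $\renvtwo$ and $\renvthree$ being exactly $b$ and $\stone''$. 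The $\sep$ case is the main obstacle. Unfolding yields $t,u$ with $t\comp u\defined$, $t\comp u\ind\stone$, $t\models\tyf{\tyf\fone\renvone\sep\tyf\ftwo\renvtwo}{\join\renvone\renvtwo}$, $u\models\tyf\fthree\renvthree$, and then $a,b$ with $a\comp b\defined$, $a\comp b\ind t$, $a\models\tyf\fone\renvone$, $b\models\tyf\ftwo\renvtwo$ — all the indistinguishabilities obtained via the collapse, since the environments coincide. I would take $a$ and $b\comp u$ as the outer decomposition: $b\comp u$ satisfies $\tyf{\tyf\ftwo\renvtwo\sep\tyf\fthree\renvthree}{\join\renvtwo\renvthree}$ via the trivial decomposition $b,u$ (reflexivity of $\ind$); the product $a\comp(b\comp u)=(a\comp b)\comp u$ is defined by pairwise disjointness of the domains; and $(a\comp b)\comp u\ind\stone$ follows by feeding $a\comp b\ind t$ into compatibility of $\ind$ with $\comp$ (\Cref{lemma:indcompcomp}) to obtain $(a\comp b)\comp u\ind t\comp u$, then transitivity of $\ind$ (\Cref{lemma:indeq}) against $t\comp u\ind\stone$. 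The genuinely delicate point throughout is the environment bookkeeping: keeping track of which store lives over which environment so that the $\ind\ext$-to-$\ind$ collapse applies and all tensor products in sight are defined.
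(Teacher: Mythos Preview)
Your proposal is correct and follows essentially the same route as the paper: for $\sep$-associativity you take the same witnesses $a$ and $b\comp u$ (the paper's $\stfour_1=\sttwo_1$, $\stfour_2=\sttwo_2\comp\stthree$), use associativity of $\comp$, compatibility of $\ind$ with $\comp$ (\Cref{lemma:indcompcomp}), and transitivity of $\ind$ (\Cref{lemma:indeq}) exactly as the paper does. Your explicit ``collapse'' observation---that $\ind\ext$ reduces to $\ind$ when both stores live over the same environment---is a point the paper uses silently (it simply writes $\ind$ in its hypotheses H1C and H3 without comment), so if anything your version is a bit more careful about that bookkeeping.
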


\begin{proof}
  The commutative property is trivial for both the conjunctions.
  For the associative property we start by showing that
  $\tyf {{\tyf \fone \renvone} \odot \tyf  {{\tyf \ftwo \renvtwo} \odot {\tyf \fthree \renvthree}}{\join \renvtwo \renvthree}}{\join {\join \renvone \renvtwo}\renvthree}$
  is a formula. On the one hand, if $\odot =\land$,
  we know that $\tyf \fone \renvone$,
  $\tyf \ftwo \renvtwo$ and $\tyf \fthree \renvthree$ are all formulas.
  From this observation, we can conclude that
  $\tyf  {{\tyf \ftwo \renvtwo} \land {\tyf \fthree \renvthree}}{\join \renvtwo \renvthree}$
  is a formula because $\renvtwo, \renvthree \ext \join \renvtwo \renvthree$
  and with a similar argument we show that the formula on the right is well-formed.
  On the other hand, if $\odot =\sep$, we start by observing that
  it must be the case where the domains of $\renvone, \renvtwo$ and
  $\renvthree$ are disjoint, and as above
  we know that $\tyf \fone \renvone$,
  $\tyf \ftwo \renvtwo$ and $\tyf \fthree \renvthree$ are formulas.
  This means that 
  $\tyf  {{\tyf \ftwo \renvtwo} \sep {\tyf \fthree \renvthree}}{\join \renvtwo \renvthree}$
  is a formula, and with an argument, we have shown that the formula on the right is well-formed
  also when the $\odot=\sep$.
  Now we show that
  \begin{equation*}
    \tyf {\tyf {\tyf \fone \renvone \odot \tyf \ftwo \renvtwo} {\join \renvone \renvtwo}\odot \tyf \fthree \renvthree}{\join {\join \renvone \renvtwo}\renvthree} \models
    \tyf {{\tyf \fone \renvone} \odot \tyf  {{\tyf \ftwo \renvtwo} \odot {\tyf \fthree \renvthree}}{\join \renvtwo \renvthree}}{\join {\join \renvone \renvtwo}\renvthree}
  \end{equation*}
  If $\odot =\land$, we assume that there is $\stone$ which satisfies the premise.
  This means that:
  \begin{proofcases}
    \proofcase[H1] $\sr {\stone_{\join {\join \renvone \renvtwo}\renvthree\to \join \renvone \renvtwo}}{\tyf {\tyf \fone \renvone \land \tyf \ftwo \renvtwo} {\join \renvone \renvtwo}}$, which in turn means that:
    \begin{proofcases}
      \proofcase[H1A] $\sr {\stone_{\join {\join \renvone \renvtwo}\renvthree\to  \renvone}}{\tyf { \fone} { \renvone}}$.
      \proofcase[H1B]  $\sr {\stone_{\join {\join \renvone \renvtwo}\renvthree\to \renvtwo}}{\tyf {\ftwo } { \renvtwo}}$.
    \end{proofcases}
    \proofcase[H2]  $\sr {\stone_{\join {\join \renvone \renvtwo}\renvthree\to \renvthree}}{\tyf {\fthree } { \renvthree}}$.
  \end{proofcases}
  The claim asks us to show that:
  \begin{proofcases}
    \proofcase[C1] $\sr {\stone_{\join {\join \renvone \renvtwo}\renvthree\to  \renvone}}{\tyf { \fone} { \renvone}}$.
    \proofcase[C2] $\sr {\stone_{\join {\join \renvone \renvtwo}\renvthree\to \join \renvtwo \renvthree}} {\tyf  {{\tyf \ftwo \renvtwo} \land {\tyf \fthree \renvthree}}{\join \renvtwo \renvthree}}$, which can be reduced to showing:
    \begin{proofcases}
      \proofcase[C2A]  $\sr {\stone_{\join {\join \renvone \renvtwo}\renvthree\to \renvtwo}}{\tyf {\ftwo } { \renvtwo}}$.
      \proofcase[C2B] $\sr {\stone_{\join {\join \renvone \renvtwo}\renvthree\to  \renvthree}}{\tyf { \fthree} { \renvthree}}$.
    \end{proofcases}
    by definition of the semantics of $\land$
  \end{proofcases}
  The conclusion is trivial, because the premises coincide with the claims.
  On the other hand, $\odot =\sep$, the proof is more articulated.
  We start by assuming that there is a state $\stone$ which satisfies the premises.
  This means that there are two states $\sttwo \in \sem {\join \renvone\renvtwo}{}$, $\stthree\in \sem {\renvthree}{}$ such that:
  \begin{proofcases}
    \proofcase[H1] $\sr \sttwo {\tyf {\tyf \fone \renvone \sep \tyf \ftwo \renvtwo} {\join \renvone \renvtwo}}$, which in turn means that there are two states $\sttwo_1\in \sem \renvone {}$ and  $\sttwo_2\in \sem \renvtwo{}$ such that:
    \begin{proofcases}
      \proofcase[H1A] $\sr {\sttwo_1}{\tyf { \fone} { \renvone}}$.
      \proofcase[H1B]  $\sr {\sttwo_2}{\tyf {\ftwo } { \renvtwo}}$.
      \proofcase[H1C]  $\sttwo_1 \comp \sttwo_2 \ind \sttwo$.
    \end{proofcases}
    \proofcase[H2]  $\sr {\stthree}{\tyf {\fthree } { \renvthree}}$.
    \proofcase[H3]  $\sttwo \comp \stthree \ind \stone$.
  \end{proofcases}
  The claim asks us to show that there are two states $\stfour_1\in \sem \renvone {}$
  and  $\stfour_2 \in \sem {\join \renvtwo \renvthree} {}$ such that :
  \begin{proofcases}
    \proofcase[C1] $\sr {\stfour_1}{\tyf { \fone} { \renvone}}$.
    \proofcase[C2] $\sr {\stfour_2} {\tyf  {{\tyf \ftwo \renvtwo} \sep {\tyf \fthree \renvthree}}{\join \renvtwo \renvthree}}$.
    \proofcase[C3] ${\stfour_1}\comp \stfour_2 \ind \stone$.
  \end{proofcases}
  We choose $\stfour_1 = \sttwo_1$ and $\stfour_2 = \sttwo_2  \comp \stthree$. With this
  choice (C1) is trivial, (C2) requires showing that there are $\stfour_{2a} \in \sem \renvtwo {}$
  and $\stfour_{2b} \in \sem \renvthree {}$ such that:
  \begin{proofcases}
    \proofcase[C2A]  $\sr {\stfour_{2a}}{\tyf {\ftwo } { \renvtwo}}$.
    \proofcase[C2B]  $\sr {\stfour_{2b}}{\tyf {\fthree } { \renvthree}}$.
    \proofcase[C2C]  $\stfour_{2a} \comp \stfour_{2b} \ind \stfour_2$.
  \end{proofcases}
  In particular, we can choose $\stfour_{2a}= \sttwo_2$ and $\stfour_{2b}= \stthree$.
  With this choice (C2A) corresponds with (H1B),
  (C2B) corresponds with (H2).
  And (C2C) is trivial, because we can show:
  \[
    \stfour_{2a} \comp \stfour_{2b} = \stfour_2.
  \]
  In order to show (C3), we are required to show:
  \[
    {\stfour_1}\comp \stfour_2 \ind \stone,
  \]
  which means showing:
  \[
    {\sttwo_1}\comp (\sttwo_2\comp \stthree) \ind \stone
  \]
  By associativity of $\comp$, we can rewrite the claim as
  \[
    ({\sttwo_1}\comp \sttwo_2)\comp \stthree \ind \stone.
  \]
  For compatibility of $\ind$ and $\comp$ (\Cref{lemma:indcompcomp}),
  we can use (H1C) to simplify the claim even more to: 
  \[
    \sttwo\comp \stthree \ind \stone.
  \]
  and conclude with (H3).
\end{proof}
This result justifies the notation $\bigodot_{i =0}^k \tyf{\fone_i}{\renvone_i}$ as a shorthand for
\[
  \tyf{\tyf{\tyf{\fone_0}{\renvone_0}\odot \tyf{\fone_1}{\renvone_1}}{\join {\renvone_0}{\renvone_1}} \ldots \odot \tyf{\fone_k}{\renvone_k}}{\join{\join{\join {\renvone_0}{\renvone_1}} \ldots}{\renvone_k}}
\]
with $\odot\in \{\sep, \land\}$.
Thanks to this result,
we can also give a Hilbert-style proof system for \CPSL,
based on the standard one for \BI from~\cite{Pym04};
see \Cref{fig:proofcpsl}.

\begin{lemma}[Soundness of $\prenv \renvone \cdot \cdot$]
  \label{lemma:cpslsound}
  For every pair of formulas $\tyf \fone \renvone, \tyf \ftwo \renvone \in \fset \renvone$,
  we have that:
  \[
    \left(\prenv \renvone {\tyf \fone \renvone} {\tyf \ftwo \renvone} \right)\Rightarrow
    \left(\srenv \renvone {\tyf \fone \renvone} {\tyf \ftwo \renvone}\right).
  \]
\end{lemma}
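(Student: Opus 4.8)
The plan is to argue by induction on the derivation of $\prenv\renvone{\tyf\fone\renvone}{\tyf\ftwo\renvone}$ in \Cref{fig:proofcpsl}, checking each inference rule in isolation; the rules with premises ($\land_i$, $\land_e$, $\sep_i$) supply the inductive steps and the remaining ones are base cases. Throughout I will use without further comment that, for a fixed $\renvone$, the relation $\models^\renvone$ is reflexive and transitive on $\tyf\fset\renvone$ (immediate from its definition), and that the store-level tensor product $\comp$ is commutative and associative (commutativity from commutativity of multiplication, associativity as recorded just before \Cref{lemma:indcompcomp}). The base cases are immediate: $\text{AP}$ is reflexivity of $\models^\renvone$; $\top_i$ holds because every store satisfies $\tyf\top\renvone$; and $\bot_e$ holds vacuously because no store satisfies $\tyf\bot\renvone$.

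For $\land_i$, take a store $\stone\in\sem\renvone{}$ with $\srenv\renvone\stone{\tyf\fthree\renvone}$. The two induction hypotheses give $\srenv\renvone\stone{\tyf\fone\renvone}$ and $\srenv\renvone\stone{\tyf\ftwo\renvone}$; the side conditions $\tyf\fone\renvtwo\ext\tyf\fone\renvone$ and $\tyf\ftwo\renvthree\ext\tyf\ftwo\renvone$, together with $\stone_{\renvone\to\renvtwo}\ext\stone$, $\stone_{\renvone\to\renvthree}\ext\stone$ and \Cref{lemma:prekm}, then yield $\srenv\renvtwo{\stone_{\renvone\to\renvtwo}}{\tyf\fone\renvtwo}$ and $\srenv\renvthree{\stone_{\renvone\to\renvthree}}{\tyf\ftwo\renvthree}$, which are exactly the witnesses demanded by the semantic clause for $\land$ in \Cref{fig:logicsem}. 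For $\land_e$, dually, take $\stone$ satisfying $\tyf\ftwo\renvone$, apply the induction hypothesis to obtain $\srenv\renvone\stone{\tyf{\tyf{\fone_1}\renvtwo\land\tyf{\fone_2}\renvthree}\renvone}$, unfold the semantics of $\land$ to get $\srenv\renvtwo{\stone_{\renvone\to\renvtwo}}{\tyf{\fone_1}\renvtwo}$, and transfer this back along $\stone_{\renvone\to\renvtwo}\ext\stone$ using \Cref{lemma:prekm}; this requires the auxiliary observation that $\renvtwo\ext\renvone$ implies $\tyf{\fone_1}\renvtwo\ext\tyf{\fone_1}\renvone$, proved by a straightforward induction on $\fone_1$ from the $\ext$-on-formulas rules and reflexivity of $\ext$. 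The upshot is $\srenv\renvone\stone{\tyf{\fone_1}\renvone}$, and symmetrically for $\tyf{\fone_2}\renvone$.

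For the separating-conjunction rules, $\sep_i$ is a congruence step: given witnesses $\sttwo,\stthree$ with $\sttwo\comp\stthree\ind\ext\stone$, $\srenv\renvtwo\sttwo{\tyf\fone\renvtwo}$ and $\srenv\renvthree\stthree{\tyf\ftwo\renvthree}$, the induction hypotheses upgrade the latter two to $\srenv\renvtwo\sttwo{\tyf\fthree\renvtwo}$ and $\srenv\renvthree\stthree{\tyf\ffour\renvthree}$, so the same $\sttwo,\stthree$ witness the conclusion. For $\sep_c$ one reuses the same witnesses in swapped order, using commutativity of $\comp$ so that $\stthree\comp\sttwo=\sttwo\comp\stthree\ind\ext\stone$. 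For $\sep_{a_1}$ and $\sep_{a_2}$ one invokes \Cref{rem:conjasscomm} instantiated at $\odot=\sep$, which already packages the associativity entailment (and itself rests on associativity of $\comp$ and on \Cref{lemma:indcompcomp}); should the outer environment named in the rule fail to be the minimal join of the three component environments, one first normalises it via \Cref{lemma:prekm} (equivalently \Cref{cor:restriction}) and then applies \Cref{rem:conjasscomm}.

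The main obstacle is not a single deep argument but the environment bookkeeping concentrated in the $\land$ rules and in $\sep_i$: one must check that all projections $\stone_{\renvone\to\renvtwo}$ invoked are well defined (guaranteed by the well-formedness side conditions $\renvtwo\ext\renvone$ built into the notion of formula), that the $\ext$-relation between the sub-formulas is genuinely entailed by the $\ext$-relation between their environments, and that each use of \Cref{lemma:prekm} is in the correct direction (left-to-right in $\land_i$ and $\sep_i$, right-to-left in $\land_e$). Once these checks are in place, everything else is a routine unfolding of the semantic clauses of \Cref{fig:logicsem}.
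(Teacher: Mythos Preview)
Your proposal is correct and follows essentially the same approach as the paper: induction on the derivation, with $\land_i$ handled via \Cref{lemma:prekm} and the $\sep$ commutativity/associativity rules via \Cref{rem:conjasscomm}. The paper's own proof is a three-line sketch that cites precisely these two lemmas; you have supplied the detailed case analysis (including $\land_e$ and $\sep_i$, which the paper leaves implicit) and the environment bookkeeping that the paper omits. One small remark: your parenthetical ``equivalently \Cref{cor:restriction}'' is slightly misleading, since that corollary is restricted to approximate formulas and involves $\ind\ext$ rather than $\ext$; \Cref{lemma:prekm} alone is the correct tool for the outer-environment normalisation you describe.
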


\begin{proof}
  By induction on the proof relation. In particular, $\land_i$ is a consequence of
  \Cref{lemma:prekm}. Commutativity and associativity of $\sep$ follow from \Cref{rem:conjasscomm}.
\end{proof}

\subsection{\CPSL and Other Separation Logics}

\subsubsection{\CPSL vs. \BI}
\label{subsubsec:cpslvsbi}
From a syntactic point of view,
\CPSL is the conjunctive fragment of \BI where
every formula $\fone$ is tagged with an environment $\renvone$
such that $\binjudg \renvone \fone$. In contrast with
what happens with \BI, by employing this syntax, we
do not need to use the notion of \emph{footprint} for programs and
formulas~\cite{PSL,Reynolds08} to denote the minimal set of variables that
these objects depends on, because this set is now explicit
in the formula. This choice comes with almost no additional cost,
as environments are anyway needed to enforce the polytime constraint on our language.

From a semantic point of view, we model \CPSL on
stores, i.e. pairs of an \emph{effectively samplable} distribution ensemble
and an environment. These objects,
in turn, form a \emph{partial Kripke resource monoid} with the pre-order
$\ind\ext$ and the tensor product as composition.
\begin{proposition}
  \label{prop:krmsec1}
  The structure
  $(\stset,\tensprod,\emst{},\ind\ext)$ is a partial 
  Kripke resource monoid.
  \hfill\qed
\end{proposition}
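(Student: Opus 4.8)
The plan is to verify the three conditions of \Cref{def:pkm} in turn, feeding off the algebraic lemmas proved above; no single step should be difficult, since the substantive work has already been isolated into \Cref{lemma:indeq,lemma:extpo,lemma:distributivity,lemma:indcompcomp,lemma:extcompcomp}. First I would handle the unit laws. Since $\dom(\varepsilon)=\emptyset$ is disjoint from every environment, $\stone\comp\emst{}$ and $\emst{}\comp\stone$ are always defined; moreover $\join\renvone\varepsilon=\join\varepsilon\renvone=\renvone$, and since the unique element of $\sem\varepsilon\nat$ carries probability $1$, unfolding the definition of $\comp$ gives $(\stone\comp\emst{})_\nat(\mem)=\rsone_\nat(\mem)\cdot 1=\rsone_\nat(\mem)$, and symmetrically on the left, so $\stone\comp\emst{}=\emst{}\comp\stone=\stone$.

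Next, for associativity of $\comp$ as a \emph{partial} operation I would observe that $(\stone\comp\sttwo)\comp\stthree$ is defined exactly when the domains of the three environments are pairwise disjoint, which is precisely the condition under which $\stone\comp(\sttwo\comp\stthree)$ is defined; when both are defined, associativity of multiplication of reals together with the evident identities on restrictions of memories show that either side sends $\mem$ to $\rsone_\nat(\mem\restr{\dom(\renvone)})\cdot\rstwo_\nat(\mem\restr{\dom(\renvtwo)})\cdot\rsthree_\nat(\mem\restr{\dom(\renvthree)})$. (Well-definedness of $\comp$ on $\stset$ --- that the tensor of two efficiently samplable ensembles is efficiently samplable --- is already built into the definition of $\comp$, as one may simply run the two samplers.)

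It remains to check that $\ind\ext$ is a pre-order compatible with $\comp$. Reflexivity is immediate from reflexivity of $\ind$ (\Cref{lemma:indeq}) and of $\ext$ (\Cref{lemma:extpo}). For transitivity, given $\stone\ind\stone'\ext\sttwo$ and $\sttwo\ind\sttwo'\ext\stthree$, I would apply \Cref{lemma:distributivity} to the chain $\stone'\ext\sttwo\ind\sttwo'$ to obtain some $\stfive$ with $\stone'\ind\stfive\ext\sttwo'$; transitivity of $\ind$ and of $\ext$ then yield $\stone\ind\stfive\ext\stthree$, i.e.\ $\stone\ind\ext\stthree$. For compatibility, suppose $\stone\ind\stone'\ext\sttwo$ with $\stone\comp\stthree$ and $\sttwo\comp\stthree$ both defined; since $\stone'$ has the same environment as $\stone$, $\stone'\comp\stthree$ is defined too, so \Cref{lemma:indcompcomp} gives $\stone\comp\stthree\ind\stone'\comp\stthree$ and \Cref{lemma:extcompcomp} gives $\stone'\comp\stthree\ext\sttwo\comp\stthree$, whence $(\stone\comp\stthree)\ind\ext(\sttwo\comp\stthree)$; the left-compatibility case is symmetric, using the left halves of those two lemmas. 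The only place warranting attention is this two-step pre-order/compatibility argument, which crucially leans on the distributivity of $\ind$ over $\ext$ from \Cref{lemma:distributivity}; everything else is bookkeeping.
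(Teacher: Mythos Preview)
Your proposal is correct and follows essentially the same route as the paper's proof: verify the unit and associativity laws for $\comp$ directly, then establish that $\ind\ext$ is a pre-order (reflexivity from reflexivity of each factor, transitivity via \Cref{lemma:distributivity} to commute $\ext$ past $\ind$ and then transitivity of each factor), and finally derive compatibility by decomposing $\ind\ext$ into an $\ind$-step handled by \Cref{lemma:indcompcomp} followed by an $\ext$-step handled by \Cref{lemma:extcompcomp}. Your explicit remark that the intermediate product $\stone'\comp\stthree$ is defined (because $\stone'$ shares $\stone$'s environment) is a detail the paper leaves implicit but which is indeed needed to invoke those lemmas.
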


\begin{proof}
  The associativity of $\comp$ is a consequence of the associativity
  of environments' union and of multiplication.
  Now we show that $\emst{}$ is the identity for $\comp$:
  \begin{align*}
    (\{\{\varepsilon^1\}\}_{n \in \NN}^\varepsilon  \comp{\tyf \rsone \renvone})_\nat(\mem) &= \tyf{(\{\varepsilon^1\}\tensprod\rsone_\nat)(\mem)}{\join \varepsilon \renvone}\\
    &=\tyf{\{\varepsilon^1\}( \mem\restr \emptyset) \cdot\rsone_\nat(\mem\restr {\dom(\renvone)})}{ \renvone}\\
    &=\tyf{1\cdot\rsone_\nat(\mem\restr {\dom(\renvone)})}{ \renvone}\\
    &=\tyf{\rsone_\nat(\mem)}{ \renvone}
  \end{align*}
  \normalsize
  The proof of the right identity of $\emst{}$ is analogous. Now, we show that $\ind\ext$ is a pre-order. Reflexivity is trivial since both the relations are reflexive. Transitivity can be shown as follows:
  \begin{align*}
    \stone \ind \sttwo \ext \stthree \ind \stfour \ext \stfive &\models
    \big(\exists\stthree'.\stone \ind \sttwo \ind \stthree' \ext \stfour \ext \stfive \big)\\ &\models
    \big(\exists\stthree'.\stone \ind \stthree' \ext \stfive \big) \vdash \stone \ind \ext \stfive
  \end{align*}
  Where the second step is a consequence of Lemma \ref{lemma:distributivity}, and the third is a consequence of the transitivity of the two relations $\ind, \ext$. It remains us to show that $\ind\ext$ is compatible with $\comp$.
  In particular, this requires to show that
  \[
    \forall \stone,\sttwo, \stthree. \stone\ind\ext\sttwo \Rightarrow (\stone\comp \stthree)\defined\Rightarrow(\sttwo\comp \stthree)\defined\Rightarrow (\stone\comp \stthree)\ind\ext(\sttwo\comp \stthree)
  \]
  and
  \[
    \forall \stone,\sttwo, \stthree. \stone\ind\ext\sttwo \Rightarrow (\stthree\comp \stone)\defined\Rightarrow(\stthree\comp \sttwo)\defined\Rightarrow (\stthree\comp \stone)\ind\ext(\stthree\comp \sttwo)
  \]
  We show only the first claim, the proof of the second is analogous. Assume that $\stone\ind\ext\sttwo$, $(\stone\comp \stthree)$ is defined, and that $(\sttwo\comp \stthree)$ is defined. From $\stone\ind\ext\sttwo$ we deduce that there is a state $\stfour$ such that $\stone\ind\stfour\ext\sttwo$.  Let $\stthree\in\stset$ be a store. From Lemma \ref{lemma:indcompcomp}, we obtain $\stone\comp \stthree\ind\stfour\comp\stthree$ and from Lemma \ref{lemma:extcompcomp}, we obtain $\stfour\comp \stthree\ext\sttwo\comp \stthree$. This concludes the proof of the compatibility of $\ind\ext$ and $\comp$.
\end{proof}

\newcommand{\unenv}[1]{\left[ #1\right]}

In order to ease the comparison between these two logics, in
the following we call $\fset_{\BI}$ the set of conjunctive \BI's formulas
and, given a formula $\fone\in \fset_{\BI}$, we write $\fone =\unenv{\tyf \fone \renvone}$
if $\fone$ can be obtained by removing all the environment notations from $\tyf \fone \renvone$. 
Thanks to \cref{prop:krmsec1}, we can model $\fset_{\BI}$ on \emph{partial Kripke resource monoids}.
This will allow us, to stipulate a semantic correspondence between these two logics.
The semantics of $\fset_{\BI}$ is in \Cref{fig:bisem}. Notice that we use $\Vdash$ instead
of $\models$, as we do for \CPSL's formulas.

\begin{figure*}[t]
  \begin{center}
    \begin{align*}
      \sem{{\ud{\reone}}}{}\defsym\{{\tyf \rsone \renvone \in \stset}\mid&\;\sem{\ternjudg \renvone\reone\tyone}{}(\rsone)\ind\unif{\tyone}\}\\
      \sem{{\indp{\reone}{\retwo}}}{}\defsym\{{\tyf \rsone \renvone \in \stset}\mid&\;\sem{\ternjudg \renvone\reone\tyone}{}(\rsone)\ind\sem{\ternjudg \renvone\retwo\tyone}{}(\rsone)\}\\
      \sem{{\eq{\reone}{\retwo}}}{}\defsym\{{\tyf \rsone \renvone \in \stset}\mid&\;\sem{\ternjudg \renvone\reone\tyone}{}(\rsone)=\sem{\ternjudg \renvone\retwo\tyone}{}(\rsone)\}\\
      \sem{{\espl{\deone}{\detwo}}}{}\defsym\{{\tyf \rsone \renvone\in \stset}\mid&\;\forall \nat \in \NN.\forall \mem \in \supp (\rsone_\nat). \parsem{\ternjudg \renvone\deone\tyone}{\nat}(\mem)=\parsem{\ternjudg \renvone\detwo\tyone}{\nat}(\mem)\}
    \end{align*}
    
    \begin{tabular}{ll}
      ${\stone}\Vdash{ \apone}$&iff $\stone\in\sem{\apone}{}$ \\
      ${\stone}\Vdash{ \ftrue}$&always\\
      ${\stone}\Vdash{ \ffalse}$&never \\
      {${\stone}\Vdash\fone \land \ftwo$} &{iff ${\stone}\Vdash\fone$ and ${\stone}\Vdash\ftwo$}\\
      ${\stone}\Vdash{ \fone \sep \ftwo}$&iff there are $\sttwo, \stthree$ such that $\sttwo\comp \stthree\defined$, $\sttwo \comp \stthree \ind\ext \stone$, ${\sttwo}\Vdash{ {\fone}}$ and 
                                     ${\stthree}\Vdash{ {\ftwo}}$\\
    \end{tabular}
  \end{center}  
  \caption{Semantics of \BI}
  \label{fig:bisem}
\end{figure*}

The semantics in \Cref{fig:bisem} is very close to \CPSL's one, with the exception that
we do not impose any restriction on the domain of the stores. For this reason, the semantics
of $\fset_{\BI}$ formulas is less strict that the one of \CPSL's formula. 
This is why from the validity of a \CPSL's formula
$\tyf \fone \renvone$ in a store $\stone$,
we can deduce the validity of the structurally identical
formula (but without environment annotations) in the same store,
using the \BI interpretation.

\begin{proposition}
  \label{prop:cpsltobi}
  For every environment $\renvone$, \CPSL formula
  $\tyf \fone \renvone \in \tyf \fset \renvone$, and
  store $\stone \in \sem \renvone {}$, we have:
  \[
    \srenv \renvone \stone \tyf \fone \renvone \Rightarrow  \stone \Vdash \unenv{\tyf \fone \renvone},
  \]
  where $\unenv{\tyf \fone \renvone}$ is the \BI formula that is obtained by removing
  the environment tags from $\tyf \fone \renvone$.
  \hfill\qed
\end{proposition}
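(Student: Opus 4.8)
The plan is to argue by induction on the derivation that $\tyf\fone\renvone$ is a \CPSL\ formula, relying on an auxiliary \emph{monotonicity property of the \BI\ semantics along $\ext$}: for every conjunctive \BI\ formula $\fthree$ and stores with $\sttwo \ext \stone$, if $\sttwo \Vdash \fthree$ then $\stone \Vdash \fthree$. I would prove this auxiliary claim first, by a separate induction on $\fthree$. When $\fthree$ is atomic, the argument reproduces the atomic cases of the proof of \Cref{lemma:prekm}: using \Cref{lemma:projexpsem} for $\ud{\cdot}$, $\indp{\cdot}{\cdot}$, $\eq{\cdot}{\cdot}$, and \Cref{rem:suppext} together with \Cref{lemma:parsemproj} for $\espl{\cdot}{\cdot}$, one checks that each atomic semantic condition holds at $\stone$ exactly when it holds at the restriction $\sttwo = \stone_{\renvone\to\renvtwo}$. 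The cases $\ftrue$, $\ffalse$ and additive conjunction are immediate from the induction hypotheses. For a separating conjunction $\tyf\fone\renvtwo \sep \tyf\ftwo\renvthree$, satisfaction at $\sttwo$ yields witnesses $\stfour, \stfive$ with $\stfour \comp \stfive \defined$ and $\stfour \comp \stfive \ind\ext \sttwo$; since $\sttwo \ext \stone$ gives $\sttwo \ind\ext \stone$ (take the $\ind$-component to be the identity), transitivity of $\ind\ext$ --- part of \Cref{prop:krmsec1} --- yields $\stfour \comp \stfive \ind\ext \stone$, and the same witnesses show $\stone \Vdash \tyf\fone\renvtwo \sep \tyf\ftwo\renvthree$.

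With the auxiliary claim in hand, the main induction is short. If $\tyf\fone\renvone = \tyf\apone\renvone$, then $\srenv\renvone\stone{\tyf\apone\renvone}$ unfolds to $\stone \in \sem{\tyf\apone\renvone}{}$; since $\stone \in \sem\renvone{}$ already, the definition of $\sem{\tyf\apone\renvone}{}$ reduces to precisely the condition on $\stone$ that defines $\sem\apone{}$ in \Cref{fig:bisem} (both evaluate the involved expressions under the store's environment $\renvone$), so $\stone \Vdash \apone = \unenv{\tyf\apone\renvone}$; the cases $\tyf\ftrue\renvone$ and $\tyf\ffalse\renvone$ are trivial. For a separating conjunction $\tyf{\tyf\fone\renvtwo \sep \tyf\ftwo\renvthree}\renvone$, the \CPSL\ clause supplies $\sttwo,\stthree$ with $\sttwo\comp\stthree \defined$, $\sttwo\comp\stthree \ind\ext \stone$, $\srenv\renvtwo\sttwo{\tyf\fone\renvtwo}$ and $\srenv\renvthree\stthree{\tyf\ftwo\renvthree}$; applying the induction hypothesis to the last two gives $\sttwo \Vdash \unenv{\tyf\fone\renvtwo}$ and $\stthree \Vdash \unenv{\tyf\ftwo\renvthree}$, and then the very same $\sttwo,\stthree$ witness $\stone \Vdash \unenv{\tyf\fone\renvtwo} \sep \unenv{\tyf\ftwo\renvthree}$ via the \BI\ clause for $\sep$, so no monotonicity is needed here.

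The one case where the two semantics genuinely differ is additive conjunction $\tyf{\tyf\fone\renvtwo \land \tyf\ftwo\renvthree}\renvone$: the \CPSL\ clause allows its conjuncts to be satisfied by the restrictions $\sttwo = \stone_{\renvone\to\renvtwo}$ and $\stthree = \stone_{\renvone\to\renvthree}$, whereas the \BI\ clause demands that both hold at $\stone$ itself. From $\srenv\renvone\stone{\tyf{\tyf\fone\renvtwo\land\tyf\ftwo\renvthree}\renvone}$ I would read off $\sttwo\ext\stone$, $\stthree\ext\stone$ with $\srenv\renvtwo\sttwo{\tyf\fone\renvtwo}$ and $\srenv\renvthree\stthree{\tyf\ftwo\renvthree}$; the induction hypothesis gives $\sttwo \Vdash \unenv{\tyf\fone\renvtwo}$ and $\stthree \Vdash \unenv{\tyf\ftwo\renvthree}$, and the auxiliary monotonicity property lifts these to $\stone$, yielding $\stone \Vdash \unenv{\tyf\fone\renvtwo}$ and $\stone \Vdash \unenv{\tyf\ftwo\renvthree}$, i.e.\ $\stone \Vdash \unenv{\tyf{\tyf\fone\renvtwo\land\tyf\ftwo\renvthree}\renvone}$. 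I expect the main obstacle to sit in the atomic base case of the auxiliary claim, where one must check that each atomic semantic condition is insensitive to enlarging the ambient environment: this is routine given the projection lemmas, but it is the delicate point, because --- in contrast with the usual situation for Kripke resource monoid models of \BI --- the \BI\ interpretation of the exact atomic formulas $\eq{\cdot}{\cdot}$ and $\espl{\cdot}{\cdot}$ is not upward closed along the full preorder $\ind\ext$ and only along $\ext$.
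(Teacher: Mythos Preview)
Your proof is correct, but the decomposition differs from the paper's. The paper proves the result via an auxiliary lemma (\Cref{lemma:cpsltobiaux}) by induction on the untyped \BI\ formula, and in the additive-conjunction case it lifts \emph{at the \CPSL\ level}: from $\srenv\renvtwo\sttwo{\tyf\ftwo\renvtwo}$ and $\sttwo\ext\stone$ it builds the \CPSL\ formula $\tyf\ftwo\renvone$ obtained by replacing only the outermost environment, observes $\tyf\ftwo\renvtwo \ext \tyf\ftwo\renvone$, and applies \Cref{lemma:prekm} to obtain $\srenv\renvone\stone{\tyf\ftwo\renvone}$ before invoking the induction hypothesis at $\renvone$. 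You instead apply the induction hypothesis first at the smaller environment and then lift \emph{at the \BI\ level} via your auxiliary monotonicity claim ``$\sttwo\ext\stone$ and $\sttwo\Vdash\fthree$ imply $\stone\Vdash\fthree$''. Both orderings are sound; the paper's reuses already-proved machinery (\Cref{lemma:prekm}), whereas yours is more self-contained but pays the cost of the extra lemma. Your remark that this auxiliary monotonicity only goes along $\ext$ and not along the full preorder $\ind\ext$ (because of the exact atoms $\eq{\cdot}{\cdot}$ and $\espl{\cdot}{\cdot}$) is correct and is precisely why neither proof can simply appeal to Kripke monotonicity for the resource monoid of \Cref{prop:krmsec1}.
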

\begin{proof}
  Direct consequence of \Cref{lemma:cpsltobiaux} below.
\end{proof}

\begin{lemma}
  \label{lemma:cpsltobiaux}
  \begin{equation*}
    \forall \fone \in \fset_{\BI}.\forall \renvone. \forall \tyf \fone \renvone \in \tyf \fset \renvone.
    \fone =\unenv{ \tyf \fone \renvone} \Rightarrow \forall \stone \in \sem \renvone {}. \srenv \renvone \stone \tyf \fone \renvone \Rightarrow  \stone \Vdash \fone.
  \end{equation*}
\end{lemma}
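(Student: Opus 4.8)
The plan is to prove \Cref{lemma:cpsltobiaux} by induction on the derivation that $\tyf\fone\renvone$ is a \CPSL formula; \Cref{prop:cpsltobi} is then the instance with $\fone = \unenv{\tyf\fone\renvone}$. Throughout I fix a store $\stone = \tyf\rsone\renvone \in \sem{\renvone}{}$ satisfying $\srenv\renvone\stone{\tyf\fone\renvone}$ and aim for $\stone \Vdash \unenv{\tyf\fone\renvone}$. The guiding observation is that the two semantics --- \Cref{fig:logicsem} for \CPSL and \Cref{fig:bisem} for \BI --- differ in only two respects: the atomic clauses quantify the ambient store over $\sem{\renvone}{}$ in \CPSL and over all of $\stset$ in \BI, and the additive conjunction is witnessed by \emph{sub-stores} of $\stone$ in \CPSL but by $\stone$ itself in \BI. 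The clauses for $\sep$, $\top$ and $\bot$ coincide once the environment tags are erased.

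The base and $\sep$ cases are then short. If $\tyf\fone\renvone$ is an atom $\tyf\apone\renvone$, then being a formula gives $\binjudg\renvone\apone$ and the defining condition of $\sem{\tyf\apone\renvone}{}$ is verbatim that of $\sem{\apone}{}$, so $\stone \in \sem{\tyf\apone\renvone}{} \subseteq \sem{\renvone}{}$ yields $\stone \in \sem{\apone}{}$, i.e. $\stone \Vdash \apone$; the constants are immediate. If $\tyf\fone\renvone$ is $\tyf{\tyf\ftwo\renvtwo \sep \tyf\fthree\renvthree}\renvone$, the \CPSL clause hands me stores $\sttwo,\stthree$ with $\sttwo\comp\stthree\defined$, $\sttwo\comp\stthree \ind\ext \stone$, $\srenv\renvtwo\sttwo{\tyf\ftwo\renvtwo}$ and $\srenv\renvthree\stthree{\tyf\fthree\renvthree}$; the induction hypotheses give $\sttwo \Vdash \unenv{\tyf\ftwo\renvtwo}$ and $\stthree \Vdash \unenv{\tyf\fthree\renvthree}$, and these are exactly the witnesses demanded by the \BI clause, so $\stone \Vdash \unenv{\tyf\fone\renvone}$.

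The case with real content --- and the hard part --- is the additive conjunction $\tyf\fone\renvone = \tyf{\tyf\ftwo\renvtwo \land \tyf\fthree\renvthree}\renvone$. The \CPSL clause gives $\sttwo \ext \stone$ and $\stthree \ext \stone$ with $\srenv\renvtwo\sttwo{\tyf\ftwo\renvtwo}$ and $\srenv\renvthree\stthree{\tyf\fthree\renvthree}$, and the induction hypotheses give $\sttwo \Vdash \unenv{\tyf\ftwo\renvtwo}$ and $\stthree \Vdash \unenv{\tyf\fthree\renvthree}$; but the \BI clause for $\land$ requires both conjuncts to hold at $\stone$ itself, not merely at sub-stores. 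I would therefore first establish a monotonicity lemma for the \BI semantics --- the analogue of \Cref{lemma:prekm} --- stating: for every \BI formula $\ffour$, if $\sttwo \ext \stone$ and $\sttwo \Vdash \ffour$, then $\stone \Vdash \ffour$. This is a routine induction on $\ffour$: for atoms it is the upward closure of the atomic predicates under $\ext$, which follows from \Cref{lemma:projexpsem} (invariance of $\sem{\reone}{}$ under projection) for the $\ud\cdot$, $\indp\cdot\cdot$ and $\eq\cdot\cdot$ atoms, with \Cref{rem:suppext} and \Cref{lemma:parsemproj} added for $\espl\cdot\cdot$; the $\land$ case propagates directly; and the $\sep$ case needs no induction hypothesis, since $\sttwo \ext \stone$ implies $\sttwo \ind\ext \stone$ and transitivity of the pre-order $\ind\ext$ (\Cref{prop:krmsec1}) carries any \BI-witnessing pair for $\sttwo$ up to $\stone$. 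Feeding $\sttwo$ and $\stthree$ into this lemma gives $\stone \Vdash \unenv{\tyf\ftwo\renvtwo}$ and $\stone \Vdash \unenv{\tyf\fthree\renvthree}$, hence $\stone \Vdash \unenv{\tyf\fone\renvone}$. The only genuine obstacle is isolating and proving this \BI-monotonicity lemma --- it is where the partial Kripke resource monoid structure is actually used; everything else reduces to bookkeeping about the tag-erasure operation $\unenv\cdot$.
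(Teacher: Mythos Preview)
Your proof is correct, and the base, $\top/\bot$, and $\sep$ cases match the paper's. The $\land$ case, however, is handled differently. The paper inducts on the \BI formula $\fone$ (not on the \CPSL derivation), so in the $\land$ case its induction hypothesis for $\ftwo$ is universally quantified over all \CPSL taggings. It exploits this by first re-tagging $\tyf\ftwo\renvtwo$ to $\tyf\ftwo\renvone$ (changing only the outermost environment), applying the already-proved \Cref{lemma:prekm} to transport $\srenv\renvtwo\sttwo{\tyf\ftwo\renvtwo}$ to $\srenv\renvone\stone{\tyf\ftwo\renvone}$, and only then invoking the induction hypothesis at $\stone$. You instead apply the induction hypothesis first at the sub-store $\sttwo$ and then lift in \BI via a separate Kripke-monotonicity lemma for the \BI semantics. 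Both routes are sound; the paper's avoids introducing a new lemma by reusing \Cref{lemma:prekm}, while yours isolates the monotonicity property where it conceptually belongs (on the partial Kripke resource monoid), at the cost of reproving for \BI what \Cref{lemma:prekm} already establishes for \CPSL with essentially the same ingredients (\Cref{lemma:projexpsem}, \Cref{lemma:parsemproj}, \Cref{rem:suppext}, transitivity of $\ind\ext$).
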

\begin{proof}
  The proof goes by induction on $\fone$, we treat the case of atomic propositions uniformly:
  \begin{proofcases}
    \proofcase[$\apone$] Let $\renvone$ be an environment that types all the expressions in $\apone$ (with a unique type if there are multiple expressions). Fix $\stone \in \sem \renvone{}$, and assume $\srenv \renvone \stone \tyf \apone \renvone$, which is $\stone \in \sem {\tyf \apone \renvone}{}$,
    observe that for every environment $\renvone$, we have $\sem {\tyf \apone \renvone}{}\subseteq\sem { \apone}{}$. This observation shows the claim.
    \proofcase[$\top, \bot$] Trivial.
    \proofcase[$\fone = \ftwo \land \fthree$] Fix an environment $\renvone$, and assume that $\tyf \fone \renvone$ is structurally identical to $\fone$ and that it belongs to $\tyf \fset \renvone$. This means that there are two \CPSL formulas $\tyf \ftwo \renvtwo \in \tyf \fset \renvtwo$ and $\tyf \fthree \renvthree \in \tyf \fset \renvthree$ (with $\renvtwo \ext \renvone$ and $\renvthree\ext \renvone$) such that $\tyf \fone \renvone = \tyf {\tyf \ftwo \renvtwo \land \tyf \fthree \renvthree}\renvone$ and notably:
    \[
      \ftwo =\unenv {\tyf \ftwo \renvtwo} \quad \fthree =\unenv{ \tyf \fthree \renvthree}.
      \tag{H1, H2}
    \]
    Now, we fix $\stone \in \sem \renvone {}$, and we assume that
    \[
      \srenv \renvone \stone \tyf \fone \renvone,
    \]
    which means that there are $\sttwo \in \sem \renvtwo {}$ and $\stthree \in \sem \renvthree {}$
    such that $\sttwo \ext \stone$ and $\stthree \ext \stone$, and notably:
    \[
      \srenv \renvtwo \sttwo \tyf \ftwo \renvtwo \quad \srenv \renvthree \stthree \tyf \fthree \renvthree
      \tag{HA, HB}
    \]
    Let $\tyf \ftwo \renvone$ and $\tyf \fthree \renvone$ be the formulas that are obtained by exchanging the most external environments of the formulas $\tyf \ftwo \renvtwo$ and $\tyf \fthree \renvthree$ (i.e. $\renvtwo$ and $\renvthree$) with $\renvone$. Observe that $\tyf \ftwo \renvtwo \ext \tyf \ftwo \renvone$ and  $\tyf \fthree \renvthree \ext \tyf \fthree \renvone$. For this reason, we can apply \Cref{lemma:prekm} to (HA)  and (HB), to show that 
    \[
      \srenv \renvone \stone \tyf \ftwo \renvone \quad \srenv \renvone \stone \tyf \fthree \renvone
      \tag{HA', HB'}
    \]
    From (H1) and (H2), we also deduce:
    \[
      \ftwo = \unenv{\tyf \ftwo \renvone} \quad \fthree =\unenv{\tyf \fthree \renvone}.
      \tag{H1', H2'}
    \]
    The claim is a consequence of the IHs on $\fone$ and $\ftwo$ and of the hypotheses (H1'), (HA'), (H2'), (HB').
    \proofcase[$\fone = \ftwo \sep \fthree$] Fix an environment $\renvone$, and assume that $\tyf \fone \renvone$ is structurally identical to $\fone$ and belongs to $\tyf \fset \renvone$. This means that there are two \CPSL formulas $\tyf \ftwo \renvtwo \in \tyf \fset \renvtwo$ and $\tyf \fthree \renvthree \in \tyf \fset \renvthree$ (with $\renvtwo \ext \renvone$ and $\renvthree \ext \renvone$) such that $\tyf \fone \renvone = \tyf {\tyf \ftwo \renvtwo \sep \tyf \fthree \renvthree}\renvone$ and notably:
    \[
      \ftwo =\unenv{\tyf \ftwo \renvtwo} \quad \fthree = \unenv{\tyf \fthree \renvthree}.
      \tag{H1, H2}
    \]
    Now, we fix $\stone \in \sem \renvone {}$, and we assume that
    \[
      \srenv \renvone \stone \tyf \fone \renvone,
    \]
    which means that there are $\sttwo \in \sem \renvtwo {}$ and $\stthree \in \sem \renvthree {}$
    such that $\sttwo \ind\ext \stone$ and $\stthree \ind\ext \stone$, and notably:
    \[
      \srenv \renvtwo \sttwo \tyf \ftwo \renvtwo \quad \srenv \renvthree \stthree \tyf \fthree \renvthree,
      \tag{HA, HB}
    \]
    and finally:
    \[
      \sttwo \tensprod \stthree \ind \ext \stone.
      \tag{HC}
    \]
    By applying the IHs on $\fone$ and $\ftwo$ and using the hypotheses (H1), (HA), (H2), (HB),
    we obtain that
    \[
      \sttwo \Vdash \ftwo \quad \stthree \Vdash \fthree,
    \]
    Then, we finally observe that this is sufficient to establish
    \[
      \stone \Vdash \ftwo \sep \fthree
    \]
    because of (HC).
  \end{proofcases}
\end{proof}

For the converse direction, we need to be more careful:
it is not true that if a conjunctive \BI formula $\fone$ is valid in a state $\stone$,
then the same formula is valid in the same state using the \CPSL semantics
\emph{independently} of the environments that we use to tag that formula.

\begin{example}
  Let $\renvone = {\rvarone, \rvartwo, t :\bool}$ and
  $\tyf \rsone \renvone \in \sem\renvone{}$ be the store that
  where $\rvarone$ and $\rvartwo$ are uniform and independent, and where
  for every value of the security parameter $\nat$,
  and for every $\mem \in \supp(\rsone_\nat)$
  $\mem(t) =\mem(\rvartwo)$. Observe that $\tyf \rsone \renvone$ satisfies
  $\indp\rvarone\rvarone \sep \indp\rvartwo\rvartwo$ according to the \BI semantics,
  but it does not satisfy every formula $\tyf \fone \renvone$ that is
  obtained by labeling the subformulas of $\fone$ with environments.
  In particular, this does not hold for the formula:
  \[
    \tyf{\tyf {\indp \rvarone \rvarone} {\rvarone, t:\bool} \sep \tyf {\indp \rvartwo \rvartwo} {\rvartwo:\bool}}\renvone
  \]
  because $\rvartwo$ and $t$ are not computationally independent.
  \hfill\qed
\end{example}

This shows that, given the \BI-validity of a conjunctive formula $\fone$
in a state $\stone\in \sem \renvone{}$, the \CPSL-validity
of a structurally identical formula $\tyf \fone \renvone$ on the same state,
depends on the underlying sets of independent variables.  
This happens because, due to the annotation of formulas with environments,
the semantics of the separating conjunction in \CPSL become less
ambiguous than in \BI, and the translation from \BI to \CPSL
requires this ambiguity to be resolved. However, notice that since we work with
conjunctive operators only, the tags of the two subformulas
of a separating conjunction can be syntactically \emph{under-approximated}
by taking the restriction of the global environment to the set of variables
that appear in these formulas. 

\begin{proposition}
  \label{prop:bitocpsl}
  For every environment $\renvone$, store $\stone \in \sem \renvone {}$,
  and conjunctive \BI formula $\fone$, if
  \(
  \stone \Vdash \fone,
  \)
  there is a formula $\tyf \fone \renvone$ of 
  \CPSL that is structurally identical to $\fone$ such that:
  \[
    \srenv \renvone \stone \tyf \fone \renvone.
  \]
  \hfill\qed
\end{proposition}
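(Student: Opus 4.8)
The plan is to argue by induction on the structure of the conjunctive \BI\ formula $\fone$, in the stronger form quantified over all environments and states: for every $\renvone$, every $\stone\in\sem\renvone{}$, and every conjunctive \BI\ formula $\fone$, if $\stone\Vdash\fone$ then there is a \CPSL\ formula $\tyf\fone\renvone\in\tyf\fset\renvone$ with $\unenv{\tyf\fone\renvone}=\fone$ such that $\srenv\renvone\stone{\tyf\fone\renvone}$. This is morally the converse of \Cref{lemma:cpsltobiaux}, the difference being that here the environment tags must be \emph{constructed}. The constant cases are immediate: $\tyf\ftrue\renvone$ always holds, and $\stone\Vdash\ffalse$ never holds, so the implication is vacuous. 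For an atomic $\fone=\apone$, unfolding the \BI\ clause of \Cref{fig:bisem} turns $\stone\Vdash\apone$ into $\stone\in\sem\apone{}$; membership here presupposes that the environment of $\stone$ types the expressions of $\apone$, hence $\binjudg\renvone\apone$, so $\tyf\apone\renvone$ is a legitimate \CPSL\ formula, and comparing \Cref{fig:bisem} with \Cref{fig:logicsem} shows that $\sem{\tyf\apone\renvone}{}$ consists exactly of the stores over $\renvone$ that lie in $\sem\apone{}$; thus $\srenv\renvone\stone{\tyf\apone\renvone}$, and the tag is simply $\renvone$.

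For $\fone=\ftwo\land\fthree$ the hypothesis gives $\stone\Vdash\ftwo$ and $\stone\Vdash\fthree$; I would apply the induction hypothesis to $\ftwo$ and to $\fthree$, both with the \emph{same} environment $\renvone$ and state $\stone$, obtaining $\tyf\ftwo\renvone,\tyf\fthree\renvone\in\tyf\fset\renvone$ valid at $\stone$. Since $\renvone\ext\renvone$, the formula $\tyf{\tyf\ftwo\renvone\land\tyf\fthree\renvone}\renvone$ is well formed and structurally identical to $\fone$, and taking $\stone$ itself for both existential sub-stores in the semantics of $\land$ (using reflexivity of $\ext$, \Cref{lemma:extpo}) gives validity at $\stone$. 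No genuine choice of tags is needed in this case; the global environment is reused throughout.

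The separating conjunction is the main obstacle and the only place where the tags cannot be chosen freely, as the example preceding the statement makes clear. For $\fone=\ftwo\sep\fthree$, the \BI\ semantics supplies witness stores $\sttwo=\tyf\rstwo\renvtwo$ and $\stthree=\tyf\rsthree\renvthree$ with $\sttwo\comp\stthree\defined$, $\sttwo\comp\stthree\ind\ext\stone$, $\sttwo\Vdash\ftwo$ and $\stthree\Vdash\fthree$. The key step is to read the sub-environments off these witnesses: the induction hypothesis applied to $\ftwo$ with $\renvtwo,\sttwo$ and to $\fthree$ with $\renvthree,\stthree$ yields $\tyf\ftwo\renvtwo\in\tyf\fset\renvtwo$ and $\tyf\fthree\renvthree\in\tyf\fset\renvthree$ valid at $\sttwo$ and $\stthree$. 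From $\sttwo\comp\stthree\defined$ the domains of $\renvtwo$ and $\renvthree$ are disjoint, so $\join\renvtwo\renvthree$ is defined and is the environment of $\sttwo\comp\stthree$. To see that $\tyf{\tyf\ftwo\renvtwo\sep\tyf\fthree\renvthree}\renvone$ is well formed one needs $\join\renvtwo\renvthree\ext\renvone$: unfolding $\ind\ext$ produces a store $\stfour$ with $\sttwo\comp\stthree\ind\stfour\ext\stone$, and since $\ind$ only relates stores over a common environment, $\stfour$ lives over $\join\renvtwo\renvthree$, so $\stfour\ext\stone$ gives precisely $\join\renvtwo\renvthree\ext\renvone$. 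Then $\tyf{\tyf\ftwo\renvtwo\sep\tyf\fthree\renvthree}\renvone$ is a \CPSL\ formula structurally identical to $\fone$, and using $\sttwo,\stthree$ themselves as the existential witnesses in the \CPSL\ semantics of $\sep$ establishes $\srenv\renvone\stone{\tyf{\tyf\ftwo\renvtwo\sep\tyf\fthree\renvthree}\renvone}$: the conditions $\sttwo\comp\stthree\defined$ and $\sttwo\comp\stthree\ind\ext\stone$ are the hypotheses, and $\srenv\renvtwo\sttwo{\tyf\ftwo\renvtwo}$, $\srenv\renvthree\stthree{\tyf\fthree\renvthree}$ come from the induction hypothesis.

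The hard part is thus concentrated entirely in the $\sep$ case: recognizing that the correct tagging is the one induced by the independence witnesses of the \BI\ semantics (rather than an arbitrary sub-environment labelling), and verifying the well-formedness side condition $\join\renvtwo\renvthree\ext\renvone$; everything else is routine unfolding of the two semantic definitions.
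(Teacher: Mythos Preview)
Your proposal is correct and follows essentially the same approach as the paper's proof: induction on the structure of the \BI\ formula, tagging atomic formulas and conjuncts with the ambient environment $\renvone$, and in the $\sep$ case reading off the sub-environments $\renvtwo,\renvthree$ from the witness stores supplied by the \BI\ semantics. Your argument is in fact more detailed than the paper's on the well-formedness side condition $\join\renvtwo\renvthree\ext\renvone$, which the paper asserts directly from $\sttwo\comp\stthree\ind\ext\stone$.
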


\begin{proof}
  The claim that we want to show is:
    \begin{equation*}
    \forall \fone \in \fset_{\BI}. \forall \renvone, \forall \stone \in \sem \renvone{}. \stone \Vdash \fone \Rightarrow\exists \tyf \fone \renvone \in \tyf \fset \renvone. \fone =\unenv{ \tyf \fone \renvone} \land  \srenv \renvone \stone \tyf \fone \renvone
  \end{equation*}
  \begin{proofcases}
    \proofcase[$\apone$] Trivial.
    \proofcase[$\top, \bot$] Trivial.
    \proofcase[$\fone = \ftwo \land \fthree$] We fix $\renvone$ and $\stone \in \sem \renvone{}$,
    and from the definition of the semantics, we observe:
    \[
      \stone \Vdash \ftwo \quad \stone \Vdash \fthree.
    \]
    Then, we apply the two IHs on the two subformulas, $\stone$ and $\renvone$, this shows the existence of two formulas $\tyf \ftwo \renvone$ and $\tyf \fthree\renvone$ such that $\ftwo =\unenv{\tyf \ftwo \renvone}$, $\fthree =\unenv{ \tyf \fthree \renvone}$, and notably:
    \[
      \srenv \renvone \stone \tyf \ftwo \renvone \quad \srenv \renvone \stone \tyf\fthree \renvone.
    \]
    Since $\stone \ext \stone$, we can conclude that
    \[
      \srenv \renvone \stone \tyf {\tyf \ftwo \renvone \land \tyf \fthree \renvone} \renvone.
    \]
    Finally, observe that $\ftwo \land \fthree \unenv {\tyf {\tyf \ftwo \renvone \land \tyf \fthree \renvone} \renvone}$.
    \proofcase[$\fone = \ftwo \sep \fthree$] we fix $\renvone$ and $\stone \in \sem \renvone{}$,
    and from the definition of the semantics, we observe that there are $\sttwo, \stthree \in \stset$ such that 
    \[
      \sttwo \tensprod \stthree \ind \ext \stone \quad \sttwo \Vdash \ftwo \quad \sttwo \Vdash \fthree.
      \tag{H1, H2, H3}
    \]
    From (H1), we can also conclude that $\sttwo \in \sem \renvtwo {}$ for some $\renvtwo \ext \renvone$
    and that $\stthree \in \sem \renvthree {}$ for some $\renvthree \ext \renvone$.
    Then, we apply the two IHs on the two subformulas, $\sttwo, \stthree$ and $\renvtwo, \renvthree$, using the hypotheses (H1) and (H2). This shows the existence of two formulas $\tyf \ftwo \renvtwo$ and $\tyf \fthree\renvthree$ such that $\ftwo =\unenv {\tyf \ftwo \renvtwo}$, $\fthree =\unenv {\tyf \fthree \renvthree}$, and notably:
    \[
      \srenv \renvtwo \sttwo \tyf \ftwo \renvtwo \quad \srenv \renvthree \stthree \tyf\fthree \renvthree.
    \]
    From (H1), we can conclude that:
    \[
      \srenv \renvone \stone \tyf {\tyf \ftwo \renvtwo \land \tyf \fthree \renvthree} \renvthree.
    \]
    Finally, observe that $\ftwo \sep \fthree =\unenv{ \tyf {\tyf \ftwo \renvtwo \sep \tyf \fthree \renvthree} \renvone}$.
  \end{proofcases}
\end{proof}

\subsubsection{\CPSL vs. \PSL}
\label{sec:cpslvsbi}

A comparison between \CPSL and Barthe et al.'s \PSL is now in order.
First of all, the semantics of the separating conjunction is more restrictive 
than that in \PSL: in addition to knowing that the substates which satisfy the 
sub-formulas are not statistically but \emph{computationally} independent, we 
also know the domains of their environments.
For instance, if:
\begin{equation}
  \label{eq:nodf}
\srenv \renvone \stone {\tyf{\tyf{\top}{m: \stng\nat} \sep \tyf{\ud{k}}{k: \stng\nat}}}{\renvone},
\end{equation}
we know
that there are two computationally independent substates of $\stone$,
namely $\sttwo, \stthree$,
such that $\sr\sttwo{\tyf{\top}{m: \stng\nat}}$,
$\sr\stthree{\tyf{\ud{k}}{k: \stng\nat}}$.
In addition to this, we also know that $\sttwo$ and $\stthree$ store
only $m$ and $k$, respectively.

Also observe that in \CPSL it is possible to state the
computational independence of (sets of) variables naturally:
with the condition $\sr\stone{\tyf{\top}{m: \stng\nat}} \sep {\tyf{\ud k}{k: \stng\nat}}$, we are
stating that the projection of $\stone$ on $m$ is computationally independent from
its projection on $k$, without imposing further assumptions on the value of $m$.
This semantics can be achieved in
\PSL by replacing $\tyf \top {m:\stng \nat}$ with a
tautological atomic proposition $\apone$ such that $\fv \apone =\{m\}$.

In contrast to what happens for \PSL, \CPSL's equations are always reflexive.
This is not always the case for \PSL: for instance,
$\indp\reone \reone$ is not valid,
as it does not hold, for instance, in the empty store.

\section{Inference Rules}\label{sec:inference}
\begin{figure*}[t]
  \centering
  $$
  \infer[\RSKIP]{\hotj{\fone}{\binjudg \renvone\pskip}{\fone}}{}\qquad
  \infer[\RSEQ]{\hotj{\fone}{\binjudg \renvone\seq{\prgone}{\prgtwo}}{\fthree}}
  {\hotj{\fone}{\binjudg \renvone\prgone}{\ftwo} & \hotj{\ftwo}{\binjudg \renvone\prgtwo}{\fthree}}\qquad
  $$
  $$
  \infer[\RASS]{\hotj{\ftrue}{\binjudg \renvone{\ass{\rvarone}{\reone}}}{\eq{\rvarone}{\reone}}}
  { \binjudg \renvone{\ass \rvarone \reone} &
    \rvarone\not\in\fv{\reone}}
  \qquad
    \infer[\DASS]{\hotj{\ftrue}{\binjudg \renvone{\ass{\rvarone}{\deone}}}{\espl{\rvarone}{\deone}}}
  { \binjudg \renvone{\ass \rvarone \deone} &
    \rvarone\not\in\fv{\deone}}
  $$
  $$  
  \infer[\SRASS]{\hot{\tyf{\tyf \fone {\renvtwo}\sep\tyf \ftwo \renvthree}\renvone} {\binjudg \renvone \ass \rvarone \reone}{\tyf{\tyf{\tyf \fone \renvtwo \land \tyf {\eq \rvarone \reone} {\renvtwo\cup \{\rvarone :\tyone\}}} {\renvtwo\cup \{\rvarone :\tyone\}}\sep\tyf \ftwo {\renvthree\setminus\{\rvarone:\tyone\}}}\renvone}}{
    \ternjudg\renvtwo \reone \tyone & \rvarone \notin \fv \reone &\tyf \ftwo {\renvthree\setminus\{\rvarone:\tyone\}} \in  \fset & \rvarone \notin \dom(\renvtwo) }
  $$
  $$
  \infer[\SDASS]{\hot{\tyf{\tyf \fone {\renvtwo}\sep\tyf \ftwo \renvthree}\renvone} {\binjudg \renvone \ass \rvarone \deone}{\tyf{\tyf{\tyf \fone \renvtwo \land \tyf {\espl \rvarone \deone} {\renvtwo\cup \{\rvarone :\tyone\}}} {\renvtwo\cup \{\rvarone :\tyone\}}\sep\tyf \ftwo {\renvthree\setminus\{\rvarone:\tyone\}}}\renvone}}{
    \ternjudg\renvtwo \deone \tyone & \rvarone \notin \fv \deone &\tyf \ftwo {\renvthree\setminus\{\rvarone:\tyone\}} \in  \fset & \rvarone \notin \dom(\renvtwo) }
  $$
  $$
  \infer[\RCONDCM]{\hotj{\df \rvarone}{\binjudg \renvone{\ifr\rvarone \prgone 
  \prgtwo}}{\ftwo}}{
  	\binjudg \renvone{\ifr \rvarone \prgone \prgtwo} &
  	\hotj{\espl r 1}{\binjudg \renvone\prgone}{\ftwo} &
  	\hotj{\espl r 0 }{\binjudg \renvone\prgtwo}{\ftwo} &
  	\tyf \ftwo\renvone \in \exfset
  }
  $$
  \caption{Computational Rules}
  \label{fig:comprules}
\end{figure*}

In this section, we show how \CPSL can be turned into a Hoare logic for 
reasoning about computational independence and pseudorandomness. We first 
define Hoare triples, then we provide a set of rules for reasoning about 
probabilistic programs. Finally, we prove the soundness of our system, and we
show how it can be used to reason about the security of some classic 
cryptographic primitives: in \Cref{sec:potp} we use this system of rules
to show the computational secrecy of
a variation of the OTP, where the key \emph{is not} drawn from a 
\emph{statistically} uniform distribution, but rather from a \emph{pseudorandom} one.
In \Cref{subsec:stretching}, we show that our rule for $\mathtt{if}$
statements can be used to show the correctness of a program that computes
the exclusive or.
Finally, in \Cref{subsec:stretching} we use this logic to show the security of 
the standard construction in which a pseudorandom generator with low expansion 
factor is composed with itself to obtain another pseudorandom generator 
with higher expansion factor.

\subsection{Hoare Triples}\label{subsec:rules}
A \emph{Hoare triple} is an expression in the form
\begin{equation}\label{eq:hot}
\hot{\tyf \fone \renvone }{\binjudg{\renvone}{\prgone}}{\tyf \ftwo \renvone}
\end{equation}
For the sake of simplicity, we write $\hot{\fone}{\binjudg \renvone \prgone}{\ftwo}$
or $\hot{\fone}{\prgone}{\ftwo}$
instead of 
(\ref{eq:hot}) when this does not create any ambiguity. The Hoare triple
(\ref{eq:hot}) is said to be \emph{valid} if
$\binjudg \renvone \prgone$ and for every store $\rsone \in \sem \renvone{}$ such 
that $\sr{\tyf \rsone \renvone}{\tyf \fone \renvone }$, it holds that 
$\sr{\sem{\binjudg \renvone\prgone}{}(\tyf \rsone \renvone)}{\tyf \ftwo \renvone}$. In that case, 
we write
$$
\models\hot{\fone}{\binjudg{\renvone}{\prgone}}{\ftwo}.
$$
Fortunately, Hoare triples do not need to be proved valid by hand, and we can 
define a certain number of inference rules, which can then be proved to be 
validity-preserving. In the following subsection, we are going to introduce two 
categories of inference rules, namely computational and structural rules. Each 
rule in the former class matches a specific instruction and does not require 
any other triple as a premise. As such, computational rules are meant to
draw conclusions about how the state evolves when the corresponding instruction
is executed. On the other hand, structural rules are used to obtain
a new triple about a given program $\prgone$ from another one about $\prgone$ 
itself, this way altering preconditions and postconditions.

\begin{figure*}[t]
\centering
\begin{align*}
  \mv{\pskip} & \defsym \emptyset \\
  \mv{\ass \rvarone \reone} & \defsym \{\rvarone\}\\
  \mv{\seq\prgone \prgtwo} & \defsym \mv \prgone \cup \mv \prgtwo \\
\mv{\ifr \reone \prgone \prgtwo} & \defsym \mv \prgone \cup \mv \prgtwo
 \end{align*}    
\caption{Definition of $\mv \cdot$.}
\label{fig:rvmvwvdef}
\end{figure*}
%
%
\begin{figure*}[t]
  \centering
  $$
  \infer[\RWEAK]{\hotj{\tyf \fone \renvone}{\binjudg \renvone \prgone}{\tyf \ftwo \renvone}}{\hotj{\tyf \fthree \renvone}{\binjudg \renvone \prgone}{\tyf \ffour \renvone}
    &
    \tyf \fone \renvone \models\tyf \fthree\renvone  & \tyf \ffour \renvone \models \tyf \ftwo\renvone }
  $$
  $$
  \infer[\RCONST]{\hotj{\tyf{\tyf \fone\renvtwo \wedge \tyf \fthree\renvthree}\renvone}{\binjudg \renvone \prgone}{\tyf{\tyf \ftwo\renvtwo \wedge\tyf \fthree\renvthree}\renvone}}
  {\hotj{\tyf \fone\renvtwo }{\binjudg \renvtwo \prgone}{\tyf \ftwo\renvtwo} & \dom(\renvthree)\cap\mv{\prgone}=\emptyset}
  $$
  \vspace{-1mm}
  $$
  \infer[\RFRAME]{\hotj{\tyf{\tyf \fone \renvtwo \sep\tyf \fthree\renvthree}\renvone}{\binjudg \renvone \prgone}{\tyf{\tyf \ftwo \renvtwo \sep\tyf \fthree \renvthree}\renvone}}
  {\hotj{\tyf \fone\renvtwo}{\binjudg \renvtwo \prgone}{\tyf \ftwo\renvtwo}
  }
  $$ 
  \caption{Structural Rules}
  \label{fig:structrules}
\end{figure*}

\subsection{Computational Rules}

\CPSL's computational rules can be found in Figure \ref{fig:comprules}.
The rules $\RSKIP$ and $\RSEQ$ are self explanatory, while the rules for 
assignments and conditionals require some more discussion.
First of all, we note the presence of four different rules for assignments. two 
of them, namely $\RASS$ and $\SRASS$ are concerned with arbitrary 
expressions and the $\eq{\cdot}{\cdot}$ predicate. The other two, namely 
$\DASS$ and $\SDASS$, are instead concerned with deterministic expressions 
and the $\espl{\cdot}{\cdot}$ predicate.

$\RASS$ is arguably the simplest of the four rules, and is present in Barthe 
et al.'s \PSL. The $\DASS$ rule, instead can be applied only to deterministic 
assignments and yields a very precise conclusion: a state $\rsone$ satisfies 
the conclusion $\espl{\rvarone}{\deone}$
if and only if, for every $\nat \in \NN$ and every sample $\memone$
of $\rsone_\nat$, the semantics of the deterministic
expressions $\deone$ evaluated on $\memone$ is equal to the
deterministic semantics of $\rvarone$ evaluated on $\memone$.
In particular, this rule is useful for indistinguishability proofs.
An example of fruitful use of this rule can be found in
\Cref{sec:potp} below.
\revision{
  The \emph{separating assignment rules} $\SRASS$ and $\SDASS$ are used
  to produce triples with separating conjunctions in the pre- and post-conditions.
  Intuitively, these rules are meant to describe how independence propagates 
  through assignments: if there are two independent
  portions of a store, and an expression that can be
  evaluated in one of these is assigned to a variable $\rvarone$, then
  $\rvarone$ is independent from the other portion of the store.
  In addition, the assignment does not invalidate any other
  formula $\tyf \ftwo\renvthree$ that is already valid in that portion of
  the store, provided that $\ftwo$ does not mention $\rvarone$. 
 }
Finally, notice that the standard assignment rules 
are not instances of the separating ones. For example,
if we show $\tyf {\tyf{\eq \reone \retwo}\renvtwo \sep \tyf\fone \renvthree}\renvone$,
We cannot deduce $\tyf{\eq\reone \retwo}\renvone$ from it, because \Cref{lemma:indcbi}
holds only for approximate rules.

Let us now switch our attention to the rule $\RCONDCM$. This rule has a very
specific formulation. This is due to the peculiar semantic underlying \CPSL, 
which is based on \emph{efficiently samplable} distribution ensembles.
In particular, as far as the authors of this
paper know, it is not known whether efficiently samplable
distribution ensembles are closed under conditioning.
\revision{
  For this reason, when showing the soundness of this rule, we cannot
  condition the initial state on the value of the guard, 
  as it happens in~\cite{PSL}. As a consequence, it is challenging
  to state this rule with a more general pre-condition.
}
In \Cref{sec:examples} we show how this rule, although 
having a limited expressive power, can nevertheless be employed on a small 
example dealing with the exclusive or.
\commentout{
  This is done by proving the triple:
\[
  \hotj{\tyf\ftrue\renvone}{\binjudg\renvone\XOR}{\tyf{\eq c {\exor(k, m)}}\renvone}
\]
for $\renvone ={k,c,m:\bool}$ and the following program:
\[
  \XOR \defsym \ifr{k = 1}{\ass c \lnot m}{\ass c m}.
\]
}

\begin{proposition}
\label{lemma:compsound}
The computational rules are sound.
\hfill\qed
\end{proposition}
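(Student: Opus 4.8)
The plan is to prove Proposition~\ref{lemma:compsound} by showing, one rule at a time, that each computational rule in Figure~\ref{fig:comprules} preserves validity of Hoare triples, where validity is as defined in Section~\ref{subsec:rules}. Since the computational rules have at most the form of a single program constructor, the argument for each is essentially a semantic unfolding combined with one of the lemmas already established in Sections~\ref{subsec:cryptoprog} and~\ref{sec:propofcbi}. I would organize the proof as a case analysis with one \texttt{proofcase} per rule.

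First, the easy cases. For $\RSKIP$, $\sem{\pskip}{}$ is the identity on stores, so the precondition and postcondition coincide and validity is immediate. For $\RSEQ$, I would chain the two hypotheses: given $\sr{\stone}{\fone}$, the first triple gives $\sr{\sem{\prgone}{}(\stone)}{\ftwo}$, and the second applied to the store $\sem{\prgone}{}(\stone)$ gives $\sr{\sem{\prgtwo}{}(\sem{\prgone}{}(\stone))}{\fthree}$; since $\sem{\seq{\prgone}{\prgtwo}}{} = \sem{\prgtwo}{}\circ\sem{\prgone}{}$ by definition, this is exactly what is needed. For $\RASS$, starting from any store $\stone$, the precondition $\top$ is trivially satisfied and one must check $\sr{\sem{\ass{\rvarone}{\reone}}{}(\stone)}{\eq{\rvarone}{\reone}}$, which is exactly the content of Lemma~\ref{lemma:exprcompsem} (using $\rvarone\notin\fv\reone$): evaluating $\rvarone$ after the assignment equals evaluating $\reone$. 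For $\DASS$, the analogous statement with the pointwise/deterministic predicate $\espl{\rvarone}{\deone}$ is Lemma~\ref{lemma:detexprcompsem}, which gives $\parsem{\deone}{\nat}(\memtwo) = \parsem{\rvarone}{\nat}(\memtwo)$ for every sample $\memtwo$ in the support of the output distribution.

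Next, the separating assignment rules $\SRASS$ and $\SDASS$. Here I would start from a store $\stone$ with $\sr{\stone}{\tyf{\tyf\fone{\renvtwo}\sep\tyf\ftwo\renvthree}\renvone}$, unfold the semantics of $\sep$ to obtain witnesses $\sttwo\in\sem{\renvtwo}{}$, $\stthree\in\sem{\renvthree}{}$ with $\sttwo\comp\stthree\ind\ext\stone$, $\sr{\sttwo}{\fone}$, $\sr{\stthree}{\ftwo}$. After executing $\ass{\rvarone}{\reone}$ on $\stone$, the key observations are: (i) by Lemma~\ref{lemma:sempartcomm}, since $\reone$ is typed in $\renvtwo$ and $\rvarone\notin\dom(\renvtwo)$ affects only the $\renvtwo$-side, the output still factors as a tensor product with the $\renvthree\setminus\{\rvarone\}$-part unchanged — though some care is needed because $\stone$ is only $\ind\ext$-related to $\sttwo\comp\stthree$, not equal, so I would combine Lemma~\ref{lemma:prgind} (the assignment's semantics preserves $\ind$) with Lemma~\ref{lemma:sempartcomm} and the restriction machinery of Corollary~\ref{cor:restriction}; (ii) on the $\renvtwo\cup\{\rvarone\}$-side, the new variable $\rvarone$ satisfies $\eq{\rvarone}{\reone}$ (resp. $\espl{\rvarone}{\deone}$) by the same argument as for $\RASS$/$\DASS$, while $\tyf\fone\renvtwo$ continues to hold because $\rvarone\notin\dom(\renvtwo)$ means the $\renvtwo$-marginal is untouched (Lemma~\ref{lemma:assorth}); and (iii) the new independence witness is obtained by the frame-style reasoning of Lemma~\ref{lemma:sepassntech}. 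This is the bookkeeping-heavy case, but every ingredient is already available.

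The hard case, and the one I expect to be the main obstacle, is $\RCONDCM$. The subtlety is precisely the one flagged repeatedly in the text: efficiently samplable ensembles are not known to be closed under conditioning, so the soundness argument \emph{cannot} mirror the conditional rule of \cite{PSL}. Instead I would exploit the restrictive precondition $\df\rvarone$, which after the redefinition in Section~\ref{sec:programs} is just $\top$ — wait, more carefully: the precondition requires knowing the guard's behaviour. The plan is: given $\sr{\stone}{\df\rvarone}$, use Lemma~\ref{lemma:unitsem} to write $\sem{\ifr{\rvarone}{\prgone}{\prgtwo}}{\nat}(\rsone_\nat) = \bind{\rsone_\nat}{\memone\mapsto h(\memone)}$ where $h$ branches on $\memone(\rvarone)$, then split the support of $\rsone_\nat$ into the $\rvarone=1$ part and the $\rvarone=0$ part. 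On the first part, each singleton $\unit{\memone}$ satisfies $\espl{r}{1}$, so by the hypothesis $\hotj{\espl r 1}{\prgone}{\ftwo}$ the result satisfies $\ftwo$; symmetrically on the second. Then I must argue that a convex combination (a sum of scaled sub-distributions) of stores each satisfying the \emph{exact} formula $\tyf\ftwo\renvone\in\exfset$ still satisfies $\tyf\ftwo\renvone$. This closure-under-convex-combination property for exact formulas is the genuinely delicate point — it should follow from the fact that exact formulas only assert equalities of marginal distributions or pointwise equalities on supports (both of which are preserved by mixing, when all components agree), combined with Remark~\ref{remark:sumprodsemcommute} to commute the program semantics past the sum; but making this precise, including checking it works through the additive-conjunction closure of $\exfset$, is where I would spend the most effort. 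I would state and prove a small auxiliary lemma to this effect before tackling $\RCONDCM$ itself.
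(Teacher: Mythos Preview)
Your treatment of $\RSKIP$, $\RSEQ$, $\RASS$, $\DASS$ is correct and matches the paper. Your outline for $\SRASS$/$\SDASS$ is close but omits one technical manoeuvre the paper needs: since $\rvarone\notin\dom(\renvtwo)$, the witness $\sttwo\in\sem{\renvtwo}{}$ cannot be fed to $\sem{\ass\rvarone\reone}{}$ directly. The paper tensors $\sttwo$ with an auxiliary Dirac store $\stfour\in\sem{\{\rvarone{:}\tyone\}}{}$, runs the assignment on $\sttwo\comp\stfour$, and takes that as the new left witness. Also, you invoke Corollary~\ref{cor:restriction}, but that corollary is only stated for approximate formulas; the paper instead uses Lemma~\ref{lemma:prekm} (monotonicity along $\ext$), which has no such restriction. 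These are both easily patched.

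The real gap is in $\RCONDCM$. You write: ``each singleton $\unit{\memone}$ satisfies $\espl{r}{1}$, so by the hypothesis $\hotj{\espl r 1}{\prgone}{\ftwo}$ the result satisfies $\ftwo$.'' This step does not type-check. The hypothesis is the validity of a triple, i.e.\ a universally quantified statement over \emph{efficiently samplable ensembles} $\rsone\in\sem{\renvone}{}$; it says nothing about a single distribution $\unit\memone$ at a fixed index $\nat$. A constant-in-$\nat$ ensemble built from $\memone$ would not even lie in $\sem{\renvone}{}$ (the string lengths vary with $\nat$). The paper's fix is exactly this: for each $(\nat,\memone,b)$ with $\memone(\rvarone)=b$ it manufactures an efficiently samplable ensemble $\rstwo^{\nat,\memone,b}\in\sem{\renvone}{}$ such that $(\rstwo^{\nat,\memone,b})_\nat=\unit\memone$ and, crucially, $\sr{\rstwo^{\nat,\memone,b}}{\espl\rvarone b}$ holds at \emph{every} index. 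Only then can the hypotheses be applied, and one extracts the conclusion at the single index $\nat$ of interest. Once this is in place, the paper unfolds the exact formula $\ftwo$ into its atomic $\eq$ and $\espl$ conjuncts and verifies each by pushing the bind over $\rsone_\nat$ through the semantics of the relevant expression; the convex-combination closure you anticipate is handled atom-by-atom rather than by a single lemma. So the point you flagged as ``genuinely delicate'' is real but secondary; the step you treated as routine is the one that actually fails without the auxiliary-ensemble construction.
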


\begin{proof}
  We do not show the validity of the typing judgment associated to a triple, because it is trivial. For the logical soundness, we want to show that every triple $\hot {\tyf \fone \renvone} {\binjudg \renvone \prgone}{\tyf \ftwo \renvone}$
  that we can deduce using the computational rules is sound, namely:
  \[
    \forall \rsone \in \sem \renvone.\left(\sr{\tyf \rsone \renvone}{\tyf \fone \renvone}\right) \Rightarrow\left( \sr{\sem {\prgone} {}({\tyf \rsone \renvone})}{{\tyf \fthree \renvone}}\right)
  \]
  We go by induction on the proof of $\hotj{{\tyf \fone \renvone}}{\binjudg \renvone\prgone}{{\tyf \ftwo \renvone}}$:
  \begin{proofcases}
  \proofcase[$\RSKIP$] The claim is
    \begin{equation*}
        \forall \rsone \in \sem \renvone. \left(\sr{{\tyf \rsone \renvone}}{\tyf \fone \renvone}\right) \Rightarrow
        \left( \sr{\sem{\binjudg \renvone \pskip}{}({{\tyf \rsone \renvone}})}{{\tyf \fone \renvone}}\right).
    \end{equation*}
    \noindent
    This holds because $\sem {\binjudg \renvone \pskip}{}({{\tyf \rsone \renvone}})={{\tyf \rsone \renvone}}$.
    \proofcase[$\RSEQ$] We are asked to show:
      \begin{equation*}
        \forall \rsone \in \sem \renvone{}.\left(\sr{{\tyf \rsone \renvone}}{\tyf \fone \renvone}\right) \Rightarrow
        \left( \sr{\sem {\binjudg\renvone {\seq\prgone\prgtwo} } {}({{\tyf \rsone \renvone}})}{{\tyf \fthree \renvone}}\right)
      \end{equation*}
      under the rule's preconditions.
      Since we know that the rule $\RSEQ$ was applied, we also know that
      there is ${\tyf \ftwo \renvone}$ such that $\hotj{{\tyf \fone \renvone}}{\binjudg \renvone\prgone}{{\tyf \ftwo \renvone}}$ and $\hotj{{\tyf \ftwo \renvone}}{\binjudg \renvone\prgtwo}{{\tyf \fthree \renvone}}$,
      the two IHs on these sub-derivation state:
      \begin{equation*}
      \forall \rsone \in \sem \renvone{}.\left(\sr{{\tyf \rsone \renvone}}{\tyf \fone \renvone}\right) \Rightarrow
        \left( \sr{\sem {\binjudg \renvone\prgone} {}({{\tyf \rsone \renvone}})}{{\tyf \ftwo \renvone}}\right)  
      \end{equation*}
      and
      \begin{equation*}
        \forall \rsone \in \sem \renvone{}.\left(\sr{{\tyf \rsone \renvone}}{\tyf \ftwo \renvone}\right) \Rightarrow
        \left( \sr{\sem {\binjudg \renvone\prgtwo} {}({{\tyf \rsone \renvone}})}{{\tyf \fthree \renvone}}\right)
      \end{equation*}
      Fix ${{\tyf \rsone \renvone}}$ and assume $\left(\sr{{\tyf \rsone \renvone}}{\tyf \fone \renvone}\right)$, the goal of the sub-derivation becomes:
      \[
        \sr{\sem {\binjudg \renvone{\seq\prgone\prgtwo}} {}({{\tyf \rsone \renvone}})}{{\tyf \fthree \renvone}}
      \]
      which is
      \[
         \sr{\sem {\binjudg \renvone\prgtwo}{}\left(\sem {\binjudg \renvone\prgone}{}({{\tyf \rsone \renvone}})\right)}{{\tyf \fthree \renvone}}.
      \]
      From the first IH we obtain $\sr{\sem {\prgone}{}({{\tyf \rsone \renvone}})}{{\tyf \ftwo \renvone}}$, using this hypothesis as premise of the second IH, we get the claim.
\proofcase[$\RASS$]
  In this sub-derivation, the goal is to show:
  \begin{equation*}
    \forall \rsone \in \sem \renvone{}.\left(\sr{{\tyf \rsone \renvone}} \top\right) \Rightarrow\left( \sr{\sem {\binjudg \renvone{\ass \rvarone \reone}} {}({{\tyf \rsone \renvone}})}{\eq \rvarone \reone}\right)
  \end{equation*}
  under the rule's preconditions ($\rvarone \notin \fv \reone$).
  This claim can be restated as:
  \[
    \forall \rsone \in \sem \renvone{}.\left( \sr{\sem {\binjudg \renvone{\ass \rvarone \reone}} {}({{\tyf \rsone \renvone}})}{{\eq \rvarone \reone}}\right)
  \]
  From now on, due to the lack of ambiguity we are going to omit the judging statement from the notation of the triples. The claim can be expanded as:
  \begin{align*}
    \sem \rvarone {}\left(\sem {\ass \rvarone \reone} {}(\rsone)\right) = \sem \reone {}\left(\sem {\ass \rvarone \reone} {}(\rsone)\right).
  \end{align*}
  We will pass through the following equivalence and concluding for transitivity:
  \begin{align*}
    \sem \rvarone {}\left(\sem {\ass \rvarone \reone} {}(\rsone)\right) = \sem {\reone} {}(\rsone) = \sem \reone {}\left(\sem {\ass \rvarone \reone} {}(\rsone)\right).
  \end{align*}
  Which is the claim of \Cref{lemma:exprcompsem}.
\proofcase[$\RCONDCM$] We want to show that $\models \hot{\top}{\binjudg \renvone \prgone}{\ftwo}$, which is equivalent to:
  \[
    \begin{multlined}
    \forall \rsone\in \sem \renvone{}. \sr{\sem \prgone{}({\tyf \rsone \renvone})}{\tyf\ftwo\renvone},  
    \end{multlined}
  \]
  with the following assumptions:
  \begin{proofcases}
  \proofcase[H1] $\hotj{\espl r 1}{\binjudg \renvone \prgone}{\ftwo}$,
  \proofcase[H2] $\hotj{\espl r 0 }{\binjudg \renvone \prgtwo}{\ftwo}$,
  \proofcase[H3] $\tyf\ftwo\renvone \in \exfset$.
  \end{proofcases}
  We assume that $\sr {{\tyf \rsone \renvone}} {\df \rvarone}$.
  Assume that for some $b$:
  \[
    \smashoperator[r]{\sum_{{\memone: \memone \in \supp(\rsone_\nat) \land \memone(\rvarone)=b}}}\;\;\rsone_\nat(\memone) = 0,
    \tag{$*$}
  \]
  In this case, it is possible to show that the
  semantics of the statement is equal to $\sem \prgone{}(\stone)$
  or to $\sem \prgtwo {} (\stone)$ and to conclude with one of the IHs.
  In the remaining part of the proof, we can
  assume without loss of generality that $(*)$ does not hold for any $b \in \BB$.
  In this case, we expand the semantics of the statement obtaining that,
  for some fixed $\nat \in \NN$,
  it is equal to:
  \[
    \gbind{\rsone_\nat}{\memone \mapsto
      \begin{cases}
        \sem \prgone\nat(\unit\memone) & \text{if }\memone(\rvarone)=1\\
        \sem \prgtwo\nat(\unit\memone) & \text{if }\memone(\rvarone)=0
      \end{cases}
                                         },
  \]
  for every $\memone$ and $b$,
  let $\rstwo^{\nat',\memone,b}\in \sem\renvone{}$ be a distribution ensemble
  such that $\rstwo^{\nat',\memone,b}_{\nat}(m)=1$ if $\nat'=\nat$,
  and that for every $\nat\neq \nat'$,
  and every $\overline \mem \in \supp(\rstwo^{\nat'\memone,b}_\nat)$
  $\parsem\rvarone{\overline\nat'} (\overline \mem)=b$.
  Notice that every $\rstwo^{\nat',\memone,b}$ is a family of efficiently samplable
  distributions such that $\sr{\tyf{\rstwo^{\nat',\memone,b}}\renvone}\tyf{\espl\rvarone b}\renvone$ if
  $\memone(\rvarone)=b$.
  The semantics
  of the conditional can
  be restated as follows: 
  \[
    \gbind{\rsone_\nat}{\memone \mapsto
      \begin{cases}
        \sem \prgone\nat(\rstwo^{\nat,\memone,1}_\nat) & \text{if }\memone(\rvarone)=1\\
        \sem \prgtwo\nat(\rstwo^{\nat,\memone,0}_\nat) & \text{if }\memone(\rvarone)=0
      \end{cases}
    },
  \]
  which in turn is equal to:
  \[
    \gbind{\rsone_\nat}{\memone \mapsto
    \begin{cases}
      \left(\sem \prgone{}{(\rstwo^{\nat,\memone, 1})}\right)_\nat & \text{if }\memone(\rvarone)=1\\
      \left(\sem \prgtwo{}{(\rstwo^{\nat,\memone, 0})}\right)_\nat & \text{if }\memone(\rvarone)=0
    \end{cases}
    }.
  \]
  We can assume without loss of generality that $\ftwo$
  is a conjunction of $\tyf{\apone({\reone_i}, {\retwo_i})}{\renvtwo_i}$ for $0\le i\le k$,
  where $\apone$ can either be $\eq\cdot \cdot$ or $\espl \cdot\cdot$.
  Indeed, we know that $\tyf\fone\renvone$ is a conjunction of smaller formulas
  by definition of exact formulas.
  If any of these smaller formulas or propositions is $\tyf\bot{\renvtwo_j}$,
  we conclude ad absurdum, because no state satisfies such formula,
  conversely, if any of these smaller formulas is $\tyf \top {\renvtwo_j}$,
  we can show that the final state satisfies the formula $\tyf{\ftwo'}\renvone$
  obtained removing this formula from the conjunction, which is equivalent to $\ftwo$.
  No other cases are possible.
  If $\tyf \ftwo \renvone$ has this shape, our claim is that there are two finite set of triples
  \begin{itemize}
  \item $E=\{(\tyf {\reone_i}{\renvtwo_i}, \tyf {\retwo_i}{\renvtwo_i}, \tyone_i)\mid  0\le i\le k\}$ 
  \item $S=\{(\tyf {\deone_i}{\renvtwo_i}, \tyf {\detwo_i}{\renvtwo_i}, \tyone_i)\mid  0\le i\le l\}$ 
  \end{itemize}
  such that:
  \begin{itemize}
  \item for every triple $(\tyf {\reone_i}{\renvtwo_i}, \tyf {\retwo_i}{\renvtwo_i}, \tyone_i)\in E$,
    it holds that
    $$
    \sem{\ternjudg{\renvtwo_i}{\reone_i}{\tyone_i}}{}({\sem{\ifr\rvarone\prgone\prgtwo}{}}(\rsone)_{\renvone\to\renvtwo_i})=\sem{\ternjudg{\renvtwo_i}{\retwo_i}{\tyone_i}}{}({\sem{\ifr\rvarone\prgone\prgtwo}{}}(\rsone)_{\renvone\to\renvtwo_i}).$$
  \item for every triple $(\tyf {\deone_i}{\renvtwo_i}, \tyf {\detwo_i}{\renvtwo_i}, \tyone_i)\in S$,
    every $\nat \in \NN$ and
    $$
    \mem \in \supp({\sem{\ifr\rvarone\prgone\prgtwo}{\nat}}(\rsone_\nat)_{\renvone\to\renvtwo_i}),
    $$
    we have:
    $$
    \parsem{\ternjudg{\renvtwo_i}{\deone_i}{\tyone_i}}\nat(\mem)=\parsem{\ternjudg{\renvtwo_i}{\detwo_i}{\tyone_i}}\nat(\mem).
    $$
  \end{itemize}
  We start by showing the first conclusion.
  For \Cref{lemma:projexpsem}, we can show an equivalent result:
  \begin{equation*}
    \sem{\ternjudg{\renvone}{\reone_i}{\tyone_i}}{}({\sem{\ifr\rvarone\prgone\prgtwo}{}}(\rsone))=\\
      \sem{\ternjudg{\renvone}{\retwo_i}{\tyone_i}}{}({\sem{\ifr\rvarone\prgone\prgtwo}{}}(\rsone))
  \end{equation*}
  We fix some $i$, and we show that this result holds for $\reone_i$, $\retwo_i$.
  This means that for every $n$, we require:
  \begin{equation*}
    \sem{\ternjudg{\renvone}{\reone_i} {\tyone_i}}{\nat}({\sem{\ifr\rvarone\prgone\prgtwo}{\nat}}(\rsone))= \sem{\ternjudg{\renvone}{\retwo_i}{\tyone_i}}{\nat}({\sem{\ifr\rvarone\prgone\prgtwo}{\nat}}(\rsone))
        \tag{C}
  \end{equation*}
  Fix $\nat$, it holds that the term on the left is equivalent to 
  \begin{equation*}
    \bind{\bind{\rsone_\nat}{h}}{\detsem {\reone_i} \nat}=\bind{\rsone_\nat} {\memone\mapsto\bind{h(\memone)} {\detsem {\reone_i}\nat}}.
  \end{equation*}
  where $h(\memone)$ is
  \[
    \begin{cases}
      \left(\sem \prgone{}{(\rstwo^{\nat,\memone, 1})}\right)_\nat & \text{if }\memone(\rvarone)=1\\
      \left(\sem \prgtwo{}{(\rstwo^{\nat,\memone, 0})}\right)_\nat & \text{if }\memone(\rvarone)=0.
    \end{cases}
  \]
  It is possible to express in these terms also the expression on the right of (C).
  For this reason, it suffices to conclude the proof by showing that for every
  $\memone \in \detsem \renvone\nat$:
  \[
    \bind{h(\memone)} {\detsem {\reone_i}\nat}=\bind{h(\memone)} {\detsem {\retwo_i}\nat}.
  \]
  Thus, we fix $\memone$, and we assume without loss of generality
  that $\memone(\rvarone)=1$. The claim reduces to:
  \begin{equation*}
    \bind{\left(\sem \prgone{}{(\rstwo^{\nat,\memone, 1})}\right)_\nat}{\detsem{\reone_i}\nat}=
    \bind{\left(\sem \prgone{}{(\rstwo^{\nat,\memone, 1})}\right)_\nat}{\detsem{\retwo_i}\nat},
  \end{equation*}
  which is:
  \[
    \sem {\reone_i}\nat({\sem \prgone{}{(\rstwo^{\nat,\memone, 1})}}_\nat)=
    \sem {\retwo_i}\nat({\sem \prgone{}{(\rstwo^{\nat,\memone, 1})}}_\nat).
  \]
  
  For the IH, we conclude that $\srenv\renvone{{\sem \prgone{}{(\rstwo^{\nat,\memone, 1})}}}{\tyf\ftwo\renvone}$. So, in particular, we have that:
  \[
    \sem{\ternjudg{\renvtwo_i} {\reone_i}{\tyone_i}}{}({\sem \prgone{}{(\rstwo^{\nat,\memone, 1})}}_{\renvone\to\renvtwo_i})=\sem{\ternjudg{\renvtwo_i} {\retwo_i}{\tyone_i}}{}({\sem \prgone{}{(\rstwo^{\nat,\memone, 1})}}_{\renvone\to\renvtwo_i}).
  \]
  By applying  \Cref{lemma:projexpsem},
  we obtain:
  \[
    \sem{{\reone_i}}{}({\sem \prgone{}{(\rstwo^{\nat,\memone, 1})}})=\sem{{\retwo_i}}{}({\sem \prgone{}{(\rstwo^{\nat,\memone, 1})}}),
  \]
  in particular:
  \[
    \sem{{\reone_i}}{\nat}({\sem \prgone{}{(\rstwo^{\nat,\memone, 1})}}_\nat)=\sem{{\retwo_i}}{\nat}({\sem \prgone{}{(\rstwo^{\nat,\memone, 1})}}_\nat),
  \]
  which is our claim.
  Now we go to the second result; in particular we fix some $i, \nat$, and we show a stronger result: for every $\mem \in \supp(({\sem{\ifr\rvarone\prgone\prgtwo}{}}(\rsone)_{\renvone\to \renvtwo_i})_\nat)$, we have that $\parsem{\ternjudg{\renvtwo_i}{\deone_i}{\tyone_i}}\nat(\mem)=\parsem{\ternjudg{\renvtwo_i}{\detwo_i}{\tyone_i}}\nat(\mem)$.
  From $\memone \in \supp(({\sem{\ifr\rvarone\prgone\prgtwo}{}}(\rsone)_{\renvone\to \renvtwo_i})_\nat)$, and \Cref{rem:suppext}, we deduce that there is $\memone'\in \supp(({\sem{\ifr\rvarone\prgone\prgtwo}{}}(\rsone))_\nat)$ such that $\memone'_{\dom(\renvtwo_i)}= \memone$.
  From \Cref{lemma:suppbackcomp}, we deduce that there is $\overline \memone \in \supp(\rsone_\nat)$ such that $\memone' \in \supp(\bind{h(\overline \mem)}{\memthree \mapsto \memthree\restr{\dom(\renvtwo_i)}})$ where $h(\memone)$ is
  \[
    \begin{cases}
      \left(\sem \prgone{}{(\rstwo^{\nat,\memone, 1})}\right)_\nat & \text{if }\memone(\rvarone)=1\\
      \left(\sem \prgtwo{}{(\rstwo^{\nat,\memone, 0})}\right)_\nat & \text{if }\memone(\rvarone)=0.
    \end{cases}
  \]
  We assume without lack of generality that \(\overline \mem(\rvarone)=1\). In particular, this means that $\memone'\in \supp(\bind{\left(\sem \prgone{}{(\rstwo^{\nat,\memone, 1})}\right)_\nat}{\memthree \to \memthree\restr{\dom(\renvtwo_i)}})=\supp({((\sem \prgone{}{(\rstwo^{\nat,\memone, 1})})_{\renvone \to \renvtwo_i})_\nat})$. For IH, we conclude that $\srenv\renvone{{\sem \prgone{}{(\rstwo^{\nat,\memone, 1})}}}{\tyf\ftwo\renvone}$. So, in particular, we know that for every $\nat \in \NN$ and $\memthree \in \supp(((\sem \prgone{}{(\rstwo^{\nat,\memone, 1})})_{\renvone\to \renvtwo_i})_\nat)$, we have that $\parsem{\ternjudg{\renvtwo_i}{\deone_i}{\tyone_i}}\nat(\memthree)=\parsem{\ternjudg{\renvtwo_i}{\detwo_i}{\tyone_i}}\nat(\memthree)$. In particular, this holds for $\memone'$, but it also holds for $\memone$ because of \Cref{lemma:parsemproj}.
  \proofcase[$\DASS$]
  We start by assuming fixing $\renvone$, a state $\stone =\rsone^\renvone$, and by assuming that $\binjudg \renvone \ass \rvarone \deone$ and that $\rvarone \notin \fv \deone$. Observe that $\srenv \renvone \stone \top$. For this reason, the goal is to show that we have $\srenv \renvone \stone {\espl \rvarone \deone}$, which is equivalent to showing that for every $\memtwo \in \supp( \sem {\ass \rvarone \deone}{}(\stone))$, we have $\forall \nat\in \NN. \parsem \rvarone {\nat}(\memtwo)= \parsem \deone{\nat}(\memtwo)$. This is exactly the claim of \Cref{lemma:detexprcompsem}.
  \proofcase[$\SRASS$]   We start by assuming that there is a state $\stone \in \sem \renvone{}$ that satisfies the premise, this means that there are $\stone_1$, $\stone_2$ such that:
  \begin{proofcases}
    \proofcase[H1] $\srenv \renvtwo {\stone_1} \tyf \fone \renvtwo$. 
    \proofcase[H2] $\srenv \renvthree {\stone_2} \tyf \ftwo \renvthree$. 
    \proofcase[H3] $\stone_1 \comp {\stone_2} \ind \ext \stone$. 
  \end{proofcases}
  The goal is to show that there are $\sttwo_1 \in \sem {\renvtwo\cup\{\rvarone:\tyone\}}{}$, $\sttwo_2 \in \sem {\renvthree\setminus\{\rvarone:\tyone\}}{}$
  such that 
  \begin{proofcases}
    \proofcase[C1] $\srenv {\renvtwo\cup\{\rvarone:\tyone\}} {\sttwo_1} \tyf \fone \renvtwo \land \tyf {\eq \rvarone \reone} {\renvtwo\cup\{\rvarone:\tyone\}}$. 
    \proofcase[C2] $\srenv {\renvthree\setminus\{\rvarone:\tyone\}} {\sttwo_2} \tyf \ftwo {\renvthree\setminus\{\rvarone:\tyone\}}$. 
    \proofcase[C3] $\sttwo_1 \comp {\sttwo_2} \ind \ext \sem {\ass \rvarone \reone}{}(\stone)$. 
  \end{proofcases}
  We define $\renvfour$ as $\{\rvarone: \tyone\}$
  and we call $\stfour \in \sem \renvfour {}$  the store that for every $\nat \in \NN$ is the Dirac distribution on a memory $\overline \mem_\nat \in \detsem \renvfour{\nat}$.
  The store $\sttwo_1$, is  $\sem{\binjudg {\join \renvtwo \renvfour}{\ass\rvarone  \reone}}{}(\stone_1\comp \stfour)$, and the store $\sttwo_2$ is ${\stone_2}_{\renvthree\to\renvthree\setminus\{\rvarone:\tyone\}}$.
  From \Cref{lemma:mvproj}, we deduce that $\sem{\binjudg {\join \renvtwo \renvfour}{\ass\rvarone  \reone}}{}(\stone_1\comp \stfour)_{\join \renvtwo \renvfour\to \renvtwo} = {(\stone_1\comp \stfour)}_{\join \renvtwo \renvfour\to \renvtwo} = \stone_1$. This shows that $\sr{(\sttwo_1)_{\join \renvtwo \renvfour\to \renvtwo}}{\tyf \fone \renvtwo}$; in order to show the claim (C1), it remains to show that $\sr{\sttwo_1}{\eq \rvarone \reone}$ holds as well, but this is a consequence of \Cref{lemma:exprcompsem}.
  The proof of (C2) relies on \Cref{lemma:prekm} and (H2). Then we go to (C3): by definition, we have that  $\sttwo_1\comp\sttwo_2 = \sem{\binjudg {\renvtwo \cup \renvfour}{\ass\rvarone  \reone}}{}(\stone_1\comp \stfour) \comp {\stone_2}_{\renvthree\to\renvthree\setminus\{\rvarone:\tyone\}}$  from \Cref{lemma:sempartcomm}, we conclude that:
  $\sem{\binjudg {\renvtwo \cup \renvfour}{\ass\rvarone  \reone}}{}(\stone_1\comp \stfour) \comp {\stone_2}_{\renvthree\to\renvthree\setminus\{\rvarone:\tyone\}} =    \sem{\binjudg {\join{\renvtwo \cup \renvfour}{(\renvthree\setminus\{\rvarone:\tyone\} )}}  {\ass\rvarone  \reone}}{}(\stone_1\comp \stfour  \comp {\stone_2}_{\renvthree\to\renvthree\setminus\{\rvarone:\tyone\}})$. From compatibility of $\ext$ with $\comp$ \Cref{lemma:extcompcomp}, and from reflexivity of $\ext$, we deduce that
  $\stone_1\comp {\stone_2}_{\renvthree\to\renvthree\setminus\{\rvarone:\tyone\}} \ind\ext \stone_1\comp\stone_2$, so from (H3), transitivity of $\ind\ext$ and this observation, we conclude $\stone_1\comp {\stone_2}_{\renvthree\to\renvthree\setminus\{\rvarone:\tyone\}} \ind\ext \stone$. From the definition of $\ind\ext$, we deduce that $\stone_{\renvone\to \join\renvtwo{\renvthree\setminus \{\rvarone:\tyone\}}}\ind  \stone_1\comp {\stone_2}_{\renvthree\to\renvthree\setminus\{\rvarone:\tyone\}}$. From this observation and compatibility of $\ind$ with $\comp$, \Cref{lemma:indcompcomp}, we deduce that: $\stone_{\renvone\to \join\renvtwo{\renvthree\setminus \{\rvarone:\tyone\}}} \comp \stfour \ind  \stone_1\comp {\stone_2}_{\renvthree\to\renvthree\setminus\{\rvarone:\tyone\}}\comp \stfour$, and we conclude with \Cref{lemma:prgind}, by showing that: $\sem{\binjudg {\join{\renvtwo \cup \renvfour}{(\renvthree\setminus\{\rvarone:\tyone\} )}}  {\ass\rvarone  \reone}}{}(\stone_1\comp {\stone_2}_{\renvthree\to\renvthree\setminus\{\rvarone:\tyone\}}\comp \stfour) \ind \sem{\binjudg {\join{\renvtwo \cup \renvfour}{(\renvthree\setminus\{\rvarone:\tyone\} )}}  {\ass\rvarone  \reone}}{}(\stone_{\renvone\to \join\renvtwo{\renvthree\setminus \{\rvarone:\tyone\}}}\comp \stfour)$. From \Cref{lemma:sepassntech}, we also deduce that: $\sem{\binjudg {\join{\renvtwo \cup \renvfour}{(\renvthree\setminus\{\rvarone:\tyone\} )}}  {\ass\rvarone  \reone}}{}(\stone_{\renvone\to \join\renvtwo{\renvthree\setminus \{\rvarone:\tyone\}}}\comp \stfour) = \sem{\binjudg {\join{\renvtwo \cup \renvfour}{(\renvthree\setminus\{\rvarone:\tyone\} )}}  {\ass\rvarone  \reone}}{} (\stone_{\renvone\to \join{\join\renvtwo\renvthree}\renvfour})$. From \Cref{lemma:semextproj}, we conclude $\sem{\binjudg {\join{\renvtwo \cup \renvfour}{(\renvthree\setminus\{\rvarone:\tyone\} )}}  {\ass\rvarone  \reone}}{} (\stone_{\renvone\to {\join{\renvtwo \cup \renvfour}{(\renvthree\setminus\{\rvarone:\tyone\} )}}})\ext \sem{\binjudg {\renvone}{\ass\rvarone  \reone}}{} (\stone)_{\renvone\to {\join{\renvtwo \cup \renvfour}{(\renvthree\setminus\{\rvarone:\tyone\} )}}}$. This shows (C3), and concludes the soundness proof for the first rule.
  \proofcase[$\SDASS$] The soundness proof of this rule is completely analogous to the one we have just shown, the main difference is that (C1) would require to show that $\srenv {\sttwo_1} \renvtwo \espl \reone \deone$ instead of $\srenv {\sttwo_1} \renvtwo \indp \reone \reone$. This requires the application of \Cref{lemma:detexprcompsem} instead of \Cref{lemma:exprcompsem}. The rest of the proof is identical. 
  \end{proofcases}
\end{proof}


\subsection{Structural Rules}

Structural rules are defined in Figure \ref{fig:structrules}
and are roughly inspired from those from~\cite{PSL}, at the same time being 
significantly simpler. 

The rule $\RWEAK$ is standard, while the other two rules are more interesting. 
$\RCONST$ is based on the ordinary rule
for constancy of Hoare logic: it states that is always possible to add 
any formula $\fthree$ as a conjunct of the precondition 
and the postcondition if this formula does not mention
variables which are modified by the program. The set of modified variables can be over approximated 
by the function  $\mv \cdot$ defined in Figure~\ref{fig:rvmvwvdef}.
The intuition behind the soundness of this rule is that
the projection of the initial state where $\tyf \fthree\renvthree$
is evaluated is not affected by the execution of $\prgone$.

The $\RFRAME$ rule is somehow similar to the $\RCONST$ rule,
but introduces a separating conjunction: by means of this rule, it is possible
to show that some operations
preserve computational independence.
The rule is sound because, by the definition of formulas,
none of the variables where $\tyf \fthree \renvthree$ is interpreted
appears in $\renvtwo$, because  $\join \renvtwo\renvthree$ is
implicitly proved to be well-defined. Moreover, from the first premise of the 
rule, we deduce that
$\binjudg \renvtwo \prgone$, so the variables in the domain of
$\renvthree$ are not influenced by the execution of 
$\prgone$. This is a sufficient condition to ensure that
computational indistinguishability is preserved during the
execution of $\prgone$.
\revision{
  Notice that the way in which the store is split
  is induced by the annotation in the formula.
  This choice results in a less general
  rule compared to that of \PSL~\cite{PSL} because,
  in particular, $\fone$ and $\ftwo$ are interpreted
  in the same environment $\renvtwo$, which means that this
  rule cannot be applied in all situations in which $\prgone$ modifies the
  statistical dependencies between the variables.
  This explains the need for the two
  non-derivable \emph{separating assignment rules} $\SRASS$ and $\SDASS$,
  which allow the movement of variables from one side of the
  separating conjunction to the other.
}
Even though more fine-grained formulations
of the $\RFRAME$ rule are valid,
we decided to adopt this one given the good trade-off between
simplicity and expressiveness it provides.

\begin{proposition}
\label{lemma:structsound}
The structural rules are sound.
\hfill\qed
\end{proposition}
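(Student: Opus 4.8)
The plan is to prove the proposition by handling each of the three rules of \Cref{fig:structrules} separately; no induction on programs is needed, since structural rules carry Hoare-triple premises whose validity we are already entitled to assume. In every case the typing side-condition of the conclusion triple follows from that of the premise by weakening the program typing judgment $\binjudg\cdot\cdot$ along $\ext$ (if $\binjudg\renvtwo\prgone$ and $\renvtwo\ext\renvone$ then $\binjudg\renvone\prgone$), so I will concentrate on logical soundness. For $\RWEAK$ this is immediate: fix $\stone\in\sem\renvone{}$ with $\srenv\renvone\stone{\tyf\fone\renvone}$; the hypothesis $\tyf\fone\renvone\models\tyf\fthree\renvone$ gives $\srenv\renvone\stone{\tyf\fthree\renvone}$, validity of the premise triple gives $\srenv\renvone{\sem\prgone{}(\stone)}{\tyf\ffour\renvone}$, and $\tyf\ffour\renvone\models\tyf\ftwo\renvone$ gives $\srenv\renvone{\sem\prgone{}(\stone)}{\tyf\ftwo\renvone}$.

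For $\RCONST$, fix $\stone=\tyf\rsone\renvone$ with $\srenv\renvone\stone{\tyf{\tyf\fone\renvtwo\wedge\tyf\fthree\renvthree}\renvone}$. Unfolding the semantics of $\wedge$ (and using that a substore witnessing an extension is uniquely the corresponding projection), this means $\srenv\renvtwo{\stone_{\renvone\to\renvtwo}}{\tyf\fone\renvtwo}$ and $\srenv\renvthree{\stone_{\renvone\to\renvthree}}{\tyf\fthree\renvthree}$. To conclude $\srenv\renvone{\sem\prgone{}(\stone)}{\tyf{\tyf\ftwo\renvtwo\wedge\tyf\fthree\renvthree}\renvone}$ I check the two conjuncts on the relevant projections of $\sem\prgone{}(\stone)$. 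For the $\tyf\ftwo\renvtwo$-conjunct, \Cref{lemma:semextproj} gives $\sem\prgone{}(\stone)_{\renvone\to\renvtwo}=\sem\prgone{}(\stone_{\renvone\to\renvtwo})$, which satisfies $\tyf\ftwo\renvtwo$ by validity of the premise triple. For the $\tyf\fthree\renvthree$-conjunct, the side-condition $\dom(\renvthree)\cap\mv\prgone=\emptyset$ lets us invoke \Cref{lemma:mvproj}, which yields $\sem\prgone{}(\stone)_{\renvone\to\renvthree}=\stone_{\renvone\to\renvthree}$, already known to satisfy $\tyf\fthree\renvthree$.

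The frame rule $\RFRAME$ is the heart of the proof. Fix $\stone=\tyf\rsone\renvone$ with $\srenv\renvone\stone{\tyf{\tyf\fone\renvtwo\sep\tyf\fthree\renvthree}\renvone}$: there are $\sttwo\in\sem\renvtwo{}$, $\stthree\in\sem\renvthree{}$ with $\sttwo\comp\stthree\defined$, $\sttwo\comp\stthree\ind\ext\stone$, $\srenv\renvtwo\sttwo{\tyf\fone\renvtwo}$ and $\srenv\renvthree\stthree{\tyf\fthree\renvthree}$. I take as new witnesses $\sttwo'=\sem\prgone{}(\sttwo)$ and $\stthree'=\stthree$: $\sttwo'$ satisfies $\tyf\ftwo\renvtwo$ by validity of the premise triple, $\stthree'$ still satisfies $\tyf\fthree\renvthree$, and $\sttwo'\comp\stthree'$ is defined because $\sem\prgone{}$ is an endofunction on $\sem\renvtwo{}$, so $\sttwo'$ has the same (disjoint from $\renvthree$) domain as $\sttwo$. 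It remains to show $\sttwo'\comp\stthree'\ind\ext\sem\prgone{}(\stone)$. Since the $\ext$-part of a store relation forces the smaller store to be a projection of the larger, the hypothesis $\sttwo\comp\stthree\ind\ext\stone$ unfolds to $\sttwo\comp\stthree\ind\stone_{\renvone\to\join\renvtwo\renvthree}$. Applying \Cref{lemma:prgind} gives $\sem\prgone{}(\sttwo\comp\stthree)\ind\sem\prgone{}(\stone_{\renvone\to\join\renvtwo\renvthree})$; by \Cref{lemma:semextproj} the right-hand side equals $\sem\prgone{}(\stone)_{\renvone\to\join\renvtwo\renvthree}$, which is $\ext$-below $\sem\prgone{}(\stone)$ by construction, so $\sem\prgone{}(\sttwo\comp\stthree)\ind\ext\sem\prgone{}(\stone)$. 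Finally \Cref{lemma:sempartcomm} gives $\sem\prgone{}(\sttwo\comp\stthree)=\sem\prgone{}(\sttwo)\comp\stthree=\sttwo'\comp\stthree'$, which closes the argument.

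The main obstacle is concentrated in the $\RFRAME$ case and is twofold: one must apply the three program-semantics lemmas (\Cref{lemma:sempartcomm,lemma:prgind,lemma:semextproj}) with the correct environments, which relies on the routine weakening $\binjudg\renvtwo\prgone\Rightarrow\binjudg{\join\renvtwo\renvthree}\prgone$ justified by $\renvtwo\ext\join\renvtwo\renvthree$; and one must decompose the composite relation $\ind\ext$ — recognising that its $\ext$-component is exactly projection onto $\join\renvtwo\renvthree$ — so that \Cref{lemma:prgind} (which concerns only $\ind$) and \Cref{lemma:semextproj} (which concerns only projection) can be chained. Everything else is bookkeeping with projections and monadic identities of the kind already used in \Cref{subsec:propofprog}.
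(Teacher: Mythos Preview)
Your proposal is correct and follows essentially the same approach as the paper's own proof: $\RWEAK$ is immediate, $\RCONST$ is handled via \Cref{lemma:semextproj} for the $\tyf\ftwo\renvtwo$-conjunct and \Cref{lemma:mvproj} for the $\tyf\fthree\renvthree$-conjunct, and $\RFRAME$ is proved by decomposing $\ind\ext$ into an $\ind$-step (handled by \Cref{lemma:prgind}) and a projection step (handled by \Cref{lemma:semextproj}), then invoking \Cref{lemma:sempartcomm} to split the composite. The only cosmetic difference is that the paper phrases the argument as an induction on the derivation of the triple (so that the premise triple's validity is obtained as the inductive hypothesis), whereas you treat each rule in isolation and assume premise validity directly; the content is identical.
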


\begin{proof}
  As we did for \Cref{lemma:compsound}, we do not show $\binjudg \renvone \prgone$; this means that we will only show the logical correctness of this system. To do so, we go by induction on the proof of the triple.
  \begin{proofcases}
  \proofcase[$\RWEAK$] In this case, we are asked to show that $\models \hot{\fone}{\binjudg \renvone \prgone}{\ftwo}$ knowing that:
    \begin{proofcases}
    \proofcase[H1] $\hotj{\tyf \fthree \renvone}{\binjudg \renvone \prgone}{\tyf \ffour\renvone}$,
    \proofcase[H2] $\tyf \fone \renvone\models \tyf  \fthree \renvone$,
    \proofcase[H3] $\tyf  \ffour \renvone \models \tyf \ftwo \renvone$.
    \end{proofcases}
    Applying the IH to H1, we obtain $\models \hot{\tyf \fthree \renvone}{\binjudg \renvone \prgone}{\tyf \ffour\renvone}$ (H4). The claim is:
    \[
      \forall \rsone \in \sem \renvone {}.\left(\sr{{\tyf \rsone \renvone}} \fone \right) \Rightarrow\left( \sr{\sem {\binjudg \renvone \prgone} {}({{\tyf \rsone \renvone}})}{\fthree}\right)
    \]
    Fix $\rsone$, from (H2) we obtain that ${{\tyf \rsone \renvone}}$  satisfies $\tyf \fthree\renvone$ as well. Applying (H4), we obtain $\left( \sr{\sem {\prgone} {}({{\tyf \rsone \renvone}})}{\tyf \ffour\renvone }\right)$, so we conclude with (H3).
    \proofcase[$\RCONST$] We want to show that $\models\hot{\tyf{\tyf \fone \renvtwo \land\tyf \fthree\renvthree}\renvone}{\binjudg \renvone \prgone}{\tyf{\tyf \ftwo \renvtwo \land\tyf \fthree \renvthree}\renvone}$ under the assumptions:
    \begin{proofcases}
    \proofcase[H1] ${\hotj{\tyf \fone \renvtwo}{\binjudg \renvtwo \prgone}{\tyf \ftwo \renvtwo}}$,
    \proofcase[H2] $\dom (\renvthree) \cap \mv \prgone = \emptyset$.
    \end{proofcases}
    Applying the IH to (H1), we conclude $\models \hot{\tyf \fone \renvtwo}{\binjudg \renvtwo \prgone}{\tyf \ftwo \renvtwo}$ (H3). The claim is:
    \begin{align*}
      \forall \rsone \in \sem \renvone {}.\left(\sr{{\tyf \rsone \renvone}} {\tyf{\tyf \fone \renvtwo \land\tyf \fthree\renvthree}\renvone} \right)  \Rightarrow\left( \sr{\sem {\binjudg \renvone \prgone} {}({{\tyf \rsone \renvone}})}{\tyf{\tyf \ftwo \renvtwo \land\tyf \fthree\renvthree}\renvone}\right)
    \end{align*}
    Fix $\rsone$ and assume $\sr{{\tyf \rsone \renvone}} {\tyf{\tyf \fone \renvtwo \land\tyf \fthree\renvthree}\renvone}$. From this assumption, and the definition of $\ext$, we deduce that:
    \[
      \sr{{\tyf {\rsone_{\renvone\to\renvtwo}} \renvtwo}}{\tyf \fone \renvtwo}
    \]
    and that
    \[
      \sr{{\tyf {\rsone_{\renvone\to\renvthree}} \renvthree}}{\tyf \fthree \renvthree}
      \tag{C1}
    \]
    simply by inlining the definition of $\sr \cdot \cdot$.
    From (H3), we conclude $\sr{\sem \prgone {} ({\tyf {\rsone_{\renvone\to\renvtwo}} \renvtwo})}{\tyf \fone \renvtwo}$ (C2). Observe that this object is defined because for validity of the premise (H3), we
    have $\binjudg\renvtwo \prgone$. \Cref{lemma:semextproj} shows that
    \[
      \sem \prgone {} (\tyf{\rsone_{\renvone\to\renvtwo}}\renvtwo) \ext \sem \prgone {} ({\tyf \rsone \renvone}).
      \tag{C3}
    \] 
    From this observation. Finally, with an application of Lemma \ref{lemma:mvproj}, we observe that:
    \[
      \sem {\prgone} {}({{\tyf \rsone \renvone}})_{\renvone \to \renvthree} = {{\tyf \rsone \renvone}_{\renvone \to \renvthree}}{}.
      \tag{C4}
    \]
    The conclusion is given by (C1-4). 
  \proofcase[$\RFRAME$] We are asked to show that $\models\hot{\tyf{\tyf \fone \renvtwo \sep\tyf \fthree\renvthree}\renvone}{\binjudg \renvone \prgone}{\tyf{\tyf \ftwo \renvtwo \sep\tyf \fthree \renvthree}\renvone}$ under the assumptions:
    \begin{proofcases}
    \proofcase[H1] ${\hotj{\tyf \fone \renvtwo}{\binjudg \renvtwo \prgone}{\tyf \ftwo \renvtwo}}$,
    \end{proofcases}
    Applying the IH to (H1), we conclude $\models \hot{\tyf \fone \renvtwo}{\binjudg \renvtwo \prgone}{\tyf \ftwo \renvtwo}$ (H2). The claim is:
    \begin{equation*}
      \forall \rsone \in \sem \renvone {}.\left(\sr{{\tyf \rsone \renvone}} {\tyf{\tyf \fone \renvtwo \sep\tyf \fthree\renvthree}\renvone} \right)  \Rightarrow\left( \sr{\sem {\binjudg \renvone \prgone} {}({{\tyf \rsone \renvone}})}{\tyf{\tyf \ftwo \renvtwo \sep\tyf \fthree\renvthree}\renvone}\right)
    \end{equation*}
    Fix $\rsone$ and assume $\sr{{\tyf \rsone \renvone}} {\tyf{\tyf \fone \renvtwo \sep\tyf \fthree\renvthree}\renvone}$. From this assumption, we deduce that there are:
    \[
      \sr{{\tyf \rstwo \renvtwo}}{\tyf \fone \renvtwo}
    \]
    and that
    \[
      \sr{{\tyf \rsthree \renvthree}}{\tyf \fthree \renvthree}
      \tag{C1}
    \]
    such that ${\tyf{\rstwo \tensprod \rsthree} {\join \renvtwo\renvthree}}\ind\ext{\tyf \rsone \renvone}$, which is equivalent to:
    $$
    {\tyf{\rstwo \tensprod \rsthree} {\join \renvtwo\renvthree}}\ind{\tyf \rsone \renvone}_{\renvone\to \join \renvtwo \renvthree}.
    $$
    Observe that for (H2) $\binjudg \renvtwo \prgone$. From these observations, we know that $\sem {\binjudg {\join \renvtwo \renvthree} \prgone}{}$ is defined. From \Cref{lemma:prgind}, we deduce that:
    $$
    \sem \prgone {}{({\tyf{\rstwo \tensprod \rsthree} {\join \renvtwo\renvthree}})}\ind\sem \prgone{}{({\tyf \rsone \renvone}_{\renvone\to \join \renvtwo \renvthree})}.
    $$
    From this result and \Cref{lemma:semextproj}, we conclude that
    $$
    \sem \prgone {}{({\tyf{\rstwo \tensprod \rsthree} {\join \renvtwo\renvthree}})}\ind\sem \prgone{}{({\tyf \rsone \renvone})_{\renvone\to \join \renvtwo \renvthree}}, 
    $$
    which means
    $$\sem \prgone {}{({\tyf{\rstwo \tensprod \rsthree} {\join \renvtwo\renvthree}})}\ind \ext\sem \prgone{}{({\tyf \rsone \renvone})}.
    $$
    By applying \Cref{lemma:sempartcomm}, we observe that $\sem \prgone {}{({\tyf{\rstwo \tensprod \rsthree} {\join \renvtwo\renvthree}})}=\sem \prgone {}{({\tyf \rstwo \renvtwo})}\comp {\tyf \rsthree \renvthree}$, so:
    \[
    \sem \prgone {}{({\tyf \rstwo \renvtwo})}\comp {\tyf \rsthree \renvthree}\ind \ext\sem \prgone{}{({\tyf \rsone \renvone})}.
    \tag{C2}
    \]
    With (C1-2), and by applying (H2) to ${\tyf \rstwo \renvtwo}$, we conclude the proof.
    \proofcase[$\RRESTR$] We want to show that $\models\hot{\tyf{ \fone}\renvone}{\binjudg \renvone \prgone}{\tyf{ \ftwo}\renvone}$ under the assumptions:
    \begin{proofcases}
      \proofcase[H1] ${\hotj{\tyf \fone \renvtwo}{\binjudg \renvtwo \prgone}{\tyf \ftwo \renvtwo}}$,
      \proofcase[H2] $\tyf \fone \renvtwo \ext \tyf \fone  \renvone$.
      \proofcase[H3] $\tyf \ftwo \renvtwo \ext \tyf \ftwo  \renvone$.
    \end{proofcases}
    For IH and (H1), we know that $\models {\hot{\tyf \fone \renvtwo}{\binjudg \renvtwo \prgone}{\tyf \ftwo \renvtwo}}$ holds. Les $\stone$ be a state such that:
    \[
      \srenv  \renvone \stone{\tyf  \fone \renvone}.
    \]
    From (H2) or (H3) and \Cref{rem:extftoexts} we can deduce $\renvtwo \ext \renvone$.
    This means that $\stone_{\renvone \to \renvtwo}$ is defined, and for
    \Cref{lemma:prekm} and (H2), we have:
    \[
      \srenv  \renvtwo {\stone_{\renvone\to \renvtwo}}{\tyf  \fone \renvtwo}.
    \]
    From the IH, we conclude $\srenv \renvtwo {\sem{\binjudg \renvtwo \prgone}{}(\stone_{\renvone \to \renvtwo})} {\tyf \ftwo \renvtwo}$. From \Cref{lemma:semextproj},
    we conclude
    $$
    \srenv \renvtwo {{\sem{\binjudg \renvone \prgone}{}(\stone)}_{\renvone \to \renvtwo}} {\tyf \ftwo \renvtwo}.
    $$
    We show the claim with another application of \Cref{lemma:prekm}, which is justified by (H3).
  \end{proofcases}    
\end{proof}

\section{Some Examples}
\label{sec:examples}

This section is devoted to giving two simple well-known cryptographic 
constructions which can be proved to have the desired properties by way of 
\CPSL.

\subsection{The Pseudo One-Time Pad}
\label{sec:potp}

\begin{figure*}[t]
\revision{  \begin{center}
    \resizebox{\textwidth}{!}{\(
    \infer[\RSEQ]{\hotj{\top}{\binjudg\renvone{\seq{\ass k \rand()}  {\ass{c}{\exor(m,g(k))}}}}{\tyf{\df m}{\renvone_m} \sep \tyf{\ud c}{\renvone_c }}}
    {
      \juone_\POTP
      &
      \infer[\RWEAK]{\hotj{{
            {\ud {k}}
            \sep \tyf{\df m}{\renvone_{m}}}}{\binjudg\renvone{\ass c {\exor(m, g(k))}}}{\tyf{\df m}{\renvone_m} \sep
          {\ud c}
        }}
      {
        \infer[\RWEAK]{\hotj{{
              {\ud {g(k)}}
              \sep \tyf{\df m}{\renvone_{m}}}}{\binjudg\renvone{\ass c {\exor(m, g(k))}}}{\tyf{\df m}{\renvone_m} \sep
            {\ud c}
          }}
        {
          \infer[\RCONST]{\hotj{\tyf \ftrue{\renvone} \land \tyf{
                {\ud k}
                \sep \tyf{\df m}{\renvone_{m}}}{\renvone_{k, m}}}{\binjudg\renvone{\ass c {\exor(m, g(k))}}} {\tyf{\espl c {\exor(m, g(k))}}{{\renvone}} \land\tyf
              {
                {\ud {g(k)}}
                \sep \tyf{\df m}{\renvone_{m}}}{\renvone_{k, m}}}}
          {
            \infer[\DASS]{\hotj{\ftrue}{\binjudg{{\renvone}}{\ass c {\exor(m, g(k))}}}{\espl c {\exor(m, g(k))}}}
            {
              c \notin \fv {\exor(m, g(k))}
            }
            &
            \{m,  k\}\cap \mv {\ass c \exor(m, g(k))}=\emptyset
          }
        &
        \text{\Cref{lemma:prekm}, Axiom \eqref{ax:auxpotp2}}
        }
        &
        \eqref{eq:axpotp}
      }
    }
    \)}    
  \end{center}
Where $\juone_\POTP$ is:
\begin{center}
        \resizebox{0.6\textwidth}{!}{\(
      \infer[\RWEAK]{\hotj{\top}{\binjudg\renvone{\ass k \rand()}}
        {
          {\ud k}
          \sep \tyf{\df m}{\renvone_{m}}}}
      {
        \infer[\SRASS]{\hotj{\tyf{\top}{\{\}} \sep \tyf{\df m}{\renvone}}{\binjudg\renvone{\ass k {\rand()}}}
          {{
              \tyf
              {\top \land\eq k {\rand()}}
              {\renvone_{k}}
              \sep \tyf{\top}{\renvone_{m, c}}}}
        }
        {
          \binjudg{\varepsilon}{\rand:\stng \nat} &
          \rvarone \notin \fv \rand &
          \tyf \top {\renvone_{c, m}} \in \fset &
          \rvarone \notin \dom(\varepsilon)
        }
        &
        \text{Axiom \eqref{ax:auxpotp1}}
      }
      \)}
  \end{center}  
}  \caption{Proof tree of $\hotj{\fone_\POTP}{\POTP}{\ftwo_\POTP}$.}
  \label{fig:potpmain}
\end{figure*}

\newcommand{\Gen}{\mathsf{Gen}}
\newcommand{\Enc}{\mathsf{Enc}}
\newcommand{\Dec}{\mathsf{Dec}}

The pseudo-one-time pad ($\POTP$) is a symmetric key encryption
scheme that enjoys \emph{computational secrecy}~\cite{BlumMicali,KatzLindell}.
It differs from the classic one-time pad because exclusive or is not applied to the key
itself but rather to the output of a
\emph{pseudorandom generator} when applied to the actual key. This solves the 
main drawback of \emph{perfectly secret} encryption schemes: with $\POTP$, the 
cardinality of the key space can be strictly smaller than that of the message space. 

%

It is possible to use our logic to show that the $\POTP$
scheme enjoys computational secrecy.
Analogously to what we did in \Cref{sec:viewpoint} we express the encryption 
procedure in our language as the following program, that we call
$\POTP$:
\begin{align*}
  &\ass{k}{\rand()}\semic\\
  &\ass{c}{\exor(m,g(k))}
\end{align*}
After having generated the key $k$ by way of $\rand()$, the latter is passed to 
a pseudorandom generator $g(\cdot)$, and only at that point the latter is xored 
with the message $m$. In the proof, the fact that $g$ is a pseudorandom 
generator is captured by the following axiom:
\begin{align*}
  \label{eq:axpotp}
  \tyf{\ud{x}}\renvone\models\tyf{\ud{g(x)}}\renvone &&&  \text{if }\renvone(x)=\stng \nat 
  \tag{$\text{Ax}_\POTP$}
\end{align*}
Although apparently stronger than the textbook definition for 
pseudorandom generators where $\ud{x}$ is replaced by
$\eq x {\rand()}$, $\text{Ax}_\POTP$ is actually equivalent to it.\footnote{\revision{
  This happens because if $\text{Ax}_\POTP$ did not follow from
  the standard version of the axiom, then there would be a store
  $\tyf \rsone\renvone \in \sem{\tyf{\ud x}\renvone}{}\setminus
  \sem{\tyf{\eq x{\rand()}}\renvone}{}$
  such that $\sem {\tyf{g(x)}\renvone}{}(\rsone)$ is \emph{not} pseudorandom.
  In particular, there would be an adversary $\distone$ with
  non-negligible advantage in distinguishing
  $g(x)$ from the uniform distribution.
  But this would contradict $\tyf \rsone\renvone \in \sem{\tyf{\ud x}\renvone}{}$, because
  the distinguisher $\lambda y, z.\distone(y, \parsem{g(z)}{})$ would have
  non-negligible advantage on $\sem {x}{}(\rsone)$ and $\unif{\stng \nat}$.
}
}

The next step is to find a way to capture the desired property as a Hoare triple.
The program we want to prove secure is certainly $\POTP$, but how about pre- 
and post-conditions? Traditionally, computational secrecy is captured by a form 
of indistinguishability, which requires the introduction of an experiment. As 
suggested by Yao~\cite{Yao}, however, we can stay closer to Shannon's 
perfect security, and rather adopt a weaker notion of independence. More 
specifically, we pick a post-condition stating that the message and the 
ciphertext are computationally independent. Formally,
we can take  $\fone_\POTP$ as precondition and $\ftwo_\POTP$ as postcondition, 
as follows:
%
\begin{align*}
  \fone_\POTP&=\tyf{\top}{\renvone}\\
  \ftwo_\POTP&=\tyf{\tyf{\df{m}}{m: \stng{\polyone(\nat)}}\sep{\ud{c}}}{\renvone},
\end{align*}
\normalsize
with $\renvone ={{k}:\stng\nat,{m},{c},
  \rvarone: \stng{\polyone(\nat)}}$.
Observe that
 $\ftwo_\POTP$ requires $\ud{c}$, which means that
$c$ is not only computational independent from $m$, but also
indistinguishable from the uniform distribution, while in principle it
would be sufficient to show a weaker statement instead,
namely: $\tyf{\tyf{\df{m}}{m: \stng{\polyone(\nat)}}\sep\tyf{\df{c}}{c: 
\stng{\polyone(\nat)}}}{{k}:\stng{\nat},{m},{c}: \stng{\polyone(\nat)}}$.
Proving that this is equivalent to indistinguishability for our notion of 
computational independence is not trivial, and~\Cref{sec:compindep} is devoted 
to showing it.

Proving the validity
of the triple 
$
\models\hot{\fone_\POTP}{\POTP}{\ftwo_\POTP}
$
by hand is cumbersome and error-prone. Conversely, it is possible
to give the proof in a more structured and concise way by employing
the inference rules we defined in \Cref{sec:inference}. It is remarkable that the 
proof above is not based on any form of reduction, being solely based on a form 
of program analysis, which proceeds compositionally without even
mentioning the adversary.
The complete proof tree is in \Cref{fig:potpmain}; it starts with
an application of the $\RSEQ$ rule to synchronize the
triple obtained for the first command (in the subtree $\pi_\POTP$)
to the one obtained for the second one. In $\pi_\POTP$, the rule
$\SRASS$ plays the important role of moving $k$ from the environment
to the right of the separating conjunction to that on the left.
This way, the post-condition states that $k$ is uniform
and that it is computationally independent from all the other
variables in the environment. Then, by means of the $\RWEAK$ rule,
we rewrite the pre- and the post-conditions: on the left-hand side,
we are rewriting a tautology to another tautology,
while the rewriting on the right is valid of the following axiom:
\begin{equation}
  \label{ax:auxpotp1}
  \srenv \renvone
  {
    \tyf
    {\top \land\eq k {\rand()}}
    {\renvone_{k}}
    \sep \tyf{\top}{\renvone_{m, c}}}
  {\ud k}
  \sep \tyf{\df m}{\renvone_{m}}    
\end{equation}
For the second command, we start by applying the $\DASS$ rule,
which shows that in every sample in the support of the
final state, the value of $c$ is identical to that of $\exor(m, g(k))$.
Since the variables $m$ and $k$ are not modified by the second assignment,
we can apply the $\RCONST$ rule to add to both the sides
any formula $\tyf \fone{\renvone_{m, k}}$, where
$\renvone_{m, k}$ denotes the restriction to the domain $\{m, k\}$.
In particular, we add $\tyf{
  {\ud k}
  \sep \tyf{\df m}{\renvone_{m}}}{\renvone_{k, m}}
$, which is very close to post-condition of the triple
we showed in $\pi_\POTP$. We chose this formula, because one of our goals
is to obtain the post-condition of $\pi_\POTP$ as a pre-condition
in order to apply the $\RSEQ$ rule.
With the next application of the $\RWEAK$ rule, we rewrite the
pre-condition with a stronger formula --- the modification to the environment of
$\tyf{
  {\ud k}
  \sep \tyf{\df m}{\renvone_{m}}}{\renvone_{k, m}}
$ is justified by \Cref{lemma:prekm}. On the post-condition, we apply
the following axiom schema:
\begin{equation}
  \label{ax:auxpotp2}
  \begin{gathered}
    {\tyf{\tyf{\espl c {\exor(m, \deone)}}{{\join \renvtwo{\renvone_{c, m}}}} \land\tyf{\tyf{\ud \deone}{\renvtwo} \sep \tyf{\df m}{\renvone_{m}}}{\renvone}}\renvone}
    \\
    \hfill\models^\renvone  {\tyf{\df m}{\renvone_m} \sep \tyf{\ud c}{\renvone_c}},
  \end{gathered}
\end{equation}
that allows to conclude that $c$ is presudorandom
under the assumption that the expression $\deone$ is
also pseudorandom.
In the very last application of the $\RWEAK$ rule, we apply the
\eqref{eq:axpotp} to the pre-condition in order to deduce $\ud{g(k)}$ from
$\ud{k}$; this way our final pre-condition matches exactly the post condition of
$\pi_\POTP$, and we can conclude the proof by applying the $\RSEQ$ rule.
The validity of axioms \eqref{ax:auxpotp1} and \eqref{ax:auxpotp2} is shown
in \Cref{lemma:ax2prg,rem:potp1} below.

\begin{lemma}
  \label{lemma:ax2prg}
  For every, $\renvone$, such that $\renvone(\{m, c\})=\{\stng{\polyone(n)}\}$,
  $\renvone(k)=\{\stng{n}\}$, every $\renvtwo$ such that
  $\join \renvtwo {\renvone_{m, c}}\defined$ and 
  $\join \renvtwo {\renvone_{m, c}} \ext \renvone$,
  and every $\stone \in \sem \renvone{}$, if $\stone$ satisfies
  \small
  \[
    \tyf{\tyf{\espl c {\exor(m, \deone)}}{{\join \renvtwo{\renvone_{c, m}}}} \land\tyf{\tyf{\ud \deone}{\renvtwo} \sep \tyf{\df m}{\renvone_{m}}}{\renvone}}\renvone
  \]
  \normalsize
  then it satisfies:
  $$
  \tyf  {\tyf{\df m}{\renvone_m} \sep \tyf{\ud c}{\renvone_c}}\renvone.
  $$
\end{lemma}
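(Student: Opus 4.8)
Write $\renvone_m$, $\renvone_c$ for the restrictions of $\renvone$ to $\{m\}$, $\{c\}$, and recall that here $\df m$ abbreviates $\top$, so the goal is to show that $\stone$ satisfies $\tyf{\tyf{\top}{\renvone_m}\sep\tyf{\ud c}{\renvone_c}}\renvone$. The plan is to unpack the two conjuncts of the hypothesis, read off from the $\sep$-conjunct two stores witnessing the independence of $m$ from $\deone$-data, and then prove that the joint distribution of $m$ and $c$ under $\stone$ is computationally indistinguishable from (marginal of $m$)~$\comp$~(uniform on $c$), by a short one-time-pad–style chain of rewritings.

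First I would unpack the hypotheses. Since $\join\renvtwo{\renvone\restr{\{c,m\}}}\ext\renvone$, the semantics of $\land$ together with uniqueness of the projection force $\stone_{\renvone\to\join\renvtwo{\renvone\restr{\{c,m\}}}}$ to satisfy $\espl c{\exor(m,\deone)}$; combining \Cref{rem:suppext} with \Cref{lemma:parsemproj} propagates this back to $\stone$ itself, so that for every $\nat$ and every $\mem\in\supp(\rsone_\nat)$ one has $\mem(c)=\mem(m)\oplus\parsem{\deone}{\nat}(\mem)$. From the $\sep$-conjunct $\tyf{\tyf{\ud\deone}\renvtwo\sep\tyf{\top}{\renvone_m}}\renvone$, again using uniqueness of projection, I obtain stores $\stthree\in\sem\renvtwo{}$ and $\stfour\in\sem{\renvone_m}{}$ with $\stthree\comp\stfour\ind\stone_{\renvone\to\join\renvtwo{\renvone_m}}$ and $\sem{\deone}{}(\stthree)\ind\unif{\stng{\polyone(\nat)}}$.

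Next I would fix the witnesses for the conclusion: let $\stfive\defsym\stone_{\renvone\to\renvone_m}$ and let $\stsix\in\sem{\renvone_c}{}$ be the (efficiently samplable) store whose value on $c$ is uniform over $\BB^{\polyone(\nat)}$. Then $\stfive$ satisfies $\top$ trivially, $\stsix$ satisfies $\ud c$ by reflexivity of $\ind$, and since $\stone_{\renvone\to\renvone\restr{\{m,c\}}}\ext\stone$, the statement reduces to proving $\stfive\comp\stsix\ind\stone_{\renvone\to\renvone\restr{\{m,c\}}}$. I would obtain this from the following chain, each step invoking \Cref{lemma:prgind} (a program applied to $\ind$ stores yields $\ind$ stores), \Cref{lemma:indcompcomp} (composition with a fixed store preserves $\ind$), \Cref{lemma:distributivity} (whence projection preserves $\ind$), and \Cref{lemma:indeq} for transitivity. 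Introduce a fresh variable $z:\stng{\polyone(\nat)}$, and let $\mathcal A$ be the distribution ensemble on $\{m,c\}$ of pairs $(a,a\oplus b)$ with $a$ the $m$-value of $\stfour$ and $b$ drawn from $\sem{\deone}{}(\stthree)$, independently. Step~(a): applying $\ass c{\exor(m,\deone)}$ (after adjoining a dummy value for $c$) to both sides of $\stthree\comp\stfour\ind\stone_{\renvone\to\join\renvtwo{\renvone_m}}$ and projecting onto $\{m,c\}$ gives $\mathcal A\ind\stone_{\renvone\to\renvone\restr{\{m,c\}}}$, where on the $\stone$ side the support condition of Step~1 guarantees that the recomputed $c$ coincides with the stored one. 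Step~(b): the store $\stfour$ composed with the $z$-store whose distribution is $\sem{\deone}{}(\stthree)$ maps, under $\ass c{\exor(m,z)}$ and projection onto $\{m,c\}$, to $\mathcal A$; replacing that $z$-store by the uniform one is legitimate because $\sem{\deone}{}(\stthree)\ind\unif{\stng{\polyone(\nat)}}$, so $\mathcal A\ind\stfour\comp\stsix$, where the right-hand side is an \emph{equality} of distributions by the one-time-pad identity (xoring a fixed distribution with an independent uniform string yields an independent uniform string). Step~(c): projecting $\stthree\comp\stfour\ind\stone_{\renvone\to\join\renvtwo{\renvone_m}}$ onto $\{m\}$ yields $\stfour\ind\stfive$, hence $\stfour\comp\stsix\ind\stfive\comp\stsix$. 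Transitivity then gives $\stone_{\renvone\to\renvone\restr{\{m,c\}}}\ind\stfive\comp\stsix$, as required.

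I expect the main difficulty to be careful bookkeeping rather than any genuine obstacle: correctly reading off the semantics of the nested, environment-tagged $\land$ and $\sep$ formulas (where uniqueness of the projection is what pins down the witnesses), keeping straight which intermediate distribution ensembles are \emph{equal} versus merely \emph{indistinguishable} — the one-time-pad identity in Step~(b) being the only place an exact equality is available and needed — and invoking the ``program applied to $\ind$ inputs / projection of $\ind$ stores'' principles in the precise forms supplied by \Cref{lemma:prgind}, \Cref{lemma:indcompcomp} and \Cref{lemma:distributivity}. A minor technical point is that the map realizing Step~(b), namely $(a,b)\mapsto(a,a\oplus b)$, must be presented as the program $\ass c{\exor(m,z)}$ on the fresh variable $z$ so that it agrees with $\stthree\comp\stfour$ under the evaluation of $\deone$ while still allowing $\deone$'s output distribution to be swapped for the uniform one.
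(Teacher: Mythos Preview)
Your proposal is correct and follows the same underlying ideas as the paper (use the $\espl$ hypothesis to identify $c$ with $m\oplus\deone$, use the $\sep$ hypothesis to separate $m$ from the $\deone$-data, then combine pseudorandomness of $\deone$ with the one-time-pad identity), but the organization is genuinely different and somewhat cleaner. The paper chooses as witnesses the pair $(\stthree_2,\stone_{\renvone\to\renvone_c})$, i.e.\ the $m$-component coming from the $\sep$-decomposition and the \emph{actual} marginal of $\stone$ on $c$; this forces two separate reduction arguments, one showing $\ud c$ for $\stone_{\renvone\to\renvone_c}$ (the paper's (C2)) and another showing the product is indistinguishable from $\stone_{\renvone\to\renvone_{m,c}}$ (the paper's (C3), carried out via auxiliary distributions (D1)--(D3)). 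You instead choose $(\stone_{\renvone\to\renvone_m},\,\text{uniform on }c)$, which makes the $\ud c$ obligation trivial and concentrates all the work into a single transitivity chain $\stone_{\renvone\to\renvone_{m,c}}\ind\mathcal A\ind\stfour\comp\stsix\ind\stfive\comp\stsix$. Your chain invokes \Cref{lemma:prgind} and \Cref{lemma:indcompcomp} where the paper writes out explicit distinguisher reductions by hand; both are valid, but yours is more modular. One small point worth making explicit in your write-up: in Step~(a), after adjoining the dummy $c$ on the $\stone$-side and applying the assignment, the resulting $\{m,c\}$-marginal agrees with $\stone_{\renvone\to\renvone_{m,c}}$ precisely because the $\espl$ hypothesis pins $c$ to $m\oplus\deone$ on every sample of $\stone$---you say this, but it is the one place where an \emph{equality} (not just $\ind$) on the $\stone$-side is needed, so it deserves a sentence.
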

\begin{proof}
  From the assumption, we deduce:
  \begin{proofcases}
  \proofcase[H1] $\sr{\stone_{\renvone\to {\join \renvtwo {\renvone_{c, m}}}}}{\espl c {\exor(m,\deone)}}$
  \proofcase[{H2}] $\sr{\stone}{\tyf{\tyf{\ud \deone}{\renvtwo} \sep \tyf{\df m}{\renvone_{m}}}{\renvone}}$
  \end{proofcases}
  The assumption (H2) allows us to deduce
  \begin{proofcases}
  \proofcase[H2A] $\sr{\stthree_1}{\tyf{\ud \deone}{\renvtwo}}$
  \proofcase[H2B] $\sr{\stthree_2}{\tyf{\df m}{\renvone_{m}}}$
  \proofcase[H2C] ${\stthree_1\comp {\stthree_2}}\ind\ext{\stone}$
  \end{proofcases}
  The goal is to find $\stfour_1, \stfour_2$ such that:
  \begin{proofcases}
  \proofcase[C1] $\sr{\stfour_1} \tyf{\df m}{\renvone_m}$
  \proofcase[C2] $\sr{\stfour_2} \tyf{\ud c}{\renvone_c}$
  \proofcase[C3] $\stfour_1\comp \stfour_2 \ind\stone_{\renvone\to\renvone_{m,c}}$.
  \end{proofcases}
  Where, for stating (C3), we have expanded the definition of $\ind\ext$.
  We set $\stfour_1= \stthree_2$ and $\stfour_2= \stone_{\renvone\to\renvone_c}$.
  With these choices (C1) is trivial. We go for (C2).
  This claim requires us to show that:
  \[
    \sem c{}(\stone_{\renvone\to\renvone_c})\ind \unif{\stng {\polyone(\nat)}}
  \]
  From (H1) and (W2), we know that 
  \[
    \sem c{}(\stone_{\renvone\to\join \renvtwo {\renvone_{m, c}}})=\sem {\exor (m, \deone)}{}(\stone_{\renvone\to\join \renvtwo {\renvone_{m, c}}})
  \]
  From Lemma \ref{lemma:projexpsem}, we deduce that:
  \[
    \sem c{}(\stone_{\renvone\to\renvone_c})=\sem c{}(\stone_{\renvone\to\join \renvtwo {\renvone_{m, c}}}),
  \]
  so, for this observation, we can reduce (C2) to:
  \[
    \sem {\exor (m, \deone)}{}(\stone_{\renvone\to\join \renvtwo {\renvone_{m,c}}})\ind\unif{\stng {\polyone(\nat)}},
  \]
  And by applying Lemma \ref{lemma:projexpsem} once again, to:
  \[
    \sem {\exor (m, \deone)}{}(\stone_{\renvone\to\join \renvtwo {\renvone_{m}}})\ind\unif{\stng {\polyone(\nat)}}.
  \]
  Moreover, from (H2C), we observe that $\stone_{\renvone\to\join \renvtwo {\renvone_{m}}}\ind \stthree_1\comp\stthree_2$ so, with an application of Lemma \ref{lemma:exprind},
  (C2) can be restated as follows:
  \[
    \sem {\exor (m, \deone)}{}(\stthree_1\comp\stthree_2)\ind\unif{\stng{\polyone(\nat)}}.
  \]
  Now, we assume without lack of generality that the function symbols also contain a
  primitive $\rand(\tyone)$ for every type $\tyone$ that generates a
  binary string of length $\polytwo(\nat)$ uniformly at random whenever
  $\tyone=\stng{\polytwo(\nat)}$. Observe that: 
  \[
    \sem {\exor (m, \rand(\polyone(\nat)))}{}(\stthree_1\comp\stthree_2)=\unif{\stng{\polyone(\nat)}}.
  \]
  This can be shown with the same argument used to
  show the correctness of the OTP encryption scheme.
  For this reason, (C2) can be reduced to:
  \[
    \sem {\exor (m, \deone)}{}(\stthree_1\comp\stthree_2)\ind\sem {\exor (m, \rand(\polyone(\nat)))}{}(\stthree_1\comp\stthree_2).
  \]
  assume that this claim does not hold. This means that there is a distinguisher
  $\distone$ with non-negligible advantage for these two distribution ensembles.
  Also observe that $\stthree_1\comp\stthree_2$ is efficiently samplable,
  that $g$ is poly-time and that also $\rand(\polyone(\nat))$ is efficiently samplable.
  It is possible to define a distinguisher $\disttwo$ for $\sem{\deone}{}(\stthree_1)$
  and the family of
  uniform distributions $\{\unif\polyone(\nat)\}_{\nat\in \NN}$.
  This distinguisher receives in input a sample
  string $x$, it samples $\stthree_1 \comp \stthree_2$ to obtain a map $\memtwo$,
  it computes $\memtwo(m)$, and returns the result of the $\exor$ between this value and $x$
  to $\distone$. Observe that if the input is sampled according to $\sem{\deone}{}(\stthree_1)$, the distribution
  of the inputs for $\distone$ will be that of $\sem {\exor (m, \deone)}{}(\stthree_1\comp\stthree_2)$,
  conversely, if it is sampled according to the uniform distribution, it
  will be that of $\sem {\exor (m, \rand(\polyone(\nat)))}{}(\stthree_1\comp\stthree_2)$.
  For this reason the advantage of $\disttwo$ is the same of $\distone$, in particular,
  it is non-negligible. This is absurd because we are assuming that
  $\stthree_1\models \ud \deone$ (H2A).
  It remains just to show (C3). This claim can be rewritten
  by inlining
  the definitions of $\stfour_1$ and $\stfour_2$ for more
  clarity:
  \[
    \stthree_2\comp\stone_{\renvone\to\renvone_c} \ind \stone_{\renvone\to\renvone_{m, c}}.
  \]
  For compatibility of $\ind$ and $\comp$ (\Cref{lemma:indcompcomp}), we show:
  \[
    \stthree_2\comp\stfour \ind \stone_{\renvone\to\renvone_{m, c}}.
  \]
  where $\stfour$ is the distribution defined on $c$ as the family of uniform distribution
  of strings of length $\polyone(\nat)$.
  Assume that this does not hold. This means that there is a distinguisher $\distone$
  with non-negligible advantage for these two distributions.
  Out of it, it is possible to define another distinguisher $\disttwo_1$ for
  the following efficiently samplable distribution ensemble on pairs:
  \[
    \bind{\stthree_2 \comp \stfour}{\memtwo \mapsto \unit{(\memtwo(m), \exor(\memtwo(m), \memtwo(c)))}}
    \tag{D1}
  \]
  and
  \[
    \bind{\stone_{m, c}}{\memtwo \mapsto \unit{(\memtwo(m), \exor(\memtwo(m), \memtwo(c)))}}.
    \tag{D2}
  \]
  Indeed, $\disttwo_1$ can simply take its input sample $x$,
  compute the bitwise exclusive or on the first component ($x_1$)
  and the second component of $x$ ($x_2$) and call that value $y$,
  then, $\disttwo_1$ calls $\distone$ on the sample where $m$ is associated to
  $x_1$ and $c$ is associated to $y$. It is easy to see that the
  advantage of $\distone$ is the same of $\disttwo_1$.
  In particular, observe that (D1) is equal to:
  $$
  \bind{\stthree_2}{\memthree \mapsto \bind {\stfour} {\memtwo\mapsto\unit{(\memthree(m), \exor(\memthree(m), \memtwo(c)))}}},
  $$
  which, in turn, is equal to: 
  $$
  \bind{\stthree_2}{\memthree \mapsto \bind {\stfour} {\memtwo\mapsto\unit{(\memthree(m), \memtwo(c))}}}.
  $$
  Because $l(c)$ is statistically uniform.
  From (H1), we deduce that (D2) is equal to: 
  $$
  \bind{\stone}{\memtwo \mapsto \unit{(\memtwo(m), \parsem \deone {} (\memtwo))}},
  $$
  and thus to:
  $$
  \bind{\stone_{\renvone\to \join \renvtwo {\renvone_{m}}}}{\memtwo \mapsto \unit{(\memtwo(m), \parsem \deone {} (\memtwo))}},
  $$
  for \Cref{rem:detsemmemupdate}. Due to (H2C), we know that $\disttwo_1$ has negligible advantage also
  in distinguishing (D2), from:
  \[
    \bind{\stthree_1\comp\stthree_2}{\memtwo \mapsto \unit{(\memtwo(m), \parsem \deone {} (\memtwo))}},
    \tag{D3}
  \]
  This is shown by contraposition: any distinguisher $\overline \distone$
  that succeeds in distinguishing (D2) and (D3) with
  non-negligible advantage can be used to define a distinguisher $\overline \disttwo$
  for $\stone_{\renvone\to \join \renvtwo {\renvone_{m}}}$ and,
  $\stthree_1\comp\stthree_2$. The distinguisher $\overline \disttwo$
  simply computes $(\memtwo(m), \parsem \deone \nat (\memtwo))$
  on its input sample and passes it to $\overline \distone$.
  The distinguishers $\overline \distone$ and $\overline \disttwo$
  have the same advantage, and this contradicts (H2C).
  This shows that (D3) and (D2) are computationally indistinguishable.
  So, for the triangular inequality, and by observing that the sum of a negligible
  function with another negligible function is itself a negligible function,
  it must be the case where the advantage of $\disttwo_1$ in distinguishing (D1)
  and (D3) is non-negligible.

  Now, we observe that (D1) is equal to: 
  $$
  \bind{\stthree_2}{\memthree \mapsto \bind {\stfour} {\memtwo\mapsto\unit{(\memthree(m), \memtwo(c))}}},
  $$
  and to:
  $$
  \bind{\stthree_1\comp \stthree_2}{\memthree \mapsto \bind {\stfour} {\memtwo\mapsto\unit{(\memthree(m), \memtwo(c))}}}.
  $$
  This is a contradiction: it is possible to define a new adversary $\disttwo_2$ out of
  $\disttwo_1$ with non-negligible advantage on $\sem{\deone}{}(\stthree_1)$ and
  the uniform distribution of binary strings with length $\polyone(\nat)$.
  This adversary receives in input its sample $x$, then it samples from
  $\stthree_1 \comp \stthree_2$ a sample $\memtwo$ and then gives the
  pair $(\memtwo(m), \exor ({\memtwo(m)}, x))$ in input to $\disttwo_1$.
  Observe that, if the sample is
  taken from $\sem{\deone}{}(\stthree_1)$, then $\disttwo_2$
  gives the same output $\disttwo_1$ gave for (D3), while if
  the input sample $x$ was distributed uniformly, $\disttwo_2$
  gives the same output that $\disttwo_1$ gave for (D1).
  The advantage of $\disttwo_2$ is non-negligible,
  but this contradicts (H2A).
\end{proof}

\begin{rem}
  \label{rem:potp1}
  $\tyf{\tyf{\eq k {\rand()}} {\renvone_k}\sep \tyf \ftrue {\renvone_{m,c}}}\renvone\models\tyf{\tyf{\ud k}{\renvone_k} \sep \tyf \ftrue {\renvone_{m}}}\renvone$
\end{rem}
\begin{proof}
  Assume that a state $\sttwo$ satisfies the formula on the left. There are $\sttwo, \stthree$ such that $\sttwo \comp \stthree\defined$ and in particular $\sr \sttwo {\eq k {\rand()}}$, so the claim follows trivially from (W1) and (U1). Observe that the second sub-formula of consequence is not identical to that of the premise, but this is sound because of \Cref{lemma:prekm}.  
\end{proof}

\subsection{Bitwise Exclusive Or}
\label{sec:appxor}

\begin{figure*}[t]
  \centering
  \resizebox{\textwidth}{!}{
    $\begin{aligned}
    \infer[\RCONDCM]{\hotj {\tyf \top \renvone} {\binjudg \renvone {\ifr{k = 1}{\ass c \lnot m}{\ass c m}}} {\tyf{\espl c {\exor(k,m)}}\renvone}}
    {
      \infer[\RWEAK]{\hotj {\tyf{\espl k 1}\renvone} {\binjudg \renvone{\ass c {\lnot k}}} {\tyf{\espl c {\exor(k,m)}}\renvone}}
      {
        \infer[\RCONST]{\hotj{\tyf{\tyf \top \renvone \land \tyf{\espl k 1}{\renvone_k}}\renvone} {\binjudg\renvone{\ass c {\lnot k}}} {\tyf{\tyf{\espl c {\lnot m}}\renvone\land \tyf{\espl k 1}{\renvone_k}}\renvone}}
        {
          \infer[\DASS]{\hotj{\tyf \top \renvone} {\binjudg \renvone{\ass c {\lnot k}}} {\tyf{\espl c {\lnot m}}\renvone}}
          {
            c \notin \fv {\lnot k}
          }&
          \{k\} \cap \mv {\ass c {\lnot k}}=\emptyset
        }&
        \text{Remarks \ref{rem:pi1} and \ref{rem:pi2}}
      }&
      \jutwo &
      \espl c {\exor(k,m)} \in \exfset
    }
  \end{aligned}
  $}
\caption{Example of employment of the $\RCONDCM$ rule for the $\XOR$ program.} 
\label{fig:xor}
\end{figure*}

In this section we show how our program logic can be employed to show the
correctness of the following program computing the exclusive or of two bits.
\[
  \XOR \defsym \ifr{k = 1}{\ass c \lnot m}{\ass c m}.
\]
We do so by showing the correctness of the triple:
\[
  \hotj{\tyf\ftrue\renvone}{\binjudg\renvone\XOR}{\tyf{\eq c {\exor(k, m)}}\renvone},
\]
where $\renvone=c, k, m:\bool$.
The full tree is in \Cref{fig:xor}. We omit the subtree $\jutwo$ because it is analogous to the one on its left. Actually, the claim we are showing is stronger that the one in the triple, but this last one can be obtained by an application of the $\RWEAK$ rule with the (W2) axiom.  

\begin{rem}
  \label{rem:pi1}
  $\tyf{\espl k 1}\renvone\models\tyf{\tyf \top \renvone \land \tyf{\espl k 1}{\renvone_k}}\renvone$
\end{rem}
\begin{proof}
  With \Cref{lemma:parsemproj}, we show the validity of $\tyf{\eq k 1}{\renvone_k}$; that of $\tyf \top \renvone$ is trivial. 
\end{proof}

\begin{rem}
  \label{rem:pi2}
  $\tyf{\tyf{\eq c {\lnot m}}\renvone\land \tyf{\eq k 1}{\renvone_k}}\renvone\models\tyf{\eq c {\exor(k,m)}}\renvone$
\end{rem}
\begin{proof}
  Observe that:
  \[
    \srenv\renvone{}{\tyf{\eq {\lnot m} {\exor (m, 1)}}\renvone}
  \]
  The claim follows from two applications of (T2).
\end{proof}

\subsection{Pseudorandom Key Stretching }
\label{subsec:stretching}

\newcommand{\EXP}[1][]{\mathtt{EXP}_{#1}}

In this section we employ our program logic to show that
if pseudorandom generators with expansion factor $\nat+1$ exist,
then pseudo-random generators with expansion factor $n+h+1\ge n+1$  exist as well 
for any constant $h\in \NN$.
Even though this result by itself is not particularly insightful and is 
superseded by classic results about pseudorandom generators~\cite{KatzLindell},
we believe it is anyway interesting, given the style of the resulting proof.

To this end, we define the family of programs $\{\EXP[i]\}_{i\in \NN}$,
where each program $\EXP[h]$ with $h\in \NN$
is a key stretching algorithm with expansion factor $\nat+h+1$.
The construction of these algorithms is based on that of the canonical
pseudo-random generator with expansion factor $\nat+\polyone(\nat)$
starting from a pseudo-random generator with expansion factor
$\nat+1$, see~\cite{KatzLindell}.
We define the pseudorandom generator $\{\EXP[i]\}_{i\in \NN}$ as follows:
\begin{align*}
  \EXP[h]\defsym& \seq{\seq{\prgone_{0}}\ldots}{\prgone_{h}}\semic
                    \ass {s_0} {k}\semic
                    \seq{\seq{\prgtwo_0}\ldots}{\prgtwo_{h}}
\end{align*}
where, for every $i \in \NN$, we have:
\begin{align*}
  \prgone_i\defsym& \seq{\seq{\ass {\rvarone_i} {g(k)}}{\ass{b_i}{\head ({\rvarone_i})}}}{\ass k {\tail ({\rvarone_i})}}\semic\\
  \prgtwo_i\defsym& \ass {s_{i+1}} {\concat_{\nat+i, 1}({s_i},{b_i})}
\end{align*}

The claim we are interested in showing states that, if the
distribution of $k$ in the initial store
is pseudo-random, then the distribution of $s_{h+1}$ in the final
store is also pseudo-random.
This can be expressed by the following triple:
\[
  \hot {\tyf{\ud k}{k:\stng \nat}\sep\top^{\renvtwo_0}} {\binjudg {\renvone_h} {\EXP[h]}}{\ud {s_{h+1}}},
\]
where, for every $h \in \nat$, the environment $\renvone_h$ is defined
the environment $\renvone$ is defined as follows:
\begin{gather*}
  \begin{aligned}
    \renvone_h(k)&= \stng\nat &\forall 0\le i \le h. \renvone_h(\rvarone_i)&=\stng{\nat+1}\\
    \forall 0\le i \le h. \renvone_h (b_i) &= \bool &  \renvone_h({s_{i}})(k)&= \stng{\nat +i}.
  \end{aligned}\\
  \renvone_h(\rvartwo_{h+1}) = \stng{\nat+h+1}
\end{gather*}
and for every $i \in \NN$, the environment $\renvtwo_i$
is equal to $\renvone_h$, but all the $\rvarone_j$ and $b_j$
for $0\le j <i$ are undefined. 

%

The proof relies on the axiom \eqref{eq:axpotp} and on two other
auxiliary axioms \eqref{eq:axsplit} and \eqref{eq:axmerge}. The
former states that if in a store $\stone$ the variable $\rvarone$
is distributed uniformly at random, the variable $b$ stores the first bit
of $\rvarone$, and the variable $\rvartwo$ stores the remaining part of the
string, then the marginal distributions of $b$ and $\rvartwo$
are pseudorandom and independent.
The axiom \eqref{eq:axmerge} somehow states the converse:
by concatenating a bit and a string that are both pseudorandom and
conditionally independent, we obtain a string that is pseudorandom.
These axioms can be given in our language as follows: 
\begin{equation}
  \label{eq:axsplit}
  \begin{multlined}
    {\tyf{\tyf{\ud \rvarone}\renvtwo\land \tyf{\espl b {\head (\rvarone)}}\renvtwo\land \tyf{\espl \rvartwo {\tail (\rvarone)}}\renvtwo}\renvtwo}
    \models^\renvtwo {\tyf{\ud b \sep \ud \rvartwo}\renvtwo}    
  \end{multlined}
  \tag{$\text{Ax}_{\mathtt{SPL}}$}
\end{equation}
\begin{equation}
  \label{eq:axmerge}
  \srenv \renvtwo{(\ud \rvarone \sep \ud b) \land \espl \rvartwo{\concat (\rvarone, b)}}
  {\ud \rvartwo}
  \tag{$\text{Ax}_{\mathtt{MRG}}$}
\end{equation}
\noindent
Where $\renvtwo(b) = \bool,  \renvtwo(s) = \stng{p(\nat)}, \renvtwo(r) = \stng{p(\nat)+1}$.
\noindent
\begin{lemma}
  \label{lemma:axsplitmerge}
  The axioms \eqref{eq:axsplit} and \eqref{eq:axmerge} are sound.
    \hfill\qed
\end{lemma}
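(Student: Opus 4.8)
The plan is to dispatch the two axioms separately. In each case I would first unfold the hypothesis into a pseudorandomness statement about a single variable together with one or two pointwise $\espl$-facts, and then transport that pseudorandomness along the poly-time bijection that splits a string into its head and tail (for \eqref{eq:axsplit}), respectively along the poly-time function $\concat$ that reassembles a string from such pieces (for \eqref{eq:axmerge}).

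For \eqref{eq:axsplit}, suppose $\stone=\tyf\rsone\renvtwo$ satisfies the antecedent. Since the three conjuncts carry the same environment, this unfolds to: (i)~$\sem{\rvarone}{}(\rsone)\ind\unif{\stng{\polyone(\nat)+1}}$; (ii)~$\parsem{b}{\nat}(\mem)=\parsem{\head(\rvarone)}{\nat}(\mem)$ for every $\nat$ and $\mem\in\supp(\rsone_\nat)$; and (iii)~$\parsem{\rvartwo}{\nat}(\mem)=\parsem{\tail(\rvarone)}{\nat}(\mem)$ on the same supports. To establish $\tyf{\ud b\sep\ud\rvartwo}\renvtwo$ I would exhibit as witnesses the stores $\stthree$ and $\stfour$ carrying, respectively, the uniform distribution on $\bool$ for $b$ and the uniform distribution on $\stng{\polyone(\nat)}$ for $\rvartwo$: both are efficiently samplable, both satisfy their conjunct by reflexivity of $\ind$, and $\stthree\comp\stfour$ is defined. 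Writing $\stone_{\renvtwo\to\{b,\rvartwo\}}$ for the marginal of $\stone$ on $\{b,\rvartwo\}$, the only remaining obligation, $\stthree\comp\stfour\ind\ext\stone$, reduces --- with this marginal as the $\ext$-witness --- to proving $\stone_{\renvtwo\to\{b,\rvartwo\}}\ind\stthree\comp\stfour$. This is the crux: by (ii) and (iii), on $\supp(\rsone_\nat)$ the restriction $\mem\mapsto\mem\restr{\{b,\rvartwo\}}$ agrees with $\mem\mapsto\{b\mapsto\head(\mem(\rvarone)),\,\rvartwo\mapsto\tail(\mem(\rvarone))\}$, so $\stone_{\renvtwo\to\{b,\rvartwo\}}$ is exactly the image of $\sem{\rvarone}{}(\rsone)$ under the bijection $\psi_\nat$ sending $x\in\BB^{\polyone(\nat)+1}$ to $\{b\mapsto\head(x),\,\rvartwo\mapsto\tail(x)\}$. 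Both $\psi_\nat$ and its inverse (which is $\concat$) are poly-time, and $\psi_\nat$ carries $\unif{\stng{\polyone(\nat)+1}}$ to the product of the two uniform distributions, i.e.\ to $\stthree\comp\stfour$ read as a distribution. Hence any PPT distinguisher for $\stone_{\renvtwo\to\{b,\rvartwo\}}$ against $\stthree\comp\stfour$, precomposed with $\psi_\nat$, yields a PPT distinguisher of the same advantage for $\sem{\rvarone}{}(\rsone)$ against $\unif{\stng{\polyone(\nat)+1}}$, contradicting (i). Morally this last step is \Cref{lemma:exprind} applied to the decomposition, and writing the reduction out in full is routine.

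For \eqref{eq:axmerge}, suppose $\stone=\tyf\rsone\renvtwo$ satisfies the antecedent (so that $\rvartwo$ has length one more than $\rvarone$). Unfolding, there are a store $\stthree$ on $\{\rvarone\}$ and a store $\stfour$ on $\{b\}$ with $\sr\stthree{\ud\rvarone}$, $\sr\stfour{\ud b}$ and $\stthree\comp\stfour\ind\ext\stone$, together with the pointwise identity $\parsem{\rvartwo}{\nat}(\mem)=\parsem{\concat(\rvarone,b)}{\nat}(\mem)$ on $\supp(\rsone_\nat)$. The goal is $\sem{\rvartwo}{}(\rsone)\ind\unif{\tyone}$ with $\tyone$ the type of $\rvartwo$. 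By axiom (W2) of \Cref{lemma:apax}, $\sem{\rvartwo}{}(\rsone)=\sem{\concat(\rvarone,b)}{}(\rsone)$, which by \Cref{lemma:projexpsem} equals $\sem{\concat(\rvarone,b)}{}(\rsone_{\renvtwo\to\{\rvarone,b\}})$. From $\stthree\comp\stfour\ind\ext\stone$ one reads off $\rsone_{\renvtwo\to\{\rvarone,b\}}\ind\stthree\comp\stfour$, so \Cref{lemma:exprind} applied to $\concat(\rvarone,b)$ gives $\sem{\rvartwo}{}(\rsone)\ind\sem{\concat(\rvarone,b)}{}(\stthree\comp\stfour)$. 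Identifying a store on a singleton environment with the distribution it carries, $\stthree\ind\unif{\tytwo}$ (with $\tytwo$ the type of $\rvarone$) and $\stfour\ind\unif{\bool}$; hence two applications of \Cref{lemma:indcompcomp} together with transitivity of $\ind$ (\Cref{lemma:indeq}) yield $\stthree\comp\stfour\ind\stfive$, with $\stfive$ the product of the two uniform stores. Applying $\concat(\rvarone,b)$ once more through \Cref{lemma:exprind}, observing that $\sem{\concat(\rvarone,b)}{}(\stfive)=\unif{\tyone}$ (each string of the relevant length receiving probability $2^{-\polyone(\nat)}\cdot 2^{-1}$), and invoking transitivity of $\ind$ one final time closes the argument.

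The step I expect to be the main obstacle is the crux of \eqref{eq:axsplit}: the separating conjunction asks not merely that $b$ and $\rvartwo$ be individually pseudorandom but that their \emph{joint} marginal be indistinguishable from the uniform distribution on pairs, and obtaining this forces one to push \emph{both} $\espl$-facts through carefully so that the claim collapses onto the single hypothesis $\ud\rvarone$ along the head/tail bijection. Everything else is bookkeeping with projections, the monad laws, and the already-established algebraic properties of $\ind$, $\comp$ and $\ext$.
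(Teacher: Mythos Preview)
Your proposal is correct and follows essentially the same route as the paper. The one noteworthy difference is in \eqref{eq:axsplit}: the paper takes the marginals $\stone_{\renvtwo\to\{b\}}$ and $\stone_{\renvtwo\to\{\rvartwo\}}$ as the witnessing stores for the separating conjunction, which forces it to prove separately (via two small reductions) that each marginal is pseudorandom before tackling the joint indistinguishability; you instead take the \emph{uniform} stores on $\{b\}$ and $\{\rvartwo\}$ as witnesses, making (C1) and (C2) trivial by reflexivity of $\ind$ and collapsing all the work into the single claim $\stone_{\renvtwo\to\{b,\rvartwo\}}\ind\stthree\comp\stfour$, which both proofs then dispatch by the same head/tail-bijection reduction against $\ud\rvarone$. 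This is a genuine (if modest) simplification. For \eqref{eq:axmerge} your argument and the paper's coincide: both replace the witnessing stores by uniform ones via compatibility (\Cref{lemma:indcompcomp}) and then push $\concat$ through \Cref{lemma:exprind}, the paper spelling out the distinguisher where you invoke the lemma.
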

\begin{proof}
  We start by showing the \eqref{eq:axsplit} axiom:
  \small
  \[
    \srenv\renvtwo{\tyf{\ud \rvarone\land \espl b {\head (\rvarone)}\land \espl \rvartwo {\tail (\rvarone)}}\renvtwo}
    {\tyf{\ud b \sep \ud \rvartwo}\renvtwo}    
  \]
  \normalsize
  The goal is to show:
  \begin{proofcases}
    \proofcase[C1] $\srenv {\renvtwo_b} {\stthree_b}  {\ud b}$
    \proofcase[C2] $\srenv {\renvtwo_\rvartwo} {\stthree_\rvartwo}  {\ud \rvartwo}$
    \proofcase[C3] $\stthree_b \comp \stthree_\rvartwo \ind \stthree_{b, \rvartwo}$
  \end{proofcases}
  Under the assumption:
  \[
    \srenv \renvtwo \stone {\ud \rvarone \land \espl b {\head(r)} \land \espl s {\tail(r)}},
  \]
  i.e.:
  \begin{proofcases}
    \proofcase[H1] $\srenv {\renvtwo} {\stthree}  {\ud \rvarone}$
    \proofcase[H2] $\srenv {\renvtwo} {\stthree}  {\espl b {\head (\rvarone)}}$
    \proofcase[H3] $\srenv {\renvtwo} {\stthree}  {\espl \rvartwo {\tail (\rvarone)}}$
  \end{proofcases}  
  We start by showing (C1).
  From (W1) and (W2), we deduce that $\srenv \renvtwo \stthree \indp b  {\head(r)}$.
  We start the following intermediate result:
  \[
    \srenv \renvtwo \stthree{\ud{\head(r)}}.
  \]
  Assume this
  to be false. This means that there is an adversary $\distone$ that 
  distinguishes samples of the distribution ensemble $\sem{head(r)}{}(\stthree)$
  from the samples of the distribution ensemble $\unif\bool$. Out of this distinguisher,
  it is possible to define a new distinguisher $\disttwo$ for the distribution ensembles
  $\sem \rvarone{}(\stthree_\rvarone)$ and $\unif{\stng\nat}$. 
  The adversary $\disttwo$ receives in input a sample $x$ and $1^\nat$, then 
  it computes $b'=\head(x)$ and returns $\distone(b', 1^\nat)$. Observe that
  if the input of $\distone$ follows the distribution $\sem r \nat(\stthree_\rvarone)$,
  then the output of $\disttwo$ is the same of $\distone$ when its input is sampled
  from $\sem{\head(r)}\nat(\stthree)$, because for \Cref{lemma:projexpsem}
  $\sem r \nat(\stthree_\rvarone)$=$\sem r \nat(\stthree)$; similarly, if the input is taken from
  $\unif{\stng\nat}$, then the output of $\disttwo$ is the same of $\distone$
  when the input is sampled from $\unif\bool$. In particular, the 
  advantage of $\disttwo$ is the same of $\distone$,
  but this contradicts (H1). So, it must be the case where
  $\srenv \renvtwo \stthree {\unif{\head(\rvarone)}}$, 
  For axiom (U1) and $\srenv \renvtwo \stthree  \indp b {\head(\rvarone)}$,
  we deduce $\srenv \renvtwo \stthree {\ud b}$. 
  From \Cref{lemma:projexpsem}, we deduce that
  $\sem b {} (\stthree_b)=\sem b {}(\stthree)$.
  From this observation, we conclude $ \srenv {\renvtwo_b} {\stthree_b}  {\ud b}$.
  The proof of (C2) is analogous to that of (C1).
  Now, we show (C3).
  Assume that $\distone$ is a distinguisher for $\stthree_b \comp \stthree_\rvartwo$
  and $\stthree_{b, \rvartwo}$ with non-negligible advantage. 
  Due to compatibility of $\comp$ and $\ind$,~\Cref{lemma:indcompcomp}, $\distone$
  must have non-negligible advantage also in distinguishing 
  $\stfour_b \comp \stfour_\rvartwo$ and $\stthree_{b, \rvartwo}$, where:
  \begin{itemize}
  \item $\stfour_b$ is the efficiently samplable distribution ensemble such that 
    for every $\nat$, $(\stfour_b)_\nat = \{\{b \mapsto 0\}^{\frac 1 2},
    \{b \mapsto 1\}^{\frac 1 2}\}$.
  \item $\stfour_\rvartwo$ is the efficiently samplable distribution ensemble such that 
    for every $\nat$,
    $
      (\stfour_\rvartwo)_\nat=\{\{\rvartwo \mapsto z\}^{2^{-n}},
      \text{ for } z \in \BB^\nat\}
    $. 
  \end{itemize}
  The distinguisher $\distone$ can be used to define a
  new adversary $\disttwo$ that distinguishes 
  the distribution $\sem \rvarone {}(\stthree)$ from $\unif{\stng{n+1}}$.
  The distinguisher $\disttwo$ takes in input 
  its sample and the security parameter $1^n$, then it produces the store:
  \[
    (m_x)_\nat = \{b \mapsto \head(x), \rvartwo \mapsto \tail(x)\}
  \]
  and returns $\distone((m_x)_n, 1^\nat)$.
  If $x$ is sampled according to $\sem \rvarone {}(\stthree)$, we
  observe that for every $\nat \in \NN$, the samples $(m_x)_n$
  follow exactly the distribution $(\stthree_{b, \rvartwo})_\nat$. 
  This holds because for every $\nat$, thanks to (H2) and (H3),
  we have:
  \small
  \[
    \stthree_\nat = \{\{\rvarone \mapsto z, b \mapsto \head(z), \rvartwo \mapsto tail(z)\}^{p_z}, \text{for } z \in  \BB^{\nat+1}\},
  \]
  \normalsize
  where $\{p_z\}_{z \in \BB^{n+1}}$ is an indexed family of non-negative reals 
  such that $\sum_{x\in \BB^{n+1}} p_x =1$. For this reason, we deduce that:
  \[
    \sem \rvarone\nat (\stthree) = \{\{z^{\nat+1}\}^{p_z}, \text{for  } z \in \BB^{n+1}\}
  \]
  So, if the sample $x$ is taken from the distribution $\sem\rvarone \nat(\stthree)$, 
  the distribution of the samples that are given in input
  to $\distone$, is:   $F(\sem \rvarone \nat (\stthree))$ where
  $F(z) = \{ b \mapsto head(z), \rvartwo \mapsto tail(z)\}$. Since $F$ is an injection, we have that: 
  \[
    F(\sem \rvarone \nat (\stthree)) = \{\{b \mapsto head(z), \rvartwo \mapsto tail(z)\}^{p_z}, \text{for } z \in \BB^{\nat+1}\},
  \]
  which is exactly  $(\stthree_{b, \rvartwo})_\nat$.
  For a similar reason, we observe that
  if $x$ is sampled according to $\unif{\stng{\nat+1}}$, the inputs
  of $\distone$ follow the distribution $(\stfour_b \comp \stfour_\rvartwo)_\nat$.
  This shows that $\disttwo$ succeeds in distinguishing $\sem \rvarone \stfour$
  from $\unif{\stng{\nat+1}}$ with non-negligible advantage, but this is
  a contradiction of the assumption $\srenv \renvtwo \stthree  \unif\rvarone$.
  This shows $\stfour_b \comp \stfour_\rvartwo \ind \stthree_{b, \rvartwo}$, which was our goal.
  Now, we show axiom \eqref{eq:axmerge}. The goal is to show:
  \[
    \srenv\renvtwo \stthree  {\ud\rvartwo}
  \]
  Under the assumption:
  \[
    \srenv \renvtwo \stthree {(\tyf{\ud \rvarone}{\renvtwo_\rvarone} \sep \tyf{\ud b}{\renvtwo_b}) \land \tyf {\espl\rvartwo {\concat(\rvarone, b)}}\renvtwo}
  \]
  Which means that there are $\stthree_1, \stthree_2$ such that:
  \begin{proofcases}
    \proofcase[H1] ${\srenv {\renvtwo_{\rvarone}} {\stthree_1} {\ud {\rvarone}}}$
    \proofcase[H2] ${\srenv {\renvtwo_{b}} {\stthree_2} {\ud {b}}}$
    \proofcase[H3] ${{\stthree_1}\comp {\stthree_2} \ind \stthree_{\rvarone, b}}$
  \end{proofcases}
  Let $\stfour_1, \stfour_2$ be the distribution ensembles defined as follows:
  \begin{align*}
    (\stfour_1)_\nat &= \{ \{\rvarone \mapsto z\}^{2^{-p(\nat)}}, \text{for }z \in \BB^\nat\} \\
    (\stfour_2)_\nat &= \{ \{b \mapsto 0\}^{\frac 1 2} , \{b \mapsto 1\}^{\frac 1 2}\}
  \end{align*}
  Observe that $\stfour_1 \ind \stthree_1$ and $\stfour_2 \ind \stthree_2$. We show this 
  result only in the case of $\stfour_1$ and $\stthree_1$, that of $\stfour_2$ and $\stthree_2$
  is analogous.

  Assume that there is a distinguisher $\distone$ for $\stfour_1$ and $\stthree_1$. We can
  use this distinguisher to define a new distinguisher $\disttwo$ for
  $\sem \rvarone {} (\stthree_1)$ and $\unif{\stng {p(\nat)+1}}$. This adversary
  takes in input a sample $x$ and $1^\nat$ and returns
  $\distone(\{\rvarone \mapsto x\}, 1^\nat)$.
  If $x$ is sampled uniformly, then the distribution of $\distone$'s inputs
  is $\{ \{ \rvarone \mapsto z\}^{2^{-\nat}}, \text{for }z \in \BB^\nat\}$, i.e. $\stfour_1$.
  If the input is sampled according to $\sem \rvarone \nat (\stthree_1)$, we observe that:
  \[
    (\stthree_1) = \{\{\rvarone \mapsto z\}^{p_z} \text{ for }z \in \BB^n\}
  \]
  where $\{p_z\}_{z \in \BB^n}$ is a family of non-negative reals such that
  $\sum_{z \in \BB^n} p_r =1$. We observe that:
  \[
    \sem \rvarone \nat (\stthree_1) = \{z^{p_z} \text{ for }z \in \BB^\nat\}
  \]
  because the function $\{k \mapsto z\} \mapsto z$ is a bijection. For the same reason, 
  the inputs of $\distone$ follow exactly the distribution $\stthree_1$. This shows that the 
  adversary $\disttwo$ has the same advantage of $\distone$, but this contradicts the 
  assumption $\srenv {\renvtwo_\rvarone}{\stthree_1}  \ud\rvarone$, so we conclude
  $\stthree_1 \ind \stfour_1$. Using a similar argument on $\stthree_2$ and $\stfour_2$, we
  can apply compatibility to show: $\stfour_1 \comp \stfour_2 \ind \stthree_{\rvarone, b}$ 
  From this result and \Cref{lemma:exprind}, we deduce that:
  \[
    \sem {\concat(\rvarone, b)}{}(\stthree_{\rvarone, b}) \ind \sem{\concat(\rvarone, b)}{}(\stfour_1 \comp \stfour_2)
  \]
  In particular, we observe that
  $\sem{\concat(\rvarone, b)}{}(\stfour_1 \comp \stfour_2) = \unif{\stng{p(\nat)+1}}$ 
  by definition of $\sem \concat {} (\rvarone, b){}$. This means that 
  $\sem {\concat(\rvarone, b)}{}(\stthree_{\rvarone, b}) \ind \unif{\stng{p(\nat)+1}}$, which entails:
  \[
    \srenv {\renvtwo_{\rvarone, b}}{\stthree_{\rvarone, b}}  {\ud{\concat(\rvarone, b)}}.
  \]
  From this result and \Cref{cor:restriction}, we deduce:
  \[
    \srenv \renvtwo \stthree \ud{\concat(\rvarone, b)}.
  \]
  From axioms (W1-2), we deduce $\srenv \renvtwo \stthree {\indp \rvartwo {\concat(\rvarone, b)}}$, and
  with an application of axiom (U1), we conclude $\srenv \renvtwo \stthree \unif \rvartwo$.
\end{proof}

We start by showing
the construction of the trees $\juone_i$ such that:
\[
  \hotj{\tyf{\fone_i^h}{\renvone_h}}{\binjudg {\renvone_h} {\prgone_i} }{\tyf{\fone_{i+1}^h}{\renvone_h}}
\]
where
\begin{align*} 
  \fone_i^h &=\tyf{\ud k}{k:\stng \nat}\sep\;\;\smashoperator[lr]{\Sep_{j=0}^{i-1}}\;\; \tyf{\ud{b_j}}{b_j:\bool}
\end{align*}
So, for $i=h+1$, this corresponds to showing the triple:
\[
  \hotj{\tyf{\fone_0^h}{\renvone_h}}{\binjudg {\renvone_h} {\prgone_h} }{\tyf{\fone_{h+1}^h}{\renvone_h}}.
  \tag{$*$}
\]
\noindent
Observe that $\fone_{h+1}^h$ corresponds to:
\[
\tyf{\ud k}{k:\stng \nat}\sep\;\;\smashoperator[lr]{\Sep_{j=0}^{h}}\;\; \tyf{\ud{b_j}}{b_j:\bool}.
\]
The triple
\[
  \hotj{\ud k }{\binjudg {\renvone_h} {\prgone_0;\ldots;\prgone_h} }{\tyf{\ud k}{k:\stng \nat}\sep{\Sep_{j=0}^{h}}\;\; \tyf{\ud{b_j}}{b_j:\bool}
  }
\]
can thus be obtained from $\juone_{h}$ with an application of the $\RWEAK$ rule and multiple applications of the $\RSEQ$ rule. In particular, with the $\RWEAK$ rule, we observe
\[
  \ud k \models^{\renvone_h} \ud k \sep {\tyf \top \varepsilon} = \fone^h_0
\]

With application of the rule $\SDASS$, we can obtain the triple
\begin{gather*}
  \vdash \vhot{\tyf{\ud k}{k:\stng \nat}\sep{\Sep_{j=0}^{h}}\;\; \tyf{\ud{b_j}}{b_j:\bool}
  }
  {\binjudg{\renvone_{h}} {\ass {\rvartwo_0} k}}
  {\tyf{\ud k \land \espl {\rvartwo_0} k}{k, \rvartwo_0:\stng \nat}\sep{\Sep_{j=0}^{h}}\;\; \tyf{\ud{b_j}}{b_j:\bool}
  }
\end{gather*}
By means of the $\RWEAK$ rule and axioms (W1), (W2) and (U1), we can rewrite the conclusion as follows:
$\tyf{\ud  {\rvartwo_0} }{k, \rvartwo_0:\stng \nat}\sep{\Sep_{j=0}^{h}}\;\; \tyf{\ud{b_j}}{b_j:\bool}$.
Finally, with \Cref{lemma:prekm} and \Cref{lemma:extcompcomp} we can simplify the conclusion one further step, to obtain:
\begin{gather*}
  \vdash \hot{\tyf{\ud k}{k:\stng \nat}\sep{\Sep_{j=0}^{h}}\;\; \tyf{\ud{b_j}}{b_j:\bool}
  }
  {\binjudg{\renvone_{h}} {\ass {\rvartwo_0} k}}
  {\left(\tyf{\ud {\rvartwo_0}}{\rvartwo_0:\stng \nat}\right)\sep\left({\Sep_{j=0}^{h}}\;\; \tyf{\ud{b_j}}{b_j:\bool}\right)
  }
\end{gather*}
Observe that this can be put in a chain of applications of the $\RSEQ$
rule to show:
\begin{equation*}
  \vdash \vhot {\ud k} {\binjudg{\renvone_h}\prgone_0;\ldots;\prgone_h;\ass{\rvartwo_0} k}
  {\left(\tyf{\ud {\rvartwo_0}}{\rvartwo_0:\stng \nat}\right)\sep\left({\Sep_{j=0}^{h}}\;\; \tyf{\ud{b_j}}{b_j:\bool}\right)
  }
\end{equation*}

We continue by defining the formula $\ftwo_i^h$ as follows:

\[
  \ftwo_i^h\defsym \left(\tyf{\ud {\rvartwo_i}}{\rvartwo_i:\stng {\nat+i}}\right)\sep\left({\Sep_{j=i}^{h}}\;\; \tyf{\ud{b_j}}{b_j:\bool}\right)
\]

Then, for every $\prgtwo_i$, for $0\le i\le h$, we build the
$\jutwo_i$ such that

\[
  \jutwo_i \deriv  \hotj {\ftwo_i^h} {\binjudg {\renvone_h} {\prgtwo_i}} {\ftwo_{i+1}^h}  
\]

These trees are descried in \Cref{fig:jutwoi} and can be combined by means of multiple applications of the $\RSEQ$ rule, this produces as conclusion the formula ${\ftwo_{h+1}^h}$, i.e. ${\ud {\rvartwo_{h+1}}{}}$. 

All these trees can
be combined with the $\RSEQ$ rule to prove the triple
\[
  \hot{\tyf{\ud {\rvartwo_0}}{\rvartwo_0:\stng {\nat}}\sep\left({\Sep_{j=i}^{h}}\;\; \tyf{\ud{b_j}}{b_j:\bool}\right)}{\binjudg {\renvone_h} {\seq{\seq{\prgtwo_{0}}\ldots}{\prgtwo_{h}}
    }
  }
  {
    \tyf{\ud{\rvartwo_{h+1}}}{\renvone_h}
  }
\]
\noindent
This triple is obtained with multiple applications of the $\RSEQ$ rule
to the family of trees $\jutwo_0, \ldots, \jutwo_h$.
%
%
The claim is shown with of an application of the
$\RSEQ$ rule to the last triple and $(*)$. 

\begin{landscape}
  
\begin{figure}[t]
  The tree $\juone_i$ is the following one:
  \[
    \infer[\RWEAK]{\hot{{(\tyf{\ud k}{k:\stng \nat} \sep\;\;\smashoperator[lr]{\Sep_{j=0}^{i-1}}\;\; \tyf{\ud{b_j}}{b_j:\bool})}}
      {\binjudg {\renvone_h} {\prgone_i}}{{\tyf{\ud k}{k:\stng \nat}\sep\smashoperator[lr]{\Sep_{j=0}^{i}}\;\; \tyf{\ud{b_j}}{b_j:\bool}}}
    }
    {
      \infer[\RSEQ]{\hot{{(\tyf{\ud k}{k:\stng \nat} \sep\;\;\smashoperator[lr]{\Sep_{j=0}^{i-1}}\;\; \tyf{\ud{b_j}}{b_j:\bool})}}
        {\binjudg {\renvone_h} {\prgone_i}}{{\tyf{\ud{\rvarone_i}\land \espl {b_i} {\head(\rvarone_i)} \land \espl {k} {\tail(\rvarone_i)}}{k:\stng \nat, \rvarone_i:\stng{\nat+1}, b_i:\BB}\sep\smashoperator[lr]{\Sep_{j=0}^{i-1}}\;\; \tyf{\ud{b_j}}{b_j:\bool}}}
      }
      {
        \juone_i^1
        &
        \juone_i^2
        &
        \juone_i^3
      }
      &
      \deduce{\deduce{\text{and \Cref{rem:conjasscomm}}}{\deduce{\text{\Cref{lemma:extcompcomp}}}{\text{\Cref{lemma:prekm}}}}}{\eqref{eq:axsplit}}
    }
  \]

  The trees $\juone_i^1, \juone_i^2, \juone_i^3$ are respectively the following ones:

  \[
    \infer[\RWEAK]{\hot{{(\tyf{\ud k}{k:\stng \nat} \sep\;\;\smashoperator[lr]{\Sep_{j=0}^{i-1}}\;\; \tyf{\ud{b_j}}{b_j:\bool})}}{\binjudg {\renvone_h} {\ass {\rvarone_i}{g(k)}}} {{\tyf{\ud{\rvarone_i}}{k:\stng \nat, \rvarone_i:\stng{\nat+1}}\sep \smashoperator[lr]{\Sep_{j=0}^{i-1}}\;\; \tyf{\ud{b_j}}{b_j:\bool}}}}
    {
      \infer[\SDASS]{\hot{{(\tyf{\ud k}{k:\stng \nat} \sep\;\;\smashoperator[lr]{\Sep_{j=0}^{i-1}}\;\; \tyf{\ud{b_j}}{b_j:\bool})}}{\binjudg {\renvone_h} {\ass {\rvarone_i}{g(k)}}}  {{\tyf{{\ud k} \land \espl {\rvarone_i} {g(k)}}{k:\stng \nat, \rvarone_i:\stng{\nat+1}}\sep \smashoperator[lr]{\Sep_{j=0}^{i-1}}\;\; \tyf{\ud{b_j}}{b_j:\bool}}}}
      {
        \ternjudg {k:\stng \nat} {g(k)} {\stng {\nat+1}}
        &
        \rvarone_i \notin \fv{g(k)}
        &
        {\smashoperator[lr]{\Sep_{j=0}^{i-1}}\;\; \tyf{\ud{b_j}}{b_j:\bool}}\in \fset
        &
        \rvarone_i \notin \dom (\{k:\stng \nat\})
      }
      &
      \text{\eqref{eq:axpotp}, (W1), (W2) and (U1)}
    }
  \]

  \[
    \infer[\RWEAK]{\hot{{\tyf{\ud{\rvarone_i}}{k:\stng \nat, \rvarone_i:\stng{\nat+1}}\sep \smashoperator[lr]{\Sep_{j=0}^{i-1}}\;\; \tyf{\ud{b_j}}{b_j:\bool}}} {\binjudg {\renvone_h} {\ass {b_i}{\head(\rvarone_i)}}} {{\tyf{\ud{\rvarone_i}\land \espl {b_i} {\head(\rvarone_i)}}{\rvarone_i:\stng{\nat+1}, b_i:\BB}\sep\smashoperator[lr]{\Sep_{j=0}^{i-1}}\;\; \tyf{\ud{b_j}}{b_j:\bool}}}}
    {
      \infer[\SDASS]{\hot{{\tyf{\ud{\rvarone_i}}{k:\stng \nat, \rvarone_i:\stng{\nat+1}}\sep \smashoperator[lr]{\Sep_{j=0}^{i-1}}\;\; \tyf{\ud{b_j}}{b_j:\bool}}} {\binjudg {\renvone_h} {\ass {b_i}{\head(\rvarone_i)}}} {{\tyf{\ud{\rvarone_i}\land \espl {b_i} {\head(\rvarone_i)}}{k:\stng \nat, \rvarone_i:\stng{\nat+1}, b_i:\BB}\sep\smashoperator[lr]{\Sep_{j=0}^{i-1}}\;\; \tyf{\ud{b_j}}{b_j:\bool}}}}
      {
        \ternjudg {k:\stng \nat, \rvarone_i:\stng{\nat+1}} {\head(\rvarone_i)} \BB
        &
        b_i \notin \fv{\head(\rvarone_i)}
        &
        \smashoperator[lr]{\Sep_{j=0}^{i-1}}\;\; \tyf{\ud{b_j}}{b_j:\bool}\in \fset
        &
        b_i \notin \dom (\{k:\stng \nat,  \rvarone_i:\stng{\nat+1}\})
      }
    }
  \]

  \[
    \infer[\SDASS]{\hot{{\tyf{\ud{\rvarone_i}}{k:\stng \nat, \rvarone_i:\stng{\nat+1}}\sep \smashoperator[lr]{\Sep_{j=0}^{i-1}}\;\; \tyf{\ud{b_j}}{b_j:\bool}}}
      {\binjudg {\renvone_h} {\ass {k}{\tail(\rvarone_i)}}}{{\tyf{\ud{\rvarone_i}\land \espl {b_i} {\head(\rvarone_i)} \land \espl {k} {\tail(\rvarone_i)}}{k:\stng \nat, \rvarone_i:\stng{\nat+1}, b_i:\BB}\sep\smashoperator[lr]{\Sep_{j=0}^{i-1}}\;\; \tyf{\ud{b_j}}{b_j:\bool}}}
    }
    {
      \ternjudg {\rvarone_i:\stng{\nat+1}, b_i:\BB} {\tail(\rvarone_i)} {\stng \nat}
      &
      k \notin \fv{\tail(\rvarone_i)}
      &
      \smashoperator[lr]{\Sep_{j=0}^{i-1}}\;\; \tyf{\ud{b_j}}{b_j:\bool} \in \fset
      &
      k \notin \dom (\{\rvarone_i:\stng{\nat+1}, b_i:\BB\})
    }
  \]

  \label{fig:jutwo_i}  \caption{Construction of the family of trees $\juone_i$ for $i \in \NN$.}
  \label{fig:juonei}
\end{figure}

\begin{figure}[t]
  \resizebox{25cm}{!}{\(
    \infer[\RWEAK]{\hot{\tyf{\ud {\rvartwo_i}}{\rvartwo_i:\stng {\nat+i}}\sep\left({\Sep_{j=i}^{h}}\;\; \tyf{\ud{b_j}}{b_j:\bool}\right)} {\binjudg {\renvone_h} {\prgtwo_i}} {\tyf{\ud {\rvartwo_{i+1}}}{\rvartwo_{i+1}:\stng {\nat+i+1}}\sep\left({\Sep_{j=i+1}^{h}}\;\; \tyf{\ud{b_j}}{b_j:\bool}\right)}}
    {
      \infer[\RWEAK]{\hot{\tyf{\ud {\rvartwo_i}\sep \ud {b_i}}{\rvartwo_i:\stng {\nat+i}, b_i:\BB}\sep\left({\Sep_{j={i+1}}^{h}}\;\; \tyf{\ud{b_j}}{b_j:\bool}\right)} {\binjudg {\renvone_h} {\prgtwo_i}} {\tyf{\ud {\rvartwo_{i+1}}}{\rvartwo_{i+1}:\stng {\nat+i+1}}\sep\left({\Sep_{j=i+1}^{h}}\;\; \tyf{\ud{b_j}}{b_j:\bool}\right)}}
      {        
        \infer[\SDASS]{\hot{\tyf{\ud {\rvartwo_i}\sep \ud {b_i}}{\rvartwo_i:\stng {\nat+i}, b_i:\BB}\sep\left({\Sep_{j={i+1}}^{h}}\;\; \tyf{\ud{b_j}}{b_j:\bool}\right)} {\binjudg {\renvone_h} {\prgtwo_i}} {\tyf{\ud {\rvartwo_i}\sep \ud {b_i}\land \espl {\rvartwo_{i+1} }{\concat ({\rvartwo_i}, {b_i})}}{\rvartwo_{i+1}:\stng {\nat+i+1},\rvartwo_i:\stng {\nat+i}, b_i:\BB}\sep\left({\Sep_{j=i+1}^{h}}\;\; \tyf{\ud{b_j}}{b_j:\bool}\right)}}
        {
          \ternjudg {\rvartwo_{i}:\stng{\nat+i}, b_i:\BB} {\concat({\rvartwo_i}, {b_i})} {\stng {\nat+i+1}}
          &
          \rvartwo_{\nat+i+1} \notin \fv{{\concat({\rvartwo_i}, {b_i})}}
          &
          {\Sep_{j={i+1}}^{h}}\;\; \tyf{\ud{b_j}}{b_j:\bool} \in \fset
          &
          \rvartwo_{i+1} \notin \dom (\{\rvartwo_i:\stng{\nat+i}, b_i:\BB\})
        }
        &
        \deduce{\text{ and \Cref{lemma:prekm}.}}{\deduce{\text{\Cref{lemma:extcompcomp}}}{\text{\eqref{eq:axmerge}}}}
      }
      &
      \text{\Cref{rem:conjasscomm}}
    }
  \)}
  \caption{Construction of the family of trees $\jutwo_i$ for $i \in \NN$. }
  \label{fig:jutwoi}
\end{figure}
\end{landscape}


\section{Characterizing Cryptographic Security Through Computational Independence}
\label{sec:compindep}
The study of computational independence and its cryptographic applications 
originates with \cite{Fay14}\footnote{The approach of \cite{Fay14} was 
motivated by Yao's characterization of encryption security using a 
computational notion of mutual information \cite{Yao}}, which we take as our 
starting point. We note however, that the definition given there does not 
directly coincide with the notion that we use to interpret separating 
conjunction in \CPSL. In this section, we will give a definition that is more 
closely related to the approach of this paper, and show that with respect to 
efficiently samplable distributions it is indeed equivalent to that given in 
\cite{Fay14}. Then we use this definition to give a characterization of 
encryption security in terms of computational independence. In this way,
we show that the forms 
of logical and structural reasoning provided by \CPSL are consistent 
with the standard approach of provable security in computational cryptography.

\subsection{Defining and Characterizing Computational Independence}
\label{sec:compindchar}

\noindent
We begin by recalling some basic definitions for this setting.
\begin{definition} A \emph{random variable} $\rvone$ over a finite set $X$ denotes a random process taking values in $X$ according to some distribution $\mathcal{D}_\rvone$. We write $\Pr[\rvone=x]$ to denote $\mathcal{D}_\rvone(x)$. Given a family $\{X_n\}_{n\in \omega}$ of finite sets, a \emph{random variable ensemble} is a family $\{\rvone_n\}_{n\in\omega}$, where $\rvone_n$ is a random variable over $X_n$. Let $\rvone,\rvtwo$ be jointly distributed random variables over $X,Y$. $\rvone$ and $\rvtwo$ are \emph{(statistically) independent} if their joint distribution function satisfies: $\Pr[\rvone=x \wedge \rvtwo=y]=\Pr[\rvone=x]\cdot\Pr[\rvone=y]$. The joint distribution function $\mathcal{D}_{\rvone,\rvtwo}$ of $\rvone,\rvtwo$ induces \emph{marginal distribution functions:} $\mathcal{D}_\rvone$ is defined by $\mathcal{D}_\rvone(x)=\sum_{y \in Y}\mathcal{D}_{\rvone,\rvtwo}(x,y)$ and $\mathcal{D}_\rvtwo$ is defined similarly. We write $\rvone \tensprod \rvtwo$ to denote the random variable obtained by sampling pairs of values independently from the random variables with these respective marginal distribution functions.
We note that these definitions may be generalized in the natural way from pairs 
of random variables to arbitrary tuples, and also lift to random variable 
ensembles. A random variable is \emph{efficiently samplable} if its 
distribution function is efficiently samplable. Random variable ensembles 
$\{\rvone_n\}_{n\in\omega},\{\rvtwo_n\}_{n\in\omega}$ where $\rvone_n$ has 
distribution $\distrone_n$ and $\rvtwo_n$ has distribution $\distrtwo_n$ are 
\emph{computationally indistinguishable} if $\{\distrone_n\}_{n\in\omega} \ind 
\{\distrtwo_n\}_{n\in\omega}$. In this case we write 
$\{\rvone_n\}_{n\in\omega}\ind\{\rvtwo_n\}_{n\in\omega}$.
\hfill\qed
\end{definition}

To simplify notation, we will write $\{\rvone_n\}$ as shorthand for 
$\{\rvone_n\}_{n\in\omega}$, and write $\{\rvone_n\}, \{\rvtwo_n\},...$ to denote random variable ensembles. We note that the notation $\tensprod$ is 
consistent with that introduced for distributions. In particular, it is not 
hard to verify that $\mathcal{D}_{\rvone \tensprod \rvtwo}=\mathcal{D}_\rvone 
\tensprod \mathcal{D}_\rvtwo$. What we are calling a random variable (ensemble) 
is often called a distribution (ensemble). Indeed, in the discrete setting, 
the notions of distribution, random variable, and probability mass function are 
often used synonymously. Here we are being more careful.

It is now time to rephrase one of the key concepts of this paper in terms of 
random variables:
\begin{definition}
 $\{\rvone_n\}$, $\{\rvtwo_n\}$ are \emph{computationally independent} if there exist statistically independent ensembles  $\{\rvone'_n\}$, $\{\rvtwo'_n\}$ such that $(\{\rvone_n\},\{\rvtwo_n\})\ind(\{\rvone'_n\},\{\rvtwo'_n\})$.
 \hfill\qed
\end{definition}

This definition, from \cite{Fay14}, was originally given with respect to both uniform and 
nonuniform adversaries. We restrict attention to uniform adversaries. Moreover, 
unless otherwise noted we will assume that all random variable ensembles are 
efficiently samplable. This includes the witnessing ensembles in the definition 
of computational independence. 

The alternate characterization mentioned above is given by the following 
result, which states that computational independence can be characterized, 
again through indistinguishability, but referring to tensor products, instead.
\begin{theorem}
\label{thm:compindepeq}
$\{\rvone_n\}$, $\{\rvtwo_n\}$ are computationally independent if and only if $(\{\rvone_n\},\{\rvtwo_n\})\ind \{\rvone_n\}\tensprod \{\rvtwo_n\}$.
\hfill\qed
\end{theorem}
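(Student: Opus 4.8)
The plan is to prove the two directions separately, with the "only if" direction being essentially immediate and the "if" direction requiring a bit of care about efficient samplability. For the forward direction, suppose $\{\rvone_n\}$, $\{\rvtwo_n\}$ are computationally independent, witnessed by statistically independent ensembles $\{\rvone'_n\}$, $\{\rvtwo'_n\}$ with $(\{\rvone_n\},\{\rvtwo_n\})\ind(\{\rvone'_n\},\{\rvtwo'_n\})$. Since $\rvone'_n$ and $\rvtwo'_n$ are statistically independent, their joint distribution is exactly $\mathcal{D}_{\rvone'_n}\tensprod\mathcal{D}_{\rvtwo'_n}=\mathcal{D}_{\rvone'_n\tensprod\rvtwo'_n}$, so $(\{\rvone'_n\},\{\rvtwo'_n\})$ and $\{\rvone'_n\}\tensprod\{\rvtwo'_n\}$ are literally the same ensemble. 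It then remains to show $\{\rvone'_n\}\tensprod\{\rvtwo'_n\}\ind\{\rvone_n\}\tensprod\{\rvtwo_n\}$; this follows because the marginals of $(\{\rvone_n\},\{\rvtwo_n\})$ along the two coordinates are $\{\rvone_n\}$ and $\{\rvtwo_n\}$ respectively (and those of the primed pair are $\{\rvone'_n\}$, $\{\rvtwo'_n\}$), and an indistinguishability of joint ensembles is preserved under applying the (polytime) operation "sample the pair, keep only the first coordinate", so $\{\rvone_n\}\ind\{\rvone'_n\}$ and $\{\rvtwo_n\}\ind\{\rvtwo'_n\}$; then one concludes $\{\rvone_n\}\tensprod\{\rvtwo_n\}\ind\{\rvone'_n\}\tensprod\{\rvtwo'_n\}$ by a standard hybrid argument over the two independent components, exactly in the style of \Cref{lemma:indcompcomp}, using that the witnessing ensembles are efficiently samplable. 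Chaining the two indistinguishabilities through the transitivity of $\ind$ (\Cref{lemma:indeq}) gives $(\{\rvone_n\},\{\rvtwo_n\})\ind\{\rvone_n\}\tensprod\{\rvtwo_n\}$.

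For the converse, suppose $(\{\rvone_n\},\{\rvtwo_n\})\ind\{\rvone_n\}\tensprod\{\rvtwo_n\}$. The natural candidates for the witnessing statistically independent ensembles are $\{\rvone'_n\}\defsym\{\rvone_n\}$ and $\{\rvtwo'_n\}\defsym\{\rvtwo_n\}$, taken with their respective marginal distributions but sampled \emph{independently} — that is, the witnessing pair \emph{is} $\{\rvone_n\}\tensprod\{\rvtwo_n\}$. This pair is by construction statistically independent (its joint distribution is the product of the two marginals), and the hypothesis is precisely $(\{\rvone_n\},\{\rvtwo_n\})\ind(\{\rvone'_n\},\{\rvtwo'_n\})$. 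The only thing to check is that $\{\rvone'_n\}$ and $\{\rvtwo'_n\}$ remain efficiently samplable: a sampler for $\rvone_n$ is obtained from a sampler for the joint ensemble $(\rvone_n,\rvtwo_n)$ by projecting onto the first coordinate (and symmetrically for $\rvtwo_n$), and a sampler for $\rvone'_n\tensprod\rvtwo'_n$ runs these two projected samplers on independent randomness, all in polynomial time. This settles the "if" direction.

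I expect the main obstacle to be bookkeeping rather than anything conceptually deep: making precise that indistinguishability of \emph{joint} ensembles descends to the marginals (a distinguisher for the marginal is a distinguisher for the joint that simply ignores one coordinate), and that the product of two separately-indistinguishable efficiently samplable ensembles is indistinguishable (the two-step hybrid, where in the intermediate hybrid one component is replaced while the other is sampled from its efficient sampler — this is exactly the reasoning of \Cref{lemma:indcompcomp}). One must be slightly careful that the witnessing ensembles in the definition of computational independence are required to be efficiently samplable, as stated in the paragraph following that definition, so the samplability checks in both directions are not optional; but they are all routine. No subtlety arises from the distinction between random variables and distributions here, since $\mathcal{D}_{\rvone\tensprod\rvtwo}=\mathcal{D}_\rvone\tensprod\mathcal{D}_\rvtwo$ as already noted in the text.
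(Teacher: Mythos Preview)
Your proposal is correct and follows essentially the same route as the paper: for the forward direction you pass from joint indistinguishability to marginal indistinguishability (the paper packages this as Lemma~\ref{eq1}), then to indistinguishability of the tensor products via a two-step hybrid (the paper's Lemmas~\ref{eq2} and~\ref{eq3}, which are the random-variable analogues of the \Cref{lemma:indcompcomp} you cite), identify $(\{\rvone'_n\},\{\rvtwo'_n\})$ with $\{\rvone'_n\}\tensprod\{\rvtwo'_n\}$ by statistical independence, and chain through transitivity; for the converse you take independently sampled copies of the marginals, checking their efficient samplability by projecting the joint sampler (the paper's Proposition~\ref{marg}). The only cosmetic slip is that your opening sentence swaps which direction is ``essentially immediate'' --- it is the converse that is the one-liner once samplability is observed --- but the arguments themselves are in order.
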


First of all, one can observe that efficient 
samplability is preserved by taking projections:
\begin{proposition}
\label{marg}
Suppose that jointly distributed $(\{\rvone_n\},\{\rvtwo_n\})$ are efficiently samplable.
Then $\{\rvone_n\}$ and $\{\rvtwo_n\}$  are efficiently samplable.
\hfill\qed
\end{proposition}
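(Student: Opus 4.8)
The plan is to unwind the definition of efficient samplability and to produce explicit samplers for the two marginals by post-composing a joint sampler with a projection. By hypothesis $\es{\{\mathcal D_{\rvone_\nat,\rvtwo_\nat}\}_{\nat \in \NN}}$ holds, so there is a probabilistic polytime algorithm $M$ that on input $1^\nat$ outputs a pair $(x,y)$ distributed according to $\mathcal D_{\rvone_\nat,\rvtwo_\nat}$. First I would define $M_1$ to be the algorithm that on input $1^\nat$ runs $M(1^\nat)$, obtains the pair $(x,y)$, and returns its first component $x$; symmetrically, $M_2$ returns $y$. Since $M$ runs in polynomial time and recovering a component of the returned pair costs only time linear in the length of the sample, both $M_1$ and $M_2$ are probabilistic polytime algorithms.

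It then remains to check that $M_1$ (respectively $M_2$) samples the intended marginal. For every $\nat$ and every $x$ we have
\[
  \Pr[M_1(1^\nat)=x] = \sum_y \Pr[M(1^\nat)=(x,y)] = \sum_y \mathcal D_{\rvone_\nat,\rvtwo_\nat}(x,y) = \mathcal D_{\rvone_\nat}(x),
\]
where the last equality is exactly the definition of the marginal distribution function recalled above; the computation for $M_2$ is identical, using the marginal over the first coordinate. Hence $\es{\{\mathcal D_{\rvone_\nat}\}_{\nat}}$ and $\es{\{\mathcal D_{\rvtwo_\nat}\}_{\nat}}$, which is the claim.

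There is essentially no obstacle here: the statement is a routine consequence of closure of the class of PPT algorithms under composition with a trivial projection. The only point worth a remark is that the pair returned by the joint sampler must be presented in a format from which the two components can be extracted in polynomial time — automatic for the standard binary encoding of tuples — and the rest is just bookkeeping with the definition of marginals.
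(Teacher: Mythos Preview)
Your proof is correct and is exactly the standard argument one would expect. The paper does not give a proof of this proposition at all (it is stated with a terminal $\qed$ and left as immediate), so your write-up simply spells out the routine justification the paper takes for granted.
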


Then, one can easily prove that indistinguishability between the 
corresponding 
components 
marginals of a product random variable follows from the indistinguishability 
between the two products themselves:
\begin{lemma}
\label{eq1}
For any (not necessarily efficiently samplable)
 $\{\rvone_n\}$, $\{\rvtwo_n\}$, $\{\rvone'_n\}$ and $\{\rvtwo'_n\}$,
 if $(\{\rvone_n\},\{\rvtwo_n\})\ind(\{\rvone'_n\},\{\rvtwo'_n\})$ , then $\{\rvone_n\} \ind\{\rvone'_n\}$  and  $\{\rvtwo_n\}\ind\{\rvtwo'_n\}$.
    \hfill\qed
\end{lemma}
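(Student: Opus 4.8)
The plan is a direct reduction, namely the standard fact that discarding a coordinate preserves computational indistinguishability. To show $\{\rvone_n\} \ind \{\rvone'_n\}$, I would fix an arbitrary probabilistic polynomial-time distinguisher $\distone$ and define from it a distinguisher $\disttwo$ which, on input $(1^n,(x,y))$, simply runs $\distone(1^n,x)$ and returns its output, ignoring $y$. Since $\distone$ is polynomial-time, so is $\disttwo$. The one observation to record is that discarding the second coordinate of a sample drawn from a jointly distributed pair yields a sample distributed according to the first marginal, and by definition the first marginal of $(\rvone_n,\rvtwo_n)$ is $\rvone_n$ (and likewise the first marginal of $(\rvone'_n,\rvtwo'_n)$ is $\rvone'_n$). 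Hence
\[
  \Prob[(x,y)\leftarrow (\rvone_n,\rvtwo_n)]\left[\disttwo(1^n,(x,y))=1\right] = \Prob[x\leftarrow \rvone_n]\left[\distone(1^n,x)=1\right],
\]
and the analogous identity holds for the primed ensembles.

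Subtracting the two identities shows that the advantage of $\disttwo$ in distinguishing $(\{\rvone_n\},\{\rvtwo_n\})$ from $(\{\rvone'_n\},\{\rvtwo'_n\})$ is \emph{equal} to the advantage of $\distone$ in distinguishing $\{\rvone_n\}$ from $\{\rvone'_n\}$, for every $n$. By the hypothesis $(\{\rvone_n\},\{\rvtwo_n\})\ind(\{\rvone'_n\},\{\rvtwo'_n\})$ there is a negligible function $\negl$ bounding the former advantage at every $n$; the same $\negl$ therefore bounds the latter. As $\distone$ was arbitrary, this is exactly $\{\rvone_n\} \ind \{\rvone'_n\}$. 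The proof that $\{\rvtwo_n\}\ind\{\rvtwo'_n\}$ is symmetric: $\disttwo$ instead runs $\distone$ on the second component and discards the first, and the second marginal of a jointly distributed pair plays the role of the first.

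I do not expect any real obstacle here. The only subtlety worth flagging — and the reason the statement is phrased for arbitrary, not necessarily efficiently samplable, ensembles — is that the reduction transforms only the distinguisher, never the underlying distributions, so no samplability assumption enters the argument. This lemma, together with \Cref{marg} and the converse (easy) direction, is one of the ingredients of \Cref{thm:compindepeq}.
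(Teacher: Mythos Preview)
Your proposal is correct and is essentially the same argument as the paper's: both build the obvious distinguisher $\disttwo(x,y)=\distone(x)$ that ignores one coordinate and observe that its advantage on the joint ensembles equals $\distone$'s advantage on the marginals. The paper states this in contrapositive form in two lines, while you unfold the same reduction directly and add the remark about why samplability is unnecessary, but the substance is identical.
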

\begin{proof}
Suppose $\distone$ is a probabilistic poly-time (PPT) distinguisher that distinguishes $\{\rvone_n\}$ and $\{\rvone'_n\}$ with advantage $\epsilon(n)$ and define the distinguisher $\distone'$ by $\distone'(x,y)=\distone(x)$. Then  $\distone'$  is a PPT distinguisher that distinguishes  $(\{\rvone_n\},\{\rvtwo_n\})$ and $(\{\rvone'_n\},\{\rvtwo'_n\})$  with advantage $\epsilon(n)$.
\end{proof}

\begin{lemma}
\label{eq2}
Suppose $\{\rvone_n\}$, $\{\rvone'_n\}$ and $\{\rvtwo_n\}$, are  ensembles such that  $\{\rvone_n\}\ind\{\rvone'_n\}$. Then $\{\rvone_n\}\tensprod\{\rvtwo_n\}\ind \{\rvone'_n\}\tensprod\{\rvtwo_n\}$.
\end{lemma}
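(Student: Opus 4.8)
\textbf{Proof plan for Lemma \ref{eq2}.}
The plan is to prove the contrapositive via a straightforward reduction: any PPT distinguisher $\disttwo$ that separates $\{\rvone_n\}\tensprod\{\rvtwo_n\}$ from $\{\rvone'_n\}\tensprod\{\rvtwo_n\}$ with non-negligible advantage can be transformed into a PPT distinguisher $\distone$ that separates $\{\rvone_n\}$ from $\{\rvone'_n\}$ with the \emph{same} advantage, contradicting $\{\rvone_n\}\ind\{\rvone'_n\}$. The key hypothesis making the reduction go through is that all ensembles involved are efficiently samplable; in particular, $\{\rvtwo_n\}$ has a PPT sampler, so the reduction can generate the second coordinate on its own.

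Concretely, I would fix a PPT sampler $\samplone$ for $\{\rvtwo_n\}$ (this exists by our standing assumption of efficient samplability), and define $\distone(1^n, x)$ to run $\samplone(1^n)$ to obtain a sample $y$, form the pair $(x,y)$, and output $\disttwo(1^n,(x,y))$. Since $\samplone$ is polytime and $\disttwo$ is polytime, $\distone$ is polytime. For the advantage calculation, observe that when $x$ is drawn from $\rvone_n$, the pair $(x,y)$ fed to $\disttwo$ is distributed exactly as $\rvone_n\tensprod\rvtwo_n$ (because $y$ is drawn from $\rvtwo_n$ independently of $x$), and likewise when $x$ is drawn from $\rvone'_n$ the pair is distributed as $\rvone'_n\tensprod\rvtwo_n$; hence
\[
\Bigl|\Prob[x\leftarrow \rvone_n]\left[\distone(1^n,x)=1\right]-\Prob[x\leftarrow \rvone'_n]\left[\distone(1^n,x)=1\right]\Bigr|
= \Bigl|\Prob[z\leftarrow \rvone_n\tensprod\rvtwo_n]\left[\disttwo(1^n,z)=1\right]-\Prob[z\leftarrow \rvone'_n\tensprod\rvtwo_n]\left[\disttwo(1^n,z)=1\right]\Bigr|,
\]
so the advantage of $\distone$ equals that of $\disttwo$. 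If the latter is non-negligible, so is the former, contradicting the hypothesis $\{\rvone_n\}\ind\{\rvone'_n\}$; therefore no such $\disttwo$ exists and $\{\rvone_n\}\tensprod\{\rvtwo_n\}\ind \{\rvone'_n\}\tensprod\{\rvtwo_n\}$.

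There is essentially no hard obstacle here — the lemma is a routine hybrid/reduction argument of exactly the same flavor as \Cref{lemma:exprind}, \Cref{lemma:prgind}, and \Cref{lemma:indcompcomp} (the compatibility of $\ind$ with $\comp$ on stores), which is morally the same statement phrased over stores rather than random variable ensembles. The only point requiring a word of care is the (implicit) assumption that $\{\rvtwo_n\}$ is efficiently samplable, which is needed so that $\distone$ can produce the second component within its own running-time budget; this is covered by the blanket convention, stated just after the definition of computational independence, that all random variable ensembles under consideration are efficiently samplable. One could alternatively appeal directly to \Cref{lemma:indcompcomp} by transporting the ensembles to stores over a two-variable environment, but giving the short self-contained reduction is cleaner and keeps this section independent of the machinery of \Cref{sec:logic}.
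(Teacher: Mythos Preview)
Your proposal is correct and matches the paper's proof essentially verbatim: the paper also proves the contrapositive by taking a PPT sampler $\mathcal{S}$ for $\{\rvtwo_n\}$ and defining $\distone'(x)=\distone(x,\mathcal{S}(1^{|x|}))$, observing that $\distone'$ has the same advantage as $\distone$. Your remark about the efficient samplability of $\{\rvtwo_n\}$ being the key enabling hypothesis is exactly the point the paper relies on.
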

\begin{proof}
Suppose  $\{\rvone_n\}\tensprod\{\rvtwo_n\}\not\ind \{\rvone'_n\}\tensprod\{\rvtwo_n\}$, as witnessed by the PPT distinguisher $\distone$ with advantage $\epsilon(n)$ Let $\mathcal{S}$ be a PPT sampler for  $\{\rvtwo_n\}$. Define $\distone'$ by
$\distone'(x)=\distone(x,\mathcal{S}(1^{|x|}))$. Then  $\distone'$ is PPT, and it distinguishes $\{\rvone_n\}$ from $\{\rvone'_n\}$ with advantage $\epsilon(n)$. 
\end{proof}

Using \Cref{eq2}, we can show that two pairs of indistinguishable random 
variables ensembles can be combined through tensor
product preserving indistinguishability:
\begin{lemma}
\label{eq3}
If $\{\rvone_n\}\ind\{\rvone'_n\}$ and 
$\{\rvtwo_n\}\ind\{\rvtwo'_n\}$, then $\{\rvone_n\}\tensprod\{\rvtwo_n\}\ind 
\{\rvone'_n\}\tensprod\{\rvtwo'_n\}$.
    \hfill\qed
\end{lemma}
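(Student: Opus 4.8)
The plan is to derive Lemma~\ref{eq3} from Lemma~\ref{eq2} by a standard two-step hybrid argument, composing two instances of Lemma~\ref{eq2} through the transitivity of $\ind$. First I would introduce the intermediate ensemble $\{\rvone_n\}\tensprod\{\rvtwo'_n\}$ as a hybrid sitting between $\{\rvone_n\}\tensprod\{\rvtwo_n\}$ and $\{\rvone'_n\}\tensprod\{\rvtwo'_n\}$.

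For the first step, from the hypothesis $\{\rvtwo_n\}\ind\{\rvtwo'_n\}$ I would like to conclude $\{\rvone_n\}\tensprod\{\rvtwo_n\}\ind\{\rvone_n\}\tensprod\{\rvtwo'_n\}$. This is exactly Lemma~\ref{eq2} applied ``on the right coordinate'', i.e. with the roles of the two components swapped: Lemma~\ref{eq2} as stated gives $\{\rvone_n\}\tensprod\{\rvtwo_n\}\ind\{\rvone'_n\}\tensprod\{\rvtwo_n\}$ from $\{\rvone_n\}\ind\{\rvone'_n\}$, and since $\tensprod$ on random variable ensembles is symmetric (as noted in the paper, $\mathcal{D}_{\rvone\tensprod\rvtwo}=\mathcal{D}_\rvone\tensprod\mathcal{D}_\rvtwo$, and the tensor product of distributions is symmetric up to the obvious bijection on pairs), the symmetric statement holds as well; a distinguisher for the swapped product is turned into one for the original by reading its two input coordinates in the other order, which is trivially polytime. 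For the second step, the hypothesis $\{\rvone_n\}\ind\{\rvone'_n\}$ together with Lemma~\ref{eq2} directly yields $\{\rvone_n\}\tensprod\{\rvtwo'_n\}\ind\{\rvone'_n\}\tensprod\{\rvtwo'_n\}$.

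Finally I would chain the two indistinguishabilities: since $\ind$ on distribution ensembles is an equivalence relation and in particular transitive (this is proved as Lemma~\ref{lemma:indeq} for stores, and the same triangle-inequality argument on advantages applies to ensembles of random variables, with the sum of two negligible functions being negligible), we get
\[
\{\rvone_n\}\tensprod\{\rvtwo_n\}\ind\{\rvone_n\}\tensprod\{\rvtwo'_n\}\ind\{\rvone'_n\}\tensprod\{\rvtwo'_n\},
\]
hence $\{\rvone_n\}\tensprod\{\rvtwo_n\}\ind\{\rvone'_n\}\tensprod\{\rvtwo'_n\}$, which is the claim.

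There is no real obstacle here; the only point requiring a word of care is the application of Lemma~\ref{eq2} in the ``right coordinate'' orientation, which I would justify explicitly by the symmetry of $\tensprod$ rather than leaving it implicit. Note also that efficient samplability of the witnessing sampler needed inside Lemma~\ref{eq2} is inherited from the standing assumption (in this section) that all random variable ensembles in play are efficiently samplable, so no extra hypothesis is needed in the statement of Lemma~\ref{eq3}.
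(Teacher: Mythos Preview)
Your proposal is correct and follows essentially the same approach as the paper: a two-step hybrid argument applying Lemma~\ref{eq2} twice and concluding by transitivity of $\ind$. The only cosmetic difference is the choice of intermediate hybrid---the paper uses $\{\rvone'_n\}\tensprod\{\rvtwo_n\}$ whereas you use $\{\rvone_n\}\tensprod\{\rvtwo'_n\}$---and the paper dispatches the right-coordinate application of Lemma~\ref{eq2} with ``by a similar argument'' where you spell out the symmetry.
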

\begin{proof}
By Lemma~\ref{eq2} we have $\{\rvone_n\}\tensprod\{\rvtwo_n\}\ind \{\rvone'_n\}\tensprod\{\rvtwo_n\}$ and, by a similar argument, $\{\rvone'_n\}\tensprod\{\rvtwo_n\}\ind \{\rvone'_n\}\tensprod\{\rvtwo'_n\}$. The desired result follows by the transitivity of $\ind$.
\end{proof}

\Cref{thm:compindepeq} follows naturally from these intermediate
results.

\begin{proof}
Suppose  $\{\rvone_n\}$, $\{\rvtwo_n\}$ are computationally independent, as witnessed by statistically independent distributions $\{\rvone'_n\}$, $\{\rvtwo'_n\}$.
By Lemma~\ref{eq1}, $\{\rvone_n\} \approx \{\rvone'_n\}$ and $\{\rvtwo_n\} \approx \{\rvtwo'_n\}$ and then by Lemma~\ref{eq3},  $\{\rvone_n\}\tensprod \{\rvtwo_n\}\approx \{\rvone'_n\}\tensprod\{\rvtwo'_n\}$. By the independence of $\{\rvone'_n\}$ and $\{\rvtwo'_n\}$, we have that $ (\{\rvone'_n\},\{\rvtwo'_n\})$ and $\{\rvone'_n\}\tensprod\{\rvtwo'_n\}$ are identically distributed, so \emph{a fortiori} $ (\{\rvone'_n\},\{\rvtwo'_n\})\approx\{\rvone'_n\}\tensprod\{\rvtwo'_n\}$. This gives\\[1.5ex]
\begin{align*}
  (\{\rvone_n\},\{\rvtwo_n\})\approx(\{\rvone'_n\},\{\rvtwo'_n\})\approx\{\rvone'_n\}\tensprod\{\rvtwo'_n\}\approx\{\rvone_n\}\tensprod \{\rvtwo_n\}
\end{align*}
For the converse, note that by Proposition~\ref{marg} the marginal distributions $\{\rvone_n\}$ and $\{\rvtwo_n\}$ are efficiently samplable, so just take independently sampled copies $\{\rvone'_n\}$,$\{\rvtwo'_n\}$ of $\{\rvone_n\}$ and $\{\rvtwo_n\}$, respectively.
\end{proof}

In the following, we take Theorem~\ref{thm:compindepeq} as the characterization 
of computational independence, and introduce the following notation:
we write $\{\rvone_n\}\indep_c\{\rvtwo_n\}$ to denote 
$(\{\rvone_n\},\{\rvtwo_n\})\ind\{\rvone_n\}\tensprod\{\rvtwo_n\}$.

\subsection{Computational Independence in a Cryptographic Setting}
\label{sec:compindcrypto}

We now give two examples of the use of computational independence in a cryptographic setting. The first proves a new fundamental property of pseudorandom random variable ensembles. The second is a characterization of encryption security against eavesdropping adversaries. Such a characterization was provided for nonuniform adversaries in \cite{Fay14}. Here we extend the characterization to the uniform setting.

Write $\U{\{\rvone_n\}}$ to denote that $\{\rvone_n\}$ is pseudorandom (i.e. 
indistinguishable from the uniform random variable ensemble.) The following result
tells us that a joint random variable ensemble being pseudorandom is equivalent 
to their components being computationally independent \emph{and} pseudorandom.
\begin{lemma}
  \label{lemma:unifcompind}
Suppose  $\{\rvone_n\}$, $\{\rvtwo_n\}$ are samplable ensembles. Then $\U{(\{\rvone_n\},\{\rvtwo_n\})}$  if and only if  $\{\rvone_n\} \indep_c \{\rvtwo_n\}$ and $\U{\{\rvone_n\}}$, $\U{\{\rvtwo_n\}}$.
\hfill\qed
\end{lemma}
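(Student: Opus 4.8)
The plan is to obtain both implications as bookkeeping corollaries of \Cref{eq1} and \Cref{eq3}, together with one structural observation: the uniform random variable ensemble over the product sets $\{X_n\times Y_n\}$ is \emph{exactly} the tensor product $\{\mathcal{U}^X_n\}\tensprod\{\mathcal{U}^Y_n\}$ of the uniform ensembles over $\{X_n\}$ and over $\{Y_n\}$. Both $\{\mathcal{U}^X_n\}$ and $\{\mathcal{U}^Y_n\}$ are trivially efficiently samplable, which — together with the standing hypothesis that $\{\rvone_n\}$, $\{\rvtwo_n\}$ are samplable — discharges all the samplability side-conditions needed to apply \Cref{eq3} (which, via \Cref{eq2}, needs a sampler for the factor being held fixed).

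For the ``only if'' direction, assume $\U{(\{\rvone_n\},\{\rvtwo_n\})}$, i.e.\ $(\{\rvone_n\},\{\rvtwo_n\})\ind \{\mathcal{U}^X_n\}\tensprod\{\mathcal{U}^Y_n\}$. First apply \Cref{eq1} to this indistinguishability: it yields $\{\rvone_n\}\ind\{\mathcal{U}^X_n\}$ and $\{\rvtwo_n\}\ind\{\mathcal{U}^Y_n\}$, which are precisely $\U{\{\rvone_n\}}$ and $\U{\{\rvtwo_n\}}$. Then feed these two marginal indistinguishabilities into \Cref{eq3} to get $\{\rvone_n\}\tensprod\{\rvtwo_n\}\ind\{\mathcal{U}^X_n\}\tensprod\{\mathcal{U}^Y_n\}$, and combine with the hypothesis by transitivity of $\ind$ to conclude $(\{\rvone_n\},\{\rvtwo_n\})\ind\{\rvone_n\}\tensprod\{\rvtwo_n\}$, i.e.\ $\{\rvone_n\}\indep_c\{\rvtwo_n\}$.

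For the ``if'' direction, assume $\{\rvone_n\}\indep_c\{\rvtwo_n\}$, $\U{\{\rvone_n\}}$ and $\U{\{\rvtwo_n\}}$. Unfolding, $(\{\rvone_n\},\{\rvtwo_n\})\ind\{\rvone_n\}\tensprod\{\rvtwo_n\}$, while $\{\rvone_n\}\ind\{\mathcal{U}^X_n\}$ and $\{\rvtwo_n\}\ind\{\mathcal{U}^Y_n\}$. Applying \Cref{eq3} to the last two gives $\{\rvone_n\}\tensprod\{\rvtwo_n\}\ind\{\mathcal{U}^X_n\}\tensprod\{\mathcal{U}^Y_n\}$, and transitivity with the first indistinguishability yields $(\{\rvone_n\},\{\rvtwo_n\})\ind\{\mathcal{U}^X_n\}\tensprod\{\mathcal{U}^Y_n\}$. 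Rewriting the right-hand side as the uniform ensemble over $\{X_n\times Y_n\}$ gives exactly $\U{(\{\rvone_n\},\{\rvtwo_n\})}$.

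There is no genuine obstacle here; the statement is essentially a restatement of \Cref{eq1} and \Cref{eq3} once one notices that a product uniform ensemble factors as a tensor product of uniform ensembles. The only points deserving explicit attention are (i) checking that the samplability premises of \Cref{eq3}/\Cref{eq2} are met in both directions, which follows from uniformity of the auxiliary ensembles and samplability of $\{\rvone_n\}$, $\{\rvtwo_n\}$, and (ii) making the factorization of the product uniform ensemble explicit, since that is precisely what allows pseudorandomness of a pair to be decomposed into pseudorandomness of each component plus their computational independence.
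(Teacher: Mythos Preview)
Your proof is correct and follows essentially the same route as the paper: both directions hinge on factoring the uniform pair ensemble as a tensor of uniform marginals, then invoking \Cref{eq1} and \Cref{eq3} together with transitivity of $\ind$. The only cosmetic difference is that in the ``only if'' direction the paper obtains $\{\rvone_n\}\indep_c\{\rvtwo_n\}$ by appealing to Theorem~\ref{thm:compindepeq} (the pair is indistinguishable from a statistically independent uniform pair, hence computationally independent by definition), whereas you stay with the $\tensprod$-characterization throughout and reach $(\{\rvone_n\},\{\rvtwo_n\})\ind\{\rvone_n\}\tensprod\{\rvtwo_n\}$ directly via \Cref{eq3} and transitivity; these are interchangeable.
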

\begin{proof}
Suppose $\U{\{\rvone_n\}}$, $\U{\{\rvtwo_n\}}$ and $\{\rvone_n\} \indep_c \{\rvtwo_n\}$. So we have  $\{\rvone_n\}\approx \{\mathbf{U}_n\}$  and $\{\rvtwo_n\}\approx \{\mathbf{V}_n\}$ where $\mathbf{U}_n$, $\mathbf{V}_n$ are uniform. It then follows by Lemma~\ref{eq3} that $\U{\{\rvone_n\} \otimes \{\rvtwo_n\}}$. From $\{\rvone_n\} \indep_c \{\rvtwo_n\}$ we may conclude that 
$\U{(\{\rvone_n\},\{\rvtwo_n\})}$.
Conversely, if $\U{(\{\rvone_n\},\{\rvtwo_n\})}$ then $(\{\rvone_n\},\{\rvtwo_n\})\approx \{\mathbf{W}_n\}$ where $\mathbf{W}_n$ is uniform.
But then $(\{\rvone_n\},\{\rvtwo_n\})\approx (\{\mathbf{U}_n\},\{\mathbf{V}_n\})$ where $\mathbf{U}_n$, $\mathbf{V}_n$ are uniform and independent.
  So by Theorem~\ref{thm:compindepeq},  $\{\rvone_n\} \indep_c \{\rvtwo_n\}$, and by Lemma~\ref{marg}, $\U{\{\rvone_n\}}$ and $\U{\{\rvtwo_n\}}$.
\end{proof}
The Lemma above demonstrates that the results of \Cref{subsec:stretching} closely reflect the computational setting, 
in particular the way the $\U{\cdot}$ predicate and separating conjunction 
interact in \CPSL.
Note that for the ``if'' direction, if the assumptions are witnessed by 
$\epsilon_i(n)$, $i=1,2,3$ then the conclusion is witnessed by 
$\epsilon_1(n)+\epsilon_2(n)+\epsilon_3(n)$. However, in the ``only if" 
direction, if $\U{(\{\rvone_n\},\{\rvtwo_n\})}$ is witnessed by $\epsilon(n)$ 
then so are each of the conclusions. So the equivalence is not tight -- 
``merging" increases the distance but ``splitting" does not decrease it.
 
 It is time to examine how computational independence and encryption security 
 play with each other. We start by giving a definition of what an encryption 
 scheme actually \emph{is}:
\begin{definition}
A \emph{private-key encryption scheme} is a triple $\Pi=(G,E,D)$ where $G,E$ are PPT functions and $D$ is a $\poly$-time function such that for all $x \in \{0,1\}^n$,
if $k \leftarrow G(1^n)$, then $D(k,E(k,x))=x$.
\hfill\qed
\end{definition}

The following is the standard definition of encryption security against 
(uniform) efficient adversaries:
\begin{definition}[\hspace{-0.05em}\cite{G93} Definition 5]
\label{indist}
An encryption scheme $\Pi=(G,E,D)$ has \emph{indistinguishable encryptions} if for every efficiently samplable ensemble tuple $\{\eT_n\}=\{(\eX_n,\eY_n,\eZ_n)\}$ with $|\eX_n|=|\eY_n|$, and every PPT adversary  $\distone$ there is a negligible function $\nu$ such that for every  $n$,
\begin{equation*}
    \big|\Pr[\distone(\eZ_n,E(G(1^n),\eX_n))=1]- \Pr[\distone(\eZ_n,E(G(1^n),\eY_n))=1]\big|\le\nu(n).
  \qedeqn
\end{equation*}
\end{definition}
In the non-uniform setting, Definition~\ref{indist} reduces to the more 
familiar notion of indistinguishable encryptions (see \cite{G93}, Remark~12).
The following definition is instead based on independence, and can be seen as a 
rephrasing of Shannon's perfect security when statistical independence is 
replaced by its computational counterpart:
\begin{definition}
We say that $\Pi=(G,E,D)$ has \emph{independent ciphertexts} if for any efficiently samplable ensemble $\{\eX_n\}$, if $\eC_n=E(G(1^n),\eX_n)$, then $\{\eX_n\} \indep_c \{\eC_n\}$.
\qed
\end{definition}

Fay (see \cite{Fay14}, Theorem XX) proves that in the non-uniform setting an 
encryption scheme has indistinguishable messages if and only if it has independent 
ciphertexts. Extending the above equivalence in the nonuniform is left as an 
open problem in \cite{Fay14}. The rest of this section is devoted to proving 
that it holds
\begin{theorem}[Uniform Version of \cite{Fay14}, Theorem XX]
\label{unif}
With respect to uniform adversaries, an encryption scheme has indistinguishable  encryptions if and only if it has independent ciphertexts.
\hfill\qed
\end{theorem}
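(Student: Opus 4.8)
The plan is to prove the two directions separately, with the equivalence between the two characterizations of computational independence provided by \Cref{thm:compindepeq} as the main tool. Throughout, all ensembles are efficiently samplable, and we freely use that projections of efficiently samplable joint ensembles are again efficiently samplable (\Cref{marg}).

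\emph{Independent ciphertexts $\Rightarrow$ indistinguishable encryptions.} Suppose $\Pi = (G,E,D)$ has independent ciphertexts but fails to have indistinguishable encryptions. Then there is an efficiently samplable tuple $\{\eT_n\} = \{(\eX_n, \eY_n, \eZ_n)\}$ with $|\eX_n| = |\eY_n|$ and a PPT adversary $\distone$ achieving non-negligible advantage $\epsilon(n)$ in distinguishing $(\eZ_n, E(G(1^n),\eX_n))$ from $(\eZ_n, E(G(1^n),\eY_n))$. The key move is to bring in the joint distributions of the message \emph{and} the ciphertext. Let $\eC^X_n = E(G(1^n), \eX_n)$ and $\eC^Y_n = E(G(1^n), \eY_n)$; these are efficiently samplable. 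By the independent-ciphertexts hypothesis applied to $\{\eX_n\}$ and to $\{\eY_n\}$, we have $\{\eX_n\} \indep_c \{\eC^X_n\}$ and $\{\eY_n\} \indep_c \{\eC^Y_n\}$, i.e.\ $(\eX_n, \eC^X_n) \ind \eX_n \tensprod \eC^X_n$ and $(\eY_n, \eC^Y_n) \ind \eY_n \tensprod \eC^Y_n$. From the correctness of $\Pi$ ($D(k, E(k,x)) = x$) one gets that from the joint pair $(\eX_n, \eC^X_n)$ the message is recoverable from the ciphertext; this is used to argue that $\eC^X_n \ind \eC^Y_n$ (the distinguishing advantage of $\distone$, combined with the samplers for $\eZ_n$ which is independent of the encryption, would otherwise give a distinguisher for the ciphertext ensembles), via \Cref{eq1} and \Cref{eq2}/\Cref{eq3} to absorb the independent side-information $\eZ_n$. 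I would then use $\eX_n \tensprod \eC^X_n \ind \eX_n \tensprod \eC^Y_n$ (by \Cref{eq2} applied in the second coordinate) together with the two $\indep_c$ facts and transitivity of $\ind$ to contradict the existence of $\distone$. Some care is needed to combine the $\eZ_n$ component cleanly; the cleanest route is to fold $\eZ_n$ into the ``$\tensprod$'' bookkeeping from the start and appeal to \Cref{eq3}.

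\emph{Indistinguishable encryptions $\Rightarrow$ independent ciphertexts.} Suppose $\Pi$ has indistinguishable encryptions, and let $\{\eX_n\}$ be an arbitrary efficiently samplable message ensemble; set $\eC_n = E(G(1^n),\eX_n)$. By \Cref{thm:compindepeq} it suffices to show $(\eX_n, \eC_n) \ind \eX_n \tensprod \eC_n$. I would take an independently sampled copy $\{\eX'_n\}$ of $\{\eX_n\}$ (efficiently samplable by \Cref{marg}, since it is just a second invocation of the sampler), so that $\eX_n \tensprod \eC_n$ is identically distributed to the experiment ``sample $\eX_n$, encrypt an \emph{independently} sampled message $\eX'_n$ of the same length, output $(\eX_n, E(G(1^n), \eX'_n))$''. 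Here we need $|\eX_n| = |\eX'_n|$ so that $\eX_n$ and $\eX'_n$ are valid as the two message streams in Definition~\ref{indist}. Now apply Definition~\ref{indist} with the tuple $\{(\eX_n, \eX'_n, \eX_n)\}$: the ``$\eZ$'' component carries the first-coordinate message as auxiliary input, the two challenge-message streams are $\eX_n$ (real) and $\eX'_n$ (decoy). Indistinguishable encryptions then says no PPT adversary distinguishes $(\eX_n, E(G(1^n),\eX_n))$ from $(\eX_n, E(G(1^n), \eX'_n))$ beyond a negligible margin — which is exactly $(\eX_n, \eC_n) \ind \eX_n \tensprod \eC_n$. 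Hence $\{\eX_n\} \indep_c \{\eC_n\}$, and since $\{\eX_n\}$ was arbitrary, $\Pi$ has independent ciphertexts.

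\emph{Main obstacle.} The delicate point is the first direction, and specifically the handling of the auxiliary ensemble $\eZ_n$ and the fact that the message is correlated with the ciphertext inside the genuine joint distribution. The reduction must pass from a distinguisher on $(\eZ_n, E(G(1^n),\eX_n))$ vs.\ $(\eZ_n, E(G(1^n),\eY_n))$ to statements purely about ciphertext ensembles and their products, and the bookkeeping of which coordinates are independent of which must be done carefully so that \Cref{eq1}, \Cref{eq2} and \Cref{eq3} apply with the right grouping. I expect roughly three applications of transitivity of $\ind$ chained together, one to go from the genuine pair to its product form on each side via $\indep_c$, and one linking the two product forms; the uniformity constraint (every auxiliary ensemble, including the witnessing independent copies, must be efficiently samplable) is what makes \Cref{marg} indispensable and is precisely the subtlety that was left open in \cite{Fay14}.
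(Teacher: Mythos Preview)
Your second direction (indistinguishable encryptions $\Rightarrow$ independent ciphertexts) is correct and is exactly the paper's argument: instantiate Definition~\ref{indist} with the tuple $(\eX_n,\eX'_n,\eX_n)$ where $\eX'_n$ is an independent copy of $\eX_n$.

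Your first direction, however, has a genuine gap, and it is precisely the obstacle that made the uniform case open in \cite{Fay14}. The independent-ciphertexts hypothesis gives you $(\eX_n,\eC^X_n)\ind \eX_n\tensprod\eC^X_n$, i.e.\ independence between the \emph{message} and its ciphertext. But Definition~\ref{indist} hands the adversary the auxiliary input $\eZ_n$, which is jointly (and arbitrarily) correlated with $\eX_n$ and $\eY_n$. Your plan to ``fold $\eZ_n$ into the $\tensprod$ bookkeeping'' and to treat $\eZ_n$ as ``independent of the encryption'' does not go through: $\eZ_n$ is correlated with the ciphertext \emph{through} the message, and nothing in Lemmas~\ref{eq1}--\ref{eq3} lets you pass from $\eX_n\indep_c\eC^X_n$ to $(\eZ_n,\eC^X_n)\ind(\eZ_n,\eC^Y_n)$. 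The natural reduction would require sampling $\eZ_n$ conditioned on $\eX_n$, which need not be efficient---this is exactly what distinguishes the uniform from the non-uniform setting. The appeal to correctness of $\Pi$ is also a red herring: decryption needs the key, which no party in these experiments has.

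The paper resolves this by detouring through \emph{semantic security}. It shows directly that independent ciphertexts imply semantic security (Lemma~\ref{lemma:indchiphertosec}): since the semantic-security adversary's auxiliary input is $h(\eX_n)$ for a poly-time $h$, a distinguisher for $(\eX_n,\eC_n)$ versus $\eX_n\tensprod\eC_n$ can compute $h(\eX_n)$ itself from its first coordinate and simulate the semantic-security game. Then one invokes the known uniform equivalence between semantic security and indistinguishable encryptions (\Cref{thm:equi}, from \cite{G93}), which already handles the $\eZ_n$ issue internally. This detour is the missing idea in your plan.
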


To prove the ``if'' direction, we will use an equivalent characterization of encryption security, namely \emph{semantic security}. In the uniform setting, this may be defined as follows:
\begin{definition}[\hspace{-0.05em}\cite{G93}, Definition 4] 
  An encryption scheme $\Pi=(G,E,D)$ is \emph{semantically secure} if there is a negligible function $\nu$ such that for every PPT $\distone$, there exists a PPT $\distone'$ such that for every efficiently samplable ensemble $\{\eX_n\}$, every poly-time function $h:\{0,1\}^*\rightarrow\{0,1\}^*$, every function $f:\{0,1\}^*\rightarrow\{0,1\}^*$ and every $n$
  \begin{equation*}
      \Pr[\distone(E(G(1^n),\eX_n),1^{|\eX_n|},h(\eX_n)) = f(\eX_n)]\le \Pr[\distone'(1^{|\eX_n|},h(\eX_n)) = f(\eX_n)] + \nu(n)
    \qedeqn
  \end{equation*}
\end{definition}
This definition is not identical to that of \cite{G93}, but is equivalent (in the uniform setting). See \cite{Goldreich} for details.

\begin{theorem}[\hspace{-0.05em}\cite{G93} Theorem 1]\label{thm:equi}
In the uniform setting, an encryption scheme is semantically secure if and only if it has
indistinguishable encryptions.
    \hfill\qed
\end{theorem}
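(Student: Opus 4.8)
The plan is to prove the two implications of \Cref{thm:equi} separately, following the classical Goldwasser--Micali equivalence in the uniform, auxiliary-input formulation of \cite{G93,Goldreich}. Both directions are reductions of the same flavour already used throughout this paper, namely ``compose a distinguisher or predictor with an efficiently samplable side computation'' (compare \Cref{eq2} and \Cref{lemma:prgind}); the various error terms are then combined by the triangle inequality together with the fact, recorded inside \Cref{lemma:indeq}, that a finite sum of negligible functions is negligible.

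For the direction \emph{indistinguishable encryptions $\Rightarrow$ semantic security}, given a PPT adversary $\distone$ I would build the required simulator $\distone'$ so that, on input $(1^{|x|}, h(x))$, it samples a key $k \leftarrow G(1^{|x|})$, forms the ciphertext $c_0 = E(k, 0^{|x|})$ of a canonical all-zeros plaintext of the correct length, and returns $\distone(c_0, 1^{|x|}, h(x))$. Since $\distone'$ never sees the real plaintext, its prediction probability equals $\Pr[\distone(E(G(1^n),0^{|\eX_n|}),1^{|\eX_n|},h(\eX_n)) = f(\eX_n)]$, so it suffices to bound, uniformly over all efficiently samplable $\{\eX_n\}$, poly-time $h$, and functions $f$, the gap between this quantity and $\Pr[\distone(E(G(1^n),\eX_n),1^{|\eX_n|},h(\eX_n)) = f(\eX_n)]$ by a negligible function. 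I would obtain this by a single hybrid step that swaps the real plaintext $\eX_n$ for $0^{|\eX_n|}$ inside the challenge ciphertext: a non-negligible gap yields a PPT distinguisher $\disttwo$ for the experiment of \Cref{indist} on the tuple ensemble $\{(\eX_n, 0^{|\eX_n|}, \eZ_n)\}$, where the auxiliary component $\eZ_n$ carries exactly the data needed to run $\distone$ and to score its guess, and $\disttwo$ outputs $1$ precisely when $\distone$'s output matches the target. Invoking \Cref{indist} then forces the gap to be negligible.

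For the converse \emph{semantic security $\Rightarrow$ indistinguishable encryptions} I would argue contrapositively: assume $\Pi$ fails to have indistinguishable encryptions, as witnessed by an efficiently samplable tuple $\{(\eX_n,\eY_n,\eZ_n)\}$ with $|\eX_n|=|\eY_n|$ and a PPT $\distone$ of non-negligible advantage (relabelling so that encryptions of $\eX_n$ are favoured). From this I would manufacture a semantic-security counterexample driven by a hidden uniform selector bit $b$: the message ensemble draws the full context and $b$ and then exposes the plaintext $\eX_n$ when $b=0$ and $\eY_n$ when $b=1$; the partial-information function $h$ reveals the context $(\eX_n,\eY_n,\eZ_n)$ but hides $b$, and the target function is $f=b$. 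Running the challenge ciphertext together with $h$'s output through $\distone$ predicts $b$ with probability at least $1/2$ plus a non-negligible term, whereas any ciphertext-free simulator sees only $h$'s output, which is statistically independent of $b$, and so cannot beat $1/2$. This separation contradicts the definition of semantic security.

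The main obstacle is the uniform, arbitrary-$f$ bookkeeping rather than the cryptographic core. In the first direction one must route the value $f(\eX_n)$ (and $h(\eX_n)$) to $\disttwo$ without being able to compute $f$, which is why it is placed in the auxiliary ensemble $\eZ_n$, all the while keeping $\{(\eX_n,0^{|\eX_n|},\eZ_n)\}$ efficiently samplable; this is exactly the point that leans on \Cref{marg} and on the equivalence between the auxiliary-input definition used here and the textbook one, as discussed in \cite{Goldreich}. In the second direction the delicate step is realising the ``mixture of $\eX_n$ and $\eY_n$ selected by a hidden bit'' as a single efficiently samplable message ensemble whose $h$ and $f=b$ are legitimate functions of the encrypted message alone, so that the encrypted object still matches the plaintexts on which $\distone$ was assumed to succeed; making $b$ recoverable (e.g.\ by tagging plaintexts so their supports are disjoint) and propagating the context through $h$ is precisely the encoding carried out in \cite{Goldreich}. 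Once these encodings are fixed, the probability estimates are routine and the uniformity of every constructed algorithm is immediate.
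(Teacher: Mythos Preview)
The paper does not give its own proof of this theorem: it is stated with the citation to \cite{G93} (Theorem~1), marked \qed\ immediately, and used as a black box in the proof of \Cref{unif}. So there is no argument in the paper to compare your proposal against; your sketch is essentially the classical Goldreich argument that the citation points to, and it is the right thing to outline here.

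One remark on the first direction: you correctly flag that routing $f(\eX_n)$ into the auxiliary input $\eZ_n$ is the delicate point, but note that \Cref{indist} explicitly requires the tuple $\{(\eX_n,\eY_n,\eZ_n)\}$ to be efficiently samplable, while $f$ in the semantic-security definition is an \emph{arbitrary} function. Simply placing $f(\eX_n)$ inside $\eZ_n$ therefore does not produce a legal instance of \Cref{indist} in general. The resolution in \cite{Goldreich} is not just an encoding trick but a genuine reduction to the single-bit case together with an averaging argument; your parenthetical reference to \cite{Goldreich} covers this, but the sentence as written suggests the obstacle is merely one of bookkeeping rather than a real uniformity issue.
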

Theorem~\ref{unif} follows from \Cref{thm:equi} and a couple of other auxiliary results
stating that indistinguishable encryptions imply independent ciphertexts, and that the latter imply
semantic security.

Theorem~\ref{unif} follows from the next two lemmas. 
\begin{lemma}
  \label{lemma:indepenctosec}
  With respect to uniform adversaries, if an encryption scheme has indistinguishable encryptions then it has independent ciphertexts.
\end{lemma}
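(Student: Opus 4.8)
The plan is to show that if $\Pi = (G, E, D)$ has indistinguishable encryptions (Definition~\ref{indist}), then for every efficiently samplable message ensemble $\{\eX_n\}$, setting $\eC_n = E(G(1^n), \eX_n)$, we have $\{\eX_n\} \indep_c \{\eC_n\}$, i.e., by the characterization of Theorem~\ref{thm:compindepeq}, $(\{\eX_n\}, \{\eC_n\}) \ind \{\eX_n\} \tensprod \{\eC_n\}$. So I would fix an efficiently samplable $\{\eX_n\}$ and suppose, for contradiction, that some PPT distinguisher $\distone$ distinguishes $(\{\eX_n\}, \{\eC_n\})$ from $\{\eX_n\} \tensprod \{\eC_n\}$ with non-negligible advantage $\epsilon(n)$. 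The key structural observation is that the ``product'' distribution $\{\eX_n\} \tensprod \{\eC_n\}$ can be generated as follows: sample $x \leftarrow \eX_n$ for the first component, and \emph{independently} sample a fresh $x' \leftarrow \eX_n$, then compute $E(G(1^n), x')$ for the second component — this works precisely because $\eC_n = E(G(1^n), \eX_n)$, so a fresh independent copy of the ciphertext is obtained by encrypting a fresh independent copy of the message. Both distributions thus have the shape ``$(x, E(G(1^n), z))$'' where $z = x$ in one case and $z$ is an independent copy of $x$ in the other.

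The main step is then to package this into the triple-based ensemble required by Definition~\ref{indist}. I would define $\{\eT_n\} = \{(\eX'_n, \eY'_n, \eZ'_n)\}$ where $\eX'_n$ and $\eY'_n$ are two \emph{independent} copies of $\eX_n$ (so that $|\eX'_n| = |\eY'_n|$ holds trivially since $\eX_n$ has a fixed length $n$, or one pads appropriately), and $\eZ'_n$ is set to $\eX'_n$ itself, i.e., $\eZ'_n$ carries a copy of the first message. This tuple is efficiently samplable since $\{\eX_n\}$ is. Then in the indistinguishable-encryptions experiment, the quantity $(\eZ'_n, E(G(1^n), \eX'_n))$ is distributed exactly as $(x, E(G(1^n), x))$ with $x \leftarrow \eX_n$ — which is $(\{\eX_n\}, \{\eC_n\})$ — while $(\eZ'_n, E(G(1^n), \eY'_n))$ is distributed as $(x, E(G(1^n), x'))$ with $x, x'$ independent copies — which is exactly $\{\eX_n\} \tensprod \{\eC_n\}$. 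Hence $\distone$, possibly after a trivial syntactic repackaging of its inputs, is a PPT adversary witnessing that $\Pi$ does \emph{not} have indistinguishable encryptions, contradicting the hypothesis. Therefore no such $\distone$ exists, $(\{\eX_n\}, \{\eC_n\}) \ind \{\eX_n\} \tensprod \{\eC_n\}$, and by Theorem~\ref{thm:compindepeq} the scheme has independent ciphertexts.

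The main obstacle I anticipate is purely bookkeeping around the shape of Definition~\ref{indist}: that definition fixes a single triple $\{\eT_n\} = \{(\eX_n, \eY_n, \eZ_n)\}$ with $|\eX_n| = |\eY_n|$ and only encrypts $\eX_n$ or $\eY_n$ (never both), so I must be careful that my choice $\eZ'_n = \eX'_n$ legitimately lets the auxiliary information reveal the first message without spoiling the game, and that the length constraint is met (this is immediate in the typed setting here, where messages of a given security parameter have a fixed polynomial length). A secondary subtlety is ensuring that the copies $\eX'_n, \eY'_n$ are genuinely independent and jointly efficiently samplable — straightforward, since one simply runs the sampler for $\eX_n$ twice with independent coins. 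Finally, one should note the direction of the reduction is ``advantage-preserving'': the same $\epsilon(n)$ that refutes independent ciphertexts refutes indistinguishable encryptions, so no loss is incurred. The converse direction (independent ciphertexts implies semantic security, hence indistinguishable encryptions via Theorem~\ref{thm:equi}) is handled by the next lemma and is not part of this statement.
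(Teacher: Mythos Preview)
Your proposal is correct and takes essentially the same approach as the paper: instantiate the triple in Definition~\ref{indist} with $\eT_n=(\eX'_n,\eY'_n,\eZ'_n)$ where $\eX'_n,\eY'_n$ are independent copies of $\eX_n$ and $\eZ'_n=\eX'_n$, then observe that the two sides of the indistinguishable-encryptions inequality are exactly $(\{\eX_n\},\{\eC_n\})$ and $\{\eX_n\}\tensprod\{\eC_n\}$. The paper's proof is a two-sentence version of your argument, omitting the explicit unpacking and the contradiction framing.
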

\begin{proof}
Suppose $\Pi=(G,E,D)$ has indistinguishable encryptions. Define the ensemble $\{\eT_n\}=\{(\eX_n,\eY_n,\eX_n)\}$ where $\eY_n$ is independent of and distributed identically to $\eX_n$. It then follows immediately from the specialization of Definition~\ref{indist} to this distribution that $\{\eX_n\} \indep_c \{\eC_n\}$.
\end{proof}

\begin{lemma}
  \label{lemma:indchiphertosec}
With respect to uniform adversaries, if an encryption scheme has independent ciphertexts then it is semantically secure.
\end{lemma}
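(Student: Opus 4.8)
The plan is to show that independent ciphertexts imply semantic security by a direct simulation argument. Given a PPT adversary $\distone$ attacking semantic security, I would construct the simulator $\distone'$ that, on input $(1^{|\eX_n|}, h(\eX_n))$, internally samples a ciphertext of a \emph{freshly drawn, independent} plaintext and then runs $\distone$ on this simulated ciphertext together with its own inputs. Concretely, first I would invoke \Cref{marg} (Proposition~\ref{marg}) to observe that since $(\{\eX_n\}, \{h(\eX_n)\})$ is efficiently samplable (as $h$ is poly-time), its marginal $\{\eX_n\}$ is efficiently samplable; hence $\distone'$ can, on input $1^{|\eX_n|}$, run the sampler for $\eX_n$ to get an independent copy $\eX_n'$, compute $k' \leftarrow G(1^n)$ and $\eC_n' = E(k', \eX_n')$, and output $\distone(\eC_n', 1^{|\eX_n|}, h(\eX_n))$.

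The core of the argument is then the chain of indistinguishabilities. I would set $\{\eT_n\} = \{(\eC_n, 1^{|\eX_n|}, h(\eX_n), f(\eX_n))\}$ where $\eC_n = E(G(1^n), \eX_n)$, and compare it with the ensemble where $\eC_n$ is replaced by $\eC_n'$ built from an independent copy $\eX_n'$. By the independent-ciphertexts hypothesis, $\{\eX_n\} \indep_c \{\eC_n\}$, i.e. $(\{\eX_n\}, \{\eC_n\}) \ind \{\eX_n\} \tensprod \{\eC_n\}$. Since $h(\eX_n)$ and $f(\eX_n)$ are functions of $\eX_n$ alone, I can use the fact that applying a (poly-time, for $h$; arbitrary, for $f$, but this only matters inside a probability so it is fine) post-processing map to both sides preserves indistinguishability — this is essentially the argument used in \Cref{eq1} and \Cref{eq2} — to conclude that
\[
(\eC_n, 1^{|\eX_n|}, h(\eX_n), f(\eX_n)) \ind (\eC_n', 1^{|\eX_n|}, h(\eX_n), f(\eX_n)),
\]
where on the right $\eC_n'$ uses an independent copy of $\eX_n$. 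Therefore $\Pr[\distone(\eC_n, 1^{|\eX_n|}, h(\eX_n)) = f(\eX_n)]$ and $\Pr[\distone(\eC_n', 1^{|\eX_n|}, h(\eX_n)) = f(\eX_n)]$ differ by a negligible amount; but the latter probability is exactly $\Pr[\distone'(1^{|\eX_n|}, h(\eX_n)) = f(\eX_n)]$ by construction of $\distone'$. This gives the required inequality with $\nu(n)$ the distinguishing advantage, which is negligible.

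One subtlety I would need to handle carefully is the quantifier structure of semantic security: the negligible function $\nu$ must be chosen \emph{before} quantifying over $\{\eX_n\}$, $h$, $f$. This is fine because the advantage bound coming from $\{\eX_n\} \indep_c \{\eC_n\}$ — which is a property of the scheme, not of the auxiliary inputs — is uniform in the post-processing applied afterwards: a distinguisher for the post-processed ensembles yields a distinguisher for $(\{\eX_n\}, \{\eC_n\})$ versus $\{\eX_n\}\tensprod\{\eC_n\}$ with the same advantage, and that advantage is negligible by hypothesis. A second point is that the equality event ``$\distone(\cdots) = f(\eX_n)$'' involves $f(\eX_n)$ on \emph{both} the input side and the target side in a correlated way; I would phrase the comparison as indistinguishability of the full tuples including the target value $f(\eX_n)$, then note that ``$\distone$ outputs the last coordinate'' is itself a poly-time predicate of the tuple, so the two success probabilities are within the distinguishing advantage.

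The main obstacle I expect is purely bookkeeping rather than conceptual: making sure that the independent-copy substitution $\eC_n \rightsquigarrow \eC_n'$ is justified by composing the independent-ciphertexts hypothesis with the fact that post-composition by a map (here, $x \mapsto (E(G(1^n),x'),1^{|x|},h(x),f(x))$ with $x'$ an independent sample) preserves $\ind$ and converts the tensor product $\{\eX_n\}\tensprod\{\eC_n\}$ into precisely the simulated ensemble — this requires unwinding that $\eC_n' = E(G(1^n), \eX_n')$ has the same distribution as the $\eC_n$-marginal, sampled independently. All of the needed closure properties ($\ind$ is transitive, preserved under poly-time post-processing, compatible with $\tensprod$) are already available from Lemmas~\ref{eq1}, \ref{eq2}, \ref{eq3} and \Cref{thm:compindepeq}, so the proof is a short assembly of these pieces.
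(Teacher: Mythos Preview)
Your approach is essentially the same as the paper's: both construct $\distone'$ by sampling a fresh independent plaintext, encrypting it under a fresh key, and running $\distone$, and both bound the gap in success probabilities by applying the independent-ciphertexts hypothesis to the distinguisher that, on input $(x,c)$, checks whether $\distone(c,1^{|x|},h(x))=f(x)$ --- the paper names this distinguisher $\disttwo$ explicitly, whereas you phrase the same step as ``post-processing preserves $\ind$''. One small note: your suggested workaround for arbitrary $f$ (packaging $f(\eX_n)$ into the tuple so that the final test is poly-time) does not actually avoid the issue, since the reduction from a tuple-distinguisher back to a distinguisher for $(\eX_n,\eC_n)$ versus $\eX_n\tensprod\eC_n$ must still compute $f(x)$ from $x$; the paper simply restricts to poly-time $f$ in its proof.
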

\begin{proof}
Suppose $\Pi=(G,E,D)$ has independent ciphertexts. Suppose $\{\eX_n\}$ is an efficiently samplable random variable ensemble. For $n >0$, let $\widehat{\eX}_n$ denote a distribution that is independent of $\eX_n$ but identically distributed. By assumption, for any PPT adversary $\distone$ and any $n$,\\[1.5ex]
\[
  \left|\Pr[\distone(\eX_n,E(G(1^n),\eX_n))=1]-\Pr[\distone(\eX_n,E(G(1^n),\widehat{\eX}_n))=1]\right|\le \nu(n)
\]
For any $\distone$ and poly-time $h$ and $f$, define $\disttwo(x,y)=1$ if $\distone(y,1^{|x|},h(x))=f(x)$ and $\disttwo(x,y)=0$ otherwise. Then $\disttwo$ is PPT, and applying the assumption to $\disttwo$ yields that for any PPT $\distone$, poly-time $h$ and $f$ and any $n$
\begin{equation*}
  \Big|\Pr[\distone(E(G(1^n),1^{|\eX_n|},h(\eX_n)))=f(\eX_n)]-\\
    \Pr[\distone(E(G(1^n),\widehat{\eX}_n),1^{|\eX_n|},h(\eX_n))=f(\eX_n)]\Big|\le \nu(n)
\end{equation*}
Now define $\distone'$ such that $\distone'(t,z)$ calls the sampler for $\{\eX_n\}$ on the $n$ such that $|\eX_n|=|t|$ using new uniform randomness to obtain $x$ and then returns $\distone(E(G(1^n),x),t,z)$. (Note that the samplability of $\{\eX_n\}$ implies that $\distone'$ can efficiently compute $n$ from $t$, as needed.) Then $\Pr[\distone'(1^{|\eX_n|},h(\eX_n))=f(\eX_n)]=\Pr[\distone(E(G(1^n),\widehat{\eX}_n),1^{|\eX_n|},h(\eX_n))=f(\eX_n)]$
(where $\widehat{\eX}_n$ is independent of $\eX_n$ but identically distributed) and so
\begin{equation*}
\Big|\Pr[\distone(E(G(1^n),1^{|\eX_n|},h(\eX_n)))=f(\eX_n)]-\\
  \Pr[\distone'(1^{|\eX_n|},h(\eX_n))=f(\eX_n)]\Big|\le \nu(n)
\end{equation*}
which means
\begin{equation*}
  \Pr[\distone(E(G(1^n),1^{|\eX_n|},h(\eX_n)))=f(\eX_n)]\le\\
  \Pr[\distone'(1^{|\eX_n|},h(\eX_n))=f(\eX_n)]+\nu(n)
\end{equation*}
so that $\Pi$ is semantically secure.
\end{proof}


\section{On the Limitations on \CPSL}
\label{sec:limitations}
 The logic we have introduced in this paper and the class of programs that can 
be treated through it are limited, at least apparently: on the one hand we are 
not able to deal with programs with loops, on the other 
hand both forms of implication are not available. This section is designed to
discuss and justify the reasons for these limitations.

The absence of iterative constructs in the language of programs 
is not directly due to
complexity limitations. For example, 
for-loops would not affect the crucial 
result that programs denote endofunctions on the space of efficiently samplable 
distributions. What would be difficult to manage, however, is the evolution of 
the \emph{advantage} of the 
adversary along a loop: while this advantage
is guaranteed not to grow too much along sequential compositions,
the same is not true for iterative constructs. And, it would thus
be necessary to explicitly keep track of how the adversary's advantage
evolves during the computation and not just of the fact that this advantage remains negligible. 
Developing a logic powerful enough to make all this possible is outside the 
scope of this work. On the other hand, some interesting proofs can anyway be 
given, even without loops, as shown in \Cref{sec:examples} above.

\emph{Kripke Monotinicity} is a key property of \BI and \PSL that
fails in \CPSL because of complexity constraints.
This property states that by looking at any formula $\phi$, it is 
possible to identify a non-trivial overapproximation to the set of variables
influencing the semantics of $\phi$, namely the free variables of that formula. 
This observation is fundamental for showing the following property: if 
$\fv{\fone}\subseteq \dom(\stone)$ and $\stone \ext \sttwo$, then $\sr \stone 
\fone \Leftrightarrow \sr \sttwo \fone$. 
This property, in turn, is crucial for the proof of soundness for the frame 
rule. However, the proof of this property cannot be adapted to \CPSL when 
$\fone$ is an implication: again, it is not known 
whether the class of efficiently samplable distributions is closed 
under general couplings. This is why we restricted ourselves to the conjunctive 
fragment of \PSL.


\revision{
Finally, the use of typed formulas is mainly aimed 
at making the set of variables on which a formula is interpreted
an explicit part of the formula itself. We believe that,
for our aim, typed formulas are convenient:
they make binary propositions reflexive,
they render the semantics of the separating conjunction less ambiguous, and
they allow a natural formulation of the $\RFRAME$ rule,
where, in particular, the way in which the state is divided
by the separating conjunction is induced by the formula and is
validated by the type system.
However, we also believe that using unlabeled formulas
would not affect the foundations of our work.
}


\section{Related Work}\label{sec:related}
Yao \cite{Yao} implicitly introduced the notion of computational independence in a cryptographic setting via a characterization of encryption security using a computational version of mutual information, analogous to Shannon's characterization of perfect secrecy \cite{Sha49}. The definition of \cite{Yao} was quite complex, relying on a characterization of mutual entropy via compression, and due to subsequent and more direct characterizations given in \cite{GoldwasserMicali84}, this earlier characterization was largely forgotten, although \cite{MRS88} addressed the equivalence of the various definitions, including Yao's. Subsequently, the only work addressing computational independence and its application in cryptographic settings is \cite{Fay14}.

An early contribution to computationally sound systems for cryptographic security is \cite{AR07}. This work did not provide a general account of computational indistinguishability but did allow the expression of equivalences between certain expressions involving symbolic encryption. One drawback of the system is that soundness requires a syntactic restriction on expressions with key cycles. Subsequent work addressed this issue by modifying the computational model with a stronger form of encryption (KDM security) \cite{BRS02,ABHS09}, adding explicit detection of key cycles to the symbolic semantics \cite{Lau02} or modifying the symbolic model to provide a co-inductive semantics \cite{Mic10}. Except for \cite{Mic19}, which includes a symbolic category for pseudorandom keys, none of these works explicitly model pseudorandomness. In a different direction, works which explicitly axiomatize computational indistinguishability include \cite{IK06} and \cite{BDKL10}.

\revision{Our approach provides benefits similar to those of earlier computationally sound formal systems. In particular, there is no need to explicitly model a computationally bounded adversary nor to prove security via a computational reduction. This is in contrast with approaches motivated by Halevi's influential proposal for a programme of formalization of the game-hopping paradigm \cite{Halevi}. While the scope of constructions we are currently able to consider is more constrained than in these approaches, we complement approaches based on the model of \cite{AR07} by providing a framework with more general forms of construction.}

The program logic of~\cite{PSL} was the first to use the separating conjunction for capturing probabilistic independence. In this work, formulas are expressed in the logic of bunched implication (\BI,~\cite{OHearnPym99}) and are in interpreted as resource monoids~\cite{Pym04}. In the same spirit of~\cite{PSL}, the separating conjunction was used in~\cite{Bao} to model negative dependence.

\section{Conclusion}

The aim of this work is to show how Yao's computational independence can be exploited to simplify certain cryptographic proofs by hiding the role of the adversary. An essential tool in all this is separation logic, to which this paper attributes semantics in a new way, following the work of Barthe et al., but at the same time injecting computational indistinguishability into the playground.

Much remains to be done, especially with respect to the underlying class of programs. As already mentioned, dealing with the presence of loops does not seem trivial and would certainly require substantial changes to the logic and/or in its semantics, in order to keep the adversary's advantage under control. On the other hand, this would potentially allow extending the applicability of the presented technique to more complex proofs, normally conducted via the so-called hybrid arguments. The authors, for this reason, are actively dealing with this problem, which however remains challenging.

\bibliography{biblio}



\end{document}